\documentclass[runningheads]{llncs}
\usepackage[T1]{fontenc}
%
\usepackage{graphicx}
%
%

\usepackage{amsmath}
\usepackage{amssymb}
\usepackage{todonotes}
\usepackage[linesnumbered,boxed,lined,noend]{algorithm2e}
\usepackage{makecell}
\usepackage{comment}
\usepackage{hyperref}

\makeatletter
\newcommand\footnoteref[1]{\protected@xdef\@thefnmark{\ref{#1}}\@footnotemark}
\makeatother

\newtheorem{fact}[theorem]{Fact}
\newtheorem{observation}[theorem]{Observation}

\newcommand{\FAST}{\ensuremath{\mathsf{FAST}}}
\newcommand{\FVST}{\ensuremath{\mathsf{FVST}}}
\newcommand{\K}{\ensuremath{\kappa}}

\newcommand{\ADT}{\ensuremath{\mathsf{ADT}}}

\newcommand{\dspTriangle}[1]{\ensuremath{\mathbb{DTP}[#1]}}
\newcommand{\dspTrianglePlus}[1]{\ensuremath{\mathbb{DTP^+}[#1]}}
\newcommand{\dsTriangle}[1]{\ensuremath{\mathbb{DT}[#1]}}
\newcommand{\dsTrianglePlus}[1]{\ensuremath{\mathbb{DT^+}[#1]}}

\newcommand{\INITNP}[2]{\ensuremath{\mathtt{Initialize}(#1,#2)}}
\newcommand{\REVERSE}[2]{\ensuremath{\mathtt{Reverse}(#1,#2)}}
\newcommand{\FTRIANGLE}[1]{\ensuremath{\mathtt{FindTriangle(#1)}}}

\newcommand{\FINDFAST}[1]{\ensuremath{\mathtt{FindFAST(#1)}}}
\newcommand{\FINDFVST}[1]{\ensuremath{\mathtt{FindFVST(#1)}}}

\newcommand{\REMOVE}[1]{\ensuremath{\mathtt{Remove}(#1)}}
\newcommand{\RESTORE}[1]{\ensuremath{\mathtt{Restore}(#1)}}

\newcommand{\db}[1]{\ensuremath{\mathtt{Db}_{#1}}}
\newcommand{\LONG}{\ensuremath{\mathtt{LONG}}}
\newcommand{\mindt}[1]{\ensuremath{\mathtt{min}_{\mathsf{d}}(#1)}}
\newcommand{\maxdb}[1]{\ensuremath{\mathtt{max}_{\mathsf{Db}}(#1)}}
\newcommand{\induced}[2]{\ensuremath{#1[#2]}}
\newcommand{\nopref}[1]{\ensuremath{{#1}^{\mathtt{-pref}}}}
\newcommand{\noset}[2]{\ensuremath{{#1}^{-{#2}}}}

\newcommand{\emptydb}{\ensuremath{\mathtt{Empty}}}
\newcommand{\deltadb}[3]{\ensuremath{\mathtt{Affected}_{#1}(#2,#3)}}
\newcommand{\backdb}{\ensuremath{\mathtt{Back}}}

\newcommand{\dsBasic}[1]{\ensuremath{\mathbb{D}[#1]}}

\newcommand{\MIN}{\ensuremath{\mathtt{FindFirstAfterPrefix}()}}
\newcommand{\INCOMING}[2]{\ensuremath{\mathtt{GetInNeighboursOmittingPrefix}(#1, #2)}}

\newcommand{\REVERSEBND}[3]{\ensuremath{\mathtt{Reverse}(#1,#2,#3)}}

\newcommand{\ROOT}{\ensuremath{\mathtt{GetRoot}()}}
\newcommand{\LEFT}[1]{\ensuremath{\mathtt{GetLeft}(#1)}}
\newcommand{\RIGHT}[1]{\ensuremath{\mathtt{GetRight}(#1)}}
\newcommand{\RECT}[2]{\ensuremath{\mathtt{SumRectangle}(#1, #2)}}

\newcommand{\degrees}{\ensuremath{\mathtt{Degrees}}}

\newcommand{\dsSome}[1]{\ensuremath{\mathbb{S}[#1]}}

\newcommand{\dsRemove}[2]{\ensuremath{\mathbb{DREM}[#1,#2]}}

\newcommand{\dsRemovePromise}[2]{\ensuremath{\mathbb{DREMP}[#1,#2]}}

\newcommand{\dist}{\ensuremath{\mathtt{indist}}}

\newcommand{\lbackdb}{\ensuremath{\mathtt{LBack}}}

\newcommand{\rmSet}{\ensuremath{\mathtt{Removed}}}
\newcommand{\reducedDegree}[1]{\ensuremath{\mathtt{RDeg}[#1]}}
\newcommand{\reducedDegreeBefore}[1]{\ensuremath{\mathtt{RDeg}^{(b)}[#1]}}

\newcommand{\longArcs}[1]{\ensuremath{\mathtt{LArcs}[#1]}}

\newcommand{\reducedDegreeBucket}[1]{\ensuremath{\mathtt{RDb}[#1]}}
\newcommand{\reducedEmptydb}{\ensuremath{\mathtt{REmpty}}}
\newcommand{\tok}{\ensuremath{\mathtt{TOK}}}
\newcommand{\ctok}[1]{\ensuremath{\mathtt{CTok}(#1)}}
\newcommand{\ctokBefore}[1]{\ensuremath{\mathtt{CTok}^{(b)}(#1)}}

\newcommand{\vtok}[1]{\ensuremath{\mathtt{VTok}[#1]}}

\newcommand{\INITIALIZEDREM}[3]{\ensuremath{\mathtt{Initialize}(#1,#2,#3)}}
\newcommand{\REVERSEARCDREM}[1]{\ensuremath{\mathtt{ReverseArc}(#1)}}
\newcommand{\REMOVEVERTEX}[1]{\ensuremath{\mathtt{RemoveVertex}(#1)}}
\newcommand{\RESTOREVERTEX}[1]{\ensuremath{\mathtt{RestoreVertex}(#1)}}

\newcommand{\fixRdeg}[1]{\ensuremath{\mathtt{NewRd}(#1)}}

\newcommand{\INITIALIZEDREMP}[2]{\ensuremath{\mathtt{Initialize}(#1,#2)}}
\newcommand{\REVERSEARCDREMP}[1]{\ensuremath{\mathtt{ReverseArc}(#1)}}

\newcommand{\heavy}[2]{\ensuremath{\mathtt{Heavy}_{#2}(#1)}}
\newcommand{\degree}[2]{\ensuremath{deg_{#1}(#2)}}

\newcommand{\MINFVST}{\ensuremath{\mathtt{FindFirstAfterPrefixNoF}()}}
\newcommand{\INCOMINGFVST}[2]{\ensuremath{\mathtt{GetInNeighboursOmittingPrefixNoF}(#1, #2)}}
\newcommand{\FTRIANGLEFVST}[1]{\ensuremath{\mathtt{FindTriangleNoF(#1)}}}

\begin{document}
\title{Dynamic Parameterized Feedback Problems in Tournaments}
%
%
\author{Anna Zych-Pawlewicz\inst{1}
\and
Marek Żochowski\inst{1}
}
\authorrunning{Anna Zych-Pawlewicz and Marek Żochowski}
%
\institute{University of Warsaw}
\maketitle              
\begin{abstract}
In this paper we present the first dynamic algorithms for the
problem of {\sc{Feedback Arc Set}} in Tournaments ($\FAST$) and
the problem
of {\sc{Feedback Vertex Set}} in Tournaments ($\FVST$). Our algorithms
maintain a dynamic tournament on $n$ vertices altered by redirecting
the arcs,
and answer if the tournament admits a feedback arc set (or
respectively feedback vertex set) of size at most $K$, for some
chosen parameter $K$.
For dynamic $\FAST$ we offer two algorithms.
In the promise model, where we are guaranteed, that the size of
the
solution does not exceed $g(K)$ for some computable function $g$,
we give an $O(\sqrt{g(K)})$ update and $O(3^K K \sqrt{K})$ query
algorithm. In the general setting without any promise, we offer an
$O(\log^2 n)$ update and $O(3^K K \log^2 n)$ query time algorithm
for dynamic $\FAST$. For dynamic $\FVST$ we offer an algorithm
working in the promise model, which admits $O(g^5(K))$ update and
$O(3^K K^3 g(K))$ query time.
\end{abstract}

\keywords{Dynamic  \and Parametrized \and Tournament.}

\section{Introduction and Related Work}\label{r:intro}
In this paper we study feedback problems in dynamic tournaments.
The problems we focus on here is the {\sc{Feedback Arc Set}}
problem and the
{\sc{Feedback Vertex Set}} problem. The feedback arc
(resp. vertex) set of
a given graph
is a set of arcs (resp. vertices) whose removal makes the graph
acyclic. In the classical static setting, given a graph and a parameter
$K$,
the {\sc{Feedback Arc Set}} problem (resp.
{\sc{Feedback Vertex Set}} problem),
asks if there exists a
feedback
arc (resp. vertex) set of size at most $K$. Both these problems
are
flag problems in the area of fixed parameter tractable
algorithms and
as
such received a lot of interest, see~\cite{article} for the
survey.

A tournament is a directed graph where every pair of vertices is
connected by exactly one arc. Feedback arc sets in tournaments
have applications in voting systems and rank aggregation,
and are
well studied from the combinatorial~\cite{Erds1965OnSO,FERNANDEZDELAVEGA1983328,REID1970225}  and
algorithmic~\cite{10.5555/1283383.1283426,10.1145/1250790.1250806,10.1145/1798596.1798608,DBLP:conf/icalp/AlonLS09,feige2009faster,FominFAST} points of view.
Unfortunately, the
{\sc{Feedback Arc Set}} problem in tournaments ($\FAST$) is
NP-hard~\cite{doi:10.1137/050623905}. The
fastest parametrized algorithm achieves
$2^{O(\sqrt{K})} n^{O(1)}$
running time~\cite{FominFAST}, where $n$ is the size of the input
tournament. In this paper, however, we explore a textbook
branching
algorithm for this problem running in $3^K n^{O(1)}$
time~\cite{FASTBranching}.

The {\sc{Feedback Vertex Set}} problem in tournaments ($\FVST$)
also has
many interesting applications, for instance in social choice theory
where it is
essential to the definition of a certain type of election
winners~\cite{RePEc:clt:sswopa:524}.
It is also NP-hard~\cite{10.5555/88208.88256}
and has been studied from various
angles~\cite{kumar_et_al:LIPIcs.STACS.2016.49,10.1145/3446969,10.1007/978-3-642-15775-2_23,mnich_et_al:LIPIcs:2016:6409}.
The fastest currently known parametrized algorithm for this
problem
runs in $O(1.6181^K + n^{O(1)})$
time~\cite{kumar_et_al:LIPIcs.STACS.2016.49}.

Recently, there is a growing interest in studying parametrized
problems in a dynamic setting. In this context, we typically
consider an instance $I$ of a problem of interest with an
associated parameter $K$. The instance is dynamic, i.e., it is
updated over time by problem specific updates, while for simplicity
we assume that the problem parameter $K$ does not change
through the entire process. The goal is to be able
to efficiently provide the answer to the problem
after each
update. The time needed for that is called the query time.
The desired update/query running time
may depend in any computable way on the parameter $K$, but
it should be sublinear in terms of the size of $I$. The typicall
update/query running times in this setting are $f(K)$,
$f(K) (\log n)^{O(1)}$, or sometimes even $f(K) n^{o(1)}$,
where $f$ is some computable function and $n$ is the size of $I$.
Since we allow $f$ to be exponential, this setting applies
to NP-hard problems as long as they are fixed-parameter
tractable in the static setting.

Parameterized dynamic data structures were first systematically
investigated by Iwata and Oka~\cite{IwataO14}, followed by
Alman et al.~\cite{Alman}. These works provided data
structures with update times $f(K)$ or
$f(K)\cdot (\log n)^{O(1)}$ for several classic problems such
as {\sc{Vertex Cover}}, {\sc{Cluster Vertex Deletion}},
{\sc{Hitting Set}}, {\sc{Feedback Vertex Set}},
or {\sc{Longest Path}}. Other recent advances include data
structures for maintaining various graph decompositions together
with runs of dynamic programming
procedures~\cite{10.5555/3458064.3458114,DvorakKT14,DvorakT13,korhonen2023dynamic,majewski_et_al:LIPIcs.STACS.2023.46}
and
treatment of parameterized string problems from the dynamic
perspective~\cite{olkowski_et_al:LIPIcs.STACS.2023.50}.

Alman et al. in their work~\cite{Alman} study the problem of
{\sc{Feedback Vertex Set}} in dynamic undirected graphs, where
the graph is altered
by edge additions and removals and the goal is to test if
the graph
has a
feedback vertex set of size at most $K$. For
this problem, Alman et al. propose a dynamic algorithm with
$2^{\mathcal{O}(K \log K)} \log^{\mathcal{O}(1)} n$ amortized update time
and $\mathcal{O}(K)$ query time. It is worth mentioning, that while the
{\sc{Feedback Vertex Set}} problem is
NP-hard on undirected graphs, the {\sc{Feedback Arc Set}}
problem is polynomial in this class of graphs and can be efficiently
maintained dynamically using dynamic connectivity algorithms~\cite{Holm}.
So an interesting question is whether these two problems admit efficient
dynamic algorithms in the class of directed graphs.
In this regard Alman et al.~\cite{Alman} show lower bounds
 for the
{\sc{Feedback
Vertex Set}} problem, which extend also
to the {\sc{Feedback Arc Set}}
problem. 
To be more precise, they show
that in this case the dynamic algorithm requires $\Omega(f(K) m^{\delta})$
update/query time for some fixed $\delta > 0$,
assuming RO hypothesis(see~\cite{10.1145/3395037}).
Thus, a natural question is whether we can have more
efficient dynamic algorithms
at least in tournaments. In this paper we give positive answers to this
question.

Our precise setting is the following.
In the dynamic $\FAST$ (respectively $\FVST$)
problem, the goal is to design a data structure, which maintains
a dynamic
tournament under an operation of reversing an arc,
and can verify upon a
query whether the maintained tournament has a feedback arc set
(respectively
feedback vertex set) of size at most $K$, where $K$ is
the problem parameter.\footnote{For the sake of comparison we
formulate
our results in this setting, however our data structures do not
need to
know the parameter $K$ apriori and the queries can work with
different
values of the parameter.} This setting is referred to as
the
\emph{full} model. A popular restriction
of this setting, introduced in generality
by Alman et. al.~\cite{Alman},
is called the \emph{promise} model.
The promise model applied to our setting provides
the data structure with a guarantee,
that the feedback arc set (or
respectively
feedback vertex set) remains of size bounded by $g(K)$ for some
computable
function $g$ for the entire process. Some algorithms provided by
Alman et al.~\cite{Alman} work
only in this restricted setting.

\paragraph{\textbf{Main results.}}
For dynamic $\FAST$ we offer two algorithms.
In the promise model we give an $O(\sqrt{g(K)})$ update and
$O(3^K K \sqrt{K})$ query algorithm. In the full model, we offer an
$O(\log^2 n)$ update and $O(3^K K \log^2 n)$ query time algorithm
for dynamic $\FAST$. For dynamic $\FVST$ we offer an algorithm
in the promise model, which admits $O(g^5(K))$ update and
$O(3^K K^3 g(K))$ query time. Our results, compared to the related
results of Alman et al.~\cite{Alman} are shown in the table below.
All our running times are worst case (i.e., not amortized).
   \begin{center}
        \begin{tabular}{ | c | c | c | c | }
          \hline
          \thead{Problem} & \thead{Undirected} & \thead{Directed} & \thead{Tournament} \\
          \hline
          \makecell{FAS} & \makecell{$\log^{\mathcal{O}(1)}n$ update\\
                                     $\log^{\mathcal{O}(1)}n$ query} &
          \makecell{no $f(K)m^{o(1)}$\\ algorithm\\ assuming RO} &
          \makecell{\texttt{full model}:\\$\mathcal{O}(\log^2{n})$ update \\ $\mathcal{O}(3^K K \log^2{n})$ query\\
          \texttt{promise model}:\\
          $O(\sqrt{g(K)})$ update\\
          $O(3^K K \sqrt{K})$ query
          } \\
          \hline
          \makecell{FVS} &
          \makecell{$2^{\mathcal{O}(K \log{K})} \log^{\mathcal{O}(1)}{n}$\\
          amortized update \\
          $\mathcal{O}(K)$ query} & \makecell{no $f(K) m^{o(1)}$\\ algorithm\\ assuming RO} & \makecell{\texttt{promise model:} \\ $\mathcal{O}(g^5(K))$ update \\ $\mathcal{O}(3^K K^3 g(K))$ query} \\
          \hline
        \end{tabular}\label{results_overview}
    \end{center}
In the next sections we provide a brief overview of our techniques
and the main
ideas how they are used to obtain our results. The rigorous proofs,
due to space limitation, are moved to the appendix.

\section{Triangle Detection}

Our ultimate goal is to use in the dynamic setting
the standard branching algorithms
(see Algorithm~\ref{fig:branchingFASToverview} and
Algorithm~\ref{fig:branchingFVSToverview}), which rely
on efficiently locating a triangle in a tournament.
Hence, the basis for our algorithms are the triangle detection data
structures presented in Section~\ref{r:triangles},
that might be of independent interest.
The data structures we provide are stated independently of
the $\FAST$ or $\FVST$ problems. They rely on a different parameter,
which is the maximum number of arc-disjoint triangles in the
tournament. Nevertheless, our later results rely on a close
connection between the maximum number of arc-disjoint triangles and
the minimum feedback arc set of a tournament. Throughout the paper,
for a given tournament $T$, we denote by $\FAST(T)$ the size of the
minimum feedback arc set in $T$, by $\ADT(T)$ the maximum number of
arc-disjoint triangles in $T$, and by $\FVST(T)$ the size of the
minimum feedback vertex set in $T$. The following fact holds.
\begin{fact}\label{arc_disjoint_triangles_and_fast}
    $\ADT(T) \leq \FAST(T) \leq 6(\ADT(T) + 1)$.
\end{fact}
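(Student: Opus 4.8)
The plan is to prove the two inequalities separately, since they require very different ideas. The lower bound $\ADT(T) \le \FAST(T)$ is the easy half. Let $F$ be a minimum feedback arc set of $T$; reversing $F$ makes $T$ acyclic, and in particular any directed cycle that survived the \emph{deletion} of $F$ would still be present after reversing $F$, so $F$ must meet every directed cycle, hence every triangle. Now fix a family of $\ADT(T)$ pairwise arc-disjoint triangles. Since $F$ contains at least one arc of each triangle and the triangles share no arcs, these hitting arcs are all distinct, giving $|F| \ge \ADT(T)$ and therefore $\FAST(T) \ge \ADT(T)$.

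For the upper bound I would first reduce to counting \emph{backward arcs}. Recall the standard fact that for a tournament $\FAST(T)$ equals the minimum, over all vertex orderings, of the number of arcs pointing backward: reversing the backward arcs of any ordering yields a transitive tournament, and conversely reversing an optimal feedback arc set produces an acyclic tournament whose unique topological order realises that optimum. So I fix an ordering $v_1,\dots,v_n$ minimising the number of backward arcs, let $B$ be its set of backward arcs, so $|B| = \FAST(T)$, and record the crucial structural consequence of optimality: every arc between \emph{consecutive} vertices points forward, i.e. $v_i \to v_{i+1}$ for all $i$, since otherwise swapping that adjacent pair would strictly decrease the number of backward arcs.

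The core step is to extract many arc-disjoint triangles from $B$. Given a backward arc $(v_b,v_a)$ with $a<b$, note that at index $a$ the arc to $v_b$ points inward (from $v_b$, as $v_b \to v_a$) while at index $b-1$ it points outward (since $v_{b-1}\to v_b$ is a consecutive forward arc); hence there is a consecutive pair $v_p,v_{p+1}$ with $a \le p \le b-2$ such that $v_b \to v_p$ and $v_{p+1}\to v_b$. This yields a triangle $v_b \to v_p \to v_{p+1} \to v_b$ whose only backward arc is $(v_b,v_p)\in B$ and whose other two arcs are forward, one of them the consecutive arc $(v_p,v_{p+1})$. Thus every backward arc gives rise to such a ``canonical'' triangle, and I would then greedily select a maximal arc-disjoint subfamily of these triangles; since each selected triangle carries exactly one backward arc, a counting/charging argument bounding the number of backward arcs that any single selected triangle can block (through a shared forward or consecutive arc) by a constant yields $\FAST(T)=|B| \le 6\,\ADT(T)+O(1) \le 6(\ADT(T)+1)$, where the additive term absorbs boundary effects and the constant $6$ comes out of the charging.

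The main obstacle is precisely this last charging step: locating a triangle on a backward arc relies on the \emph{tournament} structure, yet to guarantee arc-disjointness one must commit to disjoint triangles, and doing so in a way that provably wastes only a bounded (constant) number of backward arcs per committed triangle is the delicate part. I would handle it by processing backward arcs in a fixed sweep order (e.g.\ by right endpoint) and arguing that each arc of a committed triangle can be shared with only a bounded number of candidate triangles, so that each commitment ``pays for'' at most a constant number of discarded backward arcs, which is exactly what produces the factor $6$. Everything else (the two standard reductions and the per-arc triangle extraction) is routine.
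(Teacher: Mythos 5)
Your proof of the first inequality $\ADT(T) \le \FAST(T)$ is correct and is essentially the paper's own argument (each of the arc-disjoint triangles must contribute a distinct arc to any feedback arc set). For the second inequality, note that the paper does not prove it at all: it simply cites Theorem~4 of~\cite{krithika2018parameterized}. So you are attempting more than the paper does --- but your attempt has a genuine gap, located exactly at the charging step that you yourself flag as ``the delicate part.''

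That step is not merely unproven; the claim it rests on is false. You assert that each arc of a committed canonical triangle is shared by only a bounded number of candidate triangles, so that each commitment discards only $O(1)$ backward arcs. Consider the tournament with vertices ordered $a, b, z_1, \ldots, z_{s-1}, w_1, \ldots, w_s$, in which every arc is forward except the $s$ backward arcs $w_i \to a$. Since any ordering certifies that $\FAST(T)$ is at most its number of backward arcs, and since the triangles $w_1 \to a \to b \to w_1$ and $w_{j+1} \to a \to z_j \to w_{j+1}$ for $1 \le j \le s-1$ are pairwise arc-disjoint, we get $s \le \ADT(T) \le \FAST(T) \le s$; hence this ordering is optimal. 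Now run your construction on any backward arc $w_i \to a$: among the vertices from $a$ up to the predecessor of $w_i$, the vertex $a$ is the only one that $w_i$ beats (each of $b$, the $z_j$, and the earlier $w_{i'}$ beats $w_i$), so the unique available switch is at $a$, and the canonical triangle is forced to be $w_i \to a \to b \to w_i$ --- for every $i$. All $s$ canonical triangles share the consecutive arc $a \to b$, a maximal arc-disjoint subfamily of them has size one, and that single committed triangle ``blocks'' the remaining $s-1$ backward arcs through its consecutive arc, no matter what sweep order you process them in. So no constant-per-commitment charging exists for triangles of your canonical form: to witness $\ADT(T) \ge s$ one must re-route blocked backward arcs through non-consecutive middle vertices (the $z_j$), and bounding the losses of such a re-routing argument is precisely the content of the cited theorem, not a routine finishing step. (Separately, your closing inequality $\FAST(T) \le 6\,\ADT(T) + O(1) \le 6(\ADT(T)+1)$ silently requires the additive constant to be at most $6$, which a complete proof would have to pin down.)
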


\begin{proof}
    It is easy to see that $\ADT(T) \leq \FAST(T)$, as in each of the $\ADT(T)$ arc-disjoint triangles one arc must be taken to the feedback arc set of $T$. The proof of the second inequality can be found in~\cite{krithika2018parameterized} (Theorem 4).
\end{proof}

The triangle detection data structures maintain the dynamic tournament altered by reversing arcs, and allow queries for a triangle in the maintained tournament. The first data structure is given by Theorem~\ref{thm:DsTrianglePromiseOverview} below. It is later used to solve dynamic $\FAST$ in the promise model.

\begin{theorem}[Theorem~\ref{thm:DsTrianglePromise} in Section~\ref{r:triangles}]\label{thm:DsTrianglePromiseOverview}
For any integer $n \in \mathbb{N}$ there exists a data structure $\dspTriangle{n}$, that maintains a dynamically changing tournament $T$ on $n$ vertices\footnote{\label{note1}The data structure assumes that the vertices of $T$ are indexed by numbers in $[n]$.}
by
supporting the following operations:
\begin{enumerate}
    \item $\INITNP{T}{n}$ -- initializes the data structure with a given tournament $T$ on $n$ vertices, in time $\mathcal{O}(n^2)$
    \item $\REVERSE{v}{u}$ -- reverses an arc between vertices $v$ and $u$ in $T$, in $\mathcal{O}(\sqrt{\ADT(T)})$ time
    \item $\FTRIANGLE{}$ -- returns a triangle from $T$ or reports that there are no triangles, in time $\mathcal{O}(\ADT(T) \sqrt{\ADT(T)})$
\end{enumerate}
\end{theorem}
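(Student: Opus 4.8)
The plan is to reduce triangle detection to a statement about out-degrees. I rely on two standard facts about tournaments: a tournament contains a directed triangle if and only if it is not transitive, and a tournament is transitive if and only if all its out-degrees are pairwise distinct (equivalently, the out-degree sequence is $0,1,\dots,n-1$). For the actual witness construction I would first prove the constructive strengthening: if $u\to v$ and $\deg^+(u)=\deg^+(v)=d$, then there is a vertex $w$ with $v\to w\to u$, so $u\to v\to w\to u$ is a triangle. This follows by counting: $N^+(v)$ and $N^+(u)\setminus\{v\}$ both lie in $V\setminus\{u,v\}$ and have sizes $d$ and $d-1$, so some $w\in N^+(v)\setminus N^+(u)$ exists, i.e. $v\to w$ and $w\to u$. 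Hence ``is there a triangle'' becomes ``is some out-degree repeated'', and producing one reduces to finding a repeated out-degree together with a completing vertex.

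Guided by this, I would maintain the structure $\dspTriangle{n}$ as the tournament's out-degree distribution: an array $\degrees$ of out-degrees and the buckets $\db{d}$ collecting all vertices of out-degree $d$, together with the induced ``transitive dominating prefix'' $P$ (the maximal initial segment of the out-degree order whose vertices beat everything after them, so that $P$ is transitive and contributes no triangle). Then $\INITNP{T}{n}$ simply reads the $\Theta(n^2)$ arcs, computes all out-degrees, and fills the buckets in $\mathcal{O}(n^2)$. For $\REVERSE{u}{v}$ the key observation is that flipping a single arc changes exactly the two out-degrees $\deg^+(u),\deg^+(v)$ by $\pm1$, so $u$ and $v$ each move to an \emph{adjacent} bucket; the combinatorial update is $\mathcal{O}(1)$. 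The remaining cost is keeping the auxiliary order/prefix information that the query needs: I would store the ``active'' vertices — those lying in non-singleton buckets or in the short degree window where the prefix breaks — in a square-root decomposition of block size $\Theta(\sqrt{\ADT(T)})$, so that a single perturbation touches $\mathcal{O}(1)$ blocks and is absorbed in $\mathcal{O}(\sqrt{\ADT(T)})$ time.

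For $\FTRIANGLE{}$ I would use $\MIN$ to locate the first vertex $x$ after the transitive dominating prefix; if there is none, all buckets are singletons, the tournament is transitive, and I report that no triangle exists. Otherwise $x$ must receive a back arc, and querying $\INCOMING{x}{P}$ yields candidate partners among $x$'s in-neighbours outside the prefix; combined with the equal-out-degree lemma above I scan the $\mathcal{O}(\ADT(T))$ active candidates, testing adjacency of each through the square-root structure in $\mathcal{O}(\sqrt{\ADT(T)})$ time, until a completing vertex closes a triangle. This yields the claimed $\mathcal{O}(\ADT(T)\sqrt{\ADT(T)})$ query time, and correctness follows from the two characterizations: the prefix certifies transitivity exactly when no active vertex remains, and the lemma guarantees that whenever one does, a triangle is found.

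The hard part will be twofold, and it is exactly where Fact~\ref{arc_disjoint_triangles_and_fast} enters. First, I must prove that the active region really has size $\mathcal{O}(\ADT(T))$: I would bound the number of vertices in repeated out-degree buckets (and the length of the non-transitive suffix) by $\mathcal{O}(\FAST(T))$, and then invoke $\FAST(T)\le 6(\ADT(T)+1)$ to convert this into the parameter appearing in the running times. Second, I must make the square-root trade-off consistent: the auxiliary structure has to be light enough that a reversal is repaired in $\mathcal{O}(\sqrt{\ADT(T)})$, yet rich enough that each completing-vertex adjacency test during a query also costs only $\mathcal{O}(\sqrt{\ADT(T)})$, so that the two bounds balance at update $\mathcal{O}(\sqrt{\ADT(T)})$ and query $\mathcal{O}(\ADT(T)\sqrt{\ADT(T)})$. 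Reconciling these two requirements — the combinatorial bound on the active set and the data-structural balancing of $\INCOMING{x}{P}$-style neighbour queries against local repairs — is the main obstacle, and I expect the bulk of the work to go into the amortization/charging argument that ties the active region to arc-disjoint triangles.
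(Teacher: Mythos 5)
Your high-level skeleton (degree buckets, the transitive prefix, a $\MIN$-type search, a counting argument that closes a triangle) does match the paper's, but two of your load-bearing claims do not hold, and the machinery that the paper uses exactly where your argument is vague is entirely missing. First, the claim that you can bound ``the length of the non-transitive suffix'' by $\mathcal{O}(\FAST(T))$ is false. Take the transitive tournament on $u_0,\dots,u_{n-1}$ and reverse the single arc between the source and the sink: then $\FAST(T)=\ADT(T)=1$, but the degree bucket $\db{T}[0]$ becomes empty, so the prefix is empty and $\nopref{T}$ has all $n$ vertices. The paper never bounds the size of $\nopref{T}$; it only bounds $|\emptydb|$ and $\mindt{\nopref{T}}$ by $\mathcal{O}(\ADT(T))$ (Lemma~\ref{empty_small}, Corollary~\ref{min_deg_Tprim}) and $\maxdb{T}$ by $\mathcal{O}(\sqrt{\ADT(T)})$ (Lemma~\ref{small_buckets}). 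Second, and more fundamentally, your query correctness rests on scanning only ``active'' vertices (non-singleton buckets plus a short window at the prefix break) and asserting that a completing vertex will be found there. Your equal-out-degree lemma is correct, but it only guarantees that \emph{some} $w\in N^+(v)\setminus N^+(u)$ exists; it gives no control over which bucket $w$ lies in, so nothing in your proposal shows the scanned set contains a triangle-closing vertex. The paper avoids this entirely: it takes $v$ of minimum in-degree in $\nopref{T}$ (which is $\mathcal{O}(\ADT(T))$), any in-neighbour $u$ of $v$, and then lists $d_{\nopref{T}}(v)$ \emph{arbitrary} in-neighbours of $u$ in $\nopref{T}$; pigeonhole on $v$'s in-degree forces one of them to close the triangle (Theorem~\ref{common_query_triangles}). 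The candidates are not confined to any ``active region''.

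The second gap is that the efficient listing of in-neighbours --- the procedure $\INCOMING{u}{l}$ --- is precisely what requires the one piece of state you never maintain: the explicit set of back arcs $\backdb$. In the paper, in-neighbours of $u$ with larger in-degree are read off the back-arc list, and in-neighbours with smaller in-degree are found by scanning buckets, with every miss charged to a back arc or an empty bucket; this is why the query costs $\mathcal{O}(|\backdb|+|\emptydb|+l)=\mathcal{O}(\ADT(T)\sqrt{\ADT(T)})$ (Lemmas~\ref{incoming_impl_promise},~\ref{few_back_arcs}). Maintaining $\backdb$ is also exactly where the update bound comes from: a reversal changes only back arcs incident to $u$ or $v$ whose other endpoint lies in the affected buckets, i.e.\ $\mathcal{O}(\maxdb{T})=\mathcal{O}(\sqrt{\ADT(T)})$ arcs (Lemmas~\ref{back_arcs_changes_nok},~\ref{small_buckets}). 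Your substitute --- a square-root decomposition with block size $\Theta(\sqrt{\ADT(T)})$ --- is not well defined, since $\ADT(T)$ changes with every update and is unknown to the data structure (the paper's structure is parameter-oblivious and its bounds adapt automatically), and your $\mathcal{O}(\sqrt{\ADT(T)})$-time adjacency tests are a red herring: the structure stores the adjacency matrix, so adjacency is $\mathcal{O}(1)$, and the $\sqrt{\ADT(T)}$ factors in the theorem come from bucket sizes and back-arc maintenance, not from adjacency queries. To repair your proposal you would essentially have to reintroduce $\backdb$ and reprove the paper's three quantitative lemmas, at which point the argument becomes the paper's proof.
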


We note here that the above data structure does not need to know
the value of $\ADT(T)$. The same holds for the second data structure, which we present next.
The second triangle detection data structure gets rid of the dependence on the parameter in the update operation. This later allows us to use it to solve the dynamic $\FAST$ in the full model at the cost of introducing factors poly-logarithmic in the size of the tournament.

\begin{theorem}[Theorem~\ref{thm:DsTriangleFull} in Section~\ref{r:triangles}]\label{thm:DsTriangleFullOverview}
For any integer $n \in \mathbb{N}$ there exists a data structure $\dsTriangle{n}$, that maintains a dynamically changing tournament $T$ on $n$ vertices\footnoteref{note1} 
by
supporting the following operations:
\begin{enumerate}
    \item $\INITNP{T}{n}$ -- initializes the data structure with a given tournament $T$ on $n$ vertices, in time $\mathcal{O}(n^2)$
    \item $\REVERSE{v}{u}$ -- reverses an arc between vertices $v$ and $u$ in $T$, in time $\mathcal{O}(\log^2 n)$
    \item $\FTRIANGLE{}$ -- returns a triangle from $T$ or reports that there are no triangles, in time $\mathcal{O}(\ADT(T) \log^2 n)$
\end{enumerate}
\end{theorem}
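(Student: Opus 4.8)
The plan is to maintain, alongside $T$, a linear order $\sigma$ of its vertices that is a valid topological order exactly when $T$ is acyclic, and to reduce triangle detection to examining the first place where $\sigma$ fails to be topological. Since every directed cycle in a tournament contains a directed triangle, $T$ is triangle-free iff $T$ is acyclic iff $\ADT(T)=0$; so \FTRIANGLE{} should certify acyclicity when no defect is found and otherwise extract a concrete triangle. I would keep $\sigma$ as a score order (vertices sorted by out-degree, descending), stored in a balanced search tree augmented with subtree aggregates, and I would exploit the standard fact that the maximal prefix $v_1,\dots,v_p$ of $\sigma$ whose out-degrees are exactly $n-1,n-2,\dots,n-p$ is a dominating, internally transitive (hence triangle-free) chain, all of whose vertices beat every later vertex. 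The operation \MIN then returns the first vertex $w=v_{p+1}$ after this prefix, which can be located by a binary search over the maintained scores.

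For the supporting data, I would replace the degree-bucket machinery underlying the promise-model structure of Theorem~\ref{thm:DsTrianglePromiseOverview} (which pays $\Theta(\sqrt{\ADT(T)})$ per bucket touch) by balanced trees and a two-dimensional range-counting structure answering \RECT{\cdot}{\cdot} in $O(\log^2 n)$ time, so that for a vertex and a position window one can count and locate its out- or in-neighbours in that window. \INITNP{T}{n} reads the $\Theta(n^2)$ adjacency entries once, computes the scores and the order, and builds these structures in $O(n^2)$ time. A reversal \REVERSE{v}{u} changes exactly one adjacency entry and the out-degrees of $u$ and $v$ by $\pm1$; I would reflect this by one point update in the range structure and by relocating $u$ and $v$ in the order tree, which costs $O(\log^2 n)$ in total and, crucially, does not depend on $\ADT(T)$. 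This is precisely the trade that distinguishes this structure from Theorem~\ref{thm:DsTrianglePromiseOverview}: the $\sqrt{\ADT(T)}$ bucket cost is exchanged for a $\log^2 n$ tree cost.

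To answer \FTRIANGLE{}, I would first call \MIN. If it reports that the clean prefix is all of $V$, then $T$ is transitive and I return ``no triangle'' in $O(\log^2 n)$ time (consistent with the bound, since then $\ADT(T)=0$). Otherwise $w=v_{p+1}$ is a maximum-out-degree vertex of the sub-tournament $T'$ induced on the suffix $\{v_{p+1},\dots,v_n\}$ that does \emph{not} beat all of $T'$, so $w$ is a king of $T'$ with at least one in-neighbour there. I would then retrieve in-neighbours of $w$ inside the suffix via \INCOMING{w}{\cdot} and, for such a $y$ with $y\to w$, use \RECT{\cdot}{\cdot} to find an out-neighbour $x$ of $w$ that is an in-neighbour of $y$; the king property guarantees one exists, yielding the triangle $w\to x\to y\to w$. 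Each probe costs $O(\log^2 n)$, and I would bound the number of probes by $O(\ADT(T))$ using Fact~\ref{arc_disjoint_triangles_and_fast}, which ties the number of vertices and arcs that can witness defects to $\FAST(T)$ and hence to $\ADT(T)$ up to constant factors; this gives the claimed $O(\ADT(T)\log^2 n)$ query time.

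The main obstacle is reconciling the two demands placed on $\sigma$. The update must perturb the stored structures only $O(\mathrm{polylog}\,n)$ much, which forces a canonical, locally repairable order such as the score order; yet the query must be able to read a triangle off that same order cheaply. Making both work requires (i) showing that a single reversal really relocates only $O(1)$ vertices and triggers only $O(1)$ changes to the range structure, so that no reshuffling cascade occurs, and (ii) proving the localization used in the query, namely that from the first defect vertex $w$ a triangle is reachable after only $O(\ADT(T))$ range probes rather than an exhaustive neighbourhood scan. Establishing this localization---and handling score ties so that reorderings stay constant-sized---is where the real work lies; the remaining bookkeeping is a routine translation of the bucket-based arguments of Theorem~\ref{thm:DsTrianglePromiseOverview} into the language of balanced trees and rectangle queries.
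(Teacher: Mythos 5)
Your skeleton is essentially the paper's construction in mirror image: your descending score order is the paper's in-degree bucket order (in a tournament $d^-(v)+d^+(v)=n-1$, so your maximum-out-degree vertex $w=v_{p+1}$ of the suffix is exactly the paper's minimum-in-degree vertex of $\nopref{T}$), your clean prefix is the prefix of Definition~\ref{prefix_def} and Lemma~\ref{prefix_transitive}, and your range structures correspond to the degrees structure of Fact~\ref{degrees_impl} plus the adjacency trees of Definition~\ref{adjacency_tree_def}; the update analysis is fine. The query, however, has a genuine gap. The step ``use $\RECT{\cdot}{\cdot}$ to find an out-neighbour $x$ of $w$ that is an in-neighbour of $y$'' is not an operation your structures support: $\RECT{\cdot}{\cdot}$ and the per-vertex segment trees each count \emph{one} vertex's neighbours in a position window, and knowing $|N^+(w)\cap[a,b]|$ and $|N^-(y)\cap[a,b]|$ separately tells you nothing about whether $N^+(w)\cap N^-(y)\cap[a,b]$ is non-empty, so no BST-style descent can locate a point of the intersection. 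The king property gives \emph{existence} of $x$, not a search procedure. The paper avoids intersection search entirely (Theorem~\ref{common_query_triangles} and Lemma~\ref{incoming_full_impl}): it fixes one in-neighbour $u$ of $v$ (your $y$ of $w$), enumerates in-neighbours \emph{of $u$} one at a time, each in $O(\log^2 n)$, and tests each candidate against the adjacency matrix in $O(1)$; after at most $d_{\nopref{T}}(v)$ candidates one must be an out-neighbour of $v$, by pigeonhole on the minimality of $v$'s in-degree. Note the enumeration is over the neighbourhood of $y$, not of $w$ --- enumerating in-neighbours of $w$, as you propose, does not by itself make progress.

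The second gap is the probe bound. You invoke Fact~\ref{arc_disjoint_triangles_and_fast} to claim $O(\ADT(T))$ probes, but that fact only relates $\ADT(T)$ to $\FAST(T)$; it bounds nothing about the defect vertex $w$. The bound the theorem needs is the chain: the number of probes is at most $\mindt{\nopref{T}}+1$ (the pigeonhole above); $\mindt{\nopref{T}} \leq |\emptydb|$ (Fact~\ref{minimal_degree} combined with Lemma~\ref{prefix_transitive}); and $|\emptydb| \leq 12(\ADT(T)+1)$ (Lemma~\ref{empty_small}, whose proof uses Fact~\ref{arc_disjoint_triangles_and_fast} but also needs the observation that reversing one arc can remove at most two empty buckets). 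None of this appears in your proposal, and you yourself concede that ``establishing this localization \ldots is where the real work lies.'' That localization \emph{is} the content of the $O(\ADT(T)\log^2 n)$ query bound, so deferring it leaves the proof incomplete rather than merely unpolished.
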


Both data structures rely on the same information related to the
maintained dynamic tournament $T$. First of all, both data structures
partition vertices of $T$ into \emph{indegree buckets}, where the
indegree bucket $\db{T}[d]$ stores vertices of indegree $d$. By $\maxdb{T}$ we
denote the maximum size of an indegree bucket. The tournament is
acyclic if and only if all indegree buckets have size one
(see Fact~\ref{transitive_buckets}). Both data structures also
maintain
a set $\emptydb$ of empty indegree buckets,
which stores all the values of $d$ for which
$\db{T}[d]$ is empty. The promise data structure additionally
maintains
a set $\backdb$ of all \emph{back arcs}. An arc $uv$ is called
a back arc if $d_T(u) \geq d_T(v)$, where $d_T(x)$ stands for
the indegree of $x$ in $T$. It is relatively easy to maintain
indegree buckets and the set of empty buckets under arc reversals in
time $O(1)$ per update,
as each arc reversal affects the indegrees of exactly two vertices
(see Lemma~\ref{update_impl_basic} for the details). Maintaining
the set of back arcs $\backdb$, however, takes
$O(\sqrt{\ADT(T)})$ per update
(due to Lemma~\ref{back_arcs_changes_nok}).

To support detecting triangles,
we first define a prefix $P$ of the
tournament $T$
as the union of the indegree buckets
in the maximum prefix of indegree buckets such that
all indegree buckets in the prefix have size one (see
Definition~\ref{prefix_def} for a more formal definition).
Thus, the subtournament
$\induced{T}{P}$
of $T$ induced by $P$ is acyclic.
By $\nopref{T}$ we denote
the tournament $T$ without its prefix.

Both data structures follow the same approach. In order to find a
triangle, they first locate a vertex
$v$ of minimum indegree
in $\nopref{T}$.
We can prove, that $d_{\nopref{T}}(v)$ is bounded
in terms of $\ADT(T)$ (Corollary~\ref{min_deg_Tprim}). Once $v$ is
located, the data structures find an in-neighbor $u$ of vertex $v$ in
$\nopref{T}$,
and then they find a set $W$ of $d_{\nopref{T}}(v)$ in-neighbors
of $u$ in $\nopref{T}$. Since $d_{\nopref{T}}(v) \leq d_{\nopref{T}}(u)$,
vertex $u$ has at least $d_{\nopref{T}}(v)$ in-neighbors in $\nopref{T}$.
We are bound to find a triangle $uvw$ for
some $w \in W$,
because if all the edges between $W$ and $v$ were directed
towards $v$, that would imply that the indegree of $v$ in
$\nopref{T}$ is more than $d_{\nopref{T}}(v)$.

Thus, in order to support the $\FTRIANGLE{}$
method, it suffices to implement
two methods: one to locate a vertex
$v \in V(\nopref{T})$ minimizing
$d_{\nopref{T}}(v)$, and the second
one to list $l$ in-neighbors in
$\nopref{T}$ of a given vertex
$u \in \nopref{T}$.
Both these methods use the stored
structures such as indegree buckets
$\db{T}[d]$ and the sets $\emptydb$
and $\backdb$ in order to implement
these methods. For instance,
the method for locating vertex $v$
is the same for both data structures
and consists of iterating through
the set $\emptydb$ to find the first
index $d$ that is missing in
$\emptydb$ (and retrieving a vertex
from the corresponding indegree bucket, see
Lemma~\ref{empty_impl_promise} for
the details).
Thus, it takes
$O(|\emptydb|)$ time to locate $v$.
The method responsible for finding in-neighbors is different for the
two data structures. In the promise data structure $\dspTriangle{n}$,
this method (provided in
Lemma~\ref{incoming_impl_promise})
heavily relies on the fact that we can iterate over the set
of back arcs $\backdb$. In fact, for
$\dspTriangle{n}$ data structure
it
takes $O(|\backdb| + |\emptydb| + l)$ time to find $l$ in-neighbors
of a given vertex  in $\nopref{T}$ (see
Lemma~\ref{incoming_impl_promise}).
As mentioned before, maintaining set $\backdb$
takes $O(\sqrt{\ADT(T)})$ time per update, which we want to
avoid if there is no promise. Hence,
the $\dsTriangle{n}$ data structure uses balanced search trees to
look for in-neighbors, what implies poly logarithmic factors in the
update and query times.
For $\dsTriangle{n}$ data structure,
it takes
$O(l \cdot \log^2{n} + |\emptydb|)$ time to find $l$
in-neighbors
of a given vertex  in $\nopref{T}$ (see Lemma~\ref{incoming_full_impl} for
the details and the proof).
As a result of the above,
with regard to triangle detection,
most of our running times stem
from the fact, that we are able to
bound
$|\emptydb|$, $|\backdb|$ and $\maxdb{T}$ in terms of $\ADT(T)$
(Lemma~\ref{small_buckets}, Lemma~\ref{empty_small} and Lemma~\ref{few_back_arcs}).

\section{Dynamic $\FAST$ Algorithm}
We now give an idea how we use the triangle detection data structures for
the dynamic $\FAST$ problem. This is the topic of
Section~\ref{sec:dynamicFast}. Let us consider a standard branching
algorithm verifying whether $\FAST(T) \leq K$ shown in
Algorithm~\ref{fig:branchingFASToverview}
(see~\cite{FASTBranching} for correctness).
Observe, that Algorithm~\ref{fig:branchingFASToverview} can be executed
using both of the provided triangle detection data structures.

\begin{algorithm}
    \SetKwInOut{Input}{Algorithm}
	\SetKwInOut{Output}{Output}
    \Input{$\FINDFAST{K}$}
    \Output{Verify if $\FAST(T) \leq K$}
    \If{$T$ is acyclic}{\Return TRUE \;}
    \If{$K=0$}{ \Return FALSE \;}

    $uvw \gets \FTRIANGLE{}$\;

    \For{$xy \in \{ uv, vw, wu \}$}{
       $\REVERSE{x}{y}$ \;
           \If{$\FINDFAST{K-1}$}{
                   $\REVERSE{y}{x}$\;
                   \Return TRUE \;
            }
            $\REVERSE{y}{x}{}$\;

}
    \Return FALSE \;
    \caption{Pseudocode for $\FINDFAST{K}$}
    \label{fig:branchingFASToverview}
\end{algorithm}

We first show how to obtain the dynamic $\FAST$ algorithm in the
promise model. Let $K$ be the parameter, and let us assume that at
all times $\FAST(T) \leq g(K)$.
Then, by Fact ~\ref{arc_disjoint_triangles_and_fast},
also $\ADT(T) \leq g(K)$.
So, extending the data structure of
Theorem~\ref{thm:DsTrianglePromiseOverview}
with the $\FINDFAST{K}$ method gives a dynamic algorithm for
$\FAST$ in the promise model with $\mathcal{O}(\sqrt{g(K)})$ update
and $\mathcal{O}(3^K g(K) \sqrt{ g(K) })$ query time. With a bit
of extra technical work, we can use the data structure of Theorem~\ref{thm:DsTrianglePromiseOverview} to obtain a better result stated below.
\begin{theorem}[Theorem~\ref{thm:FastPromise} in Section~\ref{sec:dynamicFast}]\label{thm:FastPromiseOverview}
    The dynamic \emph{FAST} problem with parameter $K$ admits a data
    structure\footnoteref{note1} with initialization time
    $\mathcal{O}(n^2)$, worst-case update time
    $\mathcal{O}(\sqrt{g(K)})$ and
    worst-case query time $\mathcal{O}(3^K K \sqrt{K})$ under
    the
    promise that there is a computable function $g$, such that the
    maintained tournament $T$ always has a feedback arc set of size at
    most $g(K)$.
\end{theorem}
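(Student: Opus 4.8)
The plan is to instantiate the data structure of Theorem~\ref{thm:DsTrianglePromiseOverview} and equip it with the branching procedure $\FINDFAST{K}$ of Algorithm~\ref{fig:branchingFASToverview}, answering each query by invoking $\FINDFAST{K}$ on the currently maintained tournament. The initialization time $\mathcal{O}(n^2)$ and the update time $\mathcal{O}(\sqrt{\ADT(T)}) = \mathcal{O}(\sqrt{g(K)})$ are then inherited verbatim from $\dspTriangle{n}$, the latter using the promise together with $\ADT(T)\le\FAST(T)\le g(K)$ (Fact~\ref{arc_disjoint_triangles_and_fast}). The whole difficulty lies in the query bound: a black-box run of Algorithm~\ref{fig:branchingFASToverview} spends $\mathcal{O}(\ADT(T')\sqrt{\ADT(T')})$ per $\FTRIANGLE{}$ call on the local tournament $T'$, and since an intermediate $T'$ arising during branching may still have $\ADT(T')$ as large as $\Theta(g(K))$, this only gives the weaker $\mathcal{O}(3^K g(K)\sqrt{g(K)})$ bound noted above. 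The aim is to replace every $g(K)$ by $K$.

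The structural observation driving the improvement is that a node reached after reversing $i$ arcs carries remaining budget $b=K-i$, and the call there is exactly the question whether $\FAST(T')\le b$; by Fact~\ref{arc_disjoint_triangles_and_fast}, if $\ADT(T')>b$ then $\FAST(T')\ge\ADT(T')>b$, so the correct answer is FALSE. Hence I may \emph{prune} — return FALSE immediately — as soon as I detect $\ADT(T')>b$. I would implement this by budgeting: run $\FTRIANGLE{}$ but abort it after $c\,b\sqrt{b}$ elementary steps, where $c$ is the constant hidden in the $\mathcal{O}(\ADT(T')\sqrt{\ADT(T')})$ bound. Since that bound is monotone in $\ADT(T')$, whenever $\ADT(T')\le b$ the call finishes within the budget, while overflowing the budget certifies $\ADT(T')>b$ and justifies pruning. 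As $\FTRIANGLE{}$ is read-only, an aborted call needs no cleanup, so either way a node spends only $\mathcal{O}(b\sqrt{b})=\mathcal{O}(K\sqrt{K})$ on triangle detection.

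The remaining, and main, obstacle is the cost of the arc reversals performed at each branching node: one $\REVERSE{x}{y}$ costs $\mathcal{O}(\sqrt{\ADT(T')})$, which on a deceptive path where $\FTRIANGLE{}$ happened to finish quickly despite a large $\ADT(T')$ could still be $\mathcal{O}(\sqrt{g(K)})$. I would resolve this by implementing $\REVERSE{x}{y}$ as a logged sequence of individually reversible elementary updates (the $\mathcal{O}(1)$ bucket and $\emptydb$ moves together with the $\mathcal{O}(\sqrt{\ADT(T')})$ back-arc insertions and deletions), and then budgeting each forward reversal at the threshold $c'\sqrt{b}$ matching the update bound. If a forward reversal exceeds $c'\sqrt{b}$ steps, then $c'\sqrt{\ADT(T')}\ge c'\sqrt{b}$, so $\ADT(T')>b$ and $\FAST(T')>b$; I roll the partial log back in $\mathcal{O}(\sqrt{b})$ time and return FALSE for the whole node. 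This logged design is the technical heart, and it also makes the post-recursion undo cheap: the restoring reversal is realized by replaying the log backward, so it costs exactly as much as the (completed, hence budget-bounded) forward reversal, namely $\mathcal{O}(\sqrt{b})$, with no fresh $\ADT$-dependent charge. Thus every executed reversal — forward, aborted, or undoing — costs $\mathcal{O}(\sqrt{b})=\mathcal{O}(\sqrt{K})$.

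Finally I would assemble the estimate and check soundness. For completeness I use Fact~\ref{arc_disjoint_triangles_and_fast} once more: along the feasible path guaranteed by the correctness of Algorithm~\ref{fig:branchingFASToverview} (reversing at each step an arc of a genuine feedback arc set of size at most $K$) one has $\FAST(T')\le b$, hence $\ADT(T')\le b$, so neither budget is ever tripped and a valid TRUE answer is never falsely pruned; the base cases ($T'$ acyclic, recognized in $\mathcal{O}(1)$ by the ``all indegree buckets have size one'' criterion, and $b=0$) are handled separately in $\mathcal{O}(1)$. For the running time, the recursion has branching factor $3$ and depth at most $K$, hence visits $\mathcal{O}(3^K)$ nodes, and each node is charged $\mathcal{O}(K\sqrt{K})$ for the budgeted $\FTRIANGLE{}$ plus $\mathcal{O}(\sqrt{K})$ for each of its $\mathcal{O}(1)$ reversals, giving the claimed worst-case query time $\mathcal{O}(3^K K\sqrt{K})$.
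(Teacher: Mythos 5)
Your proposal is correct and reaches the stated bounds by the same overall strategy as the paper: extend the promise triangle structure with the branching procedure, and prune a branch as soon as the evidence shows $\ADT(T')>b$ for the remaining budget $b$, which by Fact~\ref{arc_disjoint_triangles_and_fast} certifies $\FAST(T')>b$. Where you differ is the \emph{mechanism} for enforcing per-operation budgets. The paper builds a wrapper $\dspTrianglePlus{n}$ (Theorem~\ref{thm:DsTrianglePromiseInternal}) whose operations $\FTRIANGLE{d}$ and $\REVERSEBND{x}{y}{d}$ do constant-time \emph{preliminary checks} of maintained aggregates ($\maxdb{T}$, $|\backdb|$, $|\emptydb|$) against the explicit thresholds of Lemmas~\ref{small_buckets}, \ref{few_back_arcs} and \ref{empty_small}; a failed check certifies $\ADT(T)>d$ before anything is modified, so the structure stays intact for free. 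The undo problem is then solved not by logging but by re-invoking the bounded reversal at the inflated threshold $4k+4$ and proving it can never fail, using $|\maxdb{T'}-\maxdb{T''}|\le 2$ and $8(\sqrt{k+1}+1)+2\le 8(\sqrt{4k+5}+1)$. You instead clock the unmodified operations and abort on overflow, which forces you to add the operation log: rollback handles aborted reversals, and backward replay gives an undo whose cost equals the (budget-bounded) forward cost, making the ``undo never fails'' issue trivial by exact state inversion. Both devices are sound; yours trades the paper's combinatorial threshold checks and the $4k+4$ trick for logging/reversibility machinery (and an implementable knowledge of the hidden constants in the time bounds, which the paper's explicit lemma constants do supply), while the paper's version avoids any mutation-then-rollback and only ever needs to argue that the maintained tournament, rather than the full internal state, is restored across recursive calls.
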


Let us now give an idea of how to use the data structure
of Theorem~\ref{thm:DsTriangleFullOverview} for the dynamic $\FAST$
in the full model. We can extend also this data structure with the
$\FINDFAST{K}$ method given in
Algorithm~\ref{fig:branchingFASToverview}. However, the problem is
that the running time of $\FTRIANGLE{}$ depends on $\ADT(T)$, and we
are not working in the promise model anymore. There is a remedy to
this. We can provide $\dsTriangle{n}$ with an extra method
$\FTRIANGLE{d}$, which  takes a parameter $d$, and essentially runs
the previous $\dsTriangle{n}.\FTRIANGLE{}$ method but only allows it
to perform $\mathcal{O}(d \log^2 n)$ steps. If this number of steps
is not sufficient to find a triangle, then we can deduce, that
$\ADT(T) > d$ and as a consequence also $\FAST(T) > d$. So the
$\dsTriangle{n}.\FTRIANGLE{d}$ is guaranteed to run in
$\mathcal{O}(d \log^2 n)$ time, but it either returns a triangle or
reports that there are more than $d$ arc-disjoint triangles. We can
now extend the data structure $\dsTriangle{n}$ of
Theorem~\ref{thm:DsTriangleFullOverview} with the $\FINDFAST{K}$
method given in Algorithm~\ref{fig:branchingFASToverview}, but we
replace line $uvw \gets \FTRIANGLE{}$ with $uvw \gets \FTRIANGLE{K}$.
If the procedure $\FTRIANGLE{K}$ fails to find a triangle, we can
safely return FALSE, as this means that $\FAST(T) \geq \ADT(T) > K$.
The procedure $\FTRIANGLE{K}$ takes $\mathcal{O}(K \log^2 n)$ time.
This way, we arrive at the following result.
\begin{theorem}[Theorem~\ref{thm:FastFull} in Section~\ref{sec:dynamicFast}]\label{thm:FastFullOverview}
    The dynamic \emph{FAST} problem with a parameter $K$ admits a
    data structure\footnoteref{note1} with initialization
    time $\mathcal{O}(n^2)$, worst-case update time
    $\mathcal{O}(\log^2{n})$ and worst-case query time
    $\mathcal{O}(3^K K \log^2{n})$.
\end{theorem}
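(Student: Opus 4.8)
The plan is to equip the full-model triangle data structure $\dsTriangle{n}$ of Theorem~\ref{thm:DsTriangleFullOverview} with the branching procedure $\FINDFAST{K}$ of Algorithm~\ref{fig:branchingFASToverview}, and to read off the three running times accordingly. Initialization and update are inherited verbatim: the query never changes the underlying tournament permanently (see below), so $\INITNP{T}{n}$ costs $\mathcal{O}(n^2)$ and each $\REVERSE{v}{u}$ costs $\mathcal{O}(\log^2 n)$ by Theorem~\ref{thm:DsTriangleFullOverview}. The whole difficulty is therefore confined to the query, where the naive cost of the triangle search, $\mathcal{O}(\ADT(T)\log^2 n)$, is not under control once we leave the promise model.

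The first step is to replace $\FTRIANGLE{}$ by a budgeted variant $\FTRIANGLE{d}$. I would run the ordinary triangle-detection routine unchanged, but thread an operation counter through it and abort as soon as the counter crosses $c\,d\log^2 n$, where $c$ dominates the constant hidden in the $\mathcal{O}(\ADT(T)\log^2 n)$ bound of Theorem~\ref{thm:DsTriangleFullOverview}. If the routine finishes within budget it returns a triangle (or reports acyclicity); if it is aborted, then its full cost would have exceeded $c\,d\log^2 n$, and since that cost is at most $c\cdot\ADT(T)\log^2 n$, we must have $\ADT(T) > d$. Hence $\FTRIANGLE{d}$ runs in $\mathcal{O}(d\log^2 n)$ and either returns a triangle or certifies $\ADT(T) > d$.

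Next I would run $\FINDFAST{K}$ exactly as in Algorithm~\ref{fig:branchingFASToverview}, with two changes. The acyclicity test in the first line is implemented in $\mathcal{O}(1)$ by checking whether $\emptydb$ is empty, which is equivalent to all indegree buckets having size one and hence to acyclicity (Fact~\ref{transitive_buckets}). The line $uvw\gets\FTRIANGLE{}$ is replaced by $uvw\gets\FTRIANGLE{K}$, where $K$ is the current (decreasing) parameter of the recursive call; if this call certifies $\ADT(T) > K$ we immediately return FALSE. This is correct: having passed the acyclicity test and the $K=0$ test we face a non-acyclic tournament with $K\ge 1$, so a certificate $\ADT(T) > K$ yields $\FAST(T)\ge\ADT(T) > K$ by Fact~\ref{arc_disjoint_triangles_and_fast}, meaning no feedback arc set of size at most $K$ exists; otherwise $\FTRIANGLE{K}$ returns a genuine triangle and the untouched branching logic is correct by~\cite{FASTBranching}.

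Finally I would bound the query time by the size of the recursion tree. The tree has branching factor three and depth at most $K$, hence at most $\mathcal{O}(3^K)$ nodes; each node performs one call to $\FTRIANGLE{K}$ in $\mathcal{O}(K\log^2 n)$ time and at most six arc reversals at $\mathcal{O}(\log^2 n)$ each, for $\mathcal{O}(K\log^2 n)$ per node and $\mathcal{O}(3^K K\log^2 n)$ in total. Because every $\REVERSE{x}{y}$ along a branch is matched by the reverting $\REVERSE{y}{x}$ before control leaves that branch, the data structure is returned to its exact prior state, so queries are non-destructive and the stated update time is preserved across queries. The main obstacle I anticipate is making the budgeted $\FTRIANGLE{d}$ rigorous: one must argue that the internal triangle search admits a meaningful, abortable operation count and that its worst-case bound from Theorem~\ref{thm:DsTriangleFullOverview} is a genuine bound on that count, so that exceeding the budget truly certifies $\ADT(T) > d$ rather than merely reflecting a loose constant.
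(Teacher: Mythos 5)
Your overall strategy --- graft Algorithm~\ref{fig:branchingFASToverview} onto the full-model structure, replace $\FTRIANGLE{}$ by a budgeted $\FTRIANGLE{K}$, return FALSE whenever the budget certifies $\ADT(T) > K$ via Fact~\ref{arc_disjoint_triangles_and_fast}, and charge $\mathcal{O}(K\log^2 n)$ to each of the $\mathcal{O}(3^K)$ recursion nodes --- is exactly the paper's (Theorem~\ref{thm:FastFull}, built on Lemma~\ref{update_full_impl_plus}), and your $\mathcal{O}(1)$ acyclicity test via $|\emptydb|=0$ and the undo-reversals argument also match. The difference is confined to how $\FTRIANGLE{d}$ is realized. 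You clock the search and abort after $c\,d\log^2 n$ operations; the paper never aborts anything. Its $\dsTrianglePlus{n}.\FTRIANGLE{d}$ first checks, in constant time (the size of $\emptydb$ is maintained), whether $|\emptydb| \leq 12(d+1)$. If the check fails, Lemma~\ref{empty_small} (which states $|\emptydb| \leq 12(\ADT(T)+1)$ with an explicit constant) immediately certifies $\ADT(T) > d$; if it passes, then $\mindt{\nopref{T}} \leq |\emptydb| \leq 12(d+1)$ by Fact~\ref{minimal_degree} and Lemma~\ref{prefix_transitive}, so the \emph{unmodified} search is guaranteed to finish within $\mathcal{O}(d\log^2 n)$ by Theorem~\ref{common_query_triangles}, Lemma~\ref{empty_impl_promise} and Lemma~\ref{incoming_full_impl}. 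In short, the paper replaces your timeout with a pre-flight combinatorial certificate.

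This distinction matters for the two obstacles your route faces. First, the one you flag yourself: an asymptotic statement does not hand you a number to compare a counter against, so you would have to re-derive the bound of Theorem~\ref{thm:DsTriangleFullOverview} with explicit constants before your abort rule is even well defined; the paper only needs the constant $12$ it has already proven. Second, one you do not address: aborting the search mid-run is destructive. The subprocedure $\INCOMING{v}{l}$ (Algorithm~\ref{fig:incoming_full_triangles}) temporarily zeroes leaves of the adjacency tree $T_v$ and restores them only at the end of the call, so a timeout must also roll back the writes performed so far, otherwise the data structure is corrupted for every subsequent update and query; your non-destructiveness argument covers only the branching reversals, not the aborted triangle search. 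Both gaps are fixable (extract constants; log and undo the $T_v$ writes on abort), but the paper's check-first design makes them vanish: the search either never starts, or runs to clean completion within the stated budget.
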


\section{Dynamic $\FVST$ Algorithm}
In Section~\ref{sec:dynamicFVST} we present the dynamic data
structure for the $\FVST$ problem with a parameter $K$ in the
promise model. The main idea is the same as for the $\FAST$ problem.
We want to be able to quickly find a triangle in tournament $T$ and
perform a standard branching algorithm for $\FVST$, presented in
Algorithm~\ref{fig:branchingFVSToverview} (its correctness follows
from Fact~\ref{cycles_are_triangles}).
\begin{algorithm}
    \SetKwInOut{Input}{Algorithm}
	\SetKwInOut{Output}{Output}
    \Input{$\FINDFVST{K}$}
    \Output{Verify if $\FVST(T) \leq K$}
    \If{$T$ is acyclic}{\Return TRUE \;}
    \If{$K=0$}{ \Return FALSE \;}

    $uvw \gets \FTRIANGLE{}$\;

    \For{$x \in \{ u, v, w \}$}{
       $\REMOVE{x}$ \;
           \If{$\FINDFVST{K-1}$}{
                   $\RESTORE{x}$\;
                   \Return TRUE \;
            }
            $\RESTORE{x}$\;

}
    \Return FALSE \;
    \caption{Pseudocode for $\FINDFVST{K}$}
    \label{fig:branchingFVSToverview}
\end{algorithm}

The
branching algorithm (Algorithm~\ref{fig:branchingFVSToverview})
finds a triangle in
the tournament, and then
branches recursively with each of the triangle vertices removed.
To implement this algorithm, we
not only need
a method to find a triangle in the
maintained tournament,
but we also need to support vertex
removals and restorations, which are
significantly more complex than
edge reversals, as they change the indegree of up to $(n-1)$ vertices.
This loss of locality poses a number of problems,
including maintaining indegree buckets, maintaining
the set of empty degree buckets,
or even keeping track of the acyclicity of the tournament.
Observe also, that our parameter is now $\FVST(T)$, which
behaves entirely different than $\FAST(T)$. For instance,
our triangle
detection data structures rely on the fact, that a tournament $T$
with $\FAST(T) \leq K$ has a bounded number of back arcs
in terms of the parameter $K$. This, unfortunately, stops to be
the case for $\FVST(T)$: the number of back arcs can be actually
unbounded in terms of $\FVST(T)$. Thus, we cannot iterate over
back arcs when issuing the $\FTRIANGLE{} $ query, as the
$\dspTriangle{n}$ data structure does. Adapting $\dsTriangle{n}$ data
structure to this setting is even more problematic, as updating
the balanced search trees it relies on is inefficient under
the updates of removing or restoring a vertex.
\newline
\newline
\textbf{Vertex removals and restorations.}
We first deal with the problem of removing and restoring
vertices.
In Subsection~\ref{sub:drem} we introduce a new data structure
called $\dsRemove{n}{k}$. The parameter $k$ is
an internal parameter of the data stucture that we will
relate to the problem parameter $K$ later on.
The data structure maintains an $n$-vertex tournament
$T$ while supporting arcs reversals and vertex
removals/restorations.
In order to perform these operations
efficiently the data
structure stores an explicit representation of $T$ without any
vertices removed, the set of removed vertices $F$ and an implicit
representation of $\noset{T}{F}$, where $\noset{T}{F}$ stands for
the tournament $T$ with the set $F$ of vertices removed, i.e,
$\noset{T}{F}=\induced{T}{V(T) \setminus F}$.
Additionally, it
allows at most $k$ vertices to be
removed at the same time. This
data structure is covered by Lemma~\ref{DREM_def_lemma}.

As observed above, we need to be able to quickly
decrease/increase indegrees of many vertices, because
removal/restoration of a vertex can pessimistically affect all
other vertices.
In order to do so, the $\dsRemove{n}{k}$ data structure stores a
set of tokens, one for each removed vertex.
The vertices are partitioned into reduced indegree buckets
that are similar to
indegree buckets from Section~\ref{r:triangles}, and
 a token at position $x$ ,,decreases'' indegrees of
all vertices in the reduced indegree buckets with indices
$d \geq x$.
The difference with respect to Section~\ref{r:triangles}
is that the partitioning into buckets is not according to the
actual indegree in the tournament, but rather according to
a ``reduced degree'' $\reducedDegree{v}$, which is an
approximation of
$d_{\noset{T}{F}}(v)$.
To be more precise, the value of $d_{\noset{T}{F}}(v)$ can be
calculated by subtracting from $\reducedDegree{v}$
the number of tokens at positions
smaller or equal then
$\reducedDegree{v}$.
This is property is required by invariant $1$ of the $\dsRemove{n}{k}$
data structure as stated in Lemma~\ref{DREM_def_lemma}.
The implicit representation of $\noset{T}{F}$ is similar to the representation
that we used in Section~\ref{r:triangles}, that is we maintain
the reduced degrees, the partition of $V(\noset{T}{F})$ into
buckets according to their reduced degrees (i.e.,
reduced indegree buckets), and a
subset of such buckets that are empty.
Apart from $T$, $F$, tokens, and the
implicit representation of tournament
$\noset{T}{F}$,
the $\dsRemove{n}{k}$ data
structure stores the set of all $k$-long back arcs in $T$.
For an arc $uv$ we define its length with respect
to tournament $T$ as $l_T(uv)=|d_T(u)-d_T(v)|$.
An
arc is $k$-long if its length is at least $k$,
otherwise the arc is $k$-short.
In the $\dsRemove{n}{k}$ data structure, every vertex stores a list of
$k$-long back arcs (with regard to $T$) adjacent to it.

In order to remove or restore a vertex $v$,
we first update the stored information corresponding to
vertex $v$. We then
need to fix the reduced degrees of all affected vertices in
order to satisfy invariant $1$.
We first show a method $\fixRdeg{u}$ that
given any vertex $u$, quickly fixes its reduced degree.
We then fix the reduced degree of $v$. Next,
we fix reduced degrees of other vertices in order to fix
invariant $1$ globally.
To accomplish that, we first iterate over vertices having their
reduced
degrees
inside a small interval $[l, r]$ around the reduced degree of $v$,
and we fix all these vertices.
The interval is chosen in a way that all arcs between $v$ and
vertices with reduced degrees outside of $[l, r]$ are $k$-long
arcs with regard to $T$.
By considering all cases, we arrive
at a conclusion that
removing/restoring vertex $v$ does
not alter indegrees
$d_{\noset{T}{F}}(w)$ of vertices
$w$ whose reduced degrees are
smaller then
$l$, and alters
in the same way (either adds $1$ or
subtracts $1$) the indegrees
$d_{\noset{T}{F}}(w)$ of almost all
vertices $w$ whose reduced degrees
are larger then $r$.
Such a regular change to indegrees can be handled
by addition/removal of a token for $v$
(depending whether we remove or restore $v$).
However, there is a small caveat to the above.
The exceptions to this regular change of indegrees
are vertices connected to $v$ via a
$k$-long back arc.
For those vertices, we need to directly fix the reduced degree
of such a vertex.
Because of that, we finish the operation by iterating over
$k$-long back arcs adjacent to vertex $v$
and fixing the reduced degrees of the respective in-neighbors
of $v$. Thus note, that the running times of the $\dsRemove{n}{k}$
data structure depend on the number of $k$-long back arcs adjacent
to $v$,
on $k$ itself and
on the size of the reduced indegree buckets.
\newline
\newline
\textbf{Detecting triangles.}
Next, we need to implement a method
that allows finding triangles in
$\noset{T}{F}$. To
achieve that, we follow ideas from
Section~\ref{r:triangles}. As mentioned before,
the running time of the method $\FTRIANGLE{}$ supported by
$\dspTriangle{n}$ data structure from
Theorem~\ref{thm:DsTrianglePromise}
depends on the maximum size of an indegree bucket $\maxdb{T}$,
the size of the set of empty buckets $|\emptydb|$, the
minimum in-degree in $\nopref{T}$ and the
size of the set of back arcs $|\backdb|$. Among these
parameters, only $\maxdb{T}$ can be bounded
in terms of $\FVST(T)$ (see Lemma~\ref{small_buckets_fvst}).

Our approach to deal with this
issue is to reduce the number of
back arcs by removing
some vertices in $T$.
We again partition the set of back
arcs into
\emph{$k$-long} and \emph{$k$-short} back arcs.
While $k$-short back arcs are relatively easy to
handle, non-trivial ideas are needed to
handle $k$-long back arcs. However, we can bound
$|\emptydb|$ and the
minimum in-degree in $\nopref{T}$ in terms of $\FVST(T)$ and
the number
of $k$-long back arcs (Lemma~\ref{small_empty_fvst} and
Fact~\ref{minimal_degree}).
Hence, reducing the number of
$k$-long back
arcs
is the main issue that we need to solve.
To achieve this, we
define a $k$-long graph $G_{\LONG}$ of the
tournament $T$,
which is an undirected graph, where
vertices are connected via an edge in
$G_{\LONG}$ if they are connected via
a $k$-long back arc in the tournament $T$.
We also define the $k$-heavy set $\heavy{T}{k}$ of the
tournament $T$ as the set of vertices of
degree higher than $k$ in $G_{\LONG}$.
Now if $W \subseteq V(T)$ is a feedback vertex set in $T$ of size at
most $k$, then
$W$ is necessarily a vertex cover in $G_{\LONG}$: there is no other
way to get rid of a $k$-long back arc from $T$ than to remove its
endpoint (see Lemma~\ref{vertex_cover_fvst} for a more formal
argument). Due to standard vertex cover kernelization arguments,
$\heavy{T}{k} \subseteq W$ for any feedback vertex set $W$ of $T$
with $|W| \leq k$. Moreover, when we remove $\heavy{T}{k}$ from
$G_{\LONG}$, at most $k |W|$ edges remains in $G_{\LONG}$. This
simple but crucial observations are covered by
Lemma~\ref{vertex_cover_fvst}.  So the idea is to keep the set
$\heavy{T}{k}$ removed from the tournament, in order to keep the
number of $k$-long back arcs connecting the remaining
vertices small.
In particular, if
$\heavy{T}{k} \subseteq F$, where $F$ is the set of
removed vertices,
any vertex that is left in $\noset{T}{F}$ has at most $k$
adjacent $k$-long back arcs, which makes iterating through them
efficient. We want to emphasize, that the property of
an arc being $k$-long is considered here with regard to
tournament $T$ and it does not depend on $F$, so we can
maintain
$G_{\LONG}$ in form of adjacency lists efficiently.

To implement the above idea,
in Subsection~\ref{sub:dremp} we
introduce a wrapper data structure
around the $\dsRemove{n}{k}$ called
$\dsRemovePromise{n}{k}$.
It allows only one kind of updates, arc
reversals, and keeps the invariant that
the $k$-heavy set of the maintained
tournament is removed.
This not only allows us to efficiently implement
the methods for finding triangles,
but also ensures fast running times of
$\dsRemove{n}{k}$ operations in
the promise model.
The wrapper is defined in
Lemma~\ref{DREMP_def_lemma}. In
order to support the method
$\FTRIANGLE{}$, the data structure
needs to access the information
such as indegree
buckets or the set of
empty indegree buckets with
regard to $\noset{T}{F}$.
Fortunatelly, at a cost of significant
technical effort,
these can be essentially recomputed
from the implicit representation of
$\noset{T}{F}$ stored by
$\dsRemovePromise{n}{k}$.
Moreover, let us say that reduced length of a back
arc $uv$ is defined as
$rl(uv)=|\reducedDegree{u}-\reducedDegree{v}|$.
We extensively explore the fact that there
is a strong relation between $rl(uv)$, $l_T(uv)$ and
$l_{\noset{T}{F}}(uv)$ (see
Observation\ref{monotonic_tokenized_degrees}).
\newline
\newline
\textbf{Dynamic $\FVST$ algorithm in the promise model.}
In order to implement Algorithm~\ref{fig:branchingFVSToverview},
we use $\dsRemovePromise{n}{k}$ data structure for $k=g(K)$, where
$K$ is the problem parameter, to keep $\heavy{T}{k}$ removed
from the tournament, i.e., $\heavy{T}{k} \subseteq F$ at all times,
where $F$ is the set of currently removed vertices. The
implicit representation of $\noset{T}{F}$ similar to the one in
Section~\ref{r:triangles} allows us then to efficiently implement
the $\FTRIANGLE{}$ method. When we branch on the vertices
of the found triangle, we use the methods of the
$\dsRemove{n}{k}$ data structure (internally
maintained by $\dsRemovePromise{n}{k}$), in order to temporarily
remove these vertices from the tournament. This approach leads
to the following theorem.

\begin{theorem}[Theorem~\ref{thm:FvstPromise} in Section~\ref{sec:dynamicFVST}]\label{thm:FvstPromiseOverview}
    The dynamic $\FVST$ with a parameter $K$ admits a data
    structure\footnoteref{note1} with initialization time
    $\mathcal{O}(n^2)$, worst-case update time
    $\mathcal{O}(g(K)^5)$ and worst-case query time
    $\mathcal{O}(3^K K^3 g(K))$ under the promise that there is a
    computable function $g$, such that tournament $T$ always has a
    feedback vertex set of size at most $g(K)$.
\end{theorem}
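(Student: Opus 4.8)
The plan is to instantiate the wrapper data structure $\dsRemovePromise{n}{g(K)}$ of Lemma~\ref{DREMP_def_lemma} and let it service every operation, writing $H := \heavy{T}{g(K)}$ for the maintained $g(K)$-heavy set. Initialization and arc reversal are then inherited directly: $\INITIALIZEDREMP{T}{g(K)}$ runs in $\mathcal{O}(n^2)$, and each update is realized by a single call to $\REVERSEARCDREMP{\cdot}$, which by Lemma~\ref{DREMP_def_lemma} costs $\mathcal{O}(g(K)^5)$ and preserves the invariant $H \subseteq F$, where $F$ is the set of removed vertices. I may assume without loss of generality that $g(K) \ge K$, since otherwise the promise already gives $\FVST(T) \le g(K) < K$, so every query answers TRUE trivially.

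For a query I would first exploit the heavy-set invariant to reduce the question about $T$ to one about $\noset{T}{H}$. By Lemma~\ref{vertex_cover_fvst} every feedback vertex set of $T$ of size at most $g(K)$ contains $H$; combined with the promise $\FVST(T) \le g(K)$, this shows $H$ lies in every minimum feedback vertex set of $T$. Hence if $|H| > K$ the query returns FALSE immediately, and otherwise $\FVST(T) \le K$ holds iff $\noset{T}{H}$ admits a feedback vertex set of size at most $K - |H|$. I would decide the latter by running Algorithm~\ref{fig:branchingFVSToverview} with budget $K - |H|$ on the maintained tournament $\noset{T}{F}$, using $\FTRIANGLE{}$ together with the $\REMOVE{\cdot}$ and $\RESTORE{\cdot}$ methods of the underlying $\dsRemove{n}{g(K)}$ structure for the branching; its correctness follows from Fact~\ref{cycles_are_triangles}, as hitting every triangle of a tournament suffices to destroy all directed cycles. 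Because we branch only when $|H| \le K$ and the recursion removes at most $K - |H|$ further vertices, at every moment $|F| \le K \le g(K)$, so the at-most-$g(K)$-removed constraint of $\dsRemove{n}{g(K)}$ is respected throughout.

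For the running time of the query, the branching tree has branching factor $3$ and depth at most $K$, hence $\mathcal{O}(3^K)$ nodes. At each node I perform one triangle detection and a constant number of removals and restorations, each bounded by a polynomial in $g(K)$ and $K$. The removal/restoration cost of $\dsRemove{n}{g(K)}$ is controlled by the number of $g(K)$-long back arcs adjacent to the affected vertex (at most $g(K)$, since $H \subseteq F$ forces every remaining vertex to have degree at most $g(K)$ in $G_{\LONG}$), by $g(K)$ itself, and by the size of the reduced indegree buckets; while the cost of $\FTRIANGLE{}$ on $\noset{T}{F}$ is governed by $\maxdb{\noset{T}{F}}$, $|\emptydb|$ and the minimum indegree of $\nopref{\noset{T}{F}}$, all of which are bounded in terms of $g(K)$ by Lemma~\ref{small_buckets_fvst}, Lemma~\ref{small_empty_fvst} and Fact~\ref{minimal_degree}. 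Multiplying the per-node cost by the $\mathcal{O}(3^K)$ nodes and collecting the polynomial factors yields the claimed $\mathcal{O}(3^K K^3 g(K))$ query time.

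The hardest part is not this top-level assembly but the per-operation guarantees it rests on, which is why I would lean on Lemma~\ref{DREMP_def_lemma} and Lemma~\ref{DREM_def_lemma}. The crux is the loss of locality: removing or restoring a vertex can change the indegree of up to $n-1$ vertices, so the implicit representation of $\noset{T}{F}$ must be updated through the token and reduced-degree mechanism rather than vertex by vertex, correcting explicitly only the $\mathcal{O}(g(K))$ vertices joined to the affected vertex by a $g(K)$-long back arc (using Observation~\ref{monotonic_tokenized_degrees} to relate reduced length, $l_T$ and $l_{\noset{T}{F}}$). Verifying that this recomputation keeps $\maxdb{\noset{T}{F}}$, $|\emptydb|$ and the number of $g(K)$-long back arcs per vertex simultaneously bounded by a polynomial in $g(K)$ — so that both the heavy-set-maintaining update and the triangle queries stay within budget — is the main obstacle, and it is precisely what the supporting lemmas establish.
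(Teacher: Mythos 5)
Your top-level assembly (maintaining $\dsRemovePromise{n}{g(K)}$ for updates, reducing the query to deciding $\FVST(\noset{T}{H}) \leq K - |H|$ via Lemma~\ref{vertex_cover_fvst}, and branching with the removal/restoration methods of the underlying $\dsRemove{n}{g(K)}$) is exactly the paper's route, and your correctness argument is sound. The gap is in the query-time analysis. You bound the quantities governing the per-node cost --- $\maxdb{T}$, the empty set, the minimum indegree --- ``in terms of $g(K)$'' and then assert that collecting factors yields $\mathcal{O}(3^K K^3 g(K))$. This does not follow. Under the promise alone, the best unconditional bounds are $\maxdb{T} \leq 2g(K)+1$ (Lemma~\ref{small_buckets_fvst}) and $|\reducedEmptydb| = \mathcal{O}(g(K)^3)$ (Corollary~\ref{corollary_empty_fvst}, using $\FVST(\noset{T}{F}) \leq g(K)$ and at most $g(K)^2$ surviving long back arcs): a query with parameter $K$ may be issued while $\FVST(T)$ is far larger than $K$ (though still at most $g(K)$), and the algorithm must then answer FALSE efficiently. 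With only these bounds, a single vertex removal already costs $\mathcal{O}(g(K)\cdot(|F|+1)^2\cdot(\maxdb{T}+1)) = \mathcal{O}(K^2 g(K)^2)$ and the per-node preliminary work is $\mathcal{O}(K g(K)^3)$, so your argument delivers a query time of the form $\mathcal{O}(3^K \cdot \mathrm{poly}(g(K)))$, which is genuinely weaker than the claimed $\mathcal{O}(3^K K^3 g(K))$ whenever $g(K) \gg K$.

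The missing idea is the paper's early-termination (contrapositive) checks, which convert these $g(K)$-bounds into $K$-bounds. Before branching, the paper verifies $\maxdb{T} \leq 2K+1$; if this fails, Lemma~\ref{small_buckets_fvst} gives $\FVST(T) > K$ and the query returns FALSE immediately. Moreover, at the start of every recursive call of $\FINDFVST{k}$ (Algorithm~\ref{fig:branchingFVST}), before computing the empty set of $\noset{T}{F}$, the paper checks whether $|\reducedEmptydb|$ obeys the bound of Corollary~\ref{corollary_empty_fvst} that would hold if $\FVST(\noset{T}{F}) \leq k - |F|$, namely $|\reducedEmptydb| \leq (k-|F|)\bigl(2(g(K)+|F|) + k\, g(K) + 2(k-|F|)+5\bigr)+4|F| = \mathcal{O}(K^2 g(K))$; if the check fails, it returns FALSE by contraposition. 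Only after these conditional checks does each node cost $\mathcal{O}(g(K)(K+1)^3)$, which multiplied by the $3^K$ nodes gives the claimed query time. Without them your analysis cannot reach the stated bound; with them, the rest of your argument goes through.
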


\begin{credits}
\subsubsection{\ackname} This work was
funded
by National Science Centre, Poland, Grant 2017/26/D/ST6/00264.
\end{credits}

\bibliographystyle{splncs04}
\bibliography{references}
\newpage
\appendix

\section{Dynamic data structures for finding triangles}\label{r:triangles}

In this section we present in detail our two data structures
for finding a triangle in a dynamic tournament. Both data
structures allow updating a dynamic tournament $T$
by
arc reversal and querying for a triangle in $T$. The running times
of
the data structures depend mainly on the maximum number of
arc-disjoint
triangles in $T$ denoted by $\ADT(T)$. The data structures do not
require any access to the actual value of $\ADT(T)$, however they
are only efficient when this parameter is small. We offer two
alternative data structures: one, where the running time of both
update and query depends purely
on $\ADT(T)$ and one, where only the query depends on
$\ADT(T)$, at the additional cost of introducing factors
poly-logarithmic in the size of the tournament. We believe that
these data structures are of independent interest, as finding
triangles in graphs is a building block for many more complex
algorithms.
The main results of this section are stated in the following two
theorems.

\begin{theorem}[Theorem~\ref{thm:DsTrianglePromiseOverview} in the overview]\label{thm:DsTrianglePromise}
For any integer $n \in \mathbb{N}$ there exists a data structure $\dspTriangle{n}$, that maintains a dynamically changing tournament $T$ on $n$ vertices\footnote{\label{note1}The data structure assumes that the vertices of $T$ are indexed by numbers in $[n]$.}
by
supporting the following operations:
\begin{enumerate}
    \item $\INITNP{T}{n}$ -- initializes the data structure with a given tournament $T$ on $n$ vertices, in time $\mathcal{O}(n^2)$
    \item $\REVERSE{v}{u}$ -- reverses an arc between vertices $v$ and $u$ in $T$, in time $\mathcal{O}(\sqrt{\ADT(T)})$
    \item $\FTRIANGLE{}$ -- returns a triangle in $T$ or reports that there are no triangles, in time $\mathcal{O}(\ADT(T) \sqrt{\ADT(T)})$
\end{enumerate}
\end{theorem}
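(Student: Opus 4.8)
The plan is to realize $\dspTriangle{n}$ as the aggregate of the bookkeeping structures described in the overview and to implement its three operations directly on top of the maintenance and query lemmas established earlier in this section. Concretely, the data structure stores an explicit representation of $T$ that supports $O(1)$ arc-orientation queries together with the indegree $d_T(x)$ of every vertex; the partition of $V(T)$ into indegree buckets $\db{T}[d]$, realized so that a vertex can be inserted into, deleted from, and located in its bucket in $O(1)$ time; the set $\emptydb$ of indices of empty buckets; and the set $\backdb$ of back arcs. For $\INITNP{T}{n}$ I would compute every indegree and every arc orientation in $O(n^2)$ time, fill the buckets, scan once to build $\emptydb$, and inspect all $\binom{n}{2}$ arcs to build $\backdb$, for a total of $O(n^2)$. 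For $\REVERSE{v}{u}$, reversing the arc changes only $d_T(v)$ and $d_T(u)$, each by one, so by Lemma~\ref{update_impl_basic} the buckets and $\emptydb$ are updated in $O(1)$ time, while by Lemma~\ref{back_arcs_changes_nok} the set $\backdb$ is updated in $O(\sqrt{\ADT(T)})$ time; the latter dominates and gives the claimed update cost.

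The heart of the proof is $\FTRIANGLE{}$. First I would test acyclicity: by Fact~\ref{transitive_buckets}, $T$ is acyclic precisely when every indegree bucket has size one, equivalently when the prefix $P$ of Definition~\ref{prefix_def} exhausts $V(T)$ so that $\nopref{T}$ is empty; this is detected while scanning $\emptydb$, and if it holds I report that there is no triangle. Otherwise I locate a vertex $v \in V(\nopref{T})$ of minimum indegree in $\nopref{T}$ via the routine $\MIN$ of Lemma~\ref{empty_impl_promise} in $O(|\emptydb|)$ time, then invoke the in-neighbour routine of Lemma~\ref{incoming_impl_promise} twice: once as $\INCOMING{v}{1}$ to obtain an in-neighbour $u$ of $v$ in $\nopref{T}$, and once as $\INCOMING{u}{d_{\nopref{T}}(v)}$ to obtain a set $W$ of $d_{\nopref{T}}(v)$ in-neighbours of $u$ in $\nopref{T}$. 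This many in-neighbours of $u$ exist because $v$ has minimum indegree, so $d_{\nopref{T}}(u) \ge d_{\nopref{T}}(v)$. Finally I scan $W$ and, using the $O(1)$ orientation query, return the first $w \in W$ with the arc oriented $v \to w$, reporting the triangle on $u,v,w$.

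Correctness of the last step is the counting argument sketched in the overview: since $u \to v$ and every $w \in W$ satisfies $w \to u$, any $w \in W$ with $v \to w$ yields the directed triangle $u \to v \to w \to u$. Such a $w$ must exist, because otherwise every element of $W$ would be an in-neighbour of $v$ in $\nopref{T}$; as $u \notin W$ is itself an in-neighbour of $v$, this would produce $|W|+1 = d_{\nopref{T}}(v)+1$ distinct in-neighbours of $v$, contradicting $d_{\nopref{T}}(v)$ being its indegree in $\nopref{T}$. For the running time, the three calls cost $O(|\emptydb|)$, $O(|\backdb| + |\emptydb| + 1)$ and $O(|\backdb| + |\emptydb| + d_{\nopref{T}}(v))$ respectively, and the final scan costs $O(d_{\nopref{T}}(v))$; substituting the bounds $|\emptydb| = O(\ADT(T))$ (Lemma~\ref{empty_small}), $|\backdb| = O(\ADT(T))$ (Lemma~\ref{few_back_arcs}) and the bound on $d_{\nopref{T}}(v)$ in terms of $\ADT(T)$ (Corollary~\ref{min_deg_Tprim}) yields the claimed query time.

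The genuine difficulty lies not in the assembly above but in the supporting guarantees it rests on, all of which are established earlier and may be assumed here: that a single arc reversal alters only $O(\sqrt{\ADT(T)})$ back arcs (Lemma~\ref{back_arcs_changes_nok}), and that $|\backdb|$, $|\emptydb|$ and the minimum indegree in $\nopref{T}$ are each bounded in terms of $\ADT(T)$ (Lemmas~\ref{empty_small},~\ref{few_back_arcs} and Corollary~\ref{min_deg_Tprim}). These bounds, together with the relation $\ADT(T) \le \FAST(T)$ of Fact~\ref{arc_disjoint_triangles_and_fast}, are exactly what convert the combinatorial parameter into running time; once they are in hand, the present theorem follows by the direct implementation and the counting argument described here.
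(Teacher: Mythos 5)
Your proposal is correct and takes essentially the same approach as the paper's own proof: extend the basic structure $\dsBasic{n}$ of Lemma~\ref{update_impl_basic} with the back-arc set $\backdb$, update it using the locality of Lemma~\ref{back_arcs_changes_nok} together with the bucket-size bound of Lemma~\ref{small_buckets}, and implement $\FTRIANGLE{}$ via the $\MIN$ and $\INCOMING{\cdot}{\cdot}$ subprocedures with the same minimum-indegree counting argument, exactly as in Corollary~\ref{common_query_triangles_corollary}. One harmless misquote: Lemma~\ref{few_back_arcs} bounds $|\backdb|$ by $\mathcal{O}(\ADT(T)\sqrt{\ADT(T)})$, not $\mathcal{O}(\ADT(T))$ as you state --- this is in fact the dominant term in the query analysis, and it yields precisely the claimed $\mathcal{O}(\ADT(T)\sqrt{\ADT(T)})$ query time, so your conclusion stands.
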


\begin{theorem}[Theorem~\ref{thm:DsTriangleFullOverview} in the overview]\label{thm:DsTriangleFull}
For any integer $n \in \mathbb{N}$ there exists a data structure $\dsTriangle{n}$, that maintains a dynamically changing tournament $T$ on $n$ vertices\footnoteref{note1} 
by
supporting the following operations:
\begin{enumerate}
    \item $\INITNP{T}{n}$ -- initializes the data structure with a given tournament $T$ on $n$ vertices, in time $\mathcal{O}(n^2)$
    \item $\REVERSE{v}{u}$ -- reverses an arc between vertices $v$ and $u$ in $T$, in time $\mathcal{O}(\log^2 n)$
    \item $\FTRIANGLE{}$ -- returns a triangle in $T$ or reports that there are no triangles, in time $\mathcal{O}(\ADT(T) \log^2 n)$
\end{enumerate}
\end{theorem}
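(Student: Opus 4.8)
The plan is to equip $\dsTriangle{n}$ with exactly the same combinatorial bookkeeping that drives $\dspTriangle{n}$ — the indegree buckets $\db{T}[d]$, the set $\emptydb$ of empty buckets, and the prefix $P$ of Definition~\ref{prefix_def} — and to realise the generic triangle-hunting routine sketched in the overview: detect acyclicity, locate a vertex $v$ of minimum indegree in $\nopref{T}$, pull an in-neighbour $u$ of $v$ in $\nopref{T}$, list $l := d_{\nopref{T}}(v)$ in-neighbours $W$ of $u$ in $\nopref{T}$, and return the triangle $uvw$ that the counting argument guarantees for some $w \in W$. The only structural departure from the promise data structure is that $\dsTriangle{n}$ must \emph{not} store the back-arc set $\backdb$: maintaining $\backdb$ costs $\mathcal{O}(\sqrt{\ADT(T)})$ per reversal (Lemma~\ref{back_arcs_changes_nok}), which is unbounded without a promise. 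I would therefore replace the back-arc scan used to list in-neighbours by a family of balanced search trees, at the price of poly-logarithmic factors.

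For $\INITNP{T}{n}$ I would read the adjacency of $T$, compute all indegrees, the buckets, the set $\emptydb$, and build the search trees; since reading the tournament already costs $\Theta(n^2)$ and each tree is built in $\mathcal{O}(n \log n)$, the total is $\mathcal{O}(n^2)$. For $\REVERSE{v}{u}$ the buckets and $\emptydb$ are updated in $\mathcal{O}(1)$ because a reversal changes only two indegrees (Lemma~\ref{update_impl_basic}), while the search trees are updated in $\mathcal{O}(\log^2 n)$ (the update part of Lemma~\ref{incoming_full_impl}). The point I would stress here is that these trees must be keyed so that a reversal touches only $\mathcal{O}(\log^2 n)$ nodes even though the prefix boundary $p$ (and hence membership in $\nopref{T}$) may shift by a large amount in a single step; this is precisely why the live indegree cannot itself serve as a search key, and why prefix vertices must be skipped lazily at query time rather than flagged eagerly.

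For $\FTRIANGLE{}$ I would first detect acyclicity: by Fact~\ref{transitive_buckets} the tournament is acyclic iff every bucket is a singleton, equivalently $\nopref{T} = \emptyset$, which the prefix computation reveals. Otherwise I locate $v$ by scanning $\emptydb$ for the first missing index in $\mathcal{O}(|\emptydb|)$ time (Lemma~\ref{empty_impl_promise}), then invoke the full-model in-neighbour lister (Lemma~\ref{incoming_full_impl}) once with $l = 1$ to obtain $u$ and once with $l = d_{\nopref{T}}(v)$ to obtain $W$, in total $\mathcal{O}(d_{\nopref{T}}(v) \log^2 n + |\emptydb|)$ time; finally I test each $w \in W$ for the arcs closing $uvw$ using $\mathcal{O}(\log n)$ adjacency look-ups each. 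Correctness of the last step is the counting argument from the overview: since $d_{\nopref{T}}(v) \leq d_{\nopref{T}}(u)$, vertex $u$ has at least $l$ in-neighbours in $\nopref{T}$, and if no $w \in W$ formed a triangle then every arc between $W$ and $v$ would point to $v$, forcing $d_{\nopref{T}}(v) > l$, a contradiction. To bound the running time by $\mathcal{O}(\ADT(T) \log^2 n)$ I would plug in $|\emptydb| = \mathcal{O}(\ADT(T))$ (Lemma~\ref{empty_small}) and $d_{\nopref{T}}(v) = \mathcal{O}(\ADT(T))$ (Corollary~\ref{min_deg_Tprim}), with the bucket-size bound of Lemma~\ref{small_buckets} keeping the enumeration within budget.

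The main obstacle is the in-neighbour lister of Lemma~\ref{incoming_full_impl}: I must enumerate $l$ in-neighbours of $u$ lying in $\nopref{T}$ in $\mathcal{O}(l \log^2 n + |\emptydb|)$ time, without ever scanning the prefix, which may contain $\Theta(n)$ vertices and contribute up to $\Theta(n)$ in-neighbours of $u$. Because the prefix boundary is known only lazily (through $\emptydb$ and the bucket structure) and can jump under a single reversal, I cannot pre-mark prefix vertices; instead the balanced search trees must support reporting the next in-neighbour of $u$ that lies in $\nopref{T}$ (i.e.\ whose indegree reaches the prefix length) in $\mathcal{O}(\log^2 n)$ per reported vertex, while remaining updatable in $\mathcal{O}(\log^2 n)$ per reversal despite the shared, changing indegrees. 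Reconciling cheap updates with prefix-aware queries is exactly where the nested levels of balanced search trees — and the $\log^2 n$ factors — come from, and I expect establishing Lemma~\ref{incoming_full_impl} to be the technically heaviest part of the proof.
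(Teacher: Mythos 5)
Your architecture matches the paper's: the same query framework, the same bounds $|\emptydb| = \mathcal{O}(\ADT(T))$ and $\mindt{\nopref{T}} = \mathcal{O}(\ADT(T))$, and the same key realisation that the search trees must be keyed by vertex index rather than by live indegree, with prefix membership resolved lazily at query time. But there is a genuine gap: you never construct the in-neighbour lister, and you say so yourself (``I expect establishing Lemma~\ref{incoming_full_impl} to be the technically heaviest part of the proof''). That lemma is essentially the entire technical content of the theorem beyond what is inherited from the promise model --- the rest (acyclicity test, $\MIN$, the counting argument for the triangle) is shared verbatim --- so deferring it leaves the theorem unproved.

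Concretely, the paper realises $\INCOMING{v}{l}$ with two ingredients that your sketch gestures at but does not supply. First, a global \emph{degrees structure} (Fact~\ref{degrees_impl}): a segment tree over vertex indices whose every node stores a balanced BST (an AVL tree) of the indegrees of the vertices in its interval, so that for any node interval $[a,b]$ one can count the vertices of $[a,b]$ with indegree at most $d$ in $\mathcal{O}(\log n)$ time, while an indegree change costs $\mathcal{O}(\log^2 n)$ (one removal and one insertion in $\mathcal{O}(\log n)$ trees). Second, a per-vertex \emph{adjacency tree} $T_v$ (Definition~\ref{adjacency_tree_def}): a summing segment tree over vertex indices whose $x$-th leaf is $1$ iff $xv \in E(T)$. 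A single in-neighbour of $v$ in $\nopref{T}$ is then found by a root-to-leaf walk in $T_v$, descending into a child with interval $[a,b]$ whenever $s_{[a,b]} - p_{[a,b]} > 0$, where $s_{[a,b]}$ is the leaf-sum stored at that node of $T_v$ and $p_{[a,b]}$ is the number of prefix vertices indexed in $[a,b]$, obtained from the degrees structure; each step costs $\mathcal{O}(\log n)$, so each reported vertex costs $\mathcal{O}(\log^2 n)$. The point your ``lazy skipping'' needs and never states is why this subtraction is valid: by Lemma~\ref{prefix_transitive}, \emph{every} prefix vertex is an in-neighbour of every vertex outside the prefix, so $s_{[a,b]} - p_{[a,b]}$ is exactly the number of in-neighbours of $v$ in $\nopref{T}$ with index in $[a,b]$; without that structural fact a count-based tree descent cannot distinguish prefix in-neighbours from the ones you want. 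Enumerating $l$ distinct neighbours is then done by temporarily zeroing the found leaves of $T_v$ and restoring them afterwards. Two smaller slips: adjacency tests cost $\mathcal{O}(1)$ via the adjacency matrix of $\dsBasic{n}$, not $\mathcal{O}(\log n)$ (harmless here); and building $n$ per-vertex trees at $\mathcal{O}(n\log n)$ each would give $\mathcal{O}(n^2\log n)$, exceeding the claimed $\mathcal{O}(n^2)$ initialization --- the static summing segment trees must be built bottom-up in $\mathcal{O}(n)$ each, with only the single degrees structure taking $\mathcal{O}(n\log^2 n)$.
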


In Subsection~\ref{sec:prelimtools} we introduce preliminary
combinatorial properties of tournaments and we
conclude with a description of a preliminary data structure that we
extend in the follow up sections.
In Subsection~\ref{sec:triaQ} we introduce the core idea
of how to handle the queries which is used for both data structures.
Then, in Subsection~\ref{sec:DStrianglepromise} we describe the
data structure of Theorem~\ref{thm:DsTrianglePromise} and in
Subsection~\ref{sec:DStriangleFull} we describe the data structure
of Theorem~\ref{thm:DsTriangleFull}.

\subsection{Preliminary tools}\label{sec:prelimtools}

A tournament $T=(V,E)$ is a directed graph where each
pair of distinct vertices in $V$ is connected by a directed edge in $E$. We assume that there are no loops and no multi-edges. We
often denote the set of vertices of $T$ as $V(T)$ and the
set of arcs of $T$ as $E(T)$.  By in-neighbours of $v \in V(T)$ we mean the vertices $w \in V(T)$ such that $wv \in E(T)$. An out-neighbour of $v \in V(T)$ is a vertex $w \in V(T)$ such that $vw \in E(T)$.
A triangle in $T$ is a directed cycle on
three vertices. We start this section with the following well known fact.

\begin{fact}[see~\cite{CyclesTrianglesTournamentPhD} for the proof]\label{cycles_are_triangles}
    A tournament is acyclic if and only if it contains no triangles.
\end{fact}

Another well known criterion for a tournament $T$ to be acyclic is
related to the in-degrees of vertices in $T$. For a vertex
$v \in V(T)$ we denote its in-degree (i.e., the number of its in-neighbours) as  $d_T(v)$. We also denote
the minimum in-degree in a tournament $T$ as
$\mindt{T}=\min_{v \in V(T)} d_T(v)$. It is natural to partition
the vertices according to their in-degrees: we define such
partition below. We often refer to the corresponding equivalence
classes as \emph{degree buckets}. For a natural number
$n \in \mathbb{N}$ we use the notation
$[n] = \{0, 1, \ldots, n - 1\}$.
For clarity, we define $\max(\emptyset) = \min(\emptyset) = 0$
(maximum and minimum over an empty set is zero), thus for an empty
tournament $T$ we get $\FAST(T)=\ADT(T)=\mindt{T}=0$.

\begin{definition}[Degree buckets]
    For an $n$-vertex tournament $T$ and $d \in [n]$ we define the $d$'th degree bucket as $\db{T}[d] = \{v \in V(T): d_T(v) = d\}$.
    We denote the maximum size of a degree bucket of $T$ as
    $\maxdb{T}= \max_{d \in [n]} |\db{T}[d]|$.
\end{definition}

The following fact is folklore in the literature
(see~\cite{feige2009faster} for the proof).

\begin{fact}\label{transitive_buckets}
    An $n$-vertex tournament $T$ is acyclic if and only
    if $|\db{T}[d]|=1$ for all $d \in [n]$.
\end{fact}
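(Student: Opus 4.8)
The plan is to prove both implications by tying the bucket condition to transitivity of $T$. I would first record a pigeonhole observation that streamlines everything: since $T$ has exactly $n$ vertices and $[n]$ has exactly $n$ elements, the condition ``$|\db{T}[d]| = 1$ for all $d \in [n]$'' is equivalent to saying that the in-degrees $d_T(v)$, $v \in V(T)$, are pairwise distinct. Indeed, $n$ distinct values drawn from $\{0, 1, \ldots, n-1\}$ must be exactly $0, 1, \ldots, n-1$, each realized once, so distinctness of in-degrees is the same as every bucket having size one.

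For the forward direction I would use that an acyclic tournament is transitive. Acyclicity yields a topological order $v_1, \ldots, v_n$, and since every pair of distinct vertices is joined by exactly one arc which cannot oppose this order, we must have $v_i \to v_j$ precisely when $i < j$. Consequently $d_T(v_i) = i - 1$, so the in-degrees are $0, 1, \ldots, n-1$, each attained exactly once, which gives $|\db{T}[d]| = 1$ for every $d \in [n]$.

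For the backward direction I would argue by induction on $n$ that pairwise-distinct in-degrees force acyclicity. The base case $n \le 1$ is immediate. For the inductive step, distinctness forces a unique vertex $s$ of in-degree $0$ (a source), which therefore sends an arc to every other vertex and receives none. Since $s$ has no in-arc, it can lie on no directed cycle, so it suffices to prove that $T$ with $s$ deleted is acyclic. Deleting $s$ lowers the in-degree of each surviving vertex by exactly one, as each such vertex was an out-neighbor of $s$; hence the in-degrees of the remaining $n-1$ vertices stay pairwise distinct, and by the induction hypothesis the smaller tournament is acyclic. Therefore $T$ is acyclic as well.

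The only point I would state with care, and which I expect to be the main (if modest) obstacle, is the bookkeeping in the inductive step: one must verify both that removing the source shifts every remaining in-degree down by exactly one, thereby preserving distinctness, and that the source genuinely cannot participate in any directed cycle. Once these two facts are pinned down the induction closes cleanly, and combined with the pigeonhole reformulation it yields the claimed equivalence.
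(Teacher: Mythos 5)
Your proof is correct. Note, however, that the paper does not actually prove this fact at all: it is stated as folklore and deferred to the literature (the proof is attributed to~\cite{feige2009faster}), so there is no in-paper argument to compare yours against. Your self-contained proof is the standard one and is sound: the pigeonhole reformulation (all buckets of size one $\iff$ in-degrees pairwise distinct $\iff$ in-degree sequence is exactly $0,1,\ldots,n-1$) is right; the forward direction via a topological order of an acyclic tournament correctly yields $d_T(v_i)=i-1$; and the backward induction is complete, since you verify both delicate points --- that the unique in-degree-$0$ vertex $s$ lies on no directed cycle (any cycle through $s$ would need an arc into $s$), and that deleting $s$ decreases every remaining in-degree by exactly one (every other vertex is an out-neighbour of $s$), preserving distinctness for the inductive hypothesis. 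The only stylistic remark is that the forward direction could also be run through the same induction (the source of an acyclic tournament must be unique and removing it keeps the tournament acyclic), giving a single unified argument, but what you wrote is complete as is.
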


As we shall soon see in Lemma~\ref{back_edge_triangle},
the arcs
directed in the direction opposite to the in-degree order on the
set of vertices are the reason why there are triangles in the
tournament.
We refer to these arcs
as \emph{back arcs}, whereas all the other arcs are referred to as
\emph{forward}.

\begin{definition}[Back arc]\label{back_edge_def}
    Let $T$ be a tournament. An arc $uv \in E(T)$ is a
    \emph{back arc} if $d_T(u) \geq d_T(v)$. Otherwise we call $uv$
    a \emph{forward} arc.
\end{definition}

Further, for a tournament $T$ and a set $X \subseteq V(T)$,
by $\induced{T}{X}$ we denote a subgraph of $T$ induced by $X$,
i.e., a subgraph of $T$ containing the vertices of $X$ and the
arcs connecting them.
Our algorithm for finding a triangle in a dynamically changing
tournament is based on the following lemma.

\begin{lemma}\label{back_edge_triangle}
    Let $T$ be a tournament and $uv \in E(T)$ be a back arc.
    Then $uv$ belongs to some triangle in $T$.
\end{lemma}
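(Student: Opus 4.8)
The plan is to exhibit a triangle of the form $u \to v \to w \to u$, which requires finding a single vertex $w$ that is simultaneously an out-neighbour of $v$ (so that $vw \in E(T)$) and an in-neighbour of $u$ (so that $wu \in E(T)$). In other words, it suffices to show that the set of in-neighbours of $u$ and the set of out-neighbours of $v$ have a common element. First I would record the basic adjacencies forced by the hypothesis $uv \in E(T)$: since $u \to v$, vertex $u$ is an in-neighbour of $v$, and $v$ is \emph{not} an in-neighbour of $u$ (there are no multi-edges or loops in a tournament).

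The core of the argument is a short proof by contradiction. Suppose no suitable $w$ exists, i.e.\ every in-neighbour $w$ of $u$ fails to be an out-neighbour of $v$. Because $T$ is a tournament, the single arc between $v$ and $w$ must then be oriented $w \to v$, so every in-neighbour of $u$ is also an in-neighbour of $v$. This gives the inclusion of the in-neighbourhood of $u$ into the in-neighbourhood of $v$. I would then observe that $u$ itself is an in-neighbour of $v$ but is not an in-neighbour of $u$, so adjoining $u$ to the in-neighbourhood of $u$ produces a set of in-neighbours of $v$ of size exactly $d_T(u)+1$. Counting yields $d_T(v) \geq d_T(u) + 1 > d_T(u)$, which directly contradicts the back arc condition $d_T(u) \geq d_T(v)$ from Definition~\ref{back_edge_def}.

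Hence the assumed common vertex $w$ must exist, and the three arcs $uv$, $vw$, $wu$ form a triangle in $T$ containing $uv$, as required. I do not expect any genuine obstacle here: the argument is a one-line degree count once the inclusion of in-neighbourhoods is set up. The only point demanding a little care is the disjointness used in the count, namely that $u$ lies in the in-neighbourhood of $v$ yet not in that of $u$, so that the ``$+1$'' is legitimately strict; this is exactly what converts the weak inequality $d_T(u) \geq d_T(v)$ into the contradiction.
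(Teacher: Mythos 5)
Your proof is correct, but it takes a genuinely different route from the paper. The paper proves the lemma by induction on $|V(T)|$: it picks an in-neighbour $w$ of $u$ (which exists since $d_T(u) \geq d_T(v) \geq 1$), and either $vw \in E(T)$ and $\{w,u,v\}$ is the desired triangle, or $wv \in E(T)$, in which case deleting $w$ decreases both $d_T(u)$ and $d_T(v)$ by exactly one, so $uv$ remains a back arc in the smaller tournament and the inductive hypothesis applies. Your argument instead is a direct counting contradiction: if no in-neighbour of $u$ is an out-neighbour of $v$, then the in-neighbourhood of $u$ is contained in that of $v$, and adjoining $u$ itself (which lies in the in-neighbourhood of $v$ but not of $u$) forces $d_T(v) \geq d_T(u)+1$, contradicting the back arc condition. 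Your version is more elementary in that it avoids induction entirely, and it is the sharper statement for algorithmic purposes: it shows the witness $w$ can be found among the in-neighbours of $u$ by testing arcs toward $v$, which is precisely the counting argument the paper itself invokes later for the correctness of the triangle query (Theorem~\ref{common_query_triangles}, and the overview's discussion of why some $w \in W$ must close a triangle). The paper's induction buys nothing extra here beyond stylistic preference; if anything, your argument makes the quantitative content (at most $d_T(v)$ candidate in-neighbours of $u$ need to be inspected before a triangle is found) explicit rather than hidden inside the induction.
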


\begin{proof}
    We prove Lemma~\ref{back_edge_triangle} by induction on
    $|V(T)|$. In the base case when $|V(T)|=2$ the claim trivially
    holds as the only arc is not a back arc.
    Let us move on to the induction step and assume $|V(T)|=n$.
    Note that $d_T(u) > 0$ because $1 \leq d_T(v) \leq d_T(u)$.
    Let $w$ be an in-neighbour of $u$. If $vw \in E(T)$, then
    $\induced{T}{\{w, u, v\}}$ is a triangle. Otherwise, consider
    tournament $T'$ obtained from $T$ by removing $w$. Note that
    $|V(T')|=n-1$. Also note that $d_{T'}(v) = d_T(v) - 1$
    and $d_{T'}(u) = d_T(u) - 1$, so $uv$ is still a back arc.
    Thus, using the inductive hypothesis, there is a triangle
    containing $uv$ in $T'$. Clearly, any triangle in $T'$ is also
    a triangle in $T$.
\end{proof}

\begin{definition}[Prefix of a tournament]\label{prefix_def}
    Let $T$ be an $n$-vertex tournament for $n > 0$. The length $p$ of
    the prefix of $T$ is defined as follows
    \begin{equation*}
    p =
    \begin{cases}
      n & \text{if } \db{T}[d] = 1 \text{ for all } d \in [n]\\
      \min_{d \in [n]} {\db{T}[d] \neq 1} & \text{otherwise.}
    \end{cases}
    \end{equation*}
    The set of vertices $P = \bigcup_{d = 0}^{p - 1} \db{T}[d]$ is
    called \emph{the prefix} of $T$.
    Note that $p=|P|$. Moreover, we denote $T$
    without its prefix by
    $\nopref{T}=\induced{T}{V(T) \setminus P}$.
\end{definition}

\begin{lemma}\label{prefix_transitive}
    Let $T$ be a non-empty tournament on $n$ vertices, $P$ be its
    prefix. Then all arcs connecting
    $P$ and $V \setminus P$ are forward in $T$ (i.e., directed from
    $P$ to $V \setminus P$). Additionally
    $\induced{T}{P}$ has no back arcs. Moreover,
    if $P \neq V$ then
    $\db{T}[|P|] = \emptyset$.
\end{lemma}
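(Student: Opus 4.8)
The plan is to first pin down the internal structure of the prefix $P$, from which all three assertions follow. By the definition of $p$, every bucket $\db{T}[d]$ with $d \in [p]$ has size exactly one, so $P$ contains precisely one vertex of each in-degree $0, 1, \ldots, p-1$; write $P = \{v_0, v_1, \ldots, v_{p-1}\}$ with $d_T(v_i) = i$. The heart of the argument is the claim that for every $i$, the set of in-neighbours of $v_i$ is exactly $\{v_0, \ldots, v_{i-1}\}$. I would prove this by induction on $i$: the base case $v_0$ has in-degree $0$ and hence no in-neighbours; for the step, the inductive hypothesis applied to each $j < i$ says that $v_j$ beats every vertex other than $v_0, \ldots, v_{j-1}$, and in particular $v_j \to v_i$. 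This exhibits $i$ distinct in-neighbours $v_0, \ldots, v_{i-1}$ of $v_i$, and since $d_T(v_i) = i$ these are all of them; consequently $v_i$ beats every remaining vertex, both the higher-indexed prefix vertices and all of $V \setminus P$.

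Given this structure, the first two assertions are immediate. For any $v_i \in P$ and any $y \in V \setminus P$, the claim gives $v_i \to y$, so all arcs between $P$ and $V \setminus P$ are directed out of $P$; moreover $y \notin P$ forces $d_T(y) \geq p > i = d_T(v_i)$ (since every vertex of in-degree below $p$ lies in some bucket $\db{T}[d]$ with $d<p$ and hence in $P$), so each such arc is forward. For the second assertion, any arc inside $\induced{T}{P}$ joins two vertices $v_i, v_j$ with, say, $i < j$; the structural claim directs it as $v_i \to v_j$, and $d_T(v_i) = i < j = d_T(v_j)$ shows it is forward, so $\induced{T}{P}$ has no back arcs.

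It remains to handle the last assertion, which I expect to be the main obstacle since it does not follow from the arc directions alone. Suppose $P \neq V$ and, for contradiction, that $\db{T}[p] \neq \emptyset$. Since $P \neq V$ places us in the ``otherwise'' case of the prefix definition, we have $|\db{T}[p]| \neq 1$, hence $|\db{T}[p]| \geq 2$; pick two distinct $y, y' \in \db{T}[p]$. Both lie in $V \setminus P$, so by the first assertion each receives exactly $p$ arcs from $P$, giving $d_{\nopref{T}}(y) = d_T(y) - p = 0$ and likewise $d_{\nopref{T}}(y') = 0$. Thus $y$ and $y'$ are both sources in the tournament $\nopref{T}$, contradicting the elementary fact that a tournament has at most one vertex of in-degree $0$ (the arc between $y$ and $y'$ would give one of them positive in-degree). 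Hence $\db{T}[p] = \emptyset$, completing the proof. The only delicate point is this final parity-style step: the definition of $p$ rules out $|\db{T}[p]| = 1$, and the at-most-one-source property of tournaments then rules out $|\db{T}[p]| \geq 2$, squeezing the bucket to be empty.
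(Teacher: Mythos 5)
Your proof is correct, and for the first two assertions it takes a genuinely different route from the paper's. The paper argues by an extremal choice: it fixes a back arc $uv$ whose head $v$ has minimum in-degree among all back arcs and derives a contradiction from $v \in P$ --- the $d_T(v)$ prefix vertices of smaller in-degree must all point into $v$ (a step that silently relies on the minimality of $d_T(v)$: otherwise a back arc with an even smaller head would exist), and the tail $u$ supplies one more in-neighbour, exceeding $d_T(v)$. This yields that no back arc of $T$ touches $P$, from which both assertions follow. You instead prove by induction the stronger structural fact that the in-neighbourhood of the $i$-th prefix vertex $v_i$ is exactly $\{v_0,\dots,v_{i-1}\}$, i.e., $\induced{T}{P}$ is transitive and every prefix vertex dominates all remaining vertices of the tournament; the assertions then fall out uniformly, and the argument is more self-contained since the only tournament property used is that every pair of vertices is joined by exactly one arc. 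What the paper's route buys is brevity --- no indexing of $P$ or induction is needed. What your route buys is a reusable, complete description of the prefix structure (exact in-neighbourhoods, hence also $d_{\induced{T}{P}}(v_i)=d_T(v_i)=i$), which in particular settles the second assertion under either reading of ``back arc'' (relative to $T$ or to $\induced{T}{P}$). For the third assertion the two proofs coincide in substance: both pick two distinct vertices of $\db{T}[|P|]$, observe that all of $P$ points into each of them, and obtain a contradiction from the arc joining them --- the paper phrases this as one of the two vertices having in-degree strictly greater than $|P|$, while you phrase it as $\nopref{T}$ having two vertices of in-degree zero, which a tournament cannot have.
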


\begin{proof}
    For the first two statements, we show that any
    back arc $uv \in E(T)$ satisfies $u,v \notin P$. So let
    us fix a back arc $uv \in E(T)$ such that $d_T(v)$ is minimum
    over all back arcs. We will show that $d_T(v) > |P|$.
    Indeed, if $v \in P$, then there is $d_T(v)$
    vertices of
    $P$ with in-degree smaller than $d_T(v)$, which all
    contribute to the in-degree of $v$. In addition to that vertex
    $w$, with
    $d_T(w) \geq d_T(v)$, also
    contributes to the in-degree of $v$.
    So there is at least $d_T(v)+1$ vertices contributing to
    the indegree of $v$, what gives a contradiction with the
    assumption that $v \in P$.

    To prove that $\db{T}[|P|] = \emptyset$ if $|P| < n$,
    let us assume for contradiction that $|\db{T}[|P|]| > 1$ and
    let
    $x,y \in \db{T}[|P|]$, $x \neq y$. Then, $d_T(x)=d_T(y)=|P|$.
    However, all vertices of $P$ contribute to indegrees of
    both $x$ and $y$, based on the first claim proved above.
    Moreover, either $x$ contributes to the indegree of $y$, or
    $y$ contributes to the indegree of $x$. Thus, one of
    the vertices $x,y$ has indegree strictly greater than $|P|$.

\end{proof}

The following corollary is a direct implication of
Lemma~\ref{prefix_transitive}.

\begin{corollary}\label{t_prime_no_0}
    Let $T$ be a tournament. Then in the tournament $\nopref{T}$
    there are no vertices of in-degree $0$.
\end{corollary}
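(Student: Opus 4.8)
The plan is to read off the statement as a direct consequence of the two parts of Lemma~\ref{prefix_transitive}, splitting into the trivial and the substantive case according to whether the prefix exhausts all of $T$. First I would dispose of the case $P = V(T)$: here $\nopref{T} = \induced{T}{V(T) \setminus P}$ is the empty tournament, which has no vertices at all, so the claim holds vacuously. The interesting case is $P \neq V(T)$, equivalently $|P| = p < n$, which is exactly the regime covered by the last clause of Lemma~\ref{prefix_transitive}.

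In that case I would fix an arbitrary vertex $v \in V(T) \setminus P$ and estimate its in-degree in $T$ from below. By the first statement of Lemma~\ref{prefix_transitive}, every arc connecting $P$ and $V(T) \setminus P$ is directed from $P$ to $V(T) \setminus P$; hence all $p = |P|$ vertices of $P$ are in-neighbours of $v$ in $T$, giving $d_T(v) \geq |P| = p$. This bound alone is not enough, so the key extra ingredient is the final assertion of Lemma~\ref{prefix_transitive}, namely $\db{T}[|P|] = \emptyset$ when $|P| < n$. Since the degree bucket at index $p$ is empty, $v$ cannot have in-degree exactly $p$ in $T$, so the inequality strengthens to $d_T(v) > p$, i.e.\ $d_T(v) \geq p + 1$.

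Finally I would pass from the in-degree in $T$ to the in-degree in $\nopref{T}$. The vertices that $v$ loses when we delete $P$ are exactly its in-neighbours lying in $P$, and we have just argued that all $p$ vertices of $P$ are such in-neighbours; therefore $d_{\nopref{T}}(v) = d_T(v) - p \geq (p+1) - p = 1 > 0$. As $v$ was arbitrary in $V(\nopref{T})$, no vertex of $\nopref{T}$ has in-degree $0$, which is the assertion. I do not expect a genuine obstacle here, as the corollary is essentially bookkeeping; the one point that deserves care is that the naive bound $d_T(v) \geq p$ would yield only $d_{\nopref{T}}(v) \geq 0$, so the argument must genuinely invoke the emptiness of $\db{T}[|P|]$ to obtain the strict inequality and hence the sharp conclusion.
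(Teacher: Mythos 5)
Your proof is correct and is essentially the paper's argument: the paper states this corollary as a direct implication of Lemma~\ref{prefix_transitive} without spelling out details, and your write-up is exactly the natural expansion, combining the forward-arc property (giving $d_T(v) \geq |P|$ for $v \notin P$) with the emptiness of $\db{T}[|P|]$ (upgrading this to $d_T(v) \geq |P|+1$, hence $d_{\nopref{T}}(v) \geq 1$). Your closing remark is also apt — the strict inequality genuinely requires the empty-bucket clause, which is precisely why the lemma includes it.
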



%

The remainder of Subsection~\ref{sec:prelimtools} is devoted to
bounding the tournaments characteristics in terms of
parameter $\ADT(T)$.

\begin{lemma}[Small buckets]\label{small_buckets}
    For a tournament $T$ is holds that\\ $\maxdb{T} \leq 8(\sqrt{\ADT(T) + 1} + 1)$.
\end{lemma}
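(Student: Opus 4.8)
The plan is to route through $\FAST(T)$ rather than attempting to pack arc-disjoint triangles directly, and to invoke Fact~\ref{arc_disjoint_triangles_and_fast} only at the very end. Let $B = \db{T}[d]$ be a degree bucket of maximum size $m = \maxdb{T}$, so every vertex of $B$ has in-degree exactly $d$ in $T$. The heart of the argument is a lower bound of the form $\FAST(T) = \Omega(m^2)$: a large bucket forces \emph{every} vertex ordering to contain quadratically many backward arcs.

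Concretely, I would use the standard characterisation that $\FAST(T)$ equals the minimum, over all linear orders $\sigma$ of $V(T)$, of the number of arcs pointing backward in $\sigma$. Fix such an order $\sigma$ and, for $v \in V(T)$, let $p(v)$ be the number of vertices placed before $v$. The first step shows that every $v$ is incident to at least $|d_T(v) - p(v)|$ backward arcs: at most $p(v)$ of the $d_T(v)$ in-neighbours of $v$ lie before $v$, so at least $d_T(v) - p(v)$ of them lie after $v$ and give backward arcs into $v$; dually, at least $p(v) - d_T(v)$ out-neighbours of $v$ lie before $v$ and give backward arcs out of $v$. These two families are disjoint (in-neighbours versus out-neighbours), so $v$ is incident to at least $|d_T(v) - p(v)|$ backward arcs. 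Summing over all $v$ and dividing by two (each backward arc has two endpoints) shows that the number of backward arcs in $\sigma$ is at least $\tfrac12 \sum_{v} |d_T(v) - p(v)| \ge \tfrac12 \sum_{v \in B} |d - p(v)|$.

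Next, the positions $\{p(v) : v \in B\}$ are $m$ distinct integers, so $\sum_{v \in B} |d - p(v)|$ is minimised when they form a length-$m$ block of consecutive integers as close to $d$ as possible; an elementary estimate then gives $\sum_{v \in B} |d - p(v)| \ge \tfrac{m^2 - 1}{4}$, regardless of where $d$ lies in $[n]$. Since $\sigma$ was arbitrary, this yields $\FAST(T) \ge \tfrac{m^2 - 1}{8}$. Combining with $\FAST(T) \le 6(\ADT(T) + 1)$ from Fact~\ref{arc_disjoint_triangles_and_fast} gives $m^2 \le 48\,\ADT(T) + 49 \le 49(\ADT(T)+1)$, hence $m \le 7\sqrt{\ADT(T)+1} \le 8(\sqrt{\ADT(T)+1}+1)$, as claimed. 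For $m \le 16$ the inequality is immediate since the right-hand side is always at least $16$, so there are no boundary cases to treat separately.

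The main obstacle is identifying the right quantity to bound. A direct greedy packing of arc-disjoint triangles through the $\binom{m}{2}$ internal arcs of $B$ (which are all back arcs, hence lie in triangles by Lemma~\ref{back_edge_triangle}) yields only $\ADT(T) = \Omega(m)$, because a single external arc can participate in up to $m-1$ of these triangles; this linear bound is too weak. The insight that makes the proof work is to lower-bound $\FAST(T)$ by an ordering-\emph{independent} counting argument and convert it to $\ADT(T)$ only at the end. The single routine technicality is the extremal estimate $\sum_{v\in B}|d - p(v)| \ge \tfrac{m^2-1}{4}$ for $m$ distinct integer positions.
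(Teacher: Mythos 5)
Your proof is correct, and at its core it runs on the same counting idea as the paper's: a bucket $B=\db{T}[d]$ of size $m$ forces $m$ distinct integers (your order positions $p(v)$; the paper's in-degrees after reversing a feedback arc set) to deviate from the common value $d$ by $\Omega(m^2)$ in total, while a feedback arc set pays only $O(1)$ per unit of deviation; Fact~\ref{arc_disjoint_triangles_and_fast} then translates $\FAST(T)$ into $\ADT(T)$. The mechanism differs, though. The paper takes a concrete feedback arc set $F$ with $|F|\leq 6(\ADT(T)+1)$, reverses its arcs to get an acyclic tournament $\overline{T}$, uses Fact~\ref{transitive_buckets} for distinctness of the new in-degrees, and bounds $\sum_{v\in B}|d_{\overline{T}}(v)-d_T(v)|\leq 2|F|$ by the observation that one reversal changes two in-degrees by one, reaching a contradiction. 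You instead invoke the characterization of $\FAST(T)$ as the minimum number of backward arcs over linear orders (standard, but nowhere stated or used in the paper) and charge $|d_T(v)-p(v)|$ backward arcs to each vertex. The two are tightly related: in an acyclic tournament the in-degree of a vertex equals its position in the topological order, so taking $\sigma$ to be the topological order of $T$ with $F$ reversed makes the two deviation sums literally coincide. What your route buys is a clean, standalone intermediate inequality $\FAST(T)\geq(\maxdb{T}^2-1)/8$ stated without reference to $\ADT$, no proof by contradiction, and a slightly sharper final bound $\maxdb{T}\leq 7\sqrt{\ADT(T)+1}$; what it costs is that you must justify the ordering characterization of $\FAST$, which the paper's reversal bookkeeping sidesteps (its proof is short folklore: backward arcs of any order form a feedback arc set, and conversely all backward arcs relative to a topological order of $T-F$ lie in $F$). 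Your supporting estimates — the per-vertex incidence bound, the division by two, and the extremal bound $\sum_{v\in B}|d-p(v)|\geq(m^2-1)/4$ for distinct integer positions — all check out.
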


\begin{proof}
    For an empty tournament $T$ the claim holds trivially. So
    let us assume that $T$ is non-empty, let us fix
    any $d \in [n]$ and let $  $.

    If $n = 0$ then the claim holds as $\maxdb{T}=0$.
    So let us assume $n>0$, fix $d \in [n]$ and let $k=\ADT(T)$.
    For the sake of contradiction, let us assume that
    $|\db{T}[d]| > 8(\sqrt{k + 1} + 1)$.
    By Fact~\ref{arc_disjoint_triangles_and_fast}, there exists a
    feedback arc set $F$ of $T$, such that $|F| \leq 6k + 6$.
    Let $\overline{T}$ be a tournament $T$ with arcs of $F$
    reversed and let $a = \lfloor\frac{|\db{T}[d]| - 1}{2}\rfloor$.
    Note, that every reversal of an arc changes in-degrees of
    exactly two vertices by one. Thus, by
    Fact~\ref{transitive_buckets}:
    \[12k + 12 \geq 2|F| \geq \sum_{v \in \db{T}[d]} |d_{\overline{T}}(v) - d_T(v)| \geq 2 \sum_{j = 1}^{a} j = a(a + 1) \geq 16k + 16\]
    contradiction.
\end{proof}

\begin{definition}[Empty set of a tournament]\label{empty_promise}
    Let $T$ be an $n$ vertex tournament.
    A set $\emptydb=\{d \in [n] : \db{T}[d] = \emptyset\}$ is
    called the \emph{empty set} of $T$.
\end{definition}

\begin{lemma}[Small empty set]\label{empty_small}
    Let $T$ be a tournament and $\emptydb$ be the empty set of $T$. Then $|\emptydb| \leq 12(\ADT(T) + 1)$.
\end{lemma}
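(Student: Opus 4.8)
The plan is to compare $T$ with a nearby acyclic tournament and to track how the number of empty buckets grows as we move from one to the other by single arc reversals. By Fact~\ref{arc_disjoint_triangles_and_fast} there is a feedback arc set $F$ of $T$ with $|F| \leq 6(\ADT(T)+1)$. Reversing the arcs of $F$ in $T$ produces an acyclic tournament $\overline{T}$, and by Fact~\ref{transitive_buckets} we have $|\db{\overline{T}}[d]| = 1$ for every $d \in [n]$; hence $\overline{T}$ has no empty buckets at all. Since $\overline{T}$ is $T$ with the arcs of $F$ reversed, I can pass back from $\overline{T}$ to $T$ by reversing those same arcs one at a time, obtaining a sequence $\overline{T} = T_0, T_1, \ldots, T_{|F|} = T$. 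Each $T_i$ is a genuine tournament, because reversing a single arc of a tournament yields a tournament; in particular all in-degrees stay within $[n]$ throughout, so the empty set is well defined at every step.

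The key step is to show that a single arc reversal increases the number of empty buckets by at most two. Reversing the arc between two vertices $x$ and $y$ changes the in-degree of exactly these two vertices, each by $\pm 1$: one endpoint, say $x$, moves from the bucket of in-degree $a$ to the bucket of in-degree $a+1$, and the other, $y$, moves from the bucket of in-degree $b$ to the bucket of in-degree $b-1$. The only buckets that can \emph{newly} become empty are the two that lose a vertex, namely those of in-degree $a$ and $b$, each contributing at most $1$ to the count of empty buckets; the two buckets that gain a vertex (those of in-degree $a+1$ and $b-1$) can only decrease this count. Therefore the number of empty buckets of $T_i$ exceeds that of $T_{i-1}$ by at most $2$.

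Chaining this inequality along the whole sequence and using that $T_0 = \overline{T}$ has no empty buckets gives
\[
|\emptydb| \;\leq\; 0 + 2|F| \;\leq\; 2 \cdot 6(\ADT(T)+1) \;=\; 12(\ADT(T)+1),
\]
which is the claimed bound. The empty tournament is handled trivially, since then $[n] = \emptyset$ and $|\emptydb| = 0$. I expect the only delicate point to be the charging argument in the middle paragraph: one must verify that among the (at most four) buckets touched by a reversal, only the two that shed a vertex can be responsible for newly emptied buckets, and that working with honest intermediate tournaments keeps every in-degree inside $[n]$ so that the bucket structure is never degenerate. Everything else is a direct combination of Fact~\ref{arc_disjoint_triangles_and_fast} and Fact~\ref{transitive_buckets}.
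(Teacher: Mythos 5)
Your proof is correct and is essentially the paper's own argument: both invoke Fact~\ref{arc_disjoint_triangles_and_fast} to obtain a feedback arc set $F$ with $|F| \leq 6(\ADT(T)+1)$, both use Fact~\ref{transitive_buckets} to note that the acyclic tournament obtained by reversing $F$ has no empty buckets, and both charge at most two newly empty buckets per single arc reversal, yielding $|\emptydb| \leq 2|F| \leq 12(\ADT(T)+1)$. The only difference is cosmetic: the paper walks from $T$ to the acyclic tournament (each reversal decreases $|\emptydb|$ by at most $2$), whereas you walk backwards from the acyclic tournament to $T$ (each reversal increases it by at most $2$), and you spell out the charging step in slightly more detail.
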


\begin{proof}
    By Fact~\ref{arc_disjoint_triangles_and_fast}, there exists a feedback arc set $F$ of $T$ of size at most $6(\ADT(T) + 1)$.
    Every reversal of an arc changes the degrees of exactly $2$ vertices. Thus, by reversing one arc we can decrease $|\emptydb|$ by at most $2$.
    After reversing all arcs from $F$ we arrive at an acyclic tournament. Thus, by Fact~\ref{transitive_buckets}, the size of $\emptydb$ must be $0$ after such modifications.
\end{proof}

\begin{fact}\label{minimal_degree}
    Let $T$ be a tournament and $\emptydb$ be the empty set of $T$. Then $\mindt{T} \leq |\emptydb|$ (recall that $\mindt{T}$ is the minimum in-degree in $T$).
\end{fact}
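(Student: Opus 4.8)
The plan is to observe that every in-degree value strictly below the minimum in-degree gives rise to an empty bucket, and simply count these. Concretely, set $m = \mindt{T}$. If $m = 0$ the claim is immediate since $|\emptydb| \geq 0$, so I would assume $m \geq 1$.

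First I would note that, by definition of $m$ as the minimum in-degree, no vertex of $T$ has in-degree smaller than $m$. Hence for every $d \in \{0, 1, \ldots, m-1\}$ the bucket $\db{T}[d] = \{v \in V(T) : d_T(v) = d\}$ is empty. Second, I would check that these indices genuinely lie in the index set $[n] = \{0, \ldots, n-1\}$ over which $\emptydb$ is defined: since in-degrees in an $n$-vertex tournament range within $[n]$, we have $m \leq n-1$, so $m - 1 \leq n - 2 < n$, and therefore $\{0, 1, \ldots, m-1\} \subseteq [n]$.

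Combining the two observations, $\{0, 1, \ldots, m-1\} \subseteq \emptydb$, which contains exactly $m$ elements. Consequently $|\emptydb| \geq m = \mindt{T}$, as required.

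I do not anticipate a genuine obstacle here, as the statement reduces to a single counting observation; the only points requiring minor care are the degenerate case $\mindt{T} = 0$ and the verification that the indices $0, \ldots, m-1$ fall inside $[n]$ so that they are eligible to be counted in $\emptydb$.
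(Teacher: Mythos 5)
Your proof is correct and is essentially the same argument as the paper's: the buckets $\db{T}[d]$ for $d \in [\mindt{T}]$ are all empty, so $\mindt{T} \leq |\emptydb|$. The extra care you take with the degenerate case and the range check $\{0,\ldots,m-1\} \subseteq [n]$ is fine but not needed beyond what the paper's one-line proof already implies.
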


\begin{proof}
    Clearly $|\db{T}[d]| = 0$ for $d \in [\mindt{T}]$. Thus, $\mindt{T} \leq |\mathtt{Empty}|$.
\end{proof}

The following corollary is a direct consequence of Fact~\ref{minimal_degree}, Lemma~\ref{empty_small} and Lemma~\ref{prefix_transitive}.

\begin{corollary}\label{min_deg_Tprim}
Let $T$ be a tournament and let $P$ be the prefix of $T$. Then it
holds that
$\mindt{\nopref{T}} \leq 12(\ADT(T)+1)$
(recall that $\nopref{T}=\induced{T}{V(T) \setminus P}$).
\end{corollary}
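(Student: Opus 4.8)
The plan is to relate the minimum in-degree of $\nopref{T}$ directly to the empty set $\emptydb$ of $T$, and then invoke Lemma~\ref{empty_small} to bound $|\emptydb|$. If $P = V(T)$, then $\nopref{T}$ is empty and $\mindt{\nopref{T}} = 0$ by our convention $\min(\emptyset)=0$, so the inequality holds trivially; hence I would assume $P \neq V(T)$, so that $\nopref{T}$ is non-empty.

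First I would use Lemma~\ref{prefix_transitive} to pin down how in-degrees change when the prefix is removed. Since every arc between $P$ and $V(T) \setminus P$ is directed from $P$ to $V(T) \setminus P$, each vertex $v \in V(T) \setminus P$ receives an arc from every one of the $|P|$ vertices of $P$. Consequently $d_T(v) = |P| + d_{\nopref{T}}(v)$ for every $v \in V(T) \setminus P$; that is, deleting the prefix lowers the in-degree of each surviving vertex by exactly $|P|$.

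Next I would manufacture empty buckets of $T$ out of the minimum in-degree of $\nopref{T}$. The vertices of $P$ occupy the buckets $\db{T}[0], \ldots, \db{T}[|P|-1]$ (each of size one, by the definition of the prefix), so they all have in-degree strictly less than $|P|$. On the other hand, by the identity above every vertex of $V(T)\setminus P$ has in-degree at least $|P| + \mindt{\nopref{T}}$. Thus no vertex of $T$ has in-degree in the range $\{|P|, |P|+1, \ldots, |P| + \mindt{\nopref{T}} - 1\}$, which means all $\mindt{\nopref{T}}$ of these buckets are empty and therefore lie in $\emptydb$. This yields $\mindt{\nopref{T}} \leq |\emptydb|$, a shifted analogue of Fact~\ref{minimal_degree}. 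Combining this with $|\emptydb| \leq 12(\ADT(T)+1)$ from Lemma~\ref{empty_small} finishes the proof.

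Since every step is a short counting argument, I do not expect a genuine obstacle; the only point requiring care is the in-degree identity $d_T(v) = |P| + d_{\nopref{T}}(v)$, which is precisely where Lemma~\ref{prefix_transitive} is needed, together with the boundary case $P = V(T)$. An alternative route would be to apply Fact~\ref{minimal_degree} and Lemma~\ref{empty_small} directly to the subtournament $\nopref{T}$ and then use $\ADT(\nopref{T}) \leq \ADT(T)$, but the route above has the advantage of working with the empty set of $T$ itself and using all three cited results.
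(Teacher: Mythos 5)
Your proof is correct and follows essentially the same route as the paper: the paper derives the corollary by applying Fact~\ref{minimal_degree} to $\nopref{T}$ and using Lemma~\ref{prefix_transitive} to transfer the resulting empty buckets to $T$, then invoking Lemma~\ref{empty_small}, which is exactly the chain $\mindt{\nopref{T}} \leq |\emptydb| \leq 12(\ADT(T)+1)$ you establish. The only cosmetic difference is that you re-derive the shifted analogue of Fact~\ref{minimal_degree} by an explicit counting argument (using the in-degree identity $d_T(v) = |P| + d_{\nopref{T}}(v)$) instead of citing that fact for the subtournament $\nopref{T}$.
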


In what follows it is useful to define a set of vertices that can be in some way affected by reversing an arc $uv$.

\begin{definition}[Affected buckets]\label{affected_buckets_def}
Let $T$ be a tournament and $u,v \in V(T)$. We define the union of affected buckets as
\begin{flalign*}
\deltadb{T}{u}{v} & =\db{T}[d_T(u)-1] \cup \db{T}[d_T(u)] \cup \db{T}[d_T(u)+1] &\\
     & \; \; \cup \db{T}[d_T(v)-1] \cup \db{T}[d_T(v)] \cup \db{T}[d_T(v)+1].&
\end{flalign*}
In the above union we consider the non-existent degree buckets as empty.
\end{definition}

We now bound the change in the set of back arcs caused by reversing arc $uv$.

\begin{lemma}[Back arcs change]\label{back_arcs_changes_nok}
    Let $T$ be a tournament, 
    $uv \in E(T)$ and $\backdb$ be the set of all back arcs of $T$. Let $\overline{T}$ be a tournament $T$ with arc $uv$ reversed and $\overline{\backdb}$ be
    the set of all back arcs of $\overline{T}$.
    Then, for all $xy \in \backdb$ $\triangle$ $\overline{\backdb}$
    one of the two following conditions hold:
    \begin{enumerate}
     \item $x \in \{ u,v \}$ and $y \in \deltadb{T}{u}{v}$
     \item $y \in \{ u,v \}$ and $x \in \deltadb{T}{u}{v}$.
    \end{enumerate}
\end{lemma}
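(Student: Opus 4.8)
The plan is to reduce the statement to a local argument about how a single arc reversal perturbs indegrees, and then to track exactly which arcs can flip between back and forward. The starting observation is that reversing $uv$ changes the indegree of only two vertices: since $u$ gains $v$ as an in-neighbour and $v$ loses $u$, we have $d_{\overline{T}}(u) = d_T(u) + 1$ and $d_{\overline{T}}(v) = d_T(v) - 1$, while $d_{\overline{T}}(w) = d_T(w)$ for every $w \notin \{u,v\}$. Since whether an arc $ab$ is a back arc depends only on the comparison $d(a) \geq d(b)$, an arc can belong to the symmetric difference $\backdb \,\triangle\, \overline{\backdb}$ only if this comparison flips, which in turn forces at least one of its endpoints to lie in $\{u,v\}$ (otherwise both indegrees are unchanged and the back/forward status is preserved).

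Next I would split into the two symmetric cases according to which endpoint of $xy$ lies in $\{u,v\}$, establishing conditions $1$ and $2$ respectively. Consider first $x \in \{u,v\}$ and $y \notin \{u,v\}$, so $d_T(y)$ is unchanged while $d_T(x)$ moves by exactly $\pm 1$. A comparison of the form $d(x) \geq d(y)$ can only flip its truth value when the perturbation of size one crosses the threshold, which requires $d_T(y)$ to differ from $d_T(x)$ by at most one; hence $y \in \db{T}[d_T(x)-1] \cup \db{T}[d_T(x)] \cup \db{T}[d_T(x)+1]$. Because $d_T(x) \in \{d_T(u), d_T(v)\}$, this places $y$ in $\deltadb{T}{u}{v}$, which is exactly condition $1$. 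The case $y \in \{u,v\}$, $x \notin \{u,v\}$ is entirely analogous, with the roles of the two endpoints swapped, and yields condition $2$.

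Finally I would dispose of the only remaining arc, namely the reversed arc itself ($uv$ viewed in $T$ and $vu$ viewed in $\overline{T}$), for which both endpoints lie in $\{u,v\}$. Here the conclusion is immediate: $u \in \db{T}[d_T(u)] \subseteq \deltadb{T}{u}{v}$ and $v \in \db{T}[d_T(v)] \subseteq \deltadb{T}{u}{v}$, so condition $1$ holds trivially. The proof is a short case analysis with no conceptual obstacle; the only points requiring care are the bookkeeping of arc orientation (so that the $+1$ change at $u$ and the $-1$ change at $v$ are matched with the correct threshold-crossing inequality) and remembering to treat the reversed arc separately, since it is the single arc whose mere presence — not just its status — differs between $E(T)$ and $E(\overline{T})$.
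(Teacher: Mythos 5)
Your proof is correct and follows essentially the same approach as the paper's: reversing $uv$ changes only the indegrees of $u$ and $v$, each by exactly one, so any arc whose back/forward status flips must have one endpoint in $\{u,v\}$ and the other within indegree distance one of $d_T(u)$ or $d_T(v)$, placing it in $\deltadb{T}{u}{v}$. Your write-up is merely more explicit than the paper's (working out the threshold-crossing inequalities and treating the reversed arc itself as a separate case), which is a welcome bit of care but not a different argument.
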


\begin{proof}
    The only vertices which in-degree changes by the $uv$ arc reversal are $u$ and $v$. Moreover, their in-degree
    can change at most by $1$. Thus, any other vertex $w \in V(T)$ that is an endpoint of an arc in $\backdb$ $\triangle$ $\overline{\backdb}$ satisfies
    $|d_T(w)-d_T(v)| \leq 1$ or $|d_T(w)-d_T(u)| \leq 1$. The claim follows.
\end{proof}

\begin{lemma}[Small back arcs change]\label{back_arcs_changes}
    Let $T$ be a tournament, $uv \in E(T)$ and $\backdb$ be the set of all back arcs of $T$. Let $\overline{T}$ be a tournament $T$ with arc $uv$ reversed and $\overline{\backdb}$ be the set of all back arcs of $\overline{T}$.
    Then $|\backdb$ $\triangle$ $\overline{\backdb}| \leq 48(\sqrt{\ADT(T) + 1} + 1)$.
\end{lemma}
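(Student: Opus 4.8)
The plan is to combine the structural result of Lemma~\ref{back_arcs_changes_nok}, which tells us \emph{where} the symmetric difference $\backdb \triangle \overline{\backdb}$ can live, with the size bound of Lemma~\ref{small_buckets}, which tells us \emph{how big} the affected buckets can be. By Lemma~\ref{back_arcs_changes_nok}, every arc $xy$ in the symmetric difference has one endpoint in $\{u,v\}$ and the other endpoint in $\deltadb{T}{u}{v}$. So I would bound the number of such arcs by counting: there are at most two choices for the endpoint lying in $\{u,v\}$, and the other endpoint must lie in the set $\deltadb{T}{u}{v}$, which is a union of at most six degree buckets of $T$.

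First I would bound $|\deltadb{T}{u}{v}|$. Since it is a union of six degree buckets (the three buckets around $d_T(u)$ and the three around $d_T(v)$, treating nonexistent buckets as empty), and each bucket has size at most $\maxdb{T}$, we get $|\deltadb{T}{u}{v}| \leq 6\,\maxdb{T}$. Applying Lemma~\ref{small_buckets}, this is at most $6 \cdot 8(\sqrt{\ADT(T)+1}+1) = 48(\sqrt{\ADT(T)+1}+1)$.

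Next I would bound the number of arcs in the symmetric difference. Each such arc is determined by a choice of its endpoint in $\{u,v\}$ together with its other endpoint in $\deltadb{T}{u}{v}$. Naively this gives a factor of $2$, yielding $2 \cdot 48(\sqrt{\ADT(T)+1}+1)$, which is twice the claimed bound. To recover the stated constant, I would observe that the two affected arcs $xy$ where \emph{both} endpoints lie in $\{u,v\}$ amount to just the single reversed arc $uv$ (which contributes at most one element to the symmetric difference), while for every other arc in the symmetric difference exactly one endpoint is in $\{u,v\}$ and the other is a genuinely distinct vertex of $\deltadb{T}{u}{v}$; since $u,v \in \deltadb{T}{u}{v}$ themselves, a more careful accounting that does not double-count identifies each symmetric-difference arc with a \emph{single} vertex of $\deltadb{T}{u}{v}$ (the non-$\{u,v\}$ endpoint), giving the cleaner bound $|\backdb \triangle \overline{\backdb}| \leq |\deltadb{T}{u}{v}| \leq 48(\sqrt{\ADT(T)+1}+1)$.

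The main obstacle is precisely this constant-factor bookkeeping: the structural lemma permits the non-$\{u,v\}$ endpoint to be either $x$ or $y$, so a crude count risks doubling the bound, and I would need to verify that identifying each arc of the symmetric difference uniquely by its endpoint in $\deltadb{T}{u}{v}$ is a valid injection. Concretely, an arc changing its back/forward status is incident to $u$ or $v$ (whose in-degrees change), and its opposite endpoint $w$ satisfies $|d_T(w)-d_T(u)|\le 1$ or $|d_T(w)-d_T(v)|\le 1$, hence $w \in \deltadb{T}{u}{v}$; mapping each such arc to this $w$ is injective because at most one arc joins $w$ to $\{u,v\}$ once we fix which of $u,v$ it meets, and the relevant degree-proximity pins this down. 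This injection together with the bucket-size bound yields the claim.
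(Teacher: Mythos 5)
Your plan follows the same skeleton as the paper's proof (Lemma~\ref{back_arcs_changes_nok} to localize the symmetric difference, Lemma~\ref{small_buckets} to bound bucket sizes), but the step you yourself flag as the main obstacle --- the injection sending each arc of $\backdb \,\triangle\, \overline{\backdb}$ to its endpoint outside $\{u,v\}$ --- does not exist, and the intermediate bound $|\backdb \,\triangle\, \overline{\backdb}| \leq |\deltadb{T}{u}{v}|$ that you derive from it is false in general. Nothing prevents $d_T(u)$ and $d_T(v)$ from being equal, and then a vertex $w$ with $d_T(w)=d_T(u)=d_T(v)$ can contribute \emph{two} arcs to the symmetric difference: the arc joining $w$ and $u$ \emph{and} the arc joining $w$ and $v$ can both change status, so ``degree-proximity'' does not pin down which of $u,v$ the vertex $w$ meets. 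Concretely, take the $8$-vertex tournament on $\{a,u,v,w_1,w_2,b,c,e\}$ where $a$ beats everyone; $u$ beats $v,b,c,e$; $v$ beats $w_1,w_2,c,e$; $w_1$ beats $u,b,c,e$; $w_2$ beats $u,w_1,b,e$; $b$ beats $v,e$; $c$ beats $w_2,b$; $e$ beats $c$. Here $d_T(u)=d_T(v)=d_T(w_1)=d_T(w_2)=3$ and the remaining in-degrees are $0,5,5,6$, so $\deltadb{T}{u}{v}=\{u,v,w_1,w_2\}$ has size $4$; yet reversing $uv$ changes the status of all five arcs $uv,\; w_1u,\; w_2u,\; vw_1,\; vw_2$, so $|\backdb \,\triangle\, \overline{\backdb}| = 5 > 4$, with $w_1$ and $w_2$ each hit twice by your map. (For what it is worth, the paper's own one-line proof asserts the same inequality $|\backdb \,\triangle\, \overline{\backdb}| \leq |\deltadb{T}{u}{v}|-2+1$, which this example also refutes; the lemma itself is nevertheless true.)

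The repair is not to force an injection but to refine \emph{which} buckets can be involved, using the orientation of each arc. Reversing $uv$ increases $d(u)$ by one and decreases $d(v)$ by one, so an arc $wu$ changes status only if $d_T(w)=d_T(u)$, an arc $uw$ only if $d_T(w)=d_T(u)+1$, an arc $vw$ only if $d_T(w)=d_T(v)$, and an arc $wv$ only if $d_T(w)=d_T(v)-1$, while the pair $\{u,v\}$ itself contributes at most one arc. Since for a fixed endpoint in $\{u,v\}$ each such $w$ carries exactly one arc to that endpoint, this gives
\begin{equation*}
|\backdb \,\triangle\, \overline{\backdb}| \;\leq\; |\db{T}[d_T(u)]| + |\db{T}[d_T(u)+1]| + |\db{T}[d_T(v)]| + |\db{T}[d_T(v)-1]| + 1 \;\leq\; 4\,\maxdb{T}+1,
\end{equation*}
and by Lemma~\ref{small_buckets} this is at most $32(\sqrt{\ADT(T)+1}+1)+1 \leq 48(\sqrt{\ADT(T)+1}+1)$, using $\sqrt{\ADT(T)+1}+1 \geq 2$. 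Note that the possible double contribution of a vertex $w$ (once against $u$, once against $v$) is harmless here because the four bucket sizes are \emph{summed}, not unioned; this is how the constant $48$ survives even though the injection you wanted does not exist.
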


\begin{proof}
    By the arguments of Lemma~\ref{back_arcs_changes_nok}, $|\backdb$ $\triangle$ $\overline{\backdb}| \leq |\deltadb{T}{u}{v}| - 2 + 1 \leq |\deltadb{T}{u}{v}|$.
    By Lemma~\ref{small_buckets}, this yields at most $6 \cdot 8(\sqrt{\ADT(T) + 1} + 1) = 48(\sqrt{\ADT(T) + 1} + 1)$ arcs.
\end{proof}

\begin{lemma}[Few back arcs in total]\label{few_back_arcs}
    Let $T$ be a tournament and $\backdb$ be the set of all back arcs of $T$. Then $|\backdb| \leq 288(\ADT(T) + 1)(\sqrt{6\ADT(T) + 7} + 1)$.
\end{lemma}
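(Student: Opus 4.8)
The plan is to reverse a minimum feedback arc set one arc at a time, track how much the set of back arcs can change at each reversal via Lemma~\ref{back_arcs_changes}, and use that an acyclic tournament has no back arcs at all. Concretely, by Fact~\ref{arc_disjoint_triangles_and_fast} fix a feedback arc set $F$ of $T$ with $|F| \leq 6(\ADT(T) + 1)$, and enumerate $F = \{e_1, \ldots, e_m\}$ with $m = |F|$. Define the sequence of tournaments $T_0 = T$ and $T_{j}$ obtained from $T_{j-1}$ by reversing $e_j$, so that $T_m$ is acyclic (reversing all of $F$ yields an acyclic tournament). Since an acyclic tournament is transitive, by Fact~\ref{transitive_buckets} all its buckets have size one, its in-degrees are a permutation of $[n]$, and every arc runs from smaller to larger in-degree; hence $T_m$ has no back arcs and $|\backdb(T_m)| = 0$.

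The key point is to bound the per-step change uniformly. For each intermediate $T_j$, the arcs $\{e_{j+1}, \ldots, e_m\} \subseteq F$ still form a feedback arc set of $T_j$, so $\FAST(T_j) \leq |F| \leq 6(\ADT(T) + 1)$, and therefore by the first inequality of Fact~\ref{arc_disjoint_triangles_and_fast} we get $\ADT(T_j) \leq \FAST(T_j) \leq 6(\ADT(T) + 1)$. Applying Lemma~\ref{back_arcs_changes} to the reversal of $e_{j+1}$ in $T_j$, the symmetric difference of the back arc sets satisfies
\[
|\backdb(T_j)\ \triangle\ \backdb(T_{j+1})| \leq 48\bigl(\sqrt{\ADT(T_j) + 1} + 1\bigr) \leq 48\bigl(\sqrt{6\ADT(T) + 7} + 1\bigr),
\]
where the last step uses $\ADT(T_j) + 1 \leq 6\ADT(T) + 7$.

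Finally I would telescope. Using $|A| - |B| \leq |A\ \triangle\ B|$ repeatedly along the chain $T_0, \ldots, T_m$ and the fact that $|\backdb(T_m)| = 0$, we obtain
\[
|\backdb| = |\backdb(T_0)| \leq \sum_{j=0}^{m-1} |\backdb(T_j)\ \triangle\ \backdb(T_{j+1})| \leq m \cdot 48\bigl(\sqrt{6\ADT(T) + 7} + 1\bigr).
\]
Substituting $m = |F| \leq 6(\ADT(T) + 1)$ yields the claimed bound $288(\ADT(T) + 1)(\sqrt{6\ADT(T) + 7} + 1)$, and the constant $288 = 6 \cdot 48$ matches exactly. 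The one step that requires care—and which I expect to be the main obstacle—is justifying that the per-reversal bound of Lemma~\ref{back_arcs_changes} may be applied with $\ADT(T_j)$ controlled by $\ADT(T)$ for every intermediate tournament, rather than just for $T$ itself; this is precisely what the observation that the unprocessed suffix of $F$ remains a feedback arc set of $T_j$ provides.
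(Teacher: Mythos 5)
Your overall strategy is the same as the paper's: reverse the arcs of a small feedback arc set one at a time, bound the change of the back-arc set at each step via Lemma~\ref{back_arcs_changes} together with $\ADT\leq\FAST$, and telescope down to an acyclic tournament, which has no back arcs. The paper merely packages the telescoping as an induction on the size of a minimum feedback arc set, and the final arithmetic giving $288(\ADT(T)+1)(\sqrt{6\ADT(T)+7}+1)$ is identical.

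The gap is in the step you yourself flag as the main obstacle, and your resolution of it fails as stated. For an \emph{arbitrary} feedback arc set $F$ it is false that the unprocessed suffix $\{e_{j+1},\dots,e_m\}$ remains a feedback arc set of $T_j$, and false that reversing all of $F$ yields an acyclic tournament. Counterexample: let $T$ be the transitive tournament on $\{a,b,c\}$ with arcs $a\to b$, $b\to c$, $a\to c$, and let $F=\{a\to c\}$; since $T$ is acyclic, $F$ is a feedback arc set, and $|F|=1\leq 6(\ADT(T)+1)$, so it satisfies your hypothesis. Yet reversing its single arc creates the directed triangle $a\to b\to c\to a$, so $T_m$ is cyclic and the (empty) suffix is not a feedback arc set of $T_m$. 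What you need is that $F$ be a \emph{minimum} (inclusion-minimal suffices) feedback arc set, and even then the claim deserves an argument: take a topological ordering $\sigma$ of $T-F$; the arcs of $T$ that are backward with respect to $\sigma$ form a feedback arc set contained in $F$, hence by minimality equal to $F$; consequently, after reversing any subset of $F$, the backward arcs of the resulting tournament with respect to $\sigma$ are exactly the unreversed arcs of $F$, which yields both the suffix property and the acyclicity of $T_m$ in one stroke. Note that Fact~\ref{arc_disjoint_triangles_and_fast} only guarantees the \emph{existence} of a feedback arc set of size at most $6(\ADT(T)+1)$; minimality is an extra property you must demand and then actually use. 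The paper's proof handles this implicitly by always working with a minimum feedback arc set and asserting that after one reversal the minimum feedback arc set size drops by exactly one --- an assertion resting on the same ordering argument. With ``minimum'' added and that argument supplied, your proof goes through and coincides with the paper's.
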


\begin{proof}
    Let $F$ be a minimum feedback arc set of $T$.
    We prove by induction on $|F|$, that $|\backdb| \leq 48 \sum_{j = 2}^{|F| + 1} (\sqrt{j} + 1)$.
    In the base case $|F| = 0$ and $T$ is an acyclic tournament, so by Lemma~\ref{back_edge_triangle} we get $|\backdb| = 0$.
    In the inductive step, we assume $|F| > 0$. Let $e \in F$.
    Let us reverse $e$ obtaining tournament $\overline{T}$ and let $\overline{F}$ be a minimum feedback arc set of $\overline{T}$ and $\overline{\backdb}$ be the set of all back arcs in $\overline{T}$.
    Note, that $|\overline{F}| = |F| - 1$, because $F$ is a minimum feedback arc set of $T$.
    By Lemma~\ref{back_arcs_changes} and Fact~\ref{arc_disjoint_triangles_and_fast}, $|\backdb$ $\triangle$ $\overline{\backdb}| \leq 48(\sqrt{|F| + 1} + 1)$.
    Thus, using the inductive hypothesis,
    $|\backdb| \leq 48(\sqrt{|F| + 1} + 1) + |\overline{\backdb}| \leq 48(\sqrt{|F| + 1} + 1) + 48 \sum_{j = 2}^{|\overline{F}| + 1} (\sqrt{j} + 1) = 48\sum_{j = 2}^{|F| + 1} (\sqrt{j} + 1)$.
    Hence, $|\backdb| \leq 48|F|(\sqrt{|F| + 1} + 1)$.
    By Fact~\ref{arc_disjoint_triangles_and_fast}, $|F| \leq 6(\ADT(T) + 1)$. Thus, $|\backdb| \leq 288(\ADT(T) + 1)(\sqrt{6\ADT(T) + 7} + 1)$.
\end{proof}

We conclude this subsection with the description of a basic data
structure maintaining information about a dynamic tournament. The
data structures presented further are extensions of this basic data
structure. Before we proceed with this, we need to introduce an
even simpler structure that maintains a set $S \subseteq [m]$ for
some natural number $m$ and is referred to as an \emph{array set}
over $[m]$.

\begin{definition}[Array set over {$[m]$}]
    Let $m \in \mathbb{N}$. We call a data structure an \emph{array set over} $[m]$ if it implements a set $S \subseteq [m]$ in the following way:
    \begin{enumerate}
        \item A doubly linked list containing all elements of $S$
        \item An array storing for each $i \in [m]$ a pointer to the list node representing $i$ or null if $i \notin S$
    \end{enumerate}
\end{definition}

Such data structure takes $\mathcal{O}(m)$ space, requires
$\mathcal{O}(m)$ initialization time and supports add, delete, find
operations in $\mathcal{O}(1)$ time and iteration over $S$ in
$\mathcal{O}(|S|)$ time.

We move on to the description of the basic data structure. All our
data structures assume that the vertices of the input $n$-vertex
tournament $T$ are
provided to the data structure with their unique index from $[n]$.
This assumption is reflected in the footnotes.

\begin{lemma}\label{update_impl_basic}
    For every $n \in \mathbb{N}$ there exists a data structure
    $\dsBasic{n}$ that for a dynamic tournament $T$ on $n$
    vertices\footnoteref{note1} 
    altered by arc reversals, maintains the following information:
    \begin{enumerate}
        \item Adjacency matrix of $T$
        \item An array of size $n$ of in-degrees of vertices in $V(T)$
        \item An array of size $n$ of degree buckets of $T$, where\\ each bucket $\db{T}[i], i \in [n]$ is represented by a list
        \item An array of size $n$ of pointers to the bucket lists:\\ each vertex stores the pointer to its position in the list representing its bucket
        \item Empty set $\emptydb$ of $T$ as an array set over $[n]$
        \item An array of size $n + 1$ counting the number of degree buckets of a given size, for all possible sizes
        \item Integer \maxdb{T}
    \end{enumerate}
    Data structure $\dsBasic{n}$ offers the following operations:
    \begin{itemize}
     \item $\INITNP{T}{n}$ -- initializes data structure with tournament $T$, in time $\mathcal{O}(n^2)$
     \item $\REVERSE{v}{u}$ -- reverses arc between vertices $v$ and $u$ in $T$, in time $\mathcal{O}(1)$
    \end{itemize}
\end{lemma}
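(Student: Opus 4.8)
The plan is to give a direct implementation of each of the seven maintained objects together with the two operations, and then to check the claimed running times, the only nontrivial point being the constant-time maintenance of $\maxdb{T}$. I would store the adjacency matrix as an $n\times n$ array of direction bits, the in-degrees in an array of size $n$, and realize each degree bucket $\db{T}[d]$ as a doubly linked list. The per-vertex pointer array (item~4) holds, for each $v$, a pointer to the list node representing $v$ inside $\db{T}[d_T(v)]$, so that any vertex can be spliced out of its current bucket in constant time. The empty set $\emptydb$ is stored precisely as an array set over $[n]$, which by the remark preceding the lemma supports add, delete, and find in $\mathcal{O}(1)$. Finally I keep a counter array $C[0\ldots n]$, where $C[s]$ records how many degree buckets currently have size exactly $s$ (item~6), together with a single integer holding $\maxdb{T}$ (item~7).

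For $\INITNP{T}{n}$ I would read the $\binom{n}{2}$ arcs to fill the adjacency matrix and simultaneously accumulate the in-degrees, costing $\mathcal{O}(n^2)$. All remaining objects — the bucket lists together with their back-pointers, the array set $\emptydb$, the counter array $C$, and $\maxdb{T}$ — are then produced by one $\mathcal{O}(n)$ pass over the vertices (inserting each $v$ into $\db{T}[d_T(v)]$) and one $\mathcal{O}(n)$ pass over the degrees (filling $\emptydb$ and $C$, and taking a running maximum for $\maxdb{T}$). The total initialization time is thus $\mathcal{O}(n^2)$ as claimed.

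For $\REVERSE{v}{u}$ I rely on the observation already used in Lemma~\ref{small_buckets} and Lemma~\ref{empty_small}: reversing a single arc changes the in-degree of exactly its two endpoints, each by exactly one (one increases, one decreases). Hence I flip the two relevant adjacency-matrix entries, adjust the two in-degrees, and move each of $u,v$ from its old bucket list to the neighbouring one using the stored node pointer — a constant number of $\mathcal{O}(1)$ list splices, after which I refresh the two affected back-pointers. At most four buckets change (the old and new buckets of $u$ and of $v$), so I update $\emptydb$ with a constant number of array-set operations (deleting an index that becomes non-empty, adding one that becomes empty) and update $C$ by decrementing $C[s_{\mathrm{old}}]$ and incrementing $C[s_{\mathrm{new}}]$ for each affected bucket; all of this is $\mathcal{O}(1)$.

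The crux of the proof, and the one step that needs an argument rather than bookkeeping, is recomputing $\maxdb{T}$ in constant time. I would first show that a single reversal changes $\maxdb{T}$ by at most one in either direction: a growing bucket increases its size by one, so no bucket can exceed $\maxdb{T}+1$, giving the upper bound; and the bucket that previously realised the maximum size $M=\maxdb{T}$ still has size at least $M-1$ afterwards, so the new maximum is at least $M-1$. Consequently the new value lies in $\{M-1,M,M+1\}$, and once $C$ has been updated I determine it in $\mathcal{O}(1)$: if $C[M+1]>0$ the answer is $M+1$; otherwise if $C[M]>0$ it is $M$; otherwise it must be $M-1$, which is guaranteed to have positive count by the lower bound just proved. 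Combining all the constant-time steps, $\REVERSE{v}{u}$ runs in $\mathcal{O}(1)$, which completes the proof.
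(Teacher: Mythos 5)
Your implementation and bookkeeping match the paper's construction, but the step you yourself single out as the crux --- that $\maxdb{T}$ changes by at most one per reversal --- is false, and consequently your constant-time update rule can return a wrong value. The flaw is that a single bucket can lose \emph{both} endpoints of the reversed arc, or gain both. Reversing $uv \in E(T)$ raises $d_T(u)$ by one and lowers $d_T(v)$ by one; if $d_T(u)=d_T(v)$ then their common bucket loses two vertices, and if $d_T(v)=d_T(u)+2$ then the bucket $\db{T}[d_T(u)+1]$ gains two. Concretely, take a regular tournament $T$ on $2k+1$ vertices, so every vertex has in-degree $k$, $\db{T}[k]=V(T)$, and $M=\maxdb{T}=2k+1$. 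Reversing any arc $uv$ sends $u$ to bucket $k+1$ and $v$ to bucket $k-1$, so the bucket sizes become $1$, $2k-1$, $1$, and the new maximum is $M-2$. Your rule (check $C[M+1]$, then $C[M]$, else output $M-1$) outputs $2k$, yet no bucket has size $2k$; the correct value is $2k-1$. The ``lower bound'' you prove --- that the previously maximal bucket retains size at least $M-1$ --- is exactly what this example violates.

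The repair is small and is precisely what the paper does: since each bucket size changes by at most two per reversal, the new maximum lies in $\{M-2,M-1,M,M+1,M+2\}$, a constant set of candidates, each testable in $\mathcal{O}(1)$ with your counter array $C$ (output the largest $s$ in that range with $C[s]>0$). With that correction, the rest of your proof --- initialization in $\mathcal{O}(n^2)$, the adjacency matrix and in-degree updates, the $\mathcal{O}(1)$ bucket splicing via stored pointers, and the maintenance of $\emptydb$ and $C$ --- is correct and coincides with the paper's argument.
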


\begin{proof}
    It is easy to see how to initialize data structures from points $1-7$ in time $\mathcal{O}(n^2)$ by reading all the arc of $T$ and performing a simple processing.
    Let us describe the update.
    Updating data structures from points $1.$ and $2.$ is straightforward.
    Note that reversing an arc between $v$ an $u$ changes only degrees of $v$ and $u$.
    Thus, updating degree buckets can be done in time $\mathcal{O}(1)$.
    Having this, we can easily update $\emptydb$ and the array in Point $6$ also in constant time.
    Finally, the size of any degree bucket changes by at most two, so there is a constant number of candidates for the new value of $\maxdb{T}$.
    Thus, $\maxdb{T}$ can be updated in $\mathcal{O}(1)$ time using the array of Point $6$.
\end{proof}

\subsection{Triangle Query Implementation}
\label{sec:triaQ}

In this section we present techniques used to perform query (find
a triangle) that are common for both structures: the one of
Theorem~\ref{thm:DsTrianglePromise} and the one of
Theorem~\ref{thm:DsTriangleFull}.
They extend the structure of Lemma~\ref{update_impl_basic} by adding
the functionality of finding a triangle. The procedures provided in
this subsection, similarly as the data structure of
Lemma~\ref{update_impl_basic}, operate on an $n$-vertex tournament
$T$ and they assume access to the indexing of $V(T)$ with the numbers
in $[n]$.

We implement the functionality of finding a triangle using two crucial subprocedures (working on a tournament $T$):
\begin{enumerate}
    \item $\MIN$ -- returns a vertex $v$ of minimum in-degree in $\nopref{T}$, together with $d_{\nopref{T}}(v)$ (if $\nopref{T} = \emptyset$ then it returns $null$)

    \item $\INCOMING{v}{l}$ -- returns $l \in \mathbb{N}$ in-neighbours of $v \in V(\nopref{T})$ in $\nopref{T}$ (or as many as there exist)
\end{enumerate}

The main idea is to find a vertex $v \in V(\nopref{T})$ with
in-degree $d_{\nopref{T}}(v)$ bounded by a function of $\ADT(T)$.
Due to Corollary~\ref{min_deg_Tprim} this can be achieved with
$\MIN$ subprocedure.
After we find $v$, using $$\INCOMING{v}{1}$$ invocation we find a
vertex $u$ which is an in-neighbour of $v$. Using
$$\INCOMING{u}{d_{\nopref{T}}(v)}$$ invocation, we iterate over
in-neighbours of $u$ in $\nopref{T}$.
After inspecting $d_{\nopref{T}}(v)$ in-neighbours of $u$ we are
bound to find a triangle by a similar argument as in the proof of
Lemma~\ref{back_edge_triangle}.
The pseudocode for the $\FTRIANGLE{}$ subroutine is presented in Algorithm~\ref{fig:common_query_triangles} and details are covered by Theorem~\ref{common_query_triangles}.
One needs to keep in mind, that the procedure $\FTRIANGLE{}$, similarly as the two subprocedures listed above, operates on a tournament $T$ that shall be clear from the context.
The $\FTRIANGLE{}$ method is useful in both the promise and the
full models. It returns a triangle or states that there are none.

\begin{algorithm}
    \SetKwInOut{Input}{Operation}
	\SetKwInOut{Output}{Output}
    \Input{\FTRIANGLE{}}
    \Output{A triangle in tournament $T$ or \texttt{NONE} if none exists}
    \BlankLine

    $v \leftarrow \MIN$\;
    \If{$v$ is $null$} {
         \Return \texttt{NONE}\;
     }

    $\{u\} \leftarrow \INCOMING{v}{1}$\;
    $N \leftarrow \INCOMING{u}{d_{\nopref{T}}(v)}$\;

    \For{$w \in N$}{
        \If{$vw \in E(T)$}{
            \Return wuv\;
        }
    }
    \caption{Pseudocode for the operation of finding a triangle in $T$}
    \label{fig:common_query_triangles}
\end{algorithm}

\begin{theorem}\label{common_query_triangles}
    Let $T$ be a tournament. Let us assume access to the information related to $T$ maintained by the data structure of Lemma~\ref{update_impl_basic}. Let us
    also assume that we are given the following two subprocedures:
    \begin{enumerate}
        \item $\MIN$ -- returns a vertex $v$ of minimum in-degree in $\nopref{T}$, together with $d_{\nopref{T}}(v)$ (if $\nopref{T} = \emptyset$ then it returns $null$),
        in time $t_1(T)$
        \item $\INCOMING{v}{l}$ -- returns $l \in \mathbb{N}$ in-neighbours of $v \in V(\nopref{T})$ in $\nopref{T}$ (or as many as there exist), in time $t_2(l,T)$ for a non-decreasing (with respect to $l$) function $t_2$
    \end{enumerate}
    Then it is possible to implement method $\FTRIANGLE{}$ that returns a triangle in $T$ or states that there are no triangles, in time $\mathcal{O}(t_1(T) + t_2(\mindt{\nopref{T}}, T) + \mindt{\nopref{T}})$.
\end{theorem}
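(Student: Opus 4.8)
The plan is to analyze Algorithm~\ref{fig:common_query_triangles} directly, establishing first its correctness and then its running time. I would begin with the case where $\MIN$ returns $null$. By construction this happens exactly when $\nopref{T} = \emptyset$, i.e. the prefix is all of $V(T)$; by Definition~\ref{prefix_def} every degree bucket then has size one, so by Fact~\ref{transitive_buckets} the tournament $T$ is acyclic and by Fact~\ref{cycles_are_triangles} it contains no triangle. Hence returning \texttt{NONE} is correct. In the remaining case $v$ minimizes $d_{\nopref{T}}(v)$, so $d_{\nopref{T}}(v) = \mindt{\nopref{T}}$, and Corollary~\ref{t_prime_no_0} guarantees $d_{\nopref{T}}(v) \geq 1$. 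Therefore $\INCOMING{v}{1}$ returns some $u \in V(\nopref{T})$ with $uv \in E(T)$. Since $v$ has minimum in-degree in $\nopref{T}$, vertex $u$ has at least $d_{\nopref{T}}(v)$ in-neighbours there, so $N = \INCOMING{u}{d_{\nopref{T}}(v)}$ is a set of exactly $d_{\nopref{T}}(v)$ in-neighbours of $u$ in $\nopref{T}$, each satisfying $wu \in E(T)$ and $w \in V(\nopref{T})$.

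The crux is to argue that the loop always returns a triangle, and I expect this counting step to be the main point of the proof; it mirrors the argument in Lemma~\ref{back_edge_triangle}. I would proceed by contradiction: suppose $vw \notin E(T)$ for every $w \in N$. Then, $T$ being a tournament, $wv \in E(T)$ for all $w \in N$, and since $w, v \in V(\nopref{T})$ each such $w$ is an in-neighbour of $v$ in $\nopref{T}$. Adding $u$, which is also an in-neighbour of $v$ and, as there are no loops, does not belong to $N$, yields $d_{\nopref{T}}(v) + 1$ distinct in-neighbours of $v$ in $\nopref{T}$, which contradicts that $v$ has in-degree exactly $d_{\nopref{T}}(v)$ in $\nopref{T}$. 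Hence some $w \in N$ satisfies $vw \in E(T)$, and then $uv, vw, wu \in E(T)$ form the directed triangle reported as $wuv$.

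Finally I would bound the running time by summing the costs of the four phases. The call $\MIN$ costs $t_1(T)$; the two $\INCOMING$ calls cost $t_2(1, T)$ and $t_2(\mindt{\nopref{T}}, T)$; and the loop performs $|N| = \mindt{\nopref{T}}$ iterations, each doing an $\mathcal{O}(1)$ adjacency-matrix lookup to test $vw \in E(T)$ using the matrix maintained by Lemma~\ref{update_impl_basic}, for $\mathcal{O}(\mindt{\nopref{T}})$ in total. Since $t_2$ is non-decreasing in its first argument and $\mindt{\nopref{T}} \geq 1$ in this case, we have $t_2(1, T) \leq t_2(\mindt{\nopref{T}}, T)$, which collapses the sum to the claimed $\mathcal{O}(t_1(T) + t_2(\mindt{\nopref{T}}, T) + \mindt{\nopref{T}})$. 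In the acyclic case the algorithm halts right after $\MIN$, and the convention $\min(\emptyset) = 0$ makes $\mindt{\nopref{T}} = 0$, so the stated bound remains valid there as well.
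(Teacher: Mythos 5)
Your proposal is correct and follows essentially the same route as the paper's proof: the same case analysis on the output of $\MIN$, the same use of $d_{\nopref{T}}(v) \leq d_{\nopref{T}}(u)$ to guarantee $|N| = \mindt{\nopref{T}}$, and the same running-time accounting via the monotonicity of $t_2$. The only cosmetic difference is that you spell out the pigeonhole contradiction directly (including the detail that $u \notin N$), whereas the paper simply invokes ``the same argument as in the proof of Lemma~\ref{back_edge_triangle}''; the underlying idea is identical.
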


\begin{proof}
    By Lemma~\ref{prefix_transitive}, every triangle in $T$ is entirely contained inside $\nopref{T}$.
    By Observation~\ref{t_prime_no_0} and Fact~\ref{transitive_buckets}, if $\nopref{T}$ is not empty then there exists a triangle in $\nopref{T}$.
    Consider the procedure $\FTRIANGLE{}$ whose pseudocode is given in Figure~\ref{fig:common_query_triangles}.
    If $\MIN$ subprocedure returns $null$ then it means that $\nopref{T}$ is empty and thus, by the arguments above, $T$ has no triangles.
    Otherwise, we obtain vertex $v$ and, by the arguments above, there is a triangle entirely contained inside $\nopref{T}$.
    The call to the subprocedure takes $t_1(T)$ time.
    Vertex $v$, by Observation~\ref{t_prime_no_0}, satisfies $0 < d_{\nopref{T}}(v) = \mindt{\nopref{T}}$. Thus, subprocedure $\INCOMING{v}{1}$ must return an in-neighbour $u$ of $v$. This takes $t_2(1, T)$ time. Note that $uv$ is a back arc in $\nopref{T}$, since $d_{\nopref{T}}(v) = \mindt{\nopref{T}}$.
    Then, $\INCOMING{u}{d_{\nopref{T}}(v)}$ returns a list $N$ of in-neighbours of $u$ such that $|N| = \mindt{\nopref{T}}$, since $d_{\nopref{T}}(v) \leq d_{\nopref{T}}(u)$. This takes $t_2(\mindt{\nopref{T}},T)$ time.
    As a final step, we iterate over $N$ and check which vertex (possibly many) from it forms a triangle together with $v$ and $u$. Such a vertex must exist by the same argument as in the proof of Lemma~\ref{back_edge_triangle}. This last step takes $\mathcal{O}(\mindt{\nopref{T}})$ time. The assumption that $t_2$ is non-decreasing gives the total upper bound on the running time of the entire procedure.
\end{proof}

Corollary~\ref{min_deg_Tprim} immediately gives the following.

\begin{corollary}\label{common_query_triangles_corollary}
    Let $T$ be a tournament. Let us assume access to the information related to $T$ maintained by the data structure of Lemma~\ref{update_impl_basic}. Let us
    also assume that we are given the following two subprocedures:
    \begin{enumerate}
        \item $\MIN$ -- returns a vertex $v$ of minimum in-degree in $\nopref{T}$, together with $d_{\nopref{T}}(v)$; if $\nopref{T} = \emptyset$ then it returns $null$,
        in time $t_1(T)$
        \item $\INCOMING{v}{l}$ -- returns $l \in \mathbb{N}$ in-neighbours of $v \in V(\nopref{T})$ in $\nopref{T}$ (or as many as there exist), in time $t_2(l,T)$ for a non-decreasing (with respect to $l$) function $t_2$
    \end{enumerate}
    Then it is possible to implement method $\FTRIANGLE{}$ that returns a triangle from $T$ or states that there are no triangles, in time $\mathcal{O}(t_1(T) + t_2(12(\ADT(T) + 1),T)  + \ADT(T))$.
\end{corollary}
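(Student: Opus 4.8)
The plan is to combine Theorem~\ref{common_query_triangles} with the bound on the minimum in-degree of $\nopref{T}$ supplied by Corollary~\ref{min_deg_Tprim}; the statement is a direct consequence of these two results. First I would invoke Theorem~\ref{common_query_triangles}, which under exactly the same hypotheses (access to the data structure of Lemma~\ref{update_impl_basic} together with the two subprocedures $\MIN$ and $\INCOMING{v}{l}$ running in times $t_1(T)$ and $t_2(l,T)$ respectively, where $t_2$ is non-decreasing in $l$) already yields an implementation of $\FTRIANGLE{}$ running in time $\mathcal{O}(t_1(T) + t_2(\mindt{\nopref{T}}, T) + \mindt{\nopref{T}})$. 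The entire task thus reduces to re-expressing this bound purely in terms of $\ADT(T)$.

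Next I would apply Corollary~\ref{min_deg_Tprim}, which states $\mindt{\nopref{T}} \leq 12(\ADT(T) + 1)$. This handles the last additive term directly, since then $\mindt{\nopref{T}} = \mathcal{O}(\ADT(T))$. For the middle term, the only point that uses a hypothesis is the monotonicity of $t_2$: because $t_2$ is non-decreasing in its first argument and $\mindt{\nopref{T}} \leq 12(\ADT(T)+1)$, we obtain $t_2(\mindt{\nopref{T}}, T) \leq t_2(12(\ADT(T)+1), T)$. Substituting these two estimates into the bound from Theorem~\ref{common_query_triangles} gives exactly the claimed running time $\mathcal{O}(t_1(T) + t_2(12(\ADT(T)+1), T) + \ADT(T))$.

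There is no genuine obstacle here; the result is a one-line consequence of the theorem and the corollary. The only subtlety worth flagging is that the replacement in the $t_2$ term is legitimate precisely because $t_2$ is assumed non-decreasing, so that substituting the true argument $\mindt{\nopref{T}}$ by the larger upper bound $12(\ADT(T)+1)$ can only increase the value and hence preserves the upper bound. This is why the non-decreasing assumption on $t_2$ was built into the hypotheses of both Theorem~\ref{common_query_triangles} and this corollary.
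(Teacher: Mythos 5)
Your proof is correct and matches the paper exactly: the paper derives this corollary from Theorem~\ref{common_query_triangles} by substituting the bound $\mindt{\nopref{T}} \leq 12(\ADT(T)+1)$ from Corollary~\ref{min_deg_Tprim}, relying on the monotonicity of $t_2$ just as you do. Your explicit remark on why the non-decreasing assumption on $t_2$ is what licenses the substitution is the only content of the argument, and you have it right.
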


The remainder of this section is devoted to bounding the functions
$t_1$ and $t_2$ of Theorem~\ref{common_query_triangles} in terms
of $\ADT(T)$.
The subprocedure $\MIN$ is the same for both the promise and the
full model and we present it below. The subprocedure
$\INCOMING{v}{l}$ depends on the model so its implementation is
deferred to the respective subsections.

\begin{lemma}\label{empty_impl_promise}
    Let $T$ be an $n$ vertex tournament, and $\emptydb$ be the empty set of $T$. Let us assume access to the information related to $T$ maintained by the
    data structure of Lemma~\ref{update_impl_basic}.
%
    Then $\MIN$ can be implemented in time $\mathcal{O}(|\emptydb|)$.
\end{lemma}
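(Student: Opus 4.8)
The plan is to implement $\MIN$ by exploiting the structure of the prefix $P$ of $T$ together with the maintained empty set $\emptydb$. Recall that by Definition~\ref{prefix_def}, the prefix length $p$ is the first index $d$ for which $\db{T}[d] \neq 1$, and by Corollary~\ref{t_prime_no_0} the tournament $\nopref{T}$ has no vertex of in-degree $0$. The key observation is that a vertex $v$ minimizing $d_{\nopref{T}}(v)$ can be located by finding the smallest in-degree (with respect to $T$) that occurs among vertices of $\nopref{T}$, because removing the prefix shifts all relevant in-degrees in a controlled and order-preserving way. Concretely, the vertices of $P$ all have in-degree strictly below $p$ and, by Lemma~\ref{prefix_transitive}, every arc between $P$ and $V \setminus P$ points from $P$ to $V \setminus P$, so each vertex $w \in V(\nopref{T})$ satisfies $d_{\nopref{T}}(w) = d_T(w) - p$. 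Hence minimizing $d_{\nopref{T}}$ over $V(\nopref{T})$ is the same as minimizing $d_T$ over $V(\nopref{T})$, and the minimizer sits in the first non-empty, non-prefix bucket $\db{T}[d]$ with $d \geq p$.

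The algorithm I would run is therefore: iterate through the set $\emptydb$ to locate the smallest index $d$ that is \emph{not} in $\emptydb$ and satisfies $|\db{T}[d]| \neq 1$ beyond the prefix; equivalently, first recover $p$ and then scan forward from $p$ for the first non-empty bucket. Since $\emptydb$ is stored as an array set over $[n]$ (Lemma~\ref{update_impl_basic}, Point~5), I can iterate over its elements in $\mathcal{O}(|\emptydb|)$ time. The crucial point is that the indices I must inspect before hitting the first relevant bucket are exactly the empty ones: by Lemma~\ref{prefix_transitive}, $\db{T}[p] = \emptyset$ whenever $P \neq V$, so $p \in \emptydb$, and every index strictly between $p$ and the sought minimum-degree bucket is also empty (a non-empty bucket there of size $1$ would have extended the prefix, and of size $\neq 1$ would already be the bucket we seek). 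Thus the number of indices traversed is $\mathcal{O}(|\emptydb|)$. Once the first non-empty bucket $\db{T}[d]$ with $d \geq p$ is found, I return an arbitrary vertex $v$ from its list together with $d_{\nopref{T}}(v) = d - p$ in $\mathcal{O}(1)$ time; if no such bucket exists then $\nopref{T} = \emptyset$ and I return $null$.

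The main obstacle, and the part needing care, is establishing that every index scanned in the range $[p, d)$ lies in $\emptydb$, so that the running time is genuinely $\mathcal{O}(|\emptydb|)$ rather than $\mathcal{O}(n)$. This requires arguing that there is no bucket of size exactly $1$ strictly after the prefix yet before the first vertex of $\nopref{T}$: such a singleton bucket cannot exist, because a bucket of size $1$ at index $d$ with all smaller buckets also of size $1$ would contradict the maximality of the prefix in Definition~\ref{prefix_def}. I would formalize this by showing that the minimal-degree vertex of $\nopref{T}$ occupies the first index $d \geq p$ with $\db{T}[d] \neq \emptyset$, and that all of $p, p+1, \dots, d-1$ belong to $\emptydb$; the bound $\mathcal{O}(|\emptydb|)$ then follows immediately since these are distinct empty indices. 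The remaining bookkeeping — reading $p$ from the maintained data and producing the degree offset — is routine given the structures of Lemma~\ref{update_impl_basic}.
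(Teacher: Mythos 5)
Your proposal is correct and takes essentially the same route as the paper's proof: recover the prefix length $p$ as the minimum element of $\emptydb$ (using that $\db{T}[p]=\emptyset$ by Lemma~\ref{prefix_transitive}), scan forward for the first non-empty bucket $\db{T}[d]$ with $d \geq p$, charge every skipped index to a distinct element of $\emptydb$, and return a vertex $v$ of that bucket together with $d_{\nopref{T}}(v)=d_T(v)-p$, justified by the fact that all arcs go from $P$ to $V \setminus P$. One slip worth correcting: your two formulations of the stopping rule are \emph{not} equivalent, and the parenthetical argument (``a non-empty bucket of size $1$ would have extended the prefix'') is a non-sequitur — the prefix already terminates at the \emph{empty} bucket $\db{T}[p]$, so a singleton bucket can perfectly well occur right after the prefix and then it \emph{is} the sought answer; for instance, a tournament with in-degree sequence $1,2,2,2,3$ has $p=0$ and its unique minimum-in-degree vertex sits alone in $\db{T}[1]$, so the ``smallest $d \notin \emptydb$ with $|\db{T}[d]| \neq 1$'' rule would return a wrong vertex (and also break the charging argument, since $\db{T}[1] \notin \emptydb$ gets scanned). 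Since everything you actually analyze — the returned vertex, the degree offset, and the $\mathcal{O}(|\emptydb|)$ bound — uses the ``first non-empty bucket'' criterion, which needs no singleton-exclusion argument at all (buckets strictly before the first non-empty one are empty by definition), the proof goes through exactly as in the paper.
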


\begin{proof}
    If $n = 0$ then return $null$.
    Else if $|\emptydb| = 0$ then return $null$, by Lemma~\ref{prefix_transitive}.
    Otherwise, let $d$ be the minimal element in $\emptydb$.
    By the definition of the prefix, every element in $\emptydb$ is not smaller than $|P|$, where $P$ is the prefix of $T$.
    Hence, by Lemma~\ref{prefix_transitive}, it is enough to find the smallest integer $d' \in [n]$ larger than $d$, but not present in $\emptydb$.
    This can be easily done by iterating over integers from $d + 1$ up to $n$ and checking if a given integer belongs to $\emptydb$ in constant time.
    Notice, that we will check at most $|\emptydb|$ integers before finding a not-empty bucket and that we will always find such $d'$, by Lemma~\ref{prefix_transitive}.
    In the end, return a vertex $v$ from $\db{T}[d']$ and $d_\nopref{T}(v)$.
    Notice that, $d_\nopref{T}(v) = d_T(v) - |P|$, by Lemma~\ref{prefix_transitive}.
    Given $\emptydb$ as an array set, we can calculate $|P|$ in time $\mathcal{O}(|\emptydb|)$ by finding a minimum element in $\emptydb$, by Lemma~\ref{prefix_transitive}.
\end{proof}

Lemma~\ref{empty_small} directly leads to the following corollary.

\begin{corollary}\label{empty_impl_promise_corollary}
    Under the assumptions of Lemma~\ref{empty_impl_promise}, $\MIN$ can be implemented in time $\mathcal{O}(\ADT(T))$.
\end{corollary}

\subsection{Triangle Data Structure in the Promise Model}\label{sec:DStrianglepromise}

The goal of this subsection is to prove
Theorem~\ref{thm:DsTrianglePromise}. In order to do so we need to
provide the implementation of $\INCOMING{v}{l}$ in the promise
model. We also need to extend the data structure $\dsBasic{n}$ of
Lemma~\ref{update_impl_basic} to the data structure
$\dspTriangle{n}$ of Theorem~\ref{thm:DsTrianglePromise}, to
maintain all information needed to implement $\INCOMING{v}{l}$ in
the promise model. The subprocedure's pseudocode is given in
Algorithm~\ref{fig:incoming_promise_triangles}.

\begin{algorithm}
    \SetKwInOut{Input}{Operation}
	\SetKwInOut{Output}{Output}
    \Input{$\INCOMING{v}{l}$}
    \Output{A list of $l$ (or as many as there exist) incoming neighbours of $v$ in $\nopref{T}$}
    \BlankLine

    $N \leftarrow \emptyset$\;

    \tcp{First group}
    \For{$e \in \backdb$}{
        \If{$e=uv$}{
            $N$ $+=$ $u$\;
        }
    }

    \If{$|N| \geq l$}{
        \Return Arbitrary $l$ elements of $N$\;
    }

    $p \leftarrow \min(\emptydb)$\;
    \tcp{Second group}
    \For{$i \gets p \text{ \bf to } d_{T}(v) - 1$}{
        \For{every $w \in \db{T}[i]$}{
            \If{$wv \in E(T)$}{
                $N$ $+=$ $w$\;
            }

            \If{$|N| \geq l$}{
                \Return Arbitrary $l$ elements of $N$\;
            }
        }
    }

    \Return $N$\;
    \caption{Pseudocode for $\INCOMING{v}{l}$ subprocedure in the promise model}
    \label{fig:incoming_promise_triangles}
\end{algorithm}

\begin{lemma}\label{incoming_impl_promise}
    Let $T$ be a tournament, $\emptydb$ be its empty set
    and $\backdb$ be the set of back arcs in $T$.
    Assume access to the information on $T$ maintained by the data structure of Lemma~\ref{update_impl_basic} and assume that $\backdb$ is given as a list.
    Then $\INCOMING{v}{l}$ can be
    implemented in time $\mathcal{O}(|\backdb| + |\emptydb| + l)$.
\end{lemma}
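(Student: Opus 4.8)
The plan is to argue that Algorithm~\ref{fig:incoming_promise_triangles} correctly lists in-neighbours of $v$ in $\nopref{T}$ by splitting them into two disjoint groups, and then to bound the cost of each group separately. Throughout, write $p=|P|$ for the length of the prefix $P$ of $T$. Since we assume $v \in V(\nopref{T})$, the empty set $\emptydb$ is non-empty, and by Lemma~\ref{prefix_transitive} we have $\db{T}[p]=\emptyset$ while every bucket of index less than $p$ is a singleton; hence $p=\min(\emptydb)$, which is exactly the value the algorithm computes. Moreover, by Lemma~\ref{prefix_transitive} a vertex $w$ lies in $\nopref{T}$ if and only if $d_T(w)\ge p$.

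For correctness I would observe that every in-neighbour $w$ of $v$ in $\nopref{T}$ satisfies $wv\in E(T)$ and $d_T(w)\ge p$, and falls into exactly one of two cases. If $d_T(w)\ge d_T(v)$ then $wv$ is a back arc (Definition~\ref{back_edge_def}), so $wv\in\backdb$ and $w$ is collected in the first loop; conversely, every back arc $wv\in\backdb$ with head $v$ has a tail $w$ with $d_T(w)\ge d_T(v)\ge p$, so $w\in V(\nopref{T})$, confirming that the first group consists precisely of the back in-neighbours of $v$. If instead $d_T(w)<d_T(v)$, then $p\le d_T(w)\le d_T(v)-1$, so $w$ lies in one of the buckets $\db{T}[i]$ scanned in the second loop and is collected there. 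The two groups are disjoint, since their degree ranges ($\ge d_T(v)$ versus $<d_T(v)$) do not overlap, so no vertex is reported twice, and if the procedure terminates without having reached $l$ elements it has examined all in-neighbours of $v$ in $\nopref{T}$.

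For the running time, the first loop iterates once over the list $\backdb$ with an $\mathcal{O}(1)$ head-test per arc, costing $\mathcal{O}(|\backdb|)$, and computing $p=\min(\emptydb)$ from the array-set representation of $\emptydb$ costs $\mathcal{O}(|\emptydb|)$. The main obstacle is bounding the second loop, which scans the buckets $\db{T}[i]$ for $p\le i\le d_T(v)-1$ and may waste work inspecting out-neighbours of $v$. I would charge its cost in three parts: the number of empty buckets encountered in this range is at most $|\emptydb|$, contributing $\mathcal{O}(|\emptydb|)$; the number of vertices actually added to $N$ is at most $l$, since the loop halts once $|N|\ge l$; and every remaining inspected vertex $w$ is an out-neighbour of $v$ (that is, $vw\in E(T)$) with $d_T(w)<d_T(v)$, which by Definition~\ref{back_edge_def} means $vw$ is a back arc with tail $v$. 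Since distinct such $w$ yield distinct back arcs of $\backdb$, the number of these wasted inspections is at most $|\backdb|$. Summing the three contributions bounds the second loop by $\mathcal{O}(|\backdb|+|\emptydb|+l)$, and adding the cost of the first loop and of computing $p$ yields the claimed total of $\mathcal{O}(|\backdb|+|\emptydb|+l)$.
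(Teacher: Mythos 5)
Your proposal is correct and follows essentially the same route as the paper's proof: the same two-group split of in-neighbours (those with in-degree at least $d_T(v)$, recovered by scanning $\backdb$, versus those with in-degree in $[p, d_T(v)-1]$, recovered by scanning buckets), the same computation of $p$ as the minimum of $\emptydb$, and the same charging scheme for the bucket scan (at most $|\emptydb|$ empty buckets, at most $l$ successful additions, and each wasted inspection yielding a distinct back arc $vw \in \backdb$). Your write-up is somewhat more explicit about correctness details (the characterization $w \in V(\nopref{T}) \iff d_T(w) \geq p$ and the disjointness of the two groups), but the argument is the same.
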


\begin{proof}
    The pseudocode is given in Algorithm~\ref{fig:incoming_promise_triangles}. Let us describe how to iterate over all incoming neighbours of $v$ in $\nopref{T}$. From the algorithm it will be obvious that we can stop it at any moment after finding $l$ of them.

    The search for in-neighbours in $\nopref{T}$ can be divided into two parts: vertices of not smaller and smaller in-degrees than $d_T(v)$. Notice that if a vertex $u$ belongs to the first group, the arc $uv$ is a back arc in $T$. Hence, it is enough to iterate over $\backdb$ and store found incoming neighbours of $v$. All found neighbours must belong to $\nopref{T}$, by Lemma~\ref{prefix_transitive}.

    It remains to find incoming neighbours from the second group. Let $p$ be the length of the prefix.
    Given $\emptydb$ as an array set, we can calculate $p$ in time $\mathcal{O}(|\emptydb|)$ by finding a minimum element in $\emptydb$, by Lemma~\ref{prefix_transitive}.
    Next, let us iterate from $i := p$ up to $d_{T}(v) - 1$.
    For every $w \in \db{T}[i]$, check if $wv \in E(T)$.
    If so, we found a new in-neighbour in $\nopref{T}$, because $d_T(w) \geq p$.
    If not, we found a back arc $vw \in \backdb$.
    However, this can happen only $|\backdb|$ number of times.
    Similarly, only $|\emptydb|$ number of times $\db{T}[i]$ can be empty.
    To sum up, we end up finding $l$ in-neighbours of $v$ (or as many as there exist) in $\nopref{T}$ in time $\mathcal{O}(|\backdb|+|\emptydb|+l)$.
\end{proof}

Lemma~\ref{few_back_arcs} and Lemma~\ref{empty_small} immediately lead to the following corollary.

\begin{corollary}\label{incoming_impl_promise_corollary}
    Under the same assumptions as of Lemma~\ref{incoming_impl_promise},
    the procedure\\ $\INCOMING{v}{l}$ takes
    $\mathcal{O}(\ADT(T)\sqrt{\ADT(T) + 1} + l)$ time.
\end{corollary}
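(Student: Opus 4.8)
The plan is to prove this purely by substitution, since by Lemma~\ref{incoming_impl_promise} the procedure $\INCOMING{v}{l}$ already runs in time $\mathcal{O}(|\backdb| + |\emptydb| + l)$, and the two remaining lemmas give explicit bounds on $|\backdb|$ and $|\emptydb|$ in terms of $\ADT(T)$. There is no genuine obstacle here; the only work is an asymptotic simplification. First I would invoke Lemma~\ref{empty_small} to get $|\emptydb| \leq 12(\ADT(T)+1) = \mathcal{O}(\ADT(T)+1)$, and Lemma~\ref{few_back_arcs} to get $|\backdb| \leq 288(\ADT(T)+1)(\sqrt{6\ADT(T)+7}+1)$.

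Next I would simplify the bound on $|\backdb|$. The key estimate is that the square-root factor is of the right order: since $6\ADT(T)+7 \leq 7(\ADT(T)+1)$, we have $\sqrt{6\ADT(T)+7} \leq \sqrt{7}\,\sqrt{\ADT(T)+1}$, hence $\sqrt{6\ADT(T)+7}+1 = \mathcal{O}(\sqrt{\ADT(T)+1})$. Substituting this back yields
\[
|\backdb| = \mathcal{O}\!\left((\ADT(T)+1)\sqrt{\ADT(T)+1}\right) = \mathcal{O}\!\left(\ADT(T)\sqrt{\ADT(T)+1}\right),
\]
where the last equality absorbs the lower-order contribution $\sqrt{\ADT(T)+1}$ coming from the $+1$ in the factor $(\ADT(T)+1)$ into the dominant term.

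Finally I would observe that the empty-set term is subsumed: $|\emptydb| = \mathcal{O}(\ADT(T)+1)$ is dominated by $\mathcal{O}(\ADT(T)\sqrt{\ADT(T)+1})$. Adding the three contributions $|\backdb|$, $|\emptydb|$, and $l$ from Lemma~\ref{incoming_impl_promise} therefore gives a total running time of $\mathcal{O}(\ADT(T)\sqrt{\ADT(T)+1} + l)$, as claimed. The only point requiring mild care is the asymptotic handling of the additive constants, so that the bound remains meaningful when $\ADT(T)$ is small; everything else is a direct consequence of the cited lemmas.
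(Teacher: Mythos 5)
Your proof is correct and matches the paper's own reasoning: the paper derives this corollary by exactly the same substitution, plugging the bounds of Lemma~\ref{empty_small} and Lemma~\ref{few_back_arcs} into the running time $\mathcal{O}(|\backdb|+|\emptydb|+l)$ of Lemma~\ref{incoming_impl_promise}. The only difference is that the paper states the implication as immediate, while you spell out the absorption of constants and lower-order terms, which is a harmless elaboration.
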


This concludes the implementation of the query subprocedure.
We are now ready to prove Theorem~\ref{thm:DsTrianglePromise}.

\paragraph*{Proof of Theorem~\ref{thm:DsTrianglePromise}}
\begin{proof}

Let $\dspTriangle{n}$ be the data structure $\dsBasic{n}$ extended by:
\begin{itemize}
 \item Storing the set of all back arcs $\backdb$ as a list and a square matrix of size $n^2$ storing pointers to the list node representing a given arc of $\backdb$ or null.
 \item The method \FTRIANGLE{} of Corollary~\ref{common_query_triangles_corollary} using the implementation of\\ $\INCOMING{u}{l}$ subprocedure given by Corollary~\ref{incoming_impl_promise_corollary} and the implementation of $\MIN$ given by Corollary~\ref{empty_impl_promise_corollary}.
\end{itemize}
Theorem~\ref{thm:DsTrianglePromise} is a direct consequence of Lemma~\ref{back_arcs_changes_nok},
Lemma~\ref{small_buckets},
Corollary~\ref{common_query_triangles_corollary},
Corollary~\ref{incoming_impl_promise_corollary} and Corollary~\ref{empty_impl_promise_corollary}.
\end{proof}

In the remainder of this subsection we provide an extension of the data structure $\dspTriangle{n}$ of Theorem~\ref{thm:DsTrianglePromise}, which prevents the data structure from running the operations for too long if the promise is not satisfied. Essentially, if an operation on $\dspTriangle{n}$ is taking a long time, this means that the promise is not satisfied.
Checking whether the promise is satisfied is hard.
However, we can easily check whether the properties of the tournament which follow from the promise and which we use in the data structure are satisfied.
Thus, the extended data structure first performs some preliminary checks and only if they succeed it continues with the regular methods' implementations.
Note, that if the preliminary checks do not affect the data structure, this approach guarantees that in case of method's failure, the data structure remains intact, which is important in case of updates. We formalize this idea in the next theorem.

\begin{theorem}\label{thm:DsTrianglePromiseInternal}
For every $n \in \mathbb{N}$, there exists a data structure $\dspTrianglePlus{n}$ maintaining a dynamic tournament $T$ on $n$ vertices\footnoteref{note1}, by supporting the following operations:
\begin{enumerate}
    \item $\INITNP{T}{n}$ -- initializes the data structure with a given tournament $T$ on $n$ vertices, in $\mathcal{O}(n^2)$ time
    \item $\REVERSEBND{v}{u}{d}$ -- for $d \in \mathbb{N}$, reverses an arc between vertices $v$ and $u$ in $T$ or reports that $\ADT(T) \geq d + 1$, in time $\mathcal{O}(\sqrt{d})$
    \item $\FTRIANGLE{d}$ -- for $d \in \mathbb{N}$, returns a triangle from $T$ or reports that there are no triangles or reports that $\ADT(T) \geq d + 1$, in time $\mathcal{O}(d \sqrt{d})$
\end{enumerate}
\end{theorem}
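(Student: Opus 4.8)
The plan is to build $\dspTrianglePlus{n}$ directly on top of the data structure $\dspTriangle{n}$ of Theorem~\ref{thm:DsTrianglePromise}, augmenting it only with two integer counters recording $|\emptydb|$ and $|\backdb|$. These counters can be maintained with $\mathcal{O}(1)$ overhead inside the existing routines (incrementing or decrementing whenever a bucket becomes (non)empty or an arc enters/leaves $\backdb$), so initialization stays $\mathcal{O}(n^2)$ and no existing running time degrades. The guiding principle for both bounded operations is a \emph{preliminary check} exploiting the contrapositives of the combinatorial bounds proved earlier: the running time of each $\dspTriangle{n}$ operation is governed by tournament characteristics ($\maxdb{T}$, $|\emptydb|$, $|\backdb|$, $\mindt{\nopref{T}}$) that are all bounded by explicit functions of $\ADT(T)$. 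Hence, if one of these characteristics is too large relative to $d$, the corresponding lemma forces $\ADT(T) \geq d+1$, which we may simply report; otherwise the characteristic is small and the ordinary operation is guaranteed to run within the claimed budget. Crucially, every check reads only stored quantities and does not modify the structure, so in the reporting branch the data structure is left intact.

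For $\REVERSEBND{v}{u}{d}$ I would first inspect the stored value $\maxdb{T}$. If $\maxdb{T} > 8(\sqrt{d+1}+1)$, then by the contrapositive of Lemma~\ref{small_buckets} we have $\ADT(T) \geq d+1$, and we report this without touching the structure. Otherwise I run the ordinary $\REVERSE{v}{u}$ of $\dspTriangle{n}$. Its cost is dominated by recomputing the affected part of $\backdb$, which by Lemma~\ref{back_arcs_changes_nok} is confined to arcs with one endpoint in $\{u,v\}$ and the other in $\deltadb{T}{u}{v}$; since $\deltadb{T}{u}{v}$ is a union of at most six buckets, $|\deltadb{T}{u}{v}| \leq 6\,\maxdb{T} = \mathcal{O}(\sqrt{d})$, yielding the required $\mathcal{O}(\sqrt{d})$ update time.

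For $\FTRIANGLE{d}$ I would perform two $\mathcal{O}(1)$ checks. If $|\emptydb| > 12(d+1)$, the contrapositive of Lemma~\ref{empty_small} gives $\ADT(T) \geq d+1$; if $|\backdb| > 288(d+1)(\sqrt{6d+7}+1)$, the contrapositive of Lemma~\ref{few_back_arcs} gives the same conclusion. In either case we report $\ADT(T) \geq d+1$. If both checks pass, I run the ordinary $\FTRIANGLE{}$ of Theorem~\ref{common_query_triangles}, instantiated with the promise-model subprocedures of Lemma~\ref{empty_impl_promise} (giving $t_1 = \mathcal{O}(|\emptydb|)$) and Lemma~\ref{incoming_impl_promise} (giving $t_2(l) = \mathcal{O}(|\backdb| + |\emptydb| + l)$). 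The total time is then $\mathcal{O}(|\emptydb| + |\backdb| + \mindt{\nopref{T}})$, and since the minimum-degree vertex of $\nopref{T}$ is preceded by $\mindt{\nopref{T}}$ empty buckets of $T$ (so $\mindt{\nopref{T}} \leq |\emptydb|$, the inequality underlying Corollary~\ref{min_deg_Tprim}), this is $\mathcal{O}(|\backdb| + |\emptydb|) = \mathcal{O}(d^{3/2} + d) = \mathcal{O}(d\sqrt{d})$. Correctness of the returned triangle (or of the \texttt{NONE} answer) is inherited verbatim from Theorem~\ref{common_query_triangles}, as the promise is never used for correctness — only the checks cap the time.

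The routine parts are the counter bookkeeping and the arithmetic of the contrapositives; I expect the only genuinely delicate point to be the \emph{intactness} guarantee for $\REVERSEBND{v}{u}{d}$. One must ensure that the decision to abort is taken entirely from read-only data \emph{before} any arc is flipped or any list is altered, so that reporting $\ADT(T) \geq d+1$ leaves the structure faithfully representing the unchanged $T$; this is precisely what makes the bounded update safely composable with the branching algorithm, where a failed reversal must be transparently undoable. A secondary point worth stating carefully is that the threshold for the reverse is phrased in terms of $\maxdb{T}$ of the \emph{pre-reversal} tournament, which is exactly the quantity bounding $|\deltadb{T}{u}{v}|$, so the guaranteed budget matches the actual work performed.
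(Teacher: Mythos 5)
Your proposal is correct and takes essentially the same route as the paper's own proof: the same read-only preliminary checks (on $\maxdb{T}$ before a bounded reversal, and on $|\emptydb|$ and $|\backdb|$ before a bounded query) against the same thresholds, justified by the contrapositives of Lemma~\ref{small_buckets}, Lemma~\ref{empty_small} and Lemma~\ref{few_back_arcs}, after which the unmodified $\dspTriangle{n}$ operations are run within the claimed budgets, using $\mindt{\nopref{T}} \leq |\emptydb|$ exactly as the paper does. The only cosmetic difference is that you make the $\mathcal{O}(1)$ availability of $|\emptydb|$ and $|\backdb|$ explicit via counters, which the paper leaves implicit.
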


\begin{proof}
    The data structure $\dspTrianglePlus{n}$ maintains the same information as the data structure $\dspTriangle{n}$ of Theorem~\ref{thm:DsTrianglePromise}. As the theorem states, we can initialize it in $\mathcal{O}(n^2)$ time.

    By Lemma~\ref{back_arcs_changes_nok} and
    Lemma~\ref{update_impl_basic}, the structure
    $\dspTriangle{n}$ takes $\mathcal{O}(|\deltadb{T}{u}{v}|)$ time to reverse an arc. The data structure $\dspTrianglePlus{n}$, before attempting reversal, additionally verifies if $\maxdb{T} \leq 8 (\sqrt{d + 1}+1)$.
    If this is the case then $|\deltadb{T}{u}{v}| \leq 48 (\sqrt{d + 1}+1)$ and it carries on the $\REVERSE{u}{v}$ operation as in $\dspTriangle{n}$. Otherwise, it reports that there is more than $d$ arc-disjoint triangles, which is correct by Lemma~\ref{small_buckets}.

    The data structure $\dspTrianglePlus{n}$ performs an analogous verification before attempting to find a triangle.
    If verifies whether $|\backdb| \leq 288(d + 1)(\sqrt{6d + 7} + 1)$ and whether $|\emptydb| \leq 12(d + 1)$ in constant time.
    If this is not the case, it reports that there are more than $d$ arc-disjoint triangles, which is correct by Lemma~\ref{few_back_arcs} and Lemma~\ref{empty_small}.
    Note, that $\mindt{\nopref{T}}$ is upper bounded by $|\emptydb| \leq 12(d + 1)$, by Fact~\ref{minimal_degree} applied to $\nopref{T}$ and Lemma~\ref{prefix_transitive}.
    Hence, $\dspTrianglePlus{n}$ carries on $\FTRIANGLE{}$ procedure as defined for $\dspTriangle{n}$ data structure in the required time, by Theorem~\ref{common_query_triangles}, Lemma~\ref{incoming_impl_promise} and Lemma~\ref{empty_impl_promise}.
\end{proof}

\subsection{Triangle Data Structure in the Full Model}\label{sec:DStriangleFull}

The goal of this subsection is to prove
Theorem~\ref{thm:DsTriangleFull}.
Similarly as in the previous subsection, we extend the data
structure $\dsBasic{n}$ of Lemma~\ref{update_impl_basic} to the
data structure $\dsTriangle{n}$ of Theorem~\ref{thm:DsTriangleFull},
supporting the operation $\FTRIANGLE{}$, whose pseudocode is given
in
Algorithm~\ref{fig:common_query_triangles}.
This subroutine in turn needs subprocedures
$\MIN$ and $\INCOMING{v}{l}$ to operate.
The implementation of subprocedure $\MIN$ introduced in
Lemma~\ref{empty_impl_promise} works in both models. Thus, we are
only missing the implementation of the subprocedure
$\INCOMING{v}{l}$ suitable for the full model.
The description heavily relies on two new data structures defined next.

\begin{definition}[Degrees structure]\label{degrees_def}
    Let $T$ be a non-empty tournament on $n$ vertices\footnoteref{note1}. Let $M$ be a binary $n \times n$ matrix, such that $M[d, v] = 1$ if and only if $d_T(v) = d$. Let $S$ be a segment tree spanned over $[n]$ and let $\mathcal{I}$ be the set of intervals (segments) associated with the nodes of $S$.
    A data structure is called a \emph{degrees structure} for $T$ if it supports the following operations:
    \begin{enumerate}
        \item $\ROOT$ -- returns a \emph{pointer} to the root of $S$.
        \item $\LEFT{p}$ and $\RIGHT{p}$ -- given a \emph{pointer} $p$ to a node in $S$ returns a \emph{pointer} to a child (left and right, respectively) of $p$ in $S$.
        \item $\RECT{p}{d}$ -- given a \emph{pointer} $p$ representing $[a, b] \in \mathcal{I}$ and $d \in \mathbb{Z}$ returns the sum of elements from $M$ contained in a sub-rectangle $[0, d] \times [a, b]$.
    \end{enumerate}
    and supports changing the in-degree of an arbitrary vertex from $T$.
\end{definition}

\begin{fact}\label{degrees_impl}
    Let $T$ be a non-empty tournament on $n$ vertices\footnoteref{note1}. Then there exists a degrees structure for $T$ performing $\ROOT$, $\LEFT{p}$, $\RIGHT{p}$ operations in constant time and $\RECT{p}{d}$ operation in $\mathcal{O}(\log{n})$ time. The change of a vertex in-degree is supported in $\mathcal{O}(\log^2{n})$ time. Moreover, the structure takes $\mathcal{O}(n \log{n})$ space and can be initialized in $\mathcal{O}(n \log^2{n})$ time.
\end{fact}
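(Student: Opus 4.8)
The plan is to implement the degrees structure by combining a segment tree over the vertex-index domain $[n]$ with a secondary structure that answers the rectangle sum queries $\RECT{p}{d}$. The matrix $M$ is never stored explicitly; rather, its only nonzero entries correspond to the map $v \mapsto d_T(v)$, so each column has exactly one $1$. I would build the primary segment tree $S$ spanned over $[n]$ (indexing the \emph{vertices}), giving each node its associated interval $[a,b] \in \mathcal{I}$. The pointer operations $\ROOT$, $\LEFT{p}$, $\RIGHT{p}$ are then immediate in $\mathcal{O}(1)$ time, as they amount to standard navigation in the binary tree, and the tree has $\mathcal{O}(n)$ nodes and depth $\mathcal{O}(\log n)$.

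The core difficulty is supporting $\RECT{p}{d}$, which, for a node $p$ representing the vertex-interval $[a,b]$, must return the number of vertices $v \in [a,b]$ with $d_T(v) \leq d$, i.e.\ the count of $1$-entries of $M$ inside $[0,d] \times [a,b]$. The natural solution is a \emph{merge sort tree}: at each node $p$ of $S$ with interval $[a,b]$, store the sorted list (or a balanced BST, or a Fenwick tree keyed by degree value) of the in-degrees $\{d_T(v) : v \in [a,b]\}$. Then $\RECT{p}{d}$ is answered by a single predecessor/rank query at node $p$ counting how many stored in-degrees are at most $d$, which takes $\mathcal{O}(\log n)$ time. Summed over all nodes, the auxiliary lists hold $\mathcal{O}(n \log n)$ entries, matching the claimed space bound; initialization is a bottom-up merge, which I would carry out in $\mathcal{O}(n \log^2 n)$ time if each level's sorting/merging is charged $\mathcal{O}(n \log n)$ across its $\mathcal{O}(\log n)$ levels (or $\mathcal{O}(n \log n)$ with careful bottom-up merging plus the rank structures).

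The remaining operation is changing the in-degree of a single vertex $v$ from an old value to a new value. Since $v$ lies in exactly one leaf-to-root path of $S$, it appears in the sorted secondary structure of exactly the $\mathcal{O}(\log n)$ ancestor nodes along that path. At each such node we delete the old degree and insert the new one; if the secondary structure is a balanced BST or a Fenwick tree over the degree domain $[n]$, each such update costs $\mathcal{O}(\log n)$, for a total of $\mathcal{O}(\log^2 n)$ per in-degree change, as required.

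\textbf{Expected main obstacle.} I expect the main obstacle to be achieving the $\mathcal{O}(\log n)$ query and $\mathcal{O}(\log^2 n)$ update simultaneously with the stated $\mathcal{O}(n \log n)$ space, since a plain merge sort tree of sorted arrays supports rank queries in $\mathcal{O}(\log n)$ but does not support element updates efficiently. The resolution is to replace the sorted arrays at each node with a dynamic order-statistics structure (a balanced BST or a Fenwick tree indexed by the degree value), so that both rank queries and point updates run in $\mathcal{O}(\log n)$ at each touched node; correctness then reduces to the standard observation that a vertex occupies $\mathcal{O}(\log n)$ nodes of $S$ and that each secondary structure faithfully maintains the multiset of in-degrees of its subtree's vertices.
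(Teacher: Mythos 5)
Your proposal is correct and takes essentially the same approach as the paper: the paper builds the same segment tree over the vertex indices and stores at each node an AVL tree of the in-degrees of the vertices in that node's interval, answering $\RECT{p}{d}$ by a rank query on that tree and handling an in-degree change by deleting and re-inserting the value in the $\mathcal{O}(\log n)$ trees along the vertex's leaf-to-root path. One caution: of your suggested secondary structures, only the balanced-BST variant (the paper's AVL choice) meets the stated $\mathcal{O}(n \log n)$ space bound, since a Fenwick tree indexed by the degree domain $[n]$ at every node would cost $\mathcal{O}(n^2)$ space overall.
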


\begin{proof}
    The sought data structure is a standard segment tree $S$ spanned over $[n]$. In each node $p$ of $S$, corresponding to interval $I_p$, we store an AVL tree $A_p$, that stores the in-degrees of vertices in $I_p$.

    The first three operations are straightforward to implement. The last query operation, namely $\RECT{p}{d}$, is implemented easily by querying $A_p$ for the number of stored elements not greater than $d$. In order to change an in-degree of a vertex, we simply remove it from $\mathcal{O}(\log{n})$ AVL trees and then insert new value in $\mathcal{O}(\log{n})$ AVL trees in total time $\mathcal{O}(\log^2{n})$.
    To prove the space complexity bound it is enough to notice that every element (in-degree of a vertex) is stored on $\mathcal{O}(\log{n})$ AVL trees. Initialization in the required time is straightforward -- we simply insert vertices' in-degrees one by one into the structure.
\end{proof}

\begin{definition}[Adjacency tree]\label{adjacency_tree_def}
    Let $T$ be a non-empty tournament on $n$ vertices\footnoteref{note1}. An \emph{adjacency tree} $T_v$ of $v$ is a standard static segment tree spanned over $[n]$, such that the value in the $x$-th leaf is $1$ if $xv \in E(T)$ and $0$ otherwise. The tree should also support changing the leaf value in time $\mathcal{O}(\log{n})$ and maintain in each node $p$ the sum of values stored in all leaves covered by the interval corresponding to $p$.
\end{definition}
The data structure of Definition~\ref{adjacency_tree_def} is a
standard summing segment tree.
Equipped with a degrees structure and the adjacency trees, we are
ready to show how to implement the $\INCOMING{v}{l}$ subprocedure.

\begin{algorithm}
    \SetKwInOut{Input}{Algorithm}
	\SetKwInOut{Output}{Output}
    \Input{$\INCOMING{v}{l}$}
    \Output{A list of $l$ (or as many as there exist) incoming
    neighbours of $v$ in $\nopref{T}$}
    $N \leftarrow \emptyset$\;
    \For{$i \gets 1$ to $l$}{
        $x \leftarrow \mathtt{SingleNeighbour}(v)$\;
        \If{$x \neq null$}{
            $N$ $+=$ $x$\;
            $T_v.Set(x, 0)$\;
        }
    }

    \For{every $x \in N$}{
        $T_v.Set(x, 1)$\;
    }

    \Return $N$\;

    \BlankLine
    \SetKwInOut{Input}{Operation}
	\SetKwInOut{Output}{Output}
    \Input{$\mathtt{SingleNeighbour}(v)$}
    \Output{A single incoming neighbour of $v$ in $\nopref{T}$,
    if it exists}

    \tcp{Let $t_{[a, b]}$ be the number of in-neighbours of $v$ in $\nopref{T}$ from interval $[a, b]$. Note, that $t_{[a, b]} = s_{[a, b]} - p_{[a, b]}$, where $s_{[a, b]}, p_{[a, b]}$ are defined in the proof of Lemma~\ref{incoming_full_impl}.}

    $ptr \leftarrow \degrees.\ROOT$\;
    $pos \leftarrow [0, n - 1]$\;

    \tcp{BST-like search}
    \While{$|pos| \neq 1$}{
        $l \leftarrow t_{pos.left}$\;
        $r \leftarrow t_{pos.right}$\;

        \If{$l > 0$}{
            $(pos, ptr) \leftarrow (pos.left, \degrees.\LEFT{ptr})$\;
        }

        \ElseIf{$r > 0$}{
            $(pos, ptr) \leftarrow (pos.right, \degrees.\RIGHT{ptr})$\;
        }

        \Else{
            \Return $null$\;
        }
    }

    \If{$t_{pos} > 0$}{
        \Return $pos.first$ \tcp*[r]{pos is a one-element interval}
    }
    \Return $null$\;
    \caption{Pseudocode for $\INCOMING{v}{l}$ in the full model}
    \label{fig:incoming_full_triangles}
\end{algorithm}

\begin{lemma}\label{incoming_full_impl}
    Let $T$ be a non-empty tournament on $n$ vertices\footnoteref{note1}. Let $\emptydb$ be the empty set of $T$. Assume the access to the following data structures:
    \begin{enumerate}
        \item The degrees structure $\degrees$ for $T$ from Fact~\ref{degrees_impl}
        \item $\emptydb$ as an array set over $[n]$
        \item Adjacency trees $\{T_u\}_{\{u \in V(T)\}}$
    \end{enumerate}
    Then subprocedure $\INCOMING{v}{l}$ can be implemented in
    time $\mathcal{O}(l \cdot \log^2{n} + |\emptydb|)$.
\end{lemma}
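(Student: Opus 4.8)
**The plan is to implement $\INCOMING{v}{l}$ as $l$ repeated calls to a helper $\mathtt{SingleNeighbour}(v)$ that locates one fresh in-neighbour of $v$ in $\nopref{T}$ each time, toggling found vertices off in the adjacency tree $T_v$ so they are not returned twice.** The pseudocode in Algorithm~\ref{fig:incoming_full_triangles} already reflects this structure, so the work is to argue that each $\mathtt{SingleNeighbour}$ call runs in $\mathcal{O}(\log^2 n)$ time and correctly returns an in-neighbour of $v$ lying in $\nopref{T}$ (or $null$ if none remain), plus a one-time $\mathcal{O}(|\emptydb|)$ cost. First I would recall, via Lemma~\ref{prefix_transitive}, that the prefix length is $p = \min(\emptydb)$ (computable in $\mathcal{O}(|\emptydb|)$ from the array set), and that $w$ is an in-neighbour of $v$ in $\nopref{T}$ exactly when $wv \in E(T)$ and $d_T(w) \geq p$. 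This lets me phrase ``being an in-neighbour of $v$ in $\nopref{T}$'' purely in terms of the adjacency bit of $w$ toward $v$ together with the in-degree threshold $d_T(w) \geq p$.

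The core of the argument is the $\mathtt{SingleNeighbour}$ search, which walks down the segment tree $S$ of the degrees structure while simultaneously tracking the corresponding node in each relevant adjacency tree. For an interval $[a,b]$ associated with a node of $S$, I would define (matching the comment in the pseudocode, and as promised these are the $s_{[a,b]}, p_{[a,b]}$ to be introduced in the proof) the count $t_{[a,b]}$ of vertices $w$ with $wv \in E(T)$, with in-degree in $[p, n-1]$, and with in-degree landing inside the block of positions covered by $[a,b]$. Concretely I expect to set $s_{[a,b]}$ to be the number of in-neighbours of $v$ (an adjacency-tree prefix/range query on $T_v$) and $p_{[a,b]}$ to be a correction counting those whose in-degree is below the prefix threshold, obtained from the degrees structure via $\RECT{\cdot}{p-1}$ on the node $[a,b]$; then $t_{[a,b]} = s_{[a,b]} - p_{[a,b]}$ counts exactly the eligible in-neighbours of $v$ in $\nopref{T}$ within that segment. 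At each internal node I compute $t$ for the left and right child, descend into a child with positive count, and bottom out at a single-element interval, which yields one valid in-neighbour; if every branch has count zero I return $null$.

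For the running time, each of the $\mathcal{O}(\log n)$ levels of the descent evaluates $t$ for the two children, and each such evaluation costs $\mathcal{O}(\log n)$: the $\RECT$ query is $\mathcal{O}(\log n)$ by Fact~\ref{degrees_impl}, and the adjacency count is an $\mathcal{O}(\log n)$ range query on $T_v$. Thus one $\mathtt{SingleNeighbour}$ call is $\mathcal{O}(\log^2 n)$; across $l$ calls we spend $\mathcal{O}(l \log^2 n)$, plus the $l$ toggles $T_v.Set(x,0)$ and the $l$ restorations $T_v.Set(x,1)$ at the end, each $\mathcal{O}(\log n)$, which are absorbed. Adding the single $\mathcal{O}(|\emptydb|)$ cost to read $p = \min(\emptydb)$ gives the claimed $\mathcal{O}(l \cdot \log^2 n + |\emptydb|)$ bound. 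Correctness of returning $l$ distinct in-neighbours (or all of them, if fewer than $l$ exist) follows because zeroing out each returned $x$ in $T_v$ strictly decreases the eligible counts, so no vertex is reported twice and the search returns $null$ precisely when $v$ has no further in-neighbours in $\nopref{T}$.

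The main obstacle I anticipate is pinning down the exact counting identity $t_{[a,b]} = s_{[a,b]} - p_{[a,b]}$ so that it genuinely isolates in-neighbours of $v$ that simultaneously live in $\nopref{T}$, rather than merely in-neighbours of $v$ or merely high-in-degree vertices. The subtlety is that the two data structures index by different keys: $T_v$ is indexed by vertex identity (leaf $x$ carries the adjacency bit $xv$), while the degrees structure and the segment tree $S$ are indexed by in-degree. Reconciling these so that the descent in $S$ (over in-degree positions) correctly restricts the adjacency sum to the segment $[a,b]$ and subtracts off exactly the below-prefix contribution is where the real care is needed; everything else is the routine segment-tree time accounting above.
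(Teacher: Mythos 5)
Your overall architecture matches the paper's: implement $\INCOMING{v}{l}$ as $l$ calls to a single-neighbour search that descends a segment tree guided by counts $t_{[a,b]} = s_{[a,b]} - p_{[a,b]}$, toggling each found leaf off in $T_v$ and restoring all of them at the end, with the same $\mathcal{O}(l\log^2 n + |\emptydb|)$ accounting. However, the proposal has a genuine gap, and it sits precisely at the point you flag as ``where the real care is needed'' and then leave unresolved. First, you have misread the degrees structure: by Definition~\ref{degrees_def} the matrix is $M[d,w]$ with rows indexed by in-degrees and columns by vertex indices, and $\RECT{\cdot}{d}$ sums the sub-rectangle $[0,d]\times[a,b]$, where $[a,b]$ is the node's interval of \emph{vertex indices} (in the proof of Fact~\ref{degrees_impl}, each node stores the AVL tree of the in-degrees of the vertices in its interval). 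So the segment tree of $\degrees$ and every adjacency tree $T_v$ are spanned over the same ground set $[n]$ of vertex indices; there is no key mismatch to reconcile, and the two pointers stay synchronized during the descent. Under your reading (the tree of $\degrees$ indexed by in-degree positions), the plan actually breaks: the vertices whose in-degree lies in a block $[a,b]$ occupy an arbitrary set of leaves of $T_v$, so your claimed $\mathcal{O}(\log n)$ ``adjacency count range query on $T_v$'' for such a block is not supported by the given structures.

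Second, and more importantly, you never establish the counting identity itself, and it is not routine bookkeeping: it needs a specific combinatorial fact. With the correct indexing, $s_{[a,b]}$ is the number of in-neighbours of $v$ in $T$ whose index lies in $[a,b]$, while $p_{[a,b]} = \RECT{\cdot}{p-1}$ counts \emph{all} vertices with index in $[a,b]$ and in-degree below $p$, i.e.\ all prefix vertices there, whether or not they are adjacent to $v$ --- the degrees structure knows nothing about adjacency to $v$. The equality $s_{[a,b]} - p_{[a,b]} = \#\{w \in [a,b] : wv \in E(T),\ d_T(w)\ge p\}$ therefore holds only because every vertex of the prefix $P$ is an in-neighbour of $v$ in $T$ (by Lemma~\ref{prefix_transitive}, using $v\notin P$), so the vertices subtracted via $p_{[a,b]}$ are all genuinely contained among those counted by $s_{[a,b]}$. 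Your first paragraph records only the membership criterion ($wv\in E(T)$ and $d_T(w)\ge p$), which does not by itself rule out over-subtraction; the containment of $P$ in the in-neighbourhood of $v$ is the one-line fact the paper invokes (``the simple subtraction works here''), and your proposal stops exactly short of it.
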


\begin{proof}
    Given $\emptydb$ as an array set, we compute the length $p$ of the prefix $P$ of $T$ in time $\mathcal{O}(|\emptydb|)$ by finding a minimum element in $\emptydb$, by Lemma~\ref{prefix_transitive}.
    This allows us to use the $\degrees$ data structure
    to find out, for any interval $[a,b]$ represented by a single node in the segment
    tree $T_v$, the value $p_{[a,b]} = P \cap [a,b]$, which is the number of vertices of $P$ in the interval $[a,b]$.
    It is done by a single invocation of method $\degrees.\RECT{\cdot}{\cdot}$ in $\mathcal{O}(\log{n})$ time.
    Let us denote by $s_{[a,b]}$ the number of in-neighbours in $T$ of $v$ within the interval $[a,b]$.
    If $[a, b]$ is represented by a single node in $T_v$ then $s_{[a,b]}$ can be found just
    by querying such a node.

    In order to find a single in-neighbour of $v$ in $\nopref{T}$ (lines $10$-$23$ of Algorithm~\ref{fig:incoming_full_triangles}), we follow the path from
    the root of $T_v$ to a leaf. At every tree node that
    we visit, we enter the child node
    corresponding to $[a,b]$ if $s_{[a,b]}-p_{[a,b]} > 0$.
    Since all the vertices of $P$ are in-neighbors of $v$ in $T$, by Lemma~\ref{prefix_transitive}, the simple subtraction works here. We end up finding the desired neighbour $x$ in $\mathcal{O}(\log^2 n)$ time. We then temporarily remove $x$ from the in-neighbours in $T_v$, in order to find a different in-neighbour within the next search (lines $1$-$9$ of
    Algorithm~\ref{fig:incoming_full_triangles}). Each such removal takes $\mathcal{O}(\log^2 n)$ time, hence the total time for all removals is bounded by $\mathcal{O}(l \log^2 n)$.
    In the end, we put found in-neighbours back into $T_v$ in $\mathcal{O}(l \log{n})$ time.
\end{proof}

Lemma~\ref{empty_small} gives the following corollary.

\begin{corollary}\label{incoming_full_impl_corollary}
    Under the assumptions of Lemma~\ref{incoming_full_impl}, one can implement the procedure
    $\INCOMING{v}{l}$ in time $\mathcal{O}(l \cdot \log^2{n} + |\ADT(T)|)$.
\end{corollary}

This concludes the implementation of the query subprocedure.
We are ready to prove Theorem~\ref{thm:DsTriangleFull}.

\paragraph*{Proof of Theorem~\ref{thm:DsTriangleFull}}

\begin{proof}
For $n = 0$, we define $\dsTriangle{0} = \dsBasic{0}$, and note that the implementation of the required methods is trivial.
For $n > 0$,  we define $\dsTriangle{n}$ to be the data structure $\dsBasic{n}$ maintaining $T$, extended by:
\begin{itemize}
 \item The degrees structure $\degrees$ for $T$ (see Fact~\ref{degrees_impl})
 \item Adjacency trees $\{T_u\}_{\{u \in V(T)\}}$
 \item The method \FTRIANGLE{} of Corollary~\ref{common_query_triangles_corollary} using the implementation of\\ $\INCOMING{u}{l}$ subprocedure given by Corollary~\ref{incoming_full_impl_corollary} and the implementation of $\MIN$ given by Corollary~\ref{empty_impl_promise_corollary}.
\end{itemize}
The theorem statement is a direct consequence of
Fact~\ref{degrees_impl}, Corollary~\ref{common_query_triangles_corollary}, Corollary~\ref{incoming_full_impl_corollary} and Corollary~\ref{empty_impl_promise_corollary}.
\end{proof}

We conclude this subsection with an extension of the data structure of Theorem~\ref{thm:DsTriangleFull}.

\begin{lemma}\label{update_full_impl_plus}
The data structure $\dsTriangle{n}$ can be extended into data structure $\dsTrianglePlus{n}$ via method $\FTRIANGLE{d}$, which for a given parameter $d \in \mathbb{N}$ either finds a triangle in the maintained tournament $T$, or reports that there are no triangles, or reports that $\ADT(T) \geq d + 1$.
Method $\dsTrianglePlus{n}.\FTRIANGLE{d}$ works in $\mathcal{O}(d \log^2 n)$ time.
\end{lemma}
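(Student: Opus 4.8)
The plan is to mirror the guarded construction of Theorem~\ref{thm:DsTrianglePromiseInternal}, but adapted to the full model. I would let $\dsTrianglePlus{n}$ store exactly the same information as the data structure $\dsTriangle{n}$ of Theorem~\ref{thm:DsTriangleFull} (the degrees structure $\degrees$, the adjacency trees, and the basic data structure of Lemma~\ref{update_impl_basic}), so that initialization and arc reversals are inherited verbatim. The new method $\FTRIANGLE{d}$ is then a \emph{guarded} variant of the unbounded $\FTRIANGLE{}$ of Algorithm~\ref{fig:common_query_triangles}. Before running anything, I would test whether $|\emptydb| \le 12(d+1)$; maintaining a size counter for the array set $\emptydb$ alongside the updates of Lemma~\ref{update_impl_basic} makes this an $\mathcal{O}(1)$ check (alternatively one stops an iteration of $\emptydb$ after $12(d+1)+1$ elements, which costs only $\mathcal{O}(d)$ and stays within budget). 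If the test fails, I report $\ADT(T) \ge d+1$: this is correct by Lemma~\ref{empty_small}, since $12(\ADT(T)+1) \ge |\emptydb| > 12(d+1)$ forces $\ADT(T) > d$.

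If the guard passes, I would simply invoke the ordinary $\FTRIANGLE{}$ procedure of Theorem~\ref{common_query_triangles}, using the full-model subprocedure $\MIN$ of Lemma~\ref{empty_impl_promise} (cost $t_1(T) = \mathcal{O}(|\emptydb|)$) and $\INCOMING{v}{l}$ of Lemma~\ref{incoming_full_impl} (cost $t_2(l,T) = \mathcal{O}(l \log^2 n + |\emptydb|)$). Theorem~\ref{common_query_triangles} then gives a total running time of $\mathcal{O}(|\emptydb| + \mindt{\nopref{T}} \log^2 n)$. The key remaining point is to bound the in-degree parameter $\mindt{\nopref{T}}$: applying Fact~\ref{minimal_degree} to $\nopref{T}$ together with Lemma~\ref{prefix_transitive} (which shows that the degree buckets of $\nopref{T}$ are those of $T$ shifted by $|P|$, so the empty set of $\nopref{T}$ has size at most $|\emptydb|$) yields $\mindt{\nopref{T}} \le |\emptydb| \le 12(d+1)$. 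Substituting this and the guarantee $|\emptydb| \le 12(d+1)$ into the bound above gives the promised $\mathcal{O}(d \log^2 n)$ running time.

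Correctness is then immediate from Theorem~\ref{common_query_triangles}: when the guard passes, $\FTRIANGLE{d}$ behaves identically to $\FTRIANGLE{}$ and hence returns a triangle of $T$ or correctly reports that none exists, while the negative report $\ADT(T) \ge d+1$ is justified as above. The step I expect to require the most care is arguing that a \emph{single} guard on $|\emptydb|$ suffices. In contrast to the promise variant of Theorem~\ref{thm:DsTrianglePromiseInternal}, which must additionally cap $|\backdb|$ because its $\INCOMING{v}{l}$ subprocedure iterates over the back-arc list, the full-model $\INCOMING{v}{l}$ of Lemma~\ref{incoming_full_impl} never touches $\backdb$; its cost is governed purely by $l$ and $|\emptydb|$. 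Consequently, bounding $|\emptydb|$ simultaneously controls every term entering the running time -- both the explicit $|\emptydb|$ factors and, through the inequality $\mindt{\nopref{T}} \le |\emptydb|$, the argument $l = \mindt{\nopref{T}}$ fed to the in-neighbour subprocedure -- so no separate guard on the back arcs is needed.
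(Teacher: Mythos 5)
Your proposal is correct and follows essentially the same route as the paper's proof: guard the query by checking $|\emptydb| \le 12(d+1)$ (rejecting via Lemma~\ref{empty_small} otherwise), then run the unmodified $\FTRIANGLE{}$ of Theorem~\ref{thm:DsTriangleFull}, bounding the cost through Theorem~\ref{common_query_triangles}, Lemma~\ref{empty_impl_promise}, Lemma~\ref{incoming_full_impl}, and the inequality $\mindt{\nopref{T}} \le |\emptydb|$ from Fact~\ref{minimal_degree} and Lemma~\ref{prefix_transitive}. Your closing observation — that no separate guard on $|\backdb|$ or $\maxdb{T}$ is needed because the full-model $\INCOMING{v}{l}$ never touches the back-arc list — is exactly why the paper's proof also uses only the single check (the paper additionally handles the trivial $n=0$ case explicitly, which you omit but which is immaterial).
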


\begin{proof}
    For $n = 0$ the implementation of $\dsTrianglePlus{n}.\FTRIANGLE{d}$ is trivial.
    Thus, we focus on describing the implementation for $n > 0$.
    Let $\emptydb$ be the empty set of $T$. To perform the query, we first check if $|\emptydb| \leq 12(d + 1)$.
    If $|\emptydb| > 12(d + 1)$ then we return that $T$ contains more than $d$ arc-disjoint triangles, which is correct by Lemma~\ref{empty_small}.
    Note, that $\mindt{\nopref{T}} \leq |\emptydb| \leq 12(d + 1)$, by Lemma~\ref{prefix_transitive} and Fact~\ref{minimal_degree} applied to $\nopref{T}$.
    Thus, we can safely run $\dsTrianglePlus{n}.\FTRIANGLE{}$ as defined in Theorem~\ref{thm:DsTriangleFull} obtaining the required time complexity, by Theorem~\ref{common_query_triangles}, Lemma~\ref{empty_impl_promise} and Lemma~\ref{incoming_full_impl}.
\end{proof}

\section{Dynamic $\FAST$ via Triangle Data Structure}\label{sec:dynamicFast}

In this section we present our dynamic algorithms for the $\FAST$
problem. We offer two algorithms, a more efficient one working in
the promise model, and a less efficient one in the full model.
To provide our results, we first show how to extend a data structure
$\dsSome{n} \in \{ \dspTrianglePlus{n}, \dsTrianglePlus{n} \}$ to
support the method $\FINDFAST{k}$, which implements
a well known branching algorithm for the static $\FAST$ problem
(see \cite{FASTBranching}, Theorem~$5$) given in
Algorithm~\ref{fig:branchingFAST}\footnote{It recursively  finds a triangle in the
tournament and tries to reverse each of its arcs.}
The argument $k$ in $\FINDFAST{k}$ can be an arbitrary natural
number (unrelated to the problem parameter $K$ that is fixed for
the lifetime of the data structure), however,
the query complexity depends on $k$.

\begin{algorithm}
    \SetKwInOut{Input}{Algorithm}
	\SetKwInOut{Output}{Output}
    \Input{$\FINDFAST{k}$ using $\dspTrianglePlus{n}$ data structure}
    \Output{Verify if $\FAST(T) \leq k$}
    \If{$T$ is acyclic}{\Return TRUE \;}
    \If{$k=0$}{ \Return FALSE \;}

    $uvw \gets \dspTrianglePlus{n}.\FTRIANGLE{k}$\;
    \If(// Too many arc-disjoint triangles){$uvw=\bot$} {
        \Return FALSE \;
    }

    \For{$xy \in \{ uv, wu, vw \}$}{
       \If{$\dspTrianglePlus{n}.\REVERSEBND{x}{y}{k}$}{
           \If{$\dspTrianglePlus{n}.\FINDFAST{k-1}$}{
                   $\dspTrianglePlus{n}.\REVERSEBND{x}{y}{4k + 4}$\;
                   \Return TRUE \;
            }
            $\dspTrianglePlus{n}.\REVERSEBND{x}{y}{4k + 4}$\;
        }
        \Else{ \Return FALSE \; }
}
    \Return FALSE \;
    \caption{Pseudocode for $\FINDFAST{k}$ for $\dspTrianglePlus{n}$ structure, for $k \in \mathbb{N}$}
    \label{fig:branchingFAST}
\end{algorithm}

\begin{theorem}[dynamic $\FAST$ promise, Theorem~\ref{thm:FastPromiseOverview} in the overview]\label{thm:FastPromise}
    Dynamic \emph{FAST} problem with parameter $K$ admits a data structure\footnoteref{note1} with initialization time $\mathcal{O}(n^2)$, worst-case update time $\mathcal{O}(\sqrt{g(K)})$ and
    worst-case query time $\mathcal{O}(3^K K\sqrt{K})$ under the promise that there is a computable function $g$, such that the maintained tournament $T$ always has a feedback arc set of size at most $g(K)$.
\end{theorem}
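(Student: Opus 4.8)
The plan is to instantiate the claimed data structure as the triangle structure $\dspTrianglePlus{n}$ of Theorem~\ref{thm:DsTrianglePromiseInternal}, augmented with the branching routine $\FINDFAST{k}$ of Algorithm~\ref{fig:branchingFAST}. Initialization is inherited verbatim, giving the $\mathcal{O}(n^2)$ bound. A query deciding $\FAST(T) \leq K$ is a single call $\FINDFAST{K}$, and an arc-reversal update is a single call $\dspTrianglePlus{n}.\REVERSEBND{v}{u}{g(K)}$. The update bound is then immediate: by the promise $\FAST(T) \leq g(K)$, so $\ADT(T) \leq g(K) < g(K)+1$ by Fact~\ref{arc_disjoint_triangles_and_fast}, hence the reversal never triggers its abort condition and runs in $\mathcal{O}(\sqrt{g(K)})$ time.

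For the query I would prove by induction on $k$ that $\FINDFAST{k}$ returns \texttt{TRUE} iff $\FAST(T) \leq k$ for the tournament $T$ currently stored. The skeleton is the textbook branching of \cite{FASTBranching}: by Fact~\ref{cycles_are_triangles} a tournament is acyclic iff it is triangle-free, and since every feedback arc set must reverse an arc of each triangle, $\FAST(T) \leq k$ holds iff $T$ is acyclic or some arc of a fixed triangle, once reversed, yields a tournament with $\FAST \leq k-1$. The remaining task is to check that the bounded primitives preserve this equivalence. If $\FTRIANGLE{k}$ returns $\bot$ then $\ADT(T) \geq k+1$, so $\FAST(T) \geq \ADT(T) \geq k+1 > k$ and returning \texttt{FALSE} is sound. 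Likewise a failed forward step $\REVERSEBND{x}{y}{k}$ certifies $\ADT(T) \geq k+1$, justifying the \texttt{else}-branch \texttt{FALSE}; conversely, if $\FAST(T) \leq k$ then $\ADT(T) \leq k$ and Lemma~\ref{small_buckets} gives $\maxdb{T} \leq 8(\sqrt{k+1}+1)$, so the forward step succeeds and the correct branch is explored.

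The delicate point, which I expect to be the main obstacle, is that each successful forward reversal must be undone so that after every call the structure holds exactly the tournament it held on entry; the promise constrains only the genuine tournament, not the transient states produced while branching, so I cannot appeal to it to guarantee that the restoring reversal succeeds. Instead I would show directly that the restoring call $\REVERSEBND{x}{y}{4k+4}$ never aborts: the tournament present at restoration differs from the pre-reversal $T$ (which passed the bound-$k$ check, so $\maxdb{T} \leq 8(\sqrt{k+1}+1)$) by a single reversed arc, hence its maximum bucket grows by at most $2$, and since $\sqrt{4k+5} > 2\sqrt{k+1}$ this remains below the threshold $8(\sqrt{4k+5}+1)$ guarding the bound-$(4k+4)$ reversal. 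As the loop body is entered only for $k \geq 1$, the bound is valid throughout, every branch leaves the structure intact, and the induction closes.

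Finally, for the query time I would chart the recursion tree: branching factor $3$ and depth at most $K$ (the parameter strictly decreases and recursion halts at $0$) give $\mathcal{O}(3^K)$ nodes, a node at depth $i$ carrying parameter $k = K-i$. Its local cost is dominated by one $\FTRIANGLE{k}$ at $\mathcal{O}(k\sqrt{k})$, plus $\mathcal{O}(1)$ bounded reversals costing $\mathcal{O}(\sqrt{k})$ each and an $\mathcal{O}(1)$ acyclicity test read off from the stored value $\maxdb{T}$. The key gain over the naive estimate is that $\FTRIANGLE{k}$ self-limits to $k$ steps, charging $\mathcal{O}(k\sqrt{k}) = \mathcal{O}(K\sqrt{K})$ rather than $\mathcal{O}(\ADT(T)\sqrt{\ADT(T)}) = \mathcal{O}(g(K)\sqrt{g(K)})$. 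Summing over the tree, $\sum_{i=0}^{K} 3^i (K-i)^{3/2} \leq K\sqrt{K}\sum_{i=0}^{K}3^i = \mathcal{O}(3^K K\sqrt{K})$, which is the claimed worst-case query time.
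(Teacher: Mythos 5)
Your proposal is correct and follows essentially the same route as the paper's proof: the same structure $\dspTrianglePlus{n}$ extended with $\FINDFAST{\cdot}$ of Algorithm~\ref{fig:branchingFAST}, the same soundness argument for the early \texttt{FALSE} returns via Fact~\ref{arc_disjoint_triangles_and_fast}, the same key step showing the restoring call $\REVERSEBND{x}{y}{4k+4}$ never aborts (the maximum bucket size changes by at most $2$ and $8(\sqrt{k+1}+1)+2 \leq 8(\sqrt{4k+5}+1)$), and the same recursion-tree accounting giving $\mathcal{O}(3^K K\sqrt{K})$. The only cosmetic deviation is that you implement updates via $\REVERSEBND{v}{u}{g(K)}$ and argue it never aborts under the promise, whereas the paper calls the unbounded $\REVERSE{v}{u}$ inherited from $\dspTriangle{n}$, whose $\mathcal{O}(\sqrt{\ADT(T)})$ cost is bounded by $\mathcal{O}(\sqrt{g(K)})$ by the same promise argument.
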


\begin{proof}
Without loss of generality, we assume $0 \leq K \leq g(K)$.
We use $\dspTrianglePlus{n}$ data structure from Theorem~\ref{thm:DsTrianglePromiseInternal} extended with the $\FINDFAST{\cdot}$ method presented in Algorithm~\ref{fig:branchingFAST}.
The data structure can be initialized in $\mathcal{O}(n^2)$ time, by Theorem~\ref{thm:DsTrianglePromiseInternal}.
For the update operation we call
$\REVERSE{v}{u}$ procedure inherited from $\dspTriangle{n}$ data
structure.
Since the size of the minimum feedback arc set is upper
bounded by $g(K)$, by Fact~\ref{arc_disjoint_triangles_and_fast},
the maximum number of arc-disjoint triangles in $T$ is also always
upper bounded by $g(K)$. Thus, the update operation works in time
$\mathcal{O}(\sqrt{g(K)})$.

For the query operation that verifies whether $\FAST(T) \leq K$, we
invoke the procedure $\FINDFAST{K}$ supported by $\dspTrianglePlus{n}$.
A few explanations regarding the correctness are in place before we
start the analysis. In the pseudocode, $\FINDFAST{k}$ is a
recursive procedure defined for any natural parameter $k \geq 0$.
In the recursive call, we first verify if $T$ is already acyclic,
and if it is we correctly return TRUE.
By Fact~\ref{transitive_buckets}, this can be done by simply
checking whether $|\emptydb| = 0$.
If $T$ has at least one cycle then we check if $k = 0$. If that is the case then we correctly return FALSE.
Next, we test for a triangle using method $\FTRIANGLE{k}$
supported by $\dspTrianglePlus{n}$ data structure.
We know that $T$ is not acyclic, thus the only reason for
$\FTRIANGLE{k}$ method not to return a triangle is that there are
more than $k$ arc-disjoint triangles.
We assume that in this case the method returns $\bot$.
If the invocation returns $\bot$,
by Fact~\ref{arc_disjoint_triangles_and_fast}, this means that
$\FAST(T) > k$, so we can safely return FALSE.
Otherwise we find a triangle, and we branch on its arcs trying to
reverse each of them. With each arc reversal we check if the data
structure managed to actually reverse an arc, and we assume that if
it failed then the call to
$\REVERSEBND{x}{y}{k}$ method (also supported by $\dspTrianglePlus{n}$
data structure) returns FALSE.
If that is the case, then again by
Fact~\ref{arc_disjoint_triangles_and_fast} we can safely return
FALSE. Otherwise we invoke a recursive call
$\FINDFAST{k-1}$.
When returning from the recursion, we invoke
$\REVERSEBND{x}{y}{4k + 4}$ to reverse the
changes introduced to the maintained tournament before the
recursive call.
We show that this invocation never fails.
For that, let us track the changes applied to the tournament by the recursion.
Let $T'$ be the tournament before the call
$\REVERSEBND{x}{y}{k}$ and $T''$ be the tournament $T'$ with arc between $x$ and $y$ reversed as a result of
a successful call.
Notice, that only $\REVERSEBND{\cdot}{\cdot}{\cdot}$ method
modifies the tournament. Thus, using an inductive argument, we
assume that after returning from the recursive call the tournament is again equal to $T''$.
Note, that the invocation $\REVERSEBND{x}{y}{4k + 4}$ on $T''$ fails only if $\maxdb{T''} > 8(\sqrt{4k + 5} + 1)$, as shown in the proof of Theorem~\ref{thm:DsTrianglePromiseInternal}.
However, the invocation of $\REVERSEBND{x}{y}{k}$ succeeded, which
means that $\maxdb{T'} \leq 8(\sqrt{k + 1} + 1)$.
It is easy to see, that $|\maxdb{T'} - \maxdb{T''}| \leq 2$, thus the invocation $\dspTrianglePlus{n}.\REVERSEBND{x}{y}{4k + 4}$ always succeeds, since $8(\sqrt{k + 1} + 1) + 2 \leq 8(\sqrt{4k + 5} + 1)$, for $k \in \mathbb{N}$.
Hence, this branching correctly verifies if $\FAST(T) \leq k$, by the arguments in \cite{FASTBranching}.

Let us now analyse the time complexity of $\FINDFAST{k}$ invocation. The
method $\FINDFAST{k}$ calls itself recursively at most $3^k$ times.
Apart from trivial checks, each recursive call invokes
$\mathcal{O}(1)$ calls to $\FTRIANGLE{k'}$ and
$\REVERSEBND{x}{y}{k'}$ for $k' \leq 4k + 4$. By
Theorem~\ref{thm:DsTrianglePromiseInternal}, the calls
$\FTRIANGLE{k'}$ and $\REVERSEBND{x}{y}{k'}$ take
$\mathcal{O}(k \sqrt{k})$ and $\mathcal{O}(\sqrt{k})$ time,
respectively.
Thus, $\FINDFAST{k}$ invocation takes
$\mathcal{O}(3^k k \sqrt{k} )$ time.
Hence, the total running time of the query operation is
$\mathcal{O}(3^K K \sqrt{K} )$ as claimed.
\end{proof}

\begin{theorem}[dynamic $\FAST$ full, Theorem~\ref{thm:FastFullOverview} in overview]\label{thm:FastFull}
    Dynamic \emph{FAST} problem with a parameter $K$ admits a data structure\footnoteref{note1} with initialization time $\mathcal{O}(n^2)$, worst-case update time $\mathcal{O}(\log^2{n})$ and worst-case query time $\mathcal{O}(3^K K \log^2{n})$.
\end{theorem}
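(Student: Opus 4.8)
The plan is to mirror the promise-model construction of Theorem~\ref{thm:FastPromise}, but to build on the full-model triangle structure $\dsTrianglePlus{n}$ of Lemma~\ref{update_full_impl_plus} instead of $\dspTrianglePlus{n}$, exploiting the fact that in the full model the update operation no longer depends on $\ADT(T)$. Concretely, I would take $\dsTrianglePlus{n}$ and extend it with a recursive method $\FINDFAST{k}$ patterned on Algorithm~\ref{fig:branchingFAST}. Initialization costs $\mathcal{O}(n^2)$ by Theorem~\ref{thm:DsTriangleFull}. For an update I would simply invoke the inherited $\REVERSE{v}{u}$, which runs in $\mathcal{O}(\log^2 n)$ time by Theorem~\ref{thm:DsTriangleFull} irrespective of $\ADT(T)$. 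This is exactly where the full model departs from the promise model: reversals cannot fail, so there is no need for the bounded $\REVERSEBND{\cdot}{\cdot}{\cdot}$ operation nor for the delicate $4k+4$ budget argument that Theorem~\ref{thm:FastPromise} required to guarantee the success of the reverse-back calls.

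For the query I would answer whether $\FAST(T)\le K$ by calling $\FINDFAST{K}$. Each recursive call first tests acyclicity in $\mathcal{O}(1)$ time by checking whether $|\emptydb| = 0$ (Fact~\ref{transitive_buckets}), returning TRUE in that case; if $k=0$ it returns FALSE. Otherwise it invokes $\dsTrianglePlus{n}.\FTRIANGLE{k}$. The key correctness observation is that, since $T$ is already known to be non-acyclic, the only way $\FTRIANGLE{k}$ can fail to return a triangle is by reporting $\ADT(T)\ge k+1$; by Fact~\ref{arc_disjoint_triangles_and_fast} this yields $\FAST(T)\ge\ADT(T)>k$, so returning FALSE is sound. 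When a triangle $uvw$ is returned, the procedure branches over its three arcs, reversing each via the always-successful $\REVERSE{x}{y}$, recursing on $\FINDFAST{k-1}$, and reversing back afterwards. Correctness of the branching scheme itself then follows from the standard argument of~\cite{FASTBranching}, exactly as invoked in the promise case.

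For the running time I would bound the recursion tree by $\mathcal{O}(3^K)$ nodes, since the branching factor is three and the depth is at most $K$. At each node we perform $\mathcal{O}(1)$ calls to $\FTRIANGLE{k}$ with $k\le K$, each costing $\mathcal{O}(k\log^2 n)=\mathcal{O}(K\log^2 n)$ by Lemma~\ref{update_full_impl_plus}, together with $\mathcal{O}(1)$ reversals costing $\mathcal{O}(\log^2 n)$ each by Theorem~\ref{thm:DsTriangleFull}. Summing over all nodes gives the claimed query time $\mathcal{O}(3^K K\log^2 n)$, while the update and initialization bounds are inherited directly from the underlying structure.

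I expect the only genuinely non-routine step to be justifying that truncating the triangle search at budget $k$ is safe, i.e., that a failure of $\FTRIANGLE{k}$ certifies $\FAST(T)>k$; this reduces cleanly to Fact~\ref{arc_disjoint_triangles_and_fast} combined with the fact that a non-acyclic tournament contains a triangle (Fact~\ref{cycles_are_triangles}). Everything else is bookkeeping inherited from the promise-model proof, and in fact considerably simplified here because reversals in the full model never fail, so the entire $\REVERSEBND{\cdot}{\cdot}{\cdot}$ success analysis of Theorem~\ref{thm:FastPromise} is unnecessary.
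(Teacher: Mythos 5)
Your proposal is correct and follows essentially the same route as the paper: the paper likewise extends $\dsTrianglePlus{n}$ with the $\FINDFAST{\cdot}$ method of Algorithm~\ref{fig:branchingFAST}, replacing the calls to $\REVERSEBND{x}{y}{k}$ and $\REVERSEBND{x}{y}{4k+4}$ by the always-successful $\dsTrianglePlus{n}.\REVERSE{x}{y}$, and obtains the same $\mathcal{O}(3^K K \log^2 n)$ query bound via Lemma~\ref{update_full_impl_plus} and Theorem~\ref{thm:DsTriangleFull}. Your explicit justification that a failed $\FTRIANGLE{k}$ certifies $\FAST(T) > k$, and your observation that the $4k+4$ success analysis becomes unnecessary, match the paper's correctness argument (which it states by reference to Theorem~\ref{thm:FastPromise}).
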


\begin{proof}
Without loss of generality, we assume $K \geq 0$.
We extend the data structure $\dsTrianglePlus{n}$ with the method
$\FINDFAST{\cdot}$, which works similarly to the one shown in
Algorithm~\ref{fig:branchingFAST} for $\dspTrianglePlus{n}$.
The pseudocode of $\FINDFAST{\cdot}$ for $\dsTrianglePlus{n}$
data structure is the same as for $\dspTrianglePlus{n}$,
if we plug in $\dsTrianglePlus{n}$ data structure instead of
$\dspTrianglePlus{n}$, with one difference.
In place of $\dspTrianglePlus{n}.\REVERSEBND{x}{y}{k}$ and
$\dspTrianglePlus{n}.\REVERSEBND{x}{y}{4k + 4}$ we call
$\dsTrianglePlus{n}.\REVERSE{x}{y}$, as $\dsTrianglePlus{n}$ takes
two parameters for the arc reversal operation and always succeeds.
With that difference, the correctness analysis is analogous to the
one in the proof of Theorem~\ref{thm:FastPromise}.

Regarding the running time of the
$\FINDFAST{K}$ invocation via $\dsTrianglePlus{n}$ data structure,
again there are $3^K$ recursive calls, only now each call takes
$\mathcal{O}(K \log^2 n)$ time by Lemma~\ref{update_full_impl_plus} and Theorem~\ref{thm:DsTriangleFull}.
We conclude the proof of the theorem by noting that,
by Theorem~\ref{thm:DsTriangleFull}, the updates can be performed
in $\mathcal{O}(\log^2{n})$ time and the initialization of the
data structure takes $\mathcal{O}(n^2)$ time.
\end{proof}

\section{Dynamic $\FVST$ in the Promise Model}\label{sec:dynamicFVST}

In this section we present our main result regarding the dynamic
parametrized $\FVST$ problem stated in the theorem below.

\begin{theorem}[dynamic $\FVST$ promise, Theorem~\ref{thm:FvstPromiseOverview} in the overview]\label{thm:FvstPromise}
    Dynamic $\FVST$ problem with a parameter $K$ admits a
    data structure\footnoteref{note1} with initialization time
    $\mathcal{O}(n^2)$, worst-case update time
    $\mathcal{O}(g(K)^5)$ and worst-case query time
    $\mathcal{O}(3^K g(K) K^3)$ under the promise that there is a
    computable function $g$, such that tournament $T$ always has a
    feedback vertex set of size at most $g(K)$.
\end{theorem}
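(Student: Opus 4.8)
The plan is to instantiate the wrapper data structure $\dsRemovePromise{n}{k}$ with internal parameter $k=g(K)$, assuming without loss of generality that $K\le g(K)$ (if $K>g(K)$, the promise already gives $\FVST(T)\le g(K)\le K$ and the query returns TRUE). Throughout its lifetime this structure keeps $H:=\heavy{T}{k}\subseteq F$, where $F$ is the set of removed vertices. Before anything else I would establish the correctness reduction that legitimizes branching on $\noset{T}{H}$ rather than on $T$ itself: by Lemma~\ref{vertex_cover_fvst} every feedback vertex set $W$ of $T$ with $|W|\le k$ contains $\heavy{T}{k}$, and the promise guarantees $\FVST(T)\le g(K)=k$, so any feedback vertex set of size at most $K\le k$ is forced to contain $H$. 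This yields
\[
\FVST(T)\le K \iff |H|\le K \ \text{ and }\ \FVST(\noset{T}{H})\le K-|H|.
\]
Hence the query first tests $|H|\le K$ (returning FALSE otherwise) and then runs the branching procedure of Algorithm~\ref{fig:branchingFVSToverview} with budget $K-|H|$ on the implicitly maintained tournament $\noset{T}{H}$.

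For the static interface, the initialization time $\mathcal{O}(n^2)$ and the arc-reversal update time $\mathcal{O}(g(K)^5)$ follow directly from the guarantees of $\dsRemovePromise{n}{k}$ (Lemma~\ref{DREMP_def_lemma}), since between queries the only updates are arc reversals. To answer a query I would extend the structure with a recursive $\FINDFVST{\cdot}$ method realizing Algorithm~\ref{fig:branchingFVSToverview}: acyclicity of the current $\noset{T}{F}$ is tested through the implicit indegree-bucket information (Fact~\ref{transitive_buckets} with Fact~\ref{cycles_are_triangles}), a triangle is located by the $\FTRIANGLE{}$ method supported on $\noset{T}{F}$, and the three branches remove and restore the triangle's vertices via the $\REMOVEVERTEX{}$/$\RESTOREVERTEX{}$ operations of the inner $\dsRemove{n}{k}$ structure (Lemma~\ref{DREM_def_lemma}). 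A clean point here is that a query performs \emph{no} arc reversals, so $H$ stays fixed and the wrapper invariant is never disturbed; the branching only ever adds at most $K$ vertices to $F$ on top of $H$, so it suffices that the inner structure tolerate $|H|+K\le 2g(K)=\mathcal{O}(g(K))$ simultaneously removed vertices. Correctness of the branching itself is the standard argument that every feedback vertex set must hit the located triangle, combined with Fact~\ref{cycles_are_triangles}.

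The running-time analysis mirrors the $\FAST$ case. The recursion tree has $\mathcal{O}(3^{K})$ nodes, and at each node we perform one acyclicity test, one $\FTRIANGLE{}$ query on $\noset{T}{F}$, and a constant number of vertex removals/restorations. To bound the per-node cost I would reuse the triangle-detection framework of Theorem~\ref{common_query_triangles} applied to $\noset{T}{F}$, whose cost is governed by $\maxdb{\noset{T}{F}}$, $|\emptydb|$, the minimum indegree of the prefix-free part, and the number of incident $k$-long back arcs. These are exactly the quantities bounded for the $\FVST$ parameter by Lemma~\ref{small_buckets_fvst}, Lemma~\ref{small_empty_fvst}, Fact~\ref{minimal_degree}, and Lemma~\ref{vertex_cover_fvst}; crucially, since $H\subseteq F$, every surviving vertex has at most $k=g(K)$ incident $k$-long back arcs, so iterating over them is cheap. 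Feeding these polynomial-in-$g(K)$-and-$K$ bounds into the cost of a single triangle query and of a single $\dsRemove{n}{k}$ remove/restore gives a per-node cost that is polynomial in $g(K)$ and $K$, and a careful accounting over the $\mathcal{O}(3^{K})$ nodes yields the claimed query time $\mathcal{O}(3^{K} g(K) K^{3})$.

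I expect the genuine difficulty to lie entirely inside the two supporting data structures rather than in the branching wrapper. The first crux is realizing $\FTRIANGLE{}$, the indegree buckets, and the empty set for $\noset{T}{F}$ from only the tokenized reduced-degree representation stored by $\dsRemove{n}{k}$, which forces one to exploit the tight relation between the reduced length $rl(uv)$, $l_T(uv)$ and $l_{\noset{T}{F}}(uv)$ (Observation~\ref{monotonic_tokenized_degrees}) to translate between the reduced-degree world and the true $\noset{T}{F}$. The second crux is proving that a single $\REMOVEVERTEX{}$/$\RESTOREVERTEX{}$ perturbs the indegrees $d_{\noset{T}{F}}(\cdot)$ \emph{regularly} --- unchanged below an affected interval, uniformly shifted by one above it, with the sole exceptions being endpoints of $k$-long back arcs --- which is precisely what makes the token mechanism correct and lets the operation run in time depending only on $k$, the bucket sizes, and the incident $k$-long back arcs. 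The third crux is maintaining $H=\heavy{T}{k}\subseteq F$ under arc reversals within $\mathcal{O}(g(K)^5)$: one reversal can move $\mathcal{O}(g(K))$ vertices in or out of $\heavy{T}{k}$ in $G_{\LONG}$ by the vertex-cover kernelization, each move triggering an inner remove/restore, and it is the product of these costs that produces the fifth power of $g(K)$. Granting the statements of Lemma~\ref{DREM_def_lemma} and Lemma~\ref{DREMP_def_lemma}, the theorem then follows by assembling the reduction, the branching, and the accounting above.
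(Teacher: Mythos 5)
Your architecture is the paper's: instantiate $\dsRemovePromise{n}{g(K)}$ so that $\heavy{T}{g(K)}$ stays removed, use Lemma~\ref{vertex_cover_fvst} for the reduction $\FVST(T) \leq K \iff \FVST(\noset{T}{F}) \leq K - |F|$ with $F = \heavy{T}{g(K)}$, and answer queries by running the branching algorithm with $\REMOVEVERTEX{}$/$\RESTOREVERTEX{}$ and triangle detection on $\noset{T}{F}$. The correctness reduction, the initialization time and the update time are all fine. (One small slip: $\dsRemove{n}{g(K)}$ tolerates only $g(K)$, not $2g(K)$, simultaneously removed vertices; this is harmless because the branching only removes a vertex when the remaining budget is positive, so $|F| \leq \max(|\heavy{T}{g(K)}|, K) \leq g(K)$ throughout.)

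The genuine gap is in the query-time accounting. The claimed bound $\mathcal{O}(3^K g(K) K^3)$ is \emph{linear} in $g(K)$, but the per-node bounds you invoke are promise-based and of higher degree in $g(K)$. When the answer is FALSE, the intermediate tournaments $\noset{T}{F}$ can have $\FVST(\noset{T}{F})$ as large as $g(K)$; then Lemma~\ref{small_buckets_fvst} only gives $\maxdb{T} = \mathcal{O}(g(K))$, and Lemma~\ref{small_empty_fvst} together with the edge bound $|E(\induced{G_{\LONG}}{V(\noset{T}{F})})| \leq g(K)^2$ from Lemma~\ref{vertex_cover_fvst} only gives an empty-set bound of order $g(K)^3$. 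Feeding these into the costs of the acyclicity test, the triangle query and the remove/restore operations yields a per-node cost of order $g(K)^3 K$, hence a query time of order $3^K g(K)^3 K$, which is strictly worse than claimed whenever $g(K) \gg K$; no accounting over the recursion tree repairs this, since the promise alone cannot force the relevant quantities down to functions of $K$. What is missing is the $\FVST$ analogue of the $\dspTrianglePlus{n}$ early-exit checks, which is exactly how the paper converts $g(K)$-bounds into $K$-bounds: (i) at the start of each query, verify $\maxdb{T} \leq 2K+1$ and return FALSE otherwise (correct by Lemma~\ref{small_buckets_fvst}); (ii) at every recursion node, verify that $|\reducedEmptydb|$ obeys the bound that Corollary~\ref{corollary_empty_fvst} would force if $\FVST(\noset{T}{F}) \leq k - |F|$ held, and return FALSE otherwise (correct by contraposition). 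Only after these tests pass do the quantities driving the per-node cost become $\mathcal{O}(K)$ and $\mathcal{O}(K^2 g(K))$ respectively, with the lone factor $g(K)$ surviving through $\max_{v}|\longArcs{v}| \leq g(K)$ and $|E(\induced{G_{\LONG}}{V(\noset{T}{F})})| \leq g(K) \cdot k$; this is what gives per-node cost $\mathcal{O}(g(K) K^3)$ and the stated query time. Without this mechanism the final step of your plan does not go through.
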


Our approach to the above result is to
run upon each query a standard branching
algorithm  given in
Algorithm~\ref{fig:branchingFVSToverview} (whose
correctness follows
from Fact~\ref{cycles_are_triangles}).
The
branching algorithm finds a triangle in
the tournament, and then
branches recursively with each of the triangle vertices removed.
The recursion is stopped when $K$ vertices are removed, where $K$
is the parameter of the problem.
To implement the branching algorithm, we
not only need
a method to find a triangle in the
maintained tournament,
but we also need to support vertex
removals and restorations, which are
significantly more complex than
edge reversals.

We deal with the latter problem first.
In Subsection~\ref{sub:drem} we introduce a new data structure
called $\dsRemove{n}{k}$. It maintains an $n$-vertex tournament
$T$ while supporting arcs reversals and vertex
removals/restorations. In order to perform these operations
efficiently the data
structure stores an explicit representation of $T$ without any
vertices removed, the set of removed vertices $F$ and an implicit
representation of
$\induced{T}{V(T) \setminus F}$.
Additionally, it
allows at most $k$ vertices to be
removed at the same time. This
data structure is covered by Lemma~\ref{DREM_def_lemma}.

Next, we need to implement a method
that allows finding triangles in
$\induced{T}{V(T) \setminus F}$. To
achieve that, we follow ideas from
Section~\ref{r:triangles}. Recall,
that the running times of the
$\dspTriangle{n}$ data structure for
finding triangles from
Theorem~\ref{thm:DsTrianglePromise}
depends on $\ADT(T)$ which is strongly
related to $\FAST(T)$. Unfortunately,
this values do not relate to $\FVST(T)$,
as we can get rid of many back arcs
by deleting one vertex. To be more precise,
the running times mainly depend on
the size of the empty set
$\emptydb$, the size of the set of back
arcs
$\backdb$ and the maximum size
of a degree bucket $\maxdb{T}$. While
$\maxdb{T}$ is bounded in terms of
$\FVST(T)$, the same does not hold for
the other two sets. In particular,
there is no reasonable bound on the
set of back
arcs
$\backdb$. To deal with this problem,
we partition the set of back arcs into
$k$-long and $k$-short back arcs.
While $k$-short back arcs are easy to
handle, non-trivial ideas are needed to
handle $k$-long back arcs.
A back arc is $k$-long if the
difference of the indegrees of its two
endpoints is at least $k$. Then we
define a $k$-long graph $G_{\LONG}$ of a
tournament,
which is an undirected graph, where
vertices are connected via an edge in
$G_{\LONG}$ if they are connected via
a $k$-long back arc in the tournament.
We also define the $k$-heavy set of a
tournament as the set of vertices of
degree higher than $k$ in $G_{\LONG}$.
We observe that the vertices of the
$k$-heavy set
need to be in \textbf{every} feedback
vertex set of size
at most $k$, so we can safely keep them
removed from the tournament. Once we
remove them, there is few $k$-long back arcs
left.

To implement this idea,
in Subsection~\ref{sub:dremp} we
introduce a wrapper data structure
around the $\dsRemove{n}{k}$ called
$\dsRemovePromise{n}{k}$.
It allows only one kind of updates, arc
reversals, and keeps the invariant that
the $k$-heavy set of the maintained
tournament is removed.
This not only allows us to later implement
the methods for finding triangles,
but also ensures fast running times of
$\dsRemove{n}{k}$ operations in
the promise model.
The wrapper is defined in
Lemma~\ref{DREMP_def_lemma}.

Finally
Subsection~\ref{sub:fvst_branching} is
devoted to proving
Theorem~\ref{thm:FvstPromise} using
previously developed data
structures to implement the branching
algorithm described above.

\subsection{Preliminary tools}\label{sub:fvst_prelim}

In this subsection we introduce basic tools used to tackle the
dynamic $\FVST$ problem. We start by introducing the indegree
distance between vertices, which in turn allows for
partitioning back arcs of a
tournament into $k$-long and $k$-short back arcs in
Definition~\ref{k_long_arcs_def}. The concept of $k$-long back arcs
turns out to be very useful later on.
In this section, for a
tournament $T$ and a set $F \subseteq V(T)$, we use the notation
$\noset{T}{F} := \induced{T}{V(T) \setminus F}$.

\begin{definition}[In-degree distance]
Let $T$ be a tournament.
We define the in-degree distance between vertices $u, v
\in V(T)$ as
$\dist_T(u,v)=d_T(u)-d_T(v)$.
\end{definition}

\begin{observation}\label{degrees_after_removal}
    Let $T$ be a tournament and $F \subseteq V(T)$.\\
    Then for all $u, v \in V(\noset{T}{F})$ $|\dist_T(u,v) - \dist_{\noset{T}{F}}(u,v)| \leq |F|$.
\end{observation}
\begin{proof}
    Removal of $v$ from a tournament decreases in-degree of every other vertex by either $0$ or $1$.
\end{proof}

\begin{definition}[$k$-long and $k$-short arcs]\label{k_long_arcs_def}
    Let $T$ be a tournament and $k \in \mathbb{N}$.
    An arc $uv \in E(T)$ is called a $k$-\emph{long arc}
    if $|\dist_T(u,v)| \geq k$. Otherwise, it is called a $k$-\emph{short arc}.
\end{definition}

Similarly as in Subsection~\ref{sec:prelimtools}, in order to be able to
search for triangles, we need bounds on some basic parameters
of the tournament, but this time with respect to the size of
the minimum feedback vertex set $\FVST(T)$.
We start by bounding the size of degree buckets and the size of
the empty
set of a tournament $T$ in terms of the both $\FVST(T)$
and the number of $k$-long back arcs.

\begin{lemma}\label{small_buckets_fvst}
    Let $T$ be a tournament. Then $\maxdb{T} \leq 2 \FVST(T) + 1$.
\end{lemma}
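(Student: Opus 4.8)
The plan is to exploit the fact that deleting a minimum feedback vertex set turns $T$ into a transitive tournament, in which all in-degrees are distinct, and to observe that vertices sharing a common in-degree in $T$ can only spread out over a small range of in-degrees after the deletion.

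Concretely, I would fix an arbitrary $d \in [n]$ and let $W \subseteq V(T)$ be a minimum feedback vertex set, so $|W| = \FVST(T)$. By Fact~\ref{cycles_are_triangles} the tournament $\noset{T}{W}$ is acyclic, hence by Fact~\ref{transitive_buckets} every degree bucket of $\noset{T}{W}$ has size exactly one; equivalently, distinct vertices of $\noset{T}{W}$ have pairwise distinct in-degrees in $\noset{T}{W}$. Now set $A = \db{T}[d] \setminus W$, the vertices of in-degree $d$ that survive the deletion. Since $|W| = \FVST(T)$, we immediately get $|\db{T}[d]| \leq |A| + \FVST(T)$, so it suffices to bound $|A|$.

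The key step is to control the in-degrees of the vertices of $A$ in $\noset{T}{W}$. Each $v \in A$ has $d_T(v) = d$, and removing $W$ decreases its in-degree by exactly the number of its in-neighbours lying in $W$, a quantity between $0$ and $|W| = \FVST(T)$. Hence every $v \in A$ satisfies $d - \FVST(T) \leq d_{\noset{T}{W}}(v) \leq d$, so all these in-degrees lie in an interval of only $\FVST(T) + 1$ consecutive integers. Because the vertices of $A \subseteq V(\noset{T}{W})$ must have pairwise distinct in-degrees in $\noset{T}{W}$, at most $\FVST(T) + 1$ of them can fit in this interval, giving $|A| \leq \FVST(T) + 1$. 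Combining the two bounds yields $|\db{T}[d]| \leq |A| + \FVST(T) \leq 2\FVST(T) + 1$, and since $d$ was arbitrary this bounds $\maxdb{T}$ as claimed.

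I do not expect a serious obstacle here; the argument is a direct pigeonhole once the transitivity of $\noset{T}{W}$ is invoked. The only point requiring a little care is the observation that deleting a vertex only ever removes in-neighbours (never adds any), so the in-degree of a surviving vertex can only drop and by at most $|W|$; this is exactly the kind of locality statement already recorded in Observation~\ref{degrees_after_removal}, which I could cite to justify the interval of width $\FVST(T) + 1$ cleanly.
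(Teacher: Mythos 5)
Your proof is correct and follows essentially the same argument as the paper: fix a bucket $\db{T}[d]$, remove a minimum feedback vertex set, note that surviving vertices keep in-degrees in the window $[d-\FVST(T),d]$, and apply Fact~\ref{transitive_buckets} to conclude that at most $\FVST(T)+1$ of them survive. The only cosmetic difference is that acyclicity of $\noset{T}{W}$ follows directly from the definition of a feedback vertex set, so the citation of Fact~\ref{cycles_are_triangles} is unnecessary (though harmless).
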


\begin{proof}Let $|V(T)|=n.$
    Without loss of generality, we can assume $0 \leq \FVST(T) < n$.
    Consider arbitrary $d \in [n]$. Let $F$ be the feedback vertex set of minimum size equal to $\FVST(T)$.  
    Let $S = \db{T}[d] \setminus F$.
    Each removal of a vertex from a tournament decreases the in-degree of every other vertex by either $0$ or $1$. Thus, for all $v \in S$ it holds that $d - |F| \leq d_{T^{-F}}(v) \leq d$.
    By Fact~\ref{transitive_buckets}, all vertices from $S$ have different degrees in $T^{-F}$, so $|S| \leq |F|+1$.
    Hence, $|\db{T}[d]| \leq |S| + |F| \leq 2|F| + 1$.
\end{proof}

\begin{lemma}\label{small_empty_fvst}
    Let $T$ be a tournament on $n$ vertices, $k \in \mathbb{N}$ and $\lbackdb$ be the set of all $k$-long back arcs in $T$.
    Let $\emptydb$ be the empty set of $T$. Then $|\emptydb| \leq \FVST(T) \cdot (2k + |\lbackdb| + 2 \FVST(T) + 5)$.
\end{lemma}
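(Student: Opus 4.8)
The plan is to relate the empty buckets of $T$ to the transitive order induced by a minimum feedback vertex set, and then to charge the resulting degree collisions to $k$-long back arcs and to short degree-windows. Fix a minimum feedback vertex set $F$ of $T$, write $f = \FVST(T)$ and $m = n - f$. Since $F$ is a feedback vertex set, $\noset{T}{F}$ is acyclic, hence transitive, so I can order its vertices $v_0,\dots,v_{m-1}$ with $d_{\noset{T}{F}}(v_j)=j$. Letting $a_j = |\{w \in F : w \to v_j \text{ in } T\}|$, each vertex of $F$ pointing to $v_j$ adds exactly one to its indegree, so $d_T(v_j) = j + a_j$ with $0 \le a_j \le f$. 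The first step is the elementary identity $|\emptydb| = \sum_{d}(|\db{T}[d]| - 1)^+$ (the number of surplus vertices beyond one per non-empty bucket, which equals $n$ minus the number of non-empty buckets). Splitting each bucket into its part in $\noset{T}{F}$ and its part in $F$ and using $(x+y-1)^+ \le (x-1)^+ + y$ for $y \ge 0$, I would obtain $|\emptydb| \le E_0 + f$, where $E_0 = \sum_d (|\db{T}[d]\cap(V \setminus F)| - 1)^+ = m - |\{\,j + a_j : j \in [m]\,\}|$ counts the degree collisions among $V \setminus F$.

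Second, I would bound $E_0$ by the total descent of the profile $a$. Since $b_j := j + a_j$ satisfies $b_j \ge j$ (as $a_j \ge 0$), a short lemma on integer sequences gives $E_0 = m - |\{b_j\}| \le \mathrm{Desc}(a) := \sum_{j}(a_j - a_{j+1})^+$: each repeated value forces $a$ to strictly decrease across the corresponding index interval, and these decreases can be charged injectively to units of descent via their levels. Writing $a_j - a_{j+1} = \sum_{w \in F}\big([w \to v_j] - [w \to v_{j+1}]\big)$ and using $(\sum_w c_w)^+ \le \sum_w (c_w)^+$, I get $\mathrm{Desc}(a) \le \sum_{w \in F} D_w$, where $D_w = |\{\,j : w \to v_j \text{ and } v_{j+1} \to w\,\}|$ is the number of \emph{down-steps} of $w$ along the transitive order.

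The heart of the argument, and the step I expect to be the main obstacle, is bounding each $D_w$ by $2k + |\lbackdb| + O(f)$ through a short/long dichotomy. For a down-step at $j$ the two arcs incident to $w$ are $w \to v_j$ and $v_{j+1} \to w$. If $w \to v_j$ is a $k$-long back arc ($d_T(w) \ge d_T(v_j) + k$) I charge the down-step to it; similarly if $v_{j+1} \to w$ is a $k$-long back arc I charge to that. Distinct down-steps use distinct $v_j$ (resp.\ $v_{j+1}$) and the two arc types point in opposite directions at $w$, so the number of long-charged down-steps of $w$ is at most the number of $k$-long back arcs incident to $w$, hence at most $|\lbackdb|$. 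In the remaining case neither arc is $k$-long, which forces $d_T(v_j) \in (d_T(w) - k,\, d_T(w) + k + f)$ (using $d_T(w) < d_T(v_j)+k$, $d_T(v_{j+1}) < d_T(w)+k$, and $|d_T(v_j) - d_T(v_{j+1})| \le f+1$ for consecutive transitive vertices); since $j = d_T(v_j) - a_j$ with $a_j \in [0,f]$, these indices $j$ lie in a window of width at most $2k + 2f$, bounding the short down-steps of $w$ by $2k + 2f$. Summing $D_w \le 2k + |\lbackdb| + 2f$ over the $f$ vertices of $F$ and adding the $+f$ from the first reduction yields $|\emptydb| \le f(2k + |\lbackdb| + 2f) + f \le \FVST(T)\,(2k + |\lbackdb| + 2\FVST(T) + 5)$, as required.

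The delicate points I would need to treat carefully are the exact constants at the borderline of the dichotomy (an arc that is barely not $k$-long must be absorbed into the short window rather than charged to $\lbackdb$, which is precisely why the slack $2f + 5$ rather than a tighter constant appears), and the auxiliary integer-sequence lemma $E_0 \le \mathrm{Desc}(a)$, whose proof must handle overlapping repeat-intervals of different values by a level-based injection. Everything else is bookkeeping built on the transitive structure of $\noset{T}{F}$ together with the two elementary facts $d_T(v_j) = j + a_j$ and $|\emptydb| = \sum_d (|\db{T}[d]| - 1)^+$.
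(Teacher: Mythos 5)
Your proof is correct, and it takes a genuinely different route from the paper's. The paper argues by induction on $\FVST(T)$: it picks a single vertex $v$ of a minimum feedback vertex set, proves via an explicit injection (Claim~\ref{clm:empty_rek}) that the empty set of $T$ exceeds that of $\noset{T}{\{v\}}$ by at most $2k+5+|L_v|$, where $L_v$ collects the $k$-long-back-arc neighbours of $v$, and then invokes the inductive hypothesis on $\noset{T}{\{v\}}$ with the parameter shifted to $k+1$, using the fact that $(k+1)$-long back arcs of $\noset{T}{\{v\}}$ are $k$-long back arcs of $T$. You instead fix the whole minimum feedback vertex set $F$ at once, exploit transitivity of $\noset{T}{F}$ to linearly order the surviving vertices, and run a three-stage charging argument: empty buckets to degree collisions (plus $|F|$), collisions to the total descent of the profile $a$, and descent to per-vertex down-steps, which are finally split by the long/short dichotomy. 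Both proofs are valid; yours even gives the marginally better constant $2f+1$ in place of $2f+5$. The paper's induction keeps every step local to a single vertex removal, so it needs nothing beyond Observation~\ref{degrees_after_removal}, but it pays with the parameter drift $k \to k+1$ inside the induction; your argument avoids induction and parameter drift entirely, at the price of the auxiliary sequence lemma $E_0 \le \mathrm{Desc}(a)$, the one step you left at sketch level. That lemma is true, and the level-based injection you describe does close it: viewing $b_j = j + a_j$ as a walk, charge each revisit of a value $v$ to a down-step of $b$ that covers level $v$ within the window since the previous visit of $v$ (such a down-step exists because the walk sits at $v$ at both ends of the window and must re-enter $v$ from above or first dip below it); revisits of the same value occupy disjoint windows and distinct values are distinct levels, so the assignment is injective, while the total number of (down-step, covered level) pairs is exactly $\sum_j (a_j - a_{j+1})^+ = \mathrm{Desc}(a)$.
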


\begin{proof}
For a vertex $v \in V(T)$ let
$\emptydb^{-v}$ denote the empty set of $\noset{T}{\{ v \}}$
and let
$L_v = \{u \in V(T)$ : $uv \in \lbackdb$ or $vu \in \lbackdb \}$.
Let also $l_v = d_{T}(v) - k - 1$ and $r_v = d_{T}(v) + k + 1$.
First, we prove the following claim.
\begin{claim}\label{clm:empty_rek}
Let $F$ be a feedback vertex set of size $\FVST(T)$. Let
us assume that $F$ is not empty and let $v \in F$ be any vertex in $F$. Then
$|\emptydb| \leq |[l_v - 1, r_v + 1]| + |\emptydb^{-v}| + |L_v| = 2k + 5 + |\emptydb^{-v}| + |L_v|$.
\end{claim}
\begin{proof}[Proof of Claim~\ref{clm:empty_rek}]
Let us consider an in-degree $d \in [n] \setminus [l_v-1,r_v+1]$ such that
$\db{T}[d] = \emptyset$, or in other words,
$d$ represents some empty degree bucket outside of $[l_v-1,r_v+1]$.
Our goal is to injectively assign $d$ to $f(d) \in [n]$ in a way that
\begin{enumerate}
  \item either $\db{\noset{T}{v}}[f(d)]$ is empty
  \item or $\db{T}[f(d)] \cap L_v$ is non empty.
\end{enumerate}

Let us first consider an easier case when $d < l_v-1$.
If $\db{\noset{T}{v}}[d]$ is also empty, we set $f(d)=d$. Otherwise, $\db{T}[d+1]$ contains a vertex of $L_v$ and we
set $f(d)=d+1$.
It is easy to see that the assignment $f$ is an
injection for $d < l_v-1$.

We now consider the case when $d > r_v+1$. Let
$d''=\min_{i > d} \{i: \db{T}[i]=\emptyset $ or
$ \db{T}[i] \cap L_v \neq \emptyset \}$ (here we artificially assume that $\db{T}[n]=\emptyset$).
If $\db{T}[d'']=\emptyset$,
    that means that in all buckets $\db{T}[i]$ for $i \in [d+1,d'']$ there are no vertices of $L_v$. This implies that all vertices in the buckets $\db{T}[i]$ for $i \in [d+1,d'']$ are out-neighbours of $v$, and hence they decrease their in-degree by one as a result of removing $v$. So the buckets $\db{T}[i]$ for $i \in [d+1,d'']$ get all shifted one to the left when we remove $v$ from $T$. This implies that $\db{\noset{T}{v}}[d''-1]=\db{T}[d'']=\emptyset$, so we set $f(d)=d''-1$.
    If, on the other hand, $\db{T}[d''] \cap L_v \neq \emptyset$, we set $f(d)=d''$. Again, it is easy to see that the constructed function is an injection.
\end{proof}
We now move on to proving Lemma~\ref{small_empty_fvst}
by induction on $\FVST(T)$. The basis
of the induction is when $\FVST(T) = 0$. Then,
$T$ is acyclic and by Fact~\ref{transitive_buckets}, $|\emptydb| = 0$.

In the induction step, we assume $0 < \FVST(T) < n$.
Let thus $F$ be the feedback vertex set for $T$ of size
$\FVST(T)$, what implies that $F$ is not empty. Let us
fix an arbitrary vertex $v \in F$.
By applying the inductive hypothesis to $\noset{T}{v}$, $F \setminus \{v\}$ and $k + 1$, we obtain $$|\emptydb^{-v}| \leq (|F| - 1)(2k + 2 + |\lbackdb^{-v}| + 2(|F| - 1) + 5),$$ where $\lbackdb^{-v}$ is the set of all $(k + 1)$-long back arcs in $\noset{T}{v}$.
    By Observation~\ref{degrees_after_removal} applied to set $F=\{ v \}$, $\lbackdb^{-v} \subseteq \lbackdb$, hence we get the following:
\begin{flalign*}
\emptydb & \leq 2k + 5 + |\emptydb^{-v}| + |L_v| \leq&\\
     & \leq 2k + 5 + (|F| - 1) \cdot (2k + |\lbackdb| + 2|F| + 5) + |\lbackdb| =&\\
     & =|F| \cdot (2k + |\lbackdb| + 2|F| + 5 - 2) \leq&\\
     & \leq |F| \cdot (2k + |\lbackdb| + 2|F| + 5)
     \end{flalign*}
\end{proof}

To conclude the preliminaries, we define the $k$-long graph of a tournament, which is used later in the section.

\begin{definition}[$k$-long graph]\label{k_long_back_arcs_graph_def}
    Let $T$ be a tournament, $k \in \mathbb{N}$ and
    $\lbackdb \subseteq E(T)$ be the set of all $k$-long back
    arcs in $T$. An undirected graph $G_{\LONG}$ is called the
    $k$-\emph{long graph} of $T$ if $V(G_{\LONG}) = V(T)$ and
    $\{v, u\} \in E(G_{\LONG}) \iff uv \in \lbackdb \vee vu \in \lbackdb$.
\end{definition}

\subsection{Data Structure for Vertex Removals}\label{sub:drem}

In this subsection we provide a data structure $\dsRemove{n}{k}$
that maintains an $n$ vertex tournament $T$ and supports both
types of updates: arc reversal and vertex deletion/restoration
(adding previously deleted vertex back to $T$).
The vertex updates are more difficult than arc reversals,
because they affect in-degrees of up to $(n - 1)$ vertices,
while the edge reversals affect indegrees of $2$.
In order to perform vertex updates efficiently,
the data structure allows at most $k$ vertices to be removed at
the same time.
It stores an explicit representation of $T$ without any vertices
removed, a dynamically changing set of removed vertices $F$ and an
implicit representation of tournament $\noset{T}{F}$.

As observed above, we need to be able to quickly
decrease/increase in-degrees of many vertices, because
removal/restoration of a vertex can pessimistically affect all
other vertices.
In order to do so, the $\dsRemove{n}{k}$ data structure stores a
set of tokens, one for each removed vertex.
The vertices are partitioned into buckets that are similar to
degree buckets from Section~\ref{r:triangles}, and
 a token at position $x$ ,,decreases'' in-degrees of
all vertices in the buckets with indices $d \geq x$.
The difference with respect to Section~\ref{r:triangles}
is that the partitioning into buckets is not according to the
actual indegree in the tournament, but rather according to
a ``reduced degree''. The reduced degree of $v$ is an
approximation of
$d_{\noset{T}{F}}(v)$.
To be more precise, the value of $d_{\noset{T}{F}}(v)$ can be
calculated by subtracting from $v$'s
reduced degree the number of tokens at positions
smaller or equal then
$v$'s reduced degree.
This is formalized by invariant $1$ in Lemma~\ref{DREM_def_lemma}.

Apart from $T$, $F$ and tokens, the $\dsRemove{n}{k}$ data
structure stores an implicit representation of tournament
$\noset{T}{F}$ and the set of all $k$-long back arcs in $T$ in a
form of a $k$-long graph of $T$.
The implicit representation is similar to the  representation
that we used in Section~\ref{r:triangles}, that is we maintain
the reduced degrees, the partition of $V(\noset{T}{F})$ into
degree buckets (with respect to their reduced degrees) and a
subset of such buckets that are empty. We move on to describing
the data structure for vertex removals in more detail.

\begin{lemma}[Definition of $\dsRemove{n}{k}$]\label{DREM_def_lemma}
For every $n \in \mathbb{N}$ and $k \in [n + 1]$ there exists a data structure $\dsRemove{n}{k}$ that for a dynamic tournament $T$ on $n$ vertices\footnoteref{note1} altered by arc reversals and a dynamic set of vertices $F \subseteq V(T)$, such that $|F| \leq k$, maintains $T$, $F$ and $\noset{T}{F}$ by maintaining the following information:
\begin{enumerate}
 \item All information stored by $\dsBasic{n}$ for tournament $T$ (see Lemma~\ref{update_impl_basic})
 \item $\rmSet$ -- set $F$ of currently removed vertices as an array set over $[n]$
 \item $\reducedDegree{}$ -- the array storing, for each vertex $v \in V(\noset{T}{F})$, a value called the reduced degree ranging from $0$ to $n - 1$; for vertices $v \in F$ the value is undefined
 \item $\reducedDegreeBucket{}$ -- the partition of vertices in $V(\noset{T}{F})$ into $n$ buckets based on the values in $\reducedDegree{}$.
 Reduced degree buckets $\reducedDegreeBucket{}$ are stored analogously as the standard degree buckets in $\dsBasic{n}$ (points $3$. and $4$. in Lemma~\ref{update_impl_basic})
 \item $\reducedEmptydb$ -- the array set over $[n]$ of indices $i \in [n]$ such that $\reducedDegreeBucket{i}=\emptyset$
 \item $\tok$ -- the array set over $[5n]$ storing positions of tokens
 \item $\vtok{}$ -- the array storing an assignment of each vertex $v \in F$ to the position of its token; for $v \in V(\noset{T}{F})$ the value is undefined
  \item $\longArcs{}$ -- the array storing, for each vertex $v \in V(T)$, the set of its neighbours in the $k$-long graph of $T$ as an array set over $[n]$
\end{enumerate}

Let $\ctok{d}$ for $d \in \mathbb{Z}$ denote the number of token positions in $\tok$ that are not greater than $d$.
The data structure satisfies the following important invariants:
 \begin{enumerate}
   \item For all $v \in V(\noset{T}{F})$ it holds that $d_{\noset{T}{F}}(v) = \reducedDegree{v} - \ctok{\reducedDegree{v}}$
   \item the assignment stored by $\vtok{}$ is a bijective function from $F$ to the set of token positions $\tok$, i.e., every vertex $v \in F$ is assigned a unique token position in $\tok$, and every position in $\tok$ is assigned to a unique vertex $v \in F$. Moreover, the positions in $\tok$ are pairwise different.
   \item For all $v \in F$ it holds that $|\vtok{v} - d_{T}(v)| \leq 6k - 3$
\end{enumerate}

Data structure $\dsRemove{n}{k}$ supports the following operations:
\begin{enumerate}
 \item $\INITIALIZEDREM{T}{k}{F}$ -- initializes the data structure with tournament $T$, parameter $k$ and set of removed vertices $F$, in time $\mathcal{O}(n^2)$
 \item $\REVERSEARCDREM{v, u}$ -- reverses an arc between vertices $v, u \in V(T)$ in $T$ in time $\mathcal{O}((k + 1)^3 \cdot (\maxdb{T} + 1))$. Additionally, the operation returns the list of all newly added/removed $k$-long back arcs in $T$ of size at most $12 \maxdb{T} + 1$, where $T$ is the maintained tournament before the method invocation
 \item $\REMOVEVERTEX{v}$ -- adds vertex $v \in V(\noset{T}{F})$ to the set $F$ of removed vertices, in time $\mathcal{O}((k + 1) \cdot (|F|+1)^2 \cdot (\maxdb{T} + 1)  + (|F|+1)(|\longArcs{v}| + 1) )$
 \item $\RESTOREVERTEX{v}$ -- removes vertex $v \in F$ from the set $F$ of removed vertices, in time $\mathcal{O}((k + 1) \cdot (|F|+1)^2 \cdot (\maxdb{T} + 1)   + (|F| + 1)(|\longArcs{v}| + 1) )$
\end{enumerate}
\end{lemma}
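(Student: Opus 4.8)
The plan is to build $\dsRemove{n}{k}$ as an extension of the basic structure $\dsBasic{n}$ of Lemma~\ref{update_impl_basic} (used verbatim to maintain $T$, its degree buckets, $\emptydb$ and $\maxdb{T}$), and to support the vertex operations through the tokenized implicit representation of $\noset{T}{F}$. The workhorse will be an auxiliary routine $\fixRdeg{u}$ that, given a single vertex $u \in V(\noset{T}{F})$, restores invariant $1$ for $u$ alone: it recomputes $d_{\noset{T}{F}}(u) = d_T(u) - |\{w \in F : wu \in E(T)\}|$ by scanning $F$ (costing $O(|F|+1)$ since $F$ is stored as an array set), then picks the reduced degree $r$ realizing $r - \ctok{r} = d_{\noset{T}{F}}(u)$ for the current token multiset (a canonical fixpoint, locatable in $O(|F|+1)$ time from the at most $|F|$ token positions), and finally moves $u$ to bucket $\reducedDegreeBucket{r}$, updating $\reducedEmptydb$ in $O(1)$. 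Thus $\fixRdeg{u}$ runs in $O(|F|+1)$ time and, crucially, changes nothing but $u$'s own reduced degree, so it cannot disturb invariant $1$ for any other vertex. A repeatedly used quantitative fact is that vertices sharing a reduced degree also share $d_{\noset{T}{F}}$, hence have $T$-degrees within a window of $|F|$ consecutive values; consequently every reduced-degree bucket holds at most $(|F|+1)\maxdb{T}$ vertices.

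For $\REMOVEVERTEX{v}$ I would insert $v$ into $\rmSet$, delete it from its bucket, and install a token for $v$ at the first free position near $\reducedDegree{v}$ (the small shift needed to keep positions distinct, invariant $2$, stays within the slack of invariant $3$, which is why $\tok$ spans $[5n]$). The key correctness observation is that installing a token at position $p$ decrements $\reducedDegree{w}-\ctok{\reducedDegree{w}}$ for exactly the vertices $w$ with $\reducedDegree{w}\ge p$; I would then show, by a direction-of-arc case analysis, that for every vertex $w$ whose reduced degree lies outside a radius-$\Theta(k)$ window $[l,r]$ around $\reducedDegree{v}$ and which is \emph{not} joined to $v$ by a $k$-long back arc, removing $v$ changes $d_{\noset{T}{F}}(w)$ exactly as the new token accounts for (no change below $l$, a $-1$ above $r$), so invariant $1$ survives automatically. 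The only exceptions are the $O(k)$ buckets inside $[l,r]$ and the neighbours in $\longArcs{v}$, which I re-fix with $\fixRdeg{}$. Since the window holds $O(k)\cdot(|F|+1)\maxdb{T}$ vertices, each fixed in $O(|F|+1)$, plus $|\longArcs{v}|$ further fixes, the total is $O((k+1)(|F|+1)^2(\maxdb{T}+1)+(|F|+1)(|\longArcs{v}|+1))$ as claimed. $\RESTOREVERTEX{v}$ is the mirror image: remove $v$'s token, reinsert $v$ with a freshly computed reduced degree, and repair the same window-plus-long-arc set of exceptions.

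$\REVERSEARCDREM{v,u}$ first updates $\dsBasic{n}$ in $O(1)$, altering only $d_T(u)$ and $d_T(v)$. Because an arc flips its $k$-long-back status only when the moved endpoint places $d_T$ of the other endpoint at one of a constant number of critical values, the arcs whose status changes are confined to $O(1)$ degree buckets of $T$ around $d_T(u)$ and $d_T(v)$; scanning these buckets locates all of them, updates $\longArcs{}$, and produces the required change list of size at most $12\maxdb{T}+1$ in $O(\maxdb{T})$ time. The implicit representation is then repaired: if an endpoint stays in $\noset{T}{F}$ its $d_{\noset{T}{F}}$ moves by at most one and a single $\fixRdeg{}$ suffices, whereas an endpoint lying in $F$ leaves $\noset{T}{F}$ untouched but has $d_T$ shifted, so its token must be nudged to preserve invariant $3$. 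Re-establishing invariant $3$ is the delicate point: shifting a token by one position changes $\ctok{}$ on a single reduced-degree bucket, which I repair with $\fixRdeg{}$; a collision with another token forces a cascade of at most $O(k)$ such shifts, each costing $O((|F|+1)\maxdb{T})\cdot O(|F|+1)$, for the stated $O((k+1)^3(\maxdb{T}+1))$ bound.

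Finally, $\INITIALIZEDREM{T}{k}{F}$ cannot afford to remove the vertices of $F$ one by one (that would be polynomially too slow), so I would build a valid configuration directly: compute the $\dsBasic{n}$ data and the $k$-long graph in $O(n^2)$, compute $d_{\noset{T}{F}}(v)$ for every present $v$ in $O(n^2)$, then sweep the degree axis once, emitting a token for each removed vertex at a free slot near its $T$-degree and assigning each present vertex the reduced degree forced by invariant $1$; invariants $1$--$3$ then hold by construction. I expect the two genuine obstacles to be (i) the arc-direction case analysis proving that the single token inserted by $\REMOVEVERTEX{}$ accounts for all but the $O(k)$-window and $k$-long-arc exceptions, and (ii) the bookkeeping for invariant $3$ under arc reversals incident to removed vertices, where the cascading token shifts must be shown to terminate after $O(k)$ steps and to cost $O((k+1)^3(\maxdb{T}+1))$ in total; the remaining time bounds follow routinely from the $(|F|+1)\maxdb{T}$ bucket-size estimate.
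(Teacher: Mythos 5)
Your overall architecture matches the paper's closely: the same $\fixRdeg{\cdot}$ primitive, the same $(|F|+1)\cdot\maxdb{T}$ bucket-size bound, the same ``window plus $\longArcs{v}$'' repair with the same arc-direction case analysis for $\REMOVEVERTEX{v}$ (placing the token near $\reducedDegree{v}$ instead of just beyond the window is immaterial there), and the same direct initialization. The genuine problems are in your token management. First, your handling of $\REVERSEARCDREM{v,u}$ when an endpoint $w$ lies in $F$ --- ``nudge the token by one, cascading on collisions'' --- does not maintain invariant $3$, which is part of the statement to be proved, not an optional internal convenience. A cascade displaces tokens of vertices whose $T$-in-degree did not change, and this erosion is unbounded: if $w_1, w_2 \in F$ have tokens at adjacent positions, reversing an arc incident to $w_1$ and reversing it back pushes $w_2$'s token one slot further per round while $d_T(w_2)$ stays fixed, so after enough updates $|\vtok{w_2} - d_T(w_2)| > 6k-3$, and eventually the position even leaves the range $[5n]$ of the array set $\tok$. (Under the alternative reading of your cascade --- the nudged token slides past occupied slots, leaving other tokens alone --- the displaced-neighbour problem disappears, but then the drift of the nudged token itself from $d_T(w)$ accumulates overshoots and you give no argument bounding it by $6k-3$.) The paper avoids cascades entirely: on reversal it \emph{re-anchors} the affected token, moving it to a fresh free position within $|F|$ of the \emph{new} $d_T(w)$ (such a slot exists since there are at most $|F|$ tokens), touching no other token, and then repairs invariant $1$ exactly for the buckets between the old and new positions.

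Second, $\RESTOREVERTEX{v}$ is not the ``mirror image'' of removal with ``the same window.'' During removal you choose where the token goes, so the repaired region can be made to cover it; during restoration the token sits wherever history left it, and invariant $3$ only guarantees $|\vtok{v} - d_T(v)| \leq 6k-3$, which can place it far outside a window of radius $k + 2|\rmSet| + 1$ around the freshly computed $\reducedDegree{v}$. Concretely, take a vertex $w$ with $\reducedDegree{w}$ strictly between the top of your window and the token position $t$, with $vw \in E(T)$ a \emph{forward} arc: restoring $v$ increases $d_{\noset{T}{F}}(w)$ by one, deleting the token at $t > \reducedDegree{w}$ leaves $\ctok{\reducedDegree{w}}$ unchanged, and since $vw$ is forward we have $w \notin \longArcs{v}$, so nothing repairs $w$ and invariant $1$ breaks. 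The paper handles this by defining the restoration window as $[\min(\reducedDegree{v} - k - 2|F| - 1,\, t),\, \max(\reducedDegree{v} + k + 2|F| + 1,\, t)]$, explicitly swallowing $t$, and it is precisely invariant $3$ that bounds the width of this enlarged window (hence the running time). Both gaps are repairable, but they sit exactly at the delicate points of the construction, and your proposal as written would yield a data structure whose invariants fail after a bounded number of updates.
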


In the above lemma, to be formally correct, in the running times
depending on multiple
parameters the factors are increased by one so that the running
time does not drop to zero if one of the parameters is zero.

Before we prove Lemma~\ref{DREM_def_lemma},
we introduce some useful observations.

\begin{observation}\label{monotonic_tokenized_degrees}
    Consider data structure $\dsRemove{n}{k}$ maintaining tournament $T$ and set $F$, where $k \in [n + 1]$.
    Then, for all $v, u \in V(\noset{T}{F})$ and $l \in \mathbb{N}$
    it holds that
    \begin{enumerate}
    \item if $\reducedDegree{u} \leq \reducedDegree{v}$ then $d_{\noset{T}{F}}(u) \leq d_{\noset{T}{F}}(v)$,
    \item if $\reducedDegree{v} - \reducedDegree{u} \geq l + |F|$ then $d_{\noset{T}{F}}(v) - d_{\noset{T}{F}}(u) \geq l$ and
    \item if $\reducedDegree{v} - \reducedDegree{u} \geq l + 2|F|$ then $d_{T}(v) - d_{T}(u) \geq l$.
    \end{enumerate}
\end{observation}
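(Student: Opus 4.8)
The plan is to derive all three statements from invariant $1$ of Lemma~\ref{DREM_def_lemma}, which expresses the true in-degree in $\noset{T}{F}$ in terms of the reduced degree and the token-counting function, namely $d_{\noset{T}{F}}(w) = \reducedDegree{w} - \ctok{\reducedDegree{w}}$ for every $w \in V(\noset{T}{F})$. Writing $a = \reducedDegree{u}$ and $b = \reducedDegree{v}$, the key quantity throughout is $\ctok{b} - \ctok{a}$, which (for $a \leq b$) equals the number of token positions lying in the range $(a, b]$. First I would record two elementary properties of this quantity, each following from a different half of invariant $2$ of Lemma~\ref{DREM_def_lemma}: (i) since all token positions are pairwise distinct integers, the interval $(a,b]$ contains at most $b - a$ of them, so $\ctok{b} - \ctok{a} \leq b - a$ whenever $a \leq b$; and (ii) since $\vtok{}$ is a bijection from $F$ onto the set of token positions, there are exactly $|F|$ tokens in total, hence $\ctok{b} - \ctok{a} \leq |F|$ in all cases.

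With these two properties in hand, parts $1$ and $2$ become immediate. For part $1$, assuming $a \leq b$, invariant $1$ gives $d_{\noset{T}{F}}(v) - d_{\noset{T}{F}}(u) = (b - a) - (\ctok{b} - \ctok{a})$, which is nonnegative by property (i), so $d_{\noset{T}{F}}(u) \leq d_{\noset{T}{F}}(v)$. For part $2$, assuming $b - a \geq l + |F|$, the same identity combined with property (ii) yields $d_{\noset{T}{F}}(v) - d_{\noset{T}{F}}(u) = (b-a) - (\ctok{b} - \ctok{a}) \geq (l + |F|) - |F| = l$.

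For part $3$ I would reduce the claim about in-degrees in the full tournament $T$ to the already-established part $2$ via Observation~\ref{degrees_after_removal}. Assume $b - a \geq l + 2|F|$. Applying part $2$ with $l$ replaced by $l + |F|$ (the hypothesis $b - a \geq (l+|F|) + |F|$ is exactly what is assumed) gives $d_{\noset{T}{F}}(v) - d_{\noset{T}{F}}(u) \geq l + |F|$. Since Observation~\ref{degrees_after_removal} guarantees $|\dist_T(v,u) - \dist_{\noset{T}{F}}(v,u)| \leq |F|$, we get $d_T(v) - d_T(u) \geq \big(d_{\noset{T}{F}}(v) - d_{\noset{T}{F}}(u)\big) - |F| \geq (l + |F|) - |F| = l$, as required.

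I do not expect a genuine obstacle here; the only point demanding care is ensuring that the two bounds on $\ctok{b} - \ctok{a}$ are each justified by the correct half of invariant $2$ -- distinctness of positions for the per-unit bound (i), which is what actually drives the monotonicity in part $1$, and the bijection for the cardinality bound (ii), which supplies the additive slack $|F|$ in parts $2$ and $3$. The reduction in part $3$ is the only place where the distinction between degrees in $\noset{T}{F}$ and in $T$ enters, and it is absorbed cleanly by the extra $|F|$ in the hypothesis together with Observation~\ref{degrees_after_removal}.
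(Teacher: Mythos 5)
Your proof is correct and follows essentially the same route as the paper's: both expand the reduced-degree difference via invariant $1$, bound $\ctok{\reducedDegree{v}} - \ctok{\reducedDegree{u}}$ by $|F|$ using invariant $2$, and obtain Point $3$ by combining Point $2$ with Observation~\ref{degrees_after_removal}. Your write-up is merely more explicit than the paper's (which leaves Point $1$ and the two bounds on the token-count difference implicit), so there is nothing to flag.
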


\begin{proof}
Point 1. follows from the first and the second invariant.
Point 2. follows from the first and the second invariant in the following way:
\begin{flalign*}
l + |F| & \leq \reducedDegree{v} - \reducedDegree{u} =&\\
     & = \reducedDegree{v} - \ctok{\reducedDegree{v}} -
     (\reducedDegree{u} - \ctok{\reducedDegree{u}}) &+\\
     &+ \ctok{\reducedDegree{v}} - \ctok{\reducedDegree{u}}
    =&\\
     & =d_{\noset{T}{F}}(v) - d_{\noset{T}{F}}(u) + \ctok{\reducedDegree{v}} - \ctok{\reducedDegree{u}} \leq&\\
     & \leq d_{\noset{T}{F}}(v) - d_{\noset{T}{F}}(u) + |F|
     \end{flalign*}
Point 3. follows from combining Point 2. with Observation~\ref{degrees_after_removal}.
\end{proof}

\begin{observation}\label{tokenized_buckets_size_fvst}
    Consider the $\dsRemove{n}{k}$ data structure maintaining
    tournament $T$ and set $F$, where $k \in [n + 1]$. Then for
    all $d \in [n]$ it holds that $|\reducedDegreeBucket{d}| \leq \maxdb{T} \cdot (|F| + 1)$.
\end{observation}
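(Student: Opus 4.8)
The plan is to exploit invariant~1 of $\dsRemove{n}{k}$ to convert the reduced-degree bucket $\reducedDegreeBucket{d}$ into a genuine in-degree class of $\noset{T}{F}$, and then to relate in-degrees in $\noset{T}{F}$ back to in-degrees in the full tournament $T$, where the bound $\maxdb{T}$ is available.

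First I would note that, by the definition of reduced degree buckets (point~4 of Lemma~\ref{DREM_def_lemma}), every vertex $v \in \reducedDegreeBucket{d}$ satisfies $\reducedDegree{v} = d$ and lies in $V(\noset{T}{F})$. Hence by invariant~1 all such vertices share one and the same in-degree in $\noset{T}{F}$, namely $d_{\noset{T}{F}}(v) = \reducedDegree{v} - \ctok{\reducedDegree{v}} = d - \ctok{d}$. Writing $d' = d - \ctok{d}$ for this common value, it follows that $\reducedDegreeBucket{d}$ is contained in the set $A = \{v \in V(\noset{T}{F}) : d_{\noset{T}{F}}(v) = d'\}$, so it suffices to bound $|A|$.

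Next I would bound $|A|$ by comparing in-degrees before and after removing $F$. By the argument in the proof of Observation~\ref{degrees_after_removal}, deleting a single vertex lowers the in-degree of each remaining vertex by $0$ or $1$ and never raises it; deleting all $|F|$ vertices of $F$ therefore gives $d_{\noset{T}{F}}(v) \leq d_T(v) \leq d_{\noset{T}{F}}(v) + |F|$ for every $v \in V(\noset{T}{F})$. In particular, each $v \in A$ satisfies $d' \leq d_T(v) \leq d' + |F|$, so $v$ lies in one of the $|F| + 1$ full-tournament degree buckets $\db{T}[d'], \db{T}[d' + 1], \ldots, \db{T}[d' + |F|]$.

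Finally, summing over these $|F|+1$ buckets and using $|\db{T}[i]| \leq \maxdb{T}$ for each index $i$ yields $|A| \leq (|F| + 1)\maxdb{T}$, whence $|\reducedDegreeBucket{d}| \leq \maxdb{T} \cdot (|F| + 1)$, as claimed. There is no genuine obstacle in this argument; the only step requiring care is the first one, where recognizing that a single reduced-degree value determines a single true in-degree of $\noset{T}{F}$ (via invariant~1) is precisely what confines a reduced-degree bucket to one in-degree class of $\noset{T}{F}$, and thereby lets the $\maxdb{T}$ bound on the full tournament be brought to bear.
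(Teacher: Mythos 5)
Your proof is correct and follows essentially the same route as the paper's: both use invariant~1 to place $\reducedDegreeBucket{d}$ inside the single in-degree class $d - \ctok{d}$ of $\noset{T}{F}$, then observe that each removed vertex lowers in-degrees by at most one, so this class sits inside the $|F|+1$ degree buckets $\db{T}[d-\ctok{d}], \ldots, \db{T}[d-\ctok{d}+|F|]$ of $T$, each of size at most $\maxdb{T}$. No gaps; your write-up just spells out the steps the paper states more tersely.
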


\begin{proof}
From the first invariant we immediately get $\reducedDegreeBucket{d} \subseteq \db{\noset{T}{F}}[d - \ctok{d}]$.
Further, $\db{\noset{T}{F}}[d - \ctok{d}] \subseteq
\{u \in V(\noset{T}{F})$ : $d_{T}(u) \in [d - \ctok{d}, d - \ctok{d} + |F|]\}$
as each removed vertex can reduce the indegree of another vertex
by at most one.
Hence, $|\db{\noset{T}{F}}[d - \ctok{d}]| \leq \maxdb{T} \cdot (|F| + 1)$.
\end{proof}

The next lemma helps determining the reduced degrees of
vertices
of a given in-degree in tournament $\noset{T}{F}$.
The lemma is presented in a generalized version, because it is
also used in situations where we do not have access to the valid
$\dsRemove{n}{k}$ maintaining the given tournament and set.

\begin{lemma}\label{tokenized_ops_fvst}
    Let $n \in \mathbb{N}$ and $K \subseteq [n]$ be a set of integers represented by an array set over $[n]$.
    Let $t_i = |\{t \in K$ : $t \leq i\}|$ for $i \in \mathbb{Z}$.
    Then, for any integer $d \in \mathbb{Z}$, it is possible to list all $i \in \mathbb{Z}$ satisfying $i-t_i = d$ in time $\mathcal{O}(|K|)$. Additionally, all such $i$ form a non-empty interval entirely contained inside $[d, d + |K|]$.
\end{lemma}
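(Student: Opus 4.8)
The plan is to analyze the function $f\colon \mathbb{Z} \to \mathbb{Z}$ defined by $f(i) = i - t_i$ and to reduce both claims of the lemma to its monotonicity. First I would record the elementary behaviour of $t_i$: since $K \subseteq [n]$, passing from $i$ to $i+1$ increases $t_i$ by exactly $1$ if $i+1 \in K$ and by $0$ otherwise, so $t_{i+1} - t_i \in \{0,1\}$. Consequently $f(i+1) - f(i) = 1 - (t_{i+1} - t_i) \in \{0,1\}$, i.e.\ $f$ is non-decreasing and advances in unit steps. Moreover $t_i = 0$ for $i < 0$ and $t_i = |K|$ for $i \geq n-1$, so $f(i) = i$ tends to $-\infty$ as $i \to -\infty$ and $f(i) = i - |K|$ tends to $+\infty$ as $i \to +\infty$. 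A non-decreasing integer function with unit steps that is unbounded in both directions attains every integer value, and the preimage of each value is an interval; hence $\{\, i : f(i) = d \,\}$ is a non-empty interval, as claimed.

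For the containment I would simply rearrange the defining equation: if $f(i) = d$ then $i = d + t_i$, and since $0 \leq t_i \leq |K|$ for every $i$, every solution lies in $[d, d + |K|]$.

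For the algorithm, the idea is to walk $i$ through the candidate range $[d, d+|K|]$ while maintaining $t_i$ incrementally. First compute $t_d = |\{\, t \in K : t \leq d \,\}|$ by a single scan over the doubly linked list of $K$, which takes $\mathcal{O}(|K|)$ time. Then iterate $i$ from $d$ to $d + |K|$: at each step use the $\mathcal{O}(1)$ membership query of the array set to test whether $i \in K$ (treating indices outside $[n]$ as absent), update $t_i$ in $\mathcal{O}(1)$ accordingly, and output $i$ whenever $i - t_i = d$. By the containment claim this enumerates exactly the solution set, and the loop runs $|K|+1$ iterations of constant cost, so the total running time including the initial scan is $\mathcal{O}(|K|)$.

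The individual computations are routine; the only points requiring care are the non-emptiness argument, where one must justify that a non-decreasing unit-step function that is unbounded in both directions is surjective onto $\mathbb{Z}$ so that the reported list is guaranteed non-empty, and the verification that the candidate range $[d, d+|K|]$ genuinely contains the entire solution interval, so that the walk misses no solution and does not exceed the claimed $\mathcal{O}(|K|)$ bound.
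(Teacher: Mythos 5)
Your proof is correct and follows essentially the same route as the paper's: both exploit that $i - t_i$ is a non-decreasing unit-step function, so the solutions form an interval contained in $[d, d+|K|]$, which is then enumerated by a linear scan that maintains $t_i$ incrementally via $\mathcal{O}(1)$ membership queries. You are in fact somewhat more careful than the paper, which asserts the existence of the minimum solution $j$ with $d \leq j \leq d + |K|$ without spelling out the surjectivity argument (unit steps plus unboundedness in both directions) that you provide.
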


\begin{proof}
We start by calculating $j=\min ( \{i \in \mathbb{Z}: i-t_i=d \})$. This can be done in time $\mathcal{O}(|K|)$, since
$d \leq j \leq d + |K|$. Starting from $i=j$, we increase $i$ by one as long as $t_{i+1}=t_i+1$. It is easy to see that we obtain this way the desired interval of values $i$ such that
$i-t_i = d$. Moreover, we can increase $i$ at most $|K| - (j - d)$ times.
\end{proof}

We are now ready to give the proof of Lemma~\ref{DREM_def_lemma}.
In vertex removal/restoration operation, the most problematic
part is keeping invariant $1$ (the relation between reduced degrees, tokens and in-degrees in $\noset{T}{F}$) satisfied, because removing/restoring a vertex can affect in-degrees of up to $n - 1$ other vertices.
Both operations follow the same general approach to solve this issue.

We first update the stored information regarding
vertex $v$ (the one we remove/restore). We then
need to fix the reduced degrees of all affected vertices in
order to satisfy invariant $1$.
We first show a method $\fixRdeg{u}$ that
given any vertex $u$, quickly fixes its reduced degree.
We then fix the reduced degree of $v$. Next,
we fix reduced degrees of other vertices in order to fix
invariant $1$ globally.
To accomplish that, we first iterate over vertices having their
reduced
degrees
inside a small interval $[l, r]$ around the reduced degree of $v$,
and we fix all these vertices.
The interval is chosen in a way that all arcs between $v$ and
vertices with reduced degrees outside of $[l, r]$ are $k$-long
arcs with regard to $T$.
By considering all cases, we arrive at a conclusion that
removing/restoring vertex $v$ does not alter in-degrees in
$\noset{T}{F}$ of vertices to the left of $[l, r]$ and alters
in the same way (either adds $1$ or subtracts $1$) in-degrees
in $\noset{T}{F}$ of almost all vertices to the right of $[l, r]$.
Such a regular change to in-degrees can be handled
by addition/removal of a token for $v$
(depending whether we remove or restore $v$).
However, there is a small caveat to the above.
The exceptions to this regular change of indegrees
are vertices connected to $v$ via a
$k$-long back arc.
For those vertices, we need to directly fix the reduced degree
of such a vertex.
Because of that, we finish the operation by iterating over
$\longArcs{v}$ and fixing the reduced degrees of yielded vertices.

\begin{algorithm}
    \SetKwInOut{Input}{Operation}
	\SetKwInOut{Output}{Output}
    \Input{$\dsRemove{n}{k}.\REMOVEVERTEX{v}$}
    \BlankLine

    $l \leftarrow \reducedDegree{v} - k - 2|\rmSet| - 1$\;
    $r \leftarrow \reducedDegree{v} + k + 2|\rmSet| + 1$\;
    $rt \leftarrow$ minimal $i \geq r$, such that $i + 1 \notin \tok$\;
    Add token $rt + 1$ to $\tok$\;
    Set $\vtok{v} = rt+1$\;
    Add $v$ to $\rmSet$, remove $v$ from $\reducedDegreeBucket{\reducedDegree{v}}$, and update $\reducedEmptydb$ accordingly\;

    \For{$u \in \reducedDegreeBucket{[l, rt] \cap [n]}$}{
       Move $u$ from $\reducedDegreeBucket{\reducedDegree{u}}$ to $\reducedDegreeBucket{\fixRdeg{u}}$\;
       Update $\reducedEmptydb$ accordingly\;
       Set $\reducedDegree{u}=\fixRdeg{u}$\;
    }

    \For{$u \in \longArcs{v}$}{
       \If{$u \notin \rmSet$}{
           Move $u$ from $\reducedDegreeBucket{\reducedDegree{u}}$ to $\reducedDegreeBucket{\fixRdeg{u}}$\;
           Update $\reducedEmptydb$ accordingly\;
           Set $\reducedDegree{u}=\fixRdeg{u}$\;
       }
    }

    \caption{Pseudocode for $\REMOVEVERTEX{v}$ method of the
    $\dsRemove{n}{k}$ data structure}
    \label{fig:remove_vertex_fvst}
\end{algorithm}

\begin{algorithm}
    \SetKwInOut{Input}{Operation}
	\SetKwInOut{Output}{Output}
    \Input{$\dsRemove{n}{k}.\RESTOREVERTEX{v}$}
    \BlankLine

    $t \leftarrow \vtok{v}$\;
    Remove token position $t$ from $\tok$ \;
    Remove $v$ from $\rmSet$\;
    Set $\reducedDegree{v}=\fixRdeg{v}$ \;
    Add $v$ to $\reducedDegreeBucket{\reducedDegree{v}}$ and update $\reducedEmptydb$ accordingly \;
    $l \leftarrow \min(\reducedDegree{v} - k - 2|\rmSet| - 1,t)$\;
    $r \leftarrow \max(\reducedDegree{v} + k + 2|\rmSet| + 1,t)$\;

    \For{$u \in \reducedDegreeBucket{[l, r] \cap [n]}$}{
       Move $u$ from $\reducedDegreeBucket{\reducedDegree{u}}$ to $\reducedDegreeBucket{\fixRdeg{u}}$\;
       Update $\reducedEmptydb$ accordingly\;
       Set $\reducedDegree{u}=\fixRdeg{u}$\;
    }

    \For{$u \in \longArcs{v}$}{
       \If{$u \notin \rmSet$}{
           Move $u$ from $\reducedDegreeBucket{\reducedDegree{u}}$ to $\reducedDegreeBucket{\fixRdeg{u}}$\;
           Update $\reducedEmptydb$ accordingly\;
           Set $\reducedDegree{u}=\fixRdeg{u}$\;
       }
    }

    \caption{Pseudocode for $\RESTOREVERTEX{v}$ method of
    the $\dsRemove{n}{k}$ data structure}
    \label{fig:restore_vertex_fvst}
\end{algorithm}

\begin{proof}[Proof of Lemma~\ref{DREM_def_lemma}]
We start by defining an internal procedure $\fixRdeg{w}$ for
$w \in V(\noset{T}{F})$, which computes the reduced degree of a vertex $w$ based on the current list of token positions.
We first compute
$d_{\noset{T}{F}}(w)$ using $d_T(w)$, adjacency matrix of $T$ and set $\rmSet$, in time $\mathcal{O}(|F|)$.
We can now compute the value $\reducedDegree{v}$ satisfying $d_{\noset{T}{F}}(w) = \reducedDegree{v} - \ctok{\reducedDegree{v}}$ using Lemma~\ref{tokenized_ops_fvst} for $K = \tok$, in $\mathcal{O}(|\tok|)$ time.
    By Lemma~\ref{tokenized_ops_fvst}, the computed value satisfies  $\reducedDegree{v} \in [d_{\noset{T}{F}}(w), d_{\noset{T}{F}}(w) + |\tok|] \subseteq [n]$.
To sum up, procedure $\fixRdeg{w}$ takes $\mathcal{O}(|F|)$ time.
\newline
\newline
\textbf{Initialization.}\\ We move on to describing operation $\INITIALIZEDREM{T}{k}{F}$. This can be done similarly as in Lemma~\ref{update_impl_basic}, with the difference that now we also need to initialize the information related to tokens, reduced degrees and $k$-long back arcs. We start by initializing $\dsBasic{n}$ and setting $\rmSet := F$. Then, for every vertex $v \in F$ we add a token at position $t_v$, such that $|t_v - d_T(v)| \leq |F|$. We can find $t_v$ in time $\mathcal{O}(|F|)$.
Next, for every $v \in V(\noset{T}{F})$ set $\reducedDegree{v} = \fixRdeg{v}$.
The remaining data and can be easily calculated in $\mathcal{O}(n^2)$ time.
The first invariant is satisfied by the choice of reduced degrees.
The second invariant is satisfied trivially.
The third invariant follows from the assumption $|F| \leq k$: if $k = 0$ then it is trivially satisfied, else for all $v \in F$ it hods that $|\vtok{v} - d_T(v)| \leq |F| \leq k$.
\newline
\newline
\textbf{Arc reversal.}\\
We now move on to describing $\REVERSEARCDREM{v, u}$ operation.
Let $vu \in E(T)$ be the arc we want to reverse and let $\overline{T}$ be a tournament $T$ with this arc reversed.
The basic information of $\dsBasic{n}$ can be maintained by
Lemma~\ref{update_impl_basic}.
We update such basic information as the first step.

Observe, that for vertices $w \in V(\noset{T}{F}) \setminus \{ u,v \}$ it holds that
$d_{\noset{T}{F}}(w)=d_{\noset{\overline{T}}{F}}(w)$,
so the first invariant for these vertices remains satisfied.
Similarly,
for vertices $w \in F \setminus \{ u,v \}$
the third invariant remains satisfied. So the only
vertices for which the invariants first and third may not
hold after reversal of $vu$ are $v$ and $u$.
We first fix the third invariant in the following way.
For $w \in \{v, u\} \cap F$ we modify $\vtok{w}$ so that $|\vtok{w} - d_{\overline{T}}(w)| \leq |F|$ (let us denote the old value of $\vtok{w}$ as $t_w$). This can be done in $\mathcal{O}(|F|)$ time, since the number of all tokens is bounded by $|F|$. This fixes the third invariant for all vertices, so it remains to fix the first invariant. Notice, that the vertices
for which the first invariant may be broken are $v,u$ and
the vertices in reduced degree buckets $\reducedDegreeBucket{d}$ for
$d \in [t_v,\vtok{v}] \cup [\vtok{v}, t_v] \cup
[t_u,\vtok{u}] \cup [\vtok{u},t_u]$ (here, $[x,y] = \emptyset$ if $y < x$). For all such vertices $x$ we fix the first invariant by setting $\reducedDegree{x}=\fixRdeg{x}$ and updating the
reduced degree buckets $\reducedDegreeBucket{}$ and $\reducedEmptydb$ accordingly.
Since we moved two tokens by $\mathcal{O}(|F|)$ positions,
by Observation~\ref{tokenized_buckets_size_fvst}, this totals to $\mathcal{O}((k + 1)^3 \cdot (\maxdb{T} + 1))$ time, because $\fixRdeg{x}$ takes
$\mathcal{O}(|F|)$ time.

It remains to update $\longArcs{}$. Observe that all the arcs $L$ that start or stop to be $k$-long back arcs due to reversal of $vu$ contain either $v$ or $u$ as one of their endpoints. If $v$ is one of the endpoints of an arc in $L$, then the other
endpoint is either $u$ or belongs to $\db{T}[d]$ for $d \in [d_T(v)-k-1,d_T(v)-k+1] \cup [d_T(v)+k-1,d_T(v)+k+1]$.
The situation is similar for $u$.
Thus, there is a limited number of candidates for arcs in $L$ and we can compute $L$ in
$\mathcal{O}(\maxdb{T})$ time.
Then, updating $\longArcs{}$ is straightforward.
It is easy to see that the bound $|L| \leq 12 \maxdb{T} + 1$ required by the definition of $\REVERSEARCDREM{v, u}$ operation is satisfied.
\newline
\newline
\textbf{Vertex Removal.}\\
    Let us now move to the first vertex update operation: $\REMOVEVERTEX{v}$, which adds vertex $v$ to the set of removed vertices $F$. Here, by $F$ we denote the set of removed vertices together with $v$, whereas $F \setminus \{ v \}$ denotes the
    set of removed vertices before removing $v$.
    To remove $v$, we first let $l = \reducedDegree{v} - k - 2|F \setminus \{ v \} | - 1$ and $ r = \reducedDegree{v} + k + 2|F \setminus \{ v \}| + 1$.
    We compute $rt = \min_{i \geq r}(i + 1 \notin \tok)$ in time $\mathcal{O}(|\tok|)$ in a straightforward manner.
We now add $v$ to $\rmSet$ representing $F$, we insert its token $rt+1$ to $\tok$, and we set $\vtok{v} := rt + 1$.
We then remove $v$ from $\reducedDegreeBucket{\reducedDegree{v}}$
and update $\reducedEmptydb$ accordingly.
We can bound $rt$ from above as follows
\begin{flalign}
rt & \leq r+|F|-1 =&\\
     & = \reducedDegree{v} + k + 2(|F|-1) + 1 + |F|-1
    \leq &\\
     & = \reducedDegree{v} + k + 3|F|-2
    \leq &\\
     & \leq n-1 + n + 3n -2 \leq&\\
     & \leq 5n-3,
     \end{flalign}
where Equation (5.5) implies that $rt+1$ is a valid token position.

To fix the reduced degrees, we iterate over $i \in [l, rt] \cap [n]$ and for every $u \in \reducedDegreeBucket{i}$,
we fix its reduced degree by setting $\reducedDegree{u}=\fixRdeg{u}$ and updating reduced degree buckets $\reducedDegreeBucket{}$ and $\reducedEmptydb$, in $\mathcal{O}(|F|)$ total time per vertex.
    Then, for every $u \in \longArcs{v}$ if $u \notin \rmSet$ we fix reduced degree of $u$ analogously. The pseudocode is given in Algorithm~\ref{fig:remove_vertex_fvst}.
    In total, the $\REMOVEVERTEX{v}$ operation takes $\mathcal{O}( (k +  |F| + 1) \cdot (\maxdb{T} + 1) \cdot (|F| + 1)^2 + (|\longArcs{v}| + 1) \cdot (|F| + 1) ) = \mathcal{O}((k + 1) \cdot (|F|+1)^2 \cdot (\maxdb{T} + 1) + (|F| + 1)(|\longArcs{v}| + 1) )$ time, where we bound $|\reducedDegreeBucket{i}|$ using Observation~\ref{tokenized_buckets_size_fvst}.

    We now prove that the invariants of the $\dsRemove{n}{k}$ data structure are satisfied. The second invariant is clearly satisfied, as we added one new token corresponding to $v$ at
    a free and valid position, while the other tokens did not change their positions. Let us move on to the third invariant, which could have been affected by the above operations only for vertex $v$, as the other tokens did not change their positions.
    To prove the third invariant, let $d$ be the value of $\reducedDegree{v}$ before modifications and $t_d = \ctok{d}$, but without considering the new token at position $rt + 1$. Let $F$ be the set of removed vertices after $v$ is removed, i.e., $v \in F$. Using the bound on $rt$ from Equation (5.3) we obtain:
\begin{flalign*}
|d_T(v) - \vtok{v}| & = |d_T(v) - (rt + 1)| \leq &\\
     & |d_T(v) - d_{\noset{T}{(F \setminus \{ v \})}}(v)| + |d_{\noset{T}{(F \setminus \{ v \})}}(v) - (rt + 1)| \leq &\\
     & |F| - 1 + |d - t_d - (rt + 1)| \leq &\\
     & |F| - 1 + |rt-d|+|-t_d-1| \leq &\\
     & |F| - 1 + (k+3|F|-2) +(t_d+1) = & \\
     &  k + 4|F| - 2 +  t_d \leq &\\
     & 5|F| + k - 3 \leq 6k - 3.
     \end{flalign*}

It remains to prove the first invariant. Let us consider a vertex
$u \in V(\noset{T}{F})$ and let $d_u$ be the reduced degree of
$u$ before any modifications related to removing $v$. If $d_u \in
[l,rt]$, then the algorithm fixes the reduced degree of $u$ when iterating through $i \in [l,rt]$. Let us now consider the case when
$d_u < l$. Then, by Observation~\ref{monotonic_tokenized_degrees} (Point 3), $d_T(v) - d_T(u) \geq k$.
Thus, if $vu \in E(\noset{T}{F \setminus \{ v \} })$ then $u \in \longArcs{v}$, and as a consequence the reduced degree of $u$ gets fixed by iterating through $\longArcs{v}$. If, on the other hand, $uv \in E(\noset{T}{F \setminus \{ v \} })$, then the reduced
degree of $u$ needs not to be fixed, as the new token is at position larger than $d_u$. It remains to consider the case when $d_u > rt$. Similarly as above, by Observation~\ref{monotonic_tokenized_degrees} (Point 3), $uv \in E(\noset{T}{F \setminus \{ v \} })$ implies
$u \in \longArcs{v}$ and the reduced degree of $u$ is fixed by the algorithm. If, on the other hand, $vu \in E(\noset{T}{F \setminus \{ v \} })$, then the new token increases the value of $\ctok{\reducedDegree{u}}$ by $1$, so the reduced degree of $u$ satisfies the first invariant and does not need to be updated.
\newline
\newline
\textbf{Vertex Restoration.}\\
    We now move on to describing the last operation: $\RESTOREVERTEX{v}$. The pseudocode for this operation is shown in Algorithm~\ref{fig:restore_vertex_fvst}.
    The procedure $\RESTOREVERTEX{v}$ first removes $v$'s token $t=\vtok{v}$ from $\tok$.
    Then, it removes vertex $v$ from set $\rmSet$ representing the set of removed vertices $F$. Then, it restores the reduced
    degree of $v$ by setting $\reducedDegree{v}=\fixRdeg{v}$,
    adding $v$ to $\reducedDegreeBucket{\reducedDegree{v}}$ and
    updating $\reducedEmptydb$ accordingly if necessary.
    Then, the procedure recomputes the reduced degrees of the
    other vertices in the following way, analogous to the one for the vertex removal operation. Here, by $F$ we denote the set $F$ after the restoration of $v$, i.e., $v \notin F$.
    Let $l = \min(\reducedDegree{v} - k - 2|F| - 1, t)$ and $r = \max(\reducedDegree{v} + k + 2|F| + 1, t)$.
    The procedure iterates over $i \in [l, r] \cap [n]$ and for each $u \in \reducedDegreeBucket{i}$ it fixes its reduced degree by setting $\reducedDegree{u}=\fixRdeg{u}$, moving $u$ to $\reducedDegreeBucket{\fixRdeg{u}}$ and updating $\reducedEmptydb$ accordingly, in total time $\mathcal{O}(|F|)$ per vertex. Finally, it iterates over $u \in \longArcs{v}$ and fixes the reduced degree of $u$ analogously, only if $u \notin \rmSet$.
    Observe, that
\begin{flalign*}
    |l - r| & \leq 2k+4|F|+2 + |t - \reducedDegree{v}| \leq &\\
    & 4|F| + 2k + 2 + |t - d_{\noset{T}{F}}(v)| + |F| \leq &\\
    & 5|F| + 2k + 2 + |t - d_T(v)| + |d_T(v) - d_{\noset{T}{F}}(v)| \leq &\\
    & 6|F| + 2k + 2 + 6k - 3 = 6|F| + 8k - 1 \\
\end{flalign*}
    Hence, by Observation~\ref{tokenized_buckets_size_fvst}, the procedure takes $\mathcal{O}((|F| + k + 1) \cdot (\maxdb{T} + 1) \cdot (|F| + 1)^2 + (|\longArcs{v}| + 1) \cdot (|F| + 1)) = \mathcal{O}((k + 1) \cdot (|F|+1)^2 \cdot (\maxdb{T} + 1)   + (|F| + 1)(|\longArcs{v}| + 1) )$ time.

    We now prove that the invariants of a $\dsRemove{n}{k}$ data structure are satisfied.
    The second and the third invariant are trivially satisfied.
    We move on to proving the remaining first invariant.
    For a vertex $u \in V(\noset{T}{F \cup \{ v \} })$ let $\reducedDegreeBefore{u}$ denote its reduced degree before any modifications related to vertex restoration took place. Similarly, let $\ctokBefore{d}$ denote the number of tokens up until position $d$ including token $t$, i.e., before  removing $t$ from $\tok$.
    It is straightforward that the first invariant holds for vertices $u$ such that $\reducedDegreeBefore{u} \in [l,r]$.

    So let us consider a
    vertex $u$ such that $\reducedDegreeBefore{u} \notin [l,r]$.
    By the definition of $l$ and $r$ it holds that $|\reducedDegree{v}-\reducedDegreeBefore{u}| \geq k+2|F|+2$.
    The analysis now splits into two cases, one when
    $\reducedDegreeBefore{u} < l$ and one when $\reducedDegreeBefore{u} > r$.

    Let us first consider the case when $\reducedDegreeBefore{u} < l$. In that case, $\reducedDegreeBefore{u} < \reducedDegree{v}$. Moreover, by the first invariant, the difference can be bounded from above as follows.
    \begin{flalign*}
    \reducedDegree{v}-\reducedDegreeBefore{u} & = d_{\noset{T}{F}}(v)+\ctok{\reducedDegree{v}} - d_{\noset{T}{F \cup \{ v \}}}(u)-\ctokBefore{\reducedDegreeBefore{u}} & \\
    & \leq d_{\noset{T}{F}}(v) - d_{\noset{T}{F \cup \{ v \}}}(u)  +|F| & \\
    & \leq d_{\noset{T}{F}}(v)- d_{\noset{T}{F}}(u)+|F|+1.
    \end{flalign*}
    As a consequence, we obtain that $d_{\noset{T}{F}}(v)- d_{\noset{T}{F}}(u) \geq k+|F|$ implying that
    $d_T(v)-d_T(u) \geq k$ by Observation~\ref{degrees_after_removal}.
    So if $vu \in E(T)$, then $u \in \longArcs{v}$ and the reduced degree of $u$ is correctly fixed by the
    algorithm. On the other hand, if $uv \in E(T)$, then
    $d_{\noset{T}{F}}(u)=d_{\noset{T}{F \cup \{ v \}}}(u)$ and
    $ \ctokBefore{\reducedDegreeBefore{u}} = \ctok{\reducedDegreeBefore{u}}$, so $\reducedDegreeBefore{u}$
    satisfies the first invariant also after restoring $v$, and thus it needs not to be modified.

    Let us now consider the case when $\reducedDegreeBefore{u} > r$. In that case, $\reducedDegreeBefore{u} > \reducedDegree{v}$. Moreover, by the first invariant, the difference can be bounded from above as follows.
    \begin{flalign*}
    \reducedDegreeBefore{u}-\reducedDegree{v} & = d_{\noset{T}{F \cup \{ v \}}}(u)+\ctokBefore{\reducedDegreeBefore{u}} - d_{\noset{T}{F}}(v) -\ctok{\reducedDegree{v}} & \\
    &\leq d_{\noset{T}{F \cup \{ v \}}}(u) -  d_{\noset{T}{F}}(v)+|F|+1 & \\
    &\leq d_{\noset{T}{F}}(u) - d_{\noset{T}{F}}(v)+|F|+2.
    \end{flalign*}
    As a consequence, we obtain that $d_{\noset{T}{F}}(u)- d_{\noset{T}{F}}(v) \geq k+|F|$ implying that
    $d_T(u)-d_T(v) \geq k$.
    So if $uv \in E(T)$, then $u \in \longArcs{v}$ and the reduced degree of $u$ is correctly fixed by the
    algorithm. On the other hand, if $vu \in E(T)$, then
    $d_{\noset{T}{F}}(u)=d_{\noset{T}{F \cup \{ v \}}}(u)-1$ and
    $ \ctokBefore{\reducedDegreeBefore{u}} = \ctok{\reducedDegreeBefore{u}}+1$, so again $\reducedDegreeBefore{u}$ satisfies the first invariant after
    the restoration of $v$ and thus needs no modifications.
\end{proof}

\subsection{Ensuring Fast Running Times of the $\dsRemove{n}{k}$
Methods in the Promise Model}\label{sub:dremp}

In this subsection we introduce the $\dsRemovePromise{n}{k}$ data structure.
It turns out that if we remove a bounded number of vertices contributing many $k$-long back arcs from the tournament, the remaining tournament has bounded number of $k$-long back arcs.
Most importantly, the removed vertices need to be in \textbf{every} small feedback vertex set of $T$ and thus we can safely remove them and operate on the smaller subtournament.
Such vertices are defined as the $k$-heavy set of a tournament in Definition~\ref{k_heavy_vertices_def}.

Next, we define the decomposition of a tournament $T$ that removes its $k$-heavy set.
The $\dsRemovePromise{n}{k}$ maintains such a decomposition by maintaining the $\dsRemove{n}{k}$ and allowing only initialization and $\REVERSEARCDREMP{u, v}$ operation.
The idea behind such a wrapper is to use it during updates (arc reversals) and use its maintained $\dsRemove{n}{k}$ during queries in the proof of Theorem~\ref{thm:FvstPromise}.
As long as we do not restore vertices removed by $\dsRemovePromise{n}{k}$, this approach ensures fast running times of $\dsRemove{n}{k}$ operations during such queries.

\begin{definition}[$k$-heavy set]\label{k_heavy_vertices_def}
    Let $T$ be a tournament, $k \in \mathbb{N}$ and $G_{\LONG}$
    be the $k$-long graph of $T$.
    The set
    $\heavy{T}{k} := \{v \in V(G_{\LONG})$ $:$
    $\degree{G_{\LONG}}{v} > k\}$
    is called the $k$-\emph{heavy set} of $T$.
\end{definition}

The lemma below explains the meaning and the importance
of the $k$-\emph{heavy set} with regard to $\FVST$ problem.
In essence, any feedback vertex set of size at most
$k$ contains $\heavy{T}{k}$ and if we remove the
vertices $\heavy{T}{k}$, only a small number of $k$-long
back arcs remains.
The lemma is stated in a slightly more generalized form,
because it is used in the next subsection as well.

\begin{lemma}\label{vertex_cover_fvst}
    Let $T$ be a tournament, let $k \in \mathbb{N}$ and let
    $G_{\LONG}$ be the $k$-long graph of $T$.
    Let $k \geq l \in \mathbb{N}$ and let $F$ be a feedback vertex
    set of $T$ of size at most $l$. Then $F$ is a vertex cover of
    $G_{\LONG}$, $\heavy{T}{k} \subseteq F$ and
    $|E(G_{\LONG}[V(G_{\LONG}) \setminus \heavy{T}{k}])| \leq k \cdot l$.
\end{lemma}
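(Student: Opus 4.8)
The plan is to establish the three assertions in the order stated, since each builds on the previous one. The only genuinely new ingredient is the vertex cover claim; the remaining two are the standard vertex cover kernelization observations specialized to $G_{\LONG}$.

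First I would prove that $F$ is a vertex cover of $G_{\LONG}$. Fix an edge $\{u,v\} \in E(G_{\LONG})$. By Definition~\ref{k_long_back_arcs_graph_def} one of $uv,vu$ is a $k$-long back arc, say $uv$, so that $\dist_T(u,v) = d_T(u) - d_T(v) \geq k$. Suppose toward a contradiction that $u,v \notin F$, so that $uv \in E(\noset{T}{F})$. Since $|F| \leq l \leq k$, Observation~\ref{degrees_after_removal} gives $\dist_{\noset{T}{F}}(u,v) \geq \dist_T(u,v) - |F| \geq k - l \geq 0$, hence $d_{\noset{T}{F}}(u) \geq d_{\noset{T}{F}}(v)$ and $uv$ is still a back arc in $\noset{T}{F}$. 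By Lemma~\ref{back_edge_triangle} this back arc lies on a triangle of $\noset{T}{F}$, contradicting that $F$ is a feedback vertex set of $T$. Therefore $F$ meets $\{u,v\}$, and since the edge was arbitrary, $F$ is a vertex cover of $G_{\LONG}$.

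Next I would derive $\heavy{T}{k} \subseteq F$ from the cover property. Take any $v \in \heavy{T}{k}$, so $\degree{G_{\LONG}}{v} > k$ by Definition~\ref{k_heavy_vertices_def}. If $v \notin F$, then since $F$ covers every edge incident to $v$, all of its more than $k$ neighbours must lie in $F$, forcing $|F| > k \geq l$, contradicting $|F| \leq l$. Hence $v \in F$. Finally, for the edge bound, write $H = \heavy{T}{k}$ and $G' = G_{\LONG}[V(G_{\LONG}) \setminus H]$. Because $H \subseteq F$, the set $F \setminus H = F \cap V(G')$ is a vertex cover of $G'$ with $|F \setminus H| \leq |F| \leq l$. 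Every vertex outside $H$ has degree at most $k$ in $G_{\LONG}$, hence at most $k$ in the induced subgraph $G'$. Since each edge of $G'$ has an endpoint in the cover $F \setminus H$, summing degrees over this cover yields $|E(G')| \leq k\,|F \setminus H| \leq k\,l$, as required.

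The main obstacle is the first part: guaranteeing that a $k$-long back arc survives removal of $F$ as a back arc, and thus still spans a triangle. This is exactly where the hypothesis $l \leq k$ is used, together with the additive $|F|$-distortion of in-degree distances from Observation~\ref{degrees_after_removal}; both the ``$k$-long'' threshold and the bound $l \leq k$ are essential, and the argument would fail if the removed set could be larger than the arc length. Once this is in place, the kernelization steps are routine and require no further combinatorial input about tournaments beyond the cover property already obtained.
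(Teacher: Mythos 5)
Your proof is correct and follows essentially the same route as the paper: the vertex cover claim is established exactly as in the paper's proof (a surviving $k$-long back arc would remain a back arc in $\noset{T}{F}$ by Observation~\ref{degrees_after_removal} since $|F|\leq l\leq k$, and hence lie on a triangle by Lemma~\ref{back_edge_triangle}, contradicting acyclicity). The only difference is that you spell out the standard vertex-cover kernelization steps (high-degree vertices forced into the cover, and the $k\cdot l$ edge bound via degree summation over the residual cover) which the paper states in summary with a citation to Alman et al.
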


\begin{proof}
    For the sake of contradiction, assume there exists
    $e = \{v, u\} \in E(G_{\LONG})$, such that $v, u \notin F$.
    Without loss of generality, assume $vu \in E(T)$.
    Then, $\dist_T(v,u) \geq k \geq l$.
    Thus, $\dist_{\noset{T}{F}}(v,u) \geq 0$, by
     Observation~\ref{degrees_after_removal}. Hence, $vu \in E(\noset{T}{F})$ is a still a back arc in $\noset{T}{F}$.
    This contradicts Lemma~\ref{back_edge_triangle}.
    The rest of the statement follows from standard kernelization
    arguments for vertex cover described in detail in~\cite{Alman}.
    These arguments state that
    the vertices of degrees larger than $k$ must belong to a vertex
    cover
    of size at most $l \leq k$. If we remove vertices of degrees
    larger than $k$, we are left with at most $k l$ edges,
    as each vertex of the vertex cover can cover at most $k$
    edges.
\end{proof}

\begin{definition}[$k$-long back arcs decomposition]\label{k_long_back_arcs_decomposition}
    Let $T$ be a tournament and $k \in \mathbb{N}$. A tuple $(T,  \heavy{T}{k}, \noset{T}{\heavy{T}{k}})$ is called the $k$-\emph{long back arcs decomposition} of $T$.
\end{definition}

\begin{lemma}[Definition of $\dsRemovePromise{n}{k}$]\label{DREMP_def_lemma}
    For every $n \in \mathbb{N}$ and $k \in [n + 1]$ there exists a data structure $\dsRemovePromise{n}{k}$ that for a dynamic tournament $T$ on $n$ vertices\footnoteref{note1} altered by arc reversals maintains the $k$-long back arcs decomposition $(T,  \heavy{T}{k}, \noset{T}{\heavy{T}{k}})$ of $T$, under the promise that tournament $T$ always has a feedback vertex set of size at most $k$. The data structure supports the following operations: 
    \begin{enumerate}
        \item $\INITIALIZEDREMP{T}{k}$ -- initializes the data structure with tournament $T$ and parameter $k$, in time $\mathcal{O}(n^2)$
        \item $\REVERSEARCDREMP{v, u}$ -- reverses arc between vertices $v, u \in V(T)$ in $T$, in time $\mathcal{O}(k^5)$
    \end{enumerate}
\end{lemma}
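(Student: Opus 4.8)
The plan is to implement $\dsRemovePromise{n}{k}$ as a thin wrapper that owns one internal copy of $\dsRemove{n}{k}$ and maintains the single invariant that its set of removed vertices equals the $k$-heavy set, i.e. $F = \heavy{T}{k}$. Since the promise gives $\FVST(T) \le k$, applying Lemma~\ref{vertex_cover_fvst} with $l = \FVST(T)$ and a minimum feedback vertex set $F_0$ yields $\heavy{T}{k} \subseteq F_0$, hence $|\heavy{T}{k}| \le \FVST(T) \le k$ and the internal structure is always in a legal state (at most $k$ vertices removed). For $\INITIALIZEDREMP{T}{k}$ I would scan all $\binom{n}{2}$ arcs to build the $k$-long graph of $T$, read off $\heavy{T}{k}$ as the vertices whose degree in that graph exceeds $k$, and then call $\dsRemove{n}{k}.\INITIALIZEDREM{T}{k}{\heavy{T}{k}}$; all of this is $\mathcal{O}(n^2)$. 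The decomposition $(T, \heavy{T}{k}, \noset{T}{\heavy{T}{k}})$ is then precisely what the internal structure represents: $T$ explicitly, $\heavy{T}{k}$ as $\rmSet$, and $\noset{T}{\heavy{T}{k}}$ implicitly.

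For $\REVERSEARCDREMP{v, u}$ the idea is to flip the arc inside $\dsRemove{n}{k}$ first and only afterwards repair the heavy set. Concretely I would call $\dsRemove{n}{k}.\REVERSEARCDREM{v, u}$, which reverses the arc in the full tournament, refreshes all internal data including the $k$-long graph stored in $\longArcs{}$, and returns the list $L$ of $k$-long back arcs whose status changed, with $|L| \le 12\maxdb{T} + 1$. This call keeps $F$ fixed at the \emph{old} $\heavy{T}{k}$, a legal removed set throughout. Only endpoints of arcs in $L$ can have changed their $k$-long-graph degree; let $C$ be this candidate set, so $|C| = \mathcal{O}(\maxdb{T})$. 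Since $\longArcs{}$ depends only on $T$, the subsequent vertex updates will not disturb it, so the degrees $|\longArcs{w}|$ I read for the comparisons stay valid. For each $w \in C$ I compare $|\longArcs{w}|$ against $k$: if $|\longArcs{w}| \le k$ but $w \in F$ then $w$ ceased to be heavy and I call $\RESTOREVERTEX{w}$, and if $|\longArcs{w}| > k$ but $w \notin F$ then $w$ became heavy and I call $\REMOVEVERTEX{w}$. Running all restorations before all removals keeps $|F|$ bounded by $\max(|\heavy{T}{k}|,|\heavy{\overline{T}}{k}|) \le k$, so every internal call is legal; and since vertices outside $C$ keep their degrees, $F$ equals $\heavy{\overline{T}}{k}$ afterwards and the invariant is re-established.

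The heart of the running-time argument, and the step I expect to be the main obstacle, is bounding $|\longArcs{w}|$ at the moment $w$ is actually removed or restored, because a priori a heavy vertex can be incident to $\Theta(n)$ long back arcs while the cost of $\REMOVEVERTEX{w}$/$\RESTOREVERTEX{w}$ carries a factor $(|F|+1)(|\longArcs{w}|+1)$. The saving observation is that one reversal changes $d_T(v)$ and $d_T(u)$ by exactly one, so it can flip the $k$-long status of an arc only when the arc's other endpoint lies in one of the six degree buckets around $d_T(v) \pm k$ or $d_T(u) \pm k$; hence the $k$-long-graph degree of $v$ and $u$ shifts by at most $\mathcal{O}(\maxdb{T})$ and that of every other vertex by at most $2$. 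Consequently, whenever $\REMOVEVERTEX{w}$ fires, $w$ was not in $F$ and so had degree at most $k$ before, giving $|\longArcs{w}| = \mathcal{O}(k + \maxdb{T}) = \mathcal{O}(k)$; and whenever $\RESTOREVERTEX{w}$ fires, $|\longArcs{w}| \le k$ by the triggering condition itself.

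To assemble the bound I would use Lemma~\ref{small_buckets_fvst} with the promise to get $\maxdb{T} \le 2k+1 = \mathcal{O}(k)$, and recall $|F| \le k$. Then the single $\REVERSEARCDREM{v,u}$ costs $\mathcal{O}((k+1)^3(\maxdb{T}+1)) = \mathcal{O}(k^4)$; there are $|C| = \mathcal{O}(\maxdb{T}) = \mathcal{O}(k)$ reconciliation calls, each a $\REMOVEVERTEX{\cdot}$ or $\RESTOREVERTEX{\cdot}$ costing $\mathcal{O}((k+1)(|F|+1)^2(\maxdb{T}+1) + (|F|+1)(|\longArcs{\cdot}|+1)) = \mathcal{O}(k^4)$ by the degree bound above. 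Summing yields the claimed worst-case update time $\mathcal{O}(k^5)$, while correctness of the maintained decomposition follows from the invariant $F = \heavy{T}{k}$ being restored after every reversal.
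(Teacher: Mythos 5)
Your proposal is correct and follows essentially the same route as the paper's proof: a wrapper maintaining the invariant $F=\heavy{T}{k}$, calling $\dsRemove{n}{k}.\REVERSEARCDREM{v,u}$ first, reconciling only the endpoints of the returned changed-arc list (restorations before removals to keep $|F|\le k$), and bounding $|\longArcs{w}|$ at removal time by $k$ plus the $\mathcal{O}(\maxdb{T})$ degree shift, with Lemma~\ref{small_buckets_fvst} and the promise giving $\maxdb{T}=\mathcal{O}(k)$ and hence the $\mathcal{O}(k^5)$ total. No gaps to report.
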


\begin{algorithm}
    \SetKwInOut{Input}{Operation}
	\SetKwInOut{Output}{Output}
    \Input{$\dsRemovePromise{n}{k}.\REVERSEARCDREMP{v, u}$}
    \BlankLine

    $dL \leftarrow \dsRemove{n}{k}.\REVERSEARCDREM{v, u}$ \tcp*[r]{List of newly added/removed $k$-long back arcs in $T$}
    $dH \leftarrow \heavy{T}{k}$ $\triangle$ $\heavy{\overline{T}}{k}$ \tcp*[r]{Calculated using $dL$}
    $dH_{res} \leftarrow$ vertices that need to be removed from $\heavy{T}{k}$ to obtain $\heavy{\overline{T}}{k}$\;
    $dH_{rem} \leftarrow$ vertices that need to be added to $\heavy{T}{k}$ to obtain $\heavy{\overline{T}}{k}$\;

    \For{$w \in dH_{res}$}{
        $\dsRemove{n}{k}.\RESTOREVERTEX{w}$\;
    }

    \For{$w \in dH_{rem}$}{
        $\dsRemove{n}{k}.\REMOVEVERTEX{w}$\;
    }

    \caption{Pseudocode for $\REVERSEARCDREMP{v, u}$ operation of the $\dsRemovePromise{n}{k}$ maintaining the $k$-long back arcs decomposition $(T, \heavy{T}{k}, \noset{T}{\heavy{T}{k}})$ of tournament $T$}
    \label{fig:update_impl_fvst}
\end{algorithm}

\begin{proof}
    The data structure internally maintains $\dsRemove{n}{k}$ data structure of Lemma~\ref{DREM_def_lemma} for $T$ and $F=\heavy{T}{k}$. By Lemma~\ref{vertex_cover_fvst}, set $\heavy{T}{k}$ has size at most $k$ at all times.
    By Lemma~\ref{DREM_def_lemma} data structure $\dsRemove{n}{k}$ can be initialized in $\mathcal{O}(n^2)$ time.
    This implies the implementation of $\dsRemovePromise{n}{k}.\INITIALIZEDREMP{T}{k}$ operation in $\mathcal{O}(n^2)$ time, since we can easily calculate set $\heavy{T}{k}$ in time $\mathcal{O}(n^2)$.

    Let us now move to the $\REVERSEARCDREMP{v, u}$ operation. The pseudocode is presented in Algorithm~\ref{fig:update_impl_fvst}.
    Without loss of generality, let $uv \in E(T)$ be the arc we want to reverse.
    Let $(\overline{T},  \heavy{\overline{T}}{k}, \noset{\overline{T}}{\heavy{\overline{T}}{k}})$ be the $k$-long back arcs decomposition of $\overline{T}$, which is a tournament $T$ with arc $uv$ reversed.
    Invoke the $\dsRemove{n}{k}.\REVERSEARCDREM{v, u}$ operation and let $dL$ be the returned list of all newly added/removed $k$-long back arcs in $T$ of size at most $12 \maxdb{T} + 1$.
    By Lemma~\ref{small_buckets_fvst}, the invocation takes $\mathcal{O}(k^4)$ time.
    Notice, that $|dL| \leq 12\maxdb{T} + 1 \leq 12(2k + 1) + 1 = 24k + 13$, by Lemma~\ref{small_buckets_fvst}.
    Next, calculate list $dH := \heavy{T}{k}$ $\triangle$ ${\heavy{\overline{T}}{k}}$.
    This can be done by iterating over $e \in dL$ and checking the degrees in the $k$-long graph $\overline{G}$ of $\overline{T}$ (stored as sizes of updated array sets $\longArcs{\cdot}$) for both endpoints of $e$. 
    Then, the list $dH$ is partitioned into two lists $dH_{res}$ and $dH_{rem}$, depending on whether the vertex is in $\heavy{T}{k}$ (to be removed from the heavy set) or in ${\heavy{\overline{T}}{k}}$ (to be added to the heavy set).
    First, the algorithm iterates over $w \in dH_{res}$ and
    invokes $\dsRemove{n}{k}.\RESTOREVERTEX{w}$. Then, the
    algorithm iterates over $w \in dH_{rem}$ and invokes
    $\dsRemove{n}{k}.\REMOVEVERTEX{w}$.
    In this way, the invariant $|F| \leq k$ does not
    get violated.

    By Lemma~\ref{DREM_def_lemma}, every such operation (for vertex $w$) takes $\mathcal{O}((k + 1)^3 \cdot (\maxdb{\overline{T}} + 1) + (|\longArcs{w}| + 1) \cdot (k + 1)) = \mathcal{O}(k^4 + (|\longArcs{w}| + 1) \cdot (k + 1))$ time, by Lemma~\ref{small_buckets_fvst}.
    Notice, that if $w \in dH_{res}$ then $|\longArcs{w}| = \degree{\overline{G}}{w} \leq k$. On the other hand, if $w \in dH_{rem}$ then $|\longArcs{w}| = \degree{\overline{G}}{w} \leq k + |dL| \leq k + 24k + 13 = 25k + 13$, because $w \notin \heavy{T}{k}$.
    Thus, a single vertex restoration/removal takes $\mathcal{O}(k^4)$ time.
    Hence, the whole $\REVERSEARCDREMP{u, v}$ operation takes $\mathcal{O}(k^5)$ time, because $|dH| \leq k$.
\end{proof}

\subsection{Data Structure for Dynamic $\FVST$ Problem}\label{sub:fvst_branching}

In this subsection we present the main result of
Section~\ref{sec:dynamicFVST}, i.e., we prove
Theorem~\ref{thm:FvstPromise} by showing the dynamic parameterized
algorithm determining if a dynamic tournament admits a feedback
vertex set of size bounded by the parameter.
Our approach is to use the $\dsRemovePromise{n}{g(K)}$ data
structure (where $K$ is the parameter of the problem) to maintain
the $g(K)$-long back arcs decomposition of
the maintained tournament $T$. This data structure internally
maintains the data structure $\dsRemove{n}{g(K)}$, which allows
vertex removals and restorations. This is crucial for queries.
Upon a query, our approach is to run a standard branching
algorithm shown in Algorithm~\ref{fig:branchingFVSToverview}
for finding if
$\FVST(T) \leq K$. This algorithm finds a triangle in the
tournament, and branches trying to remove one of the triangles
vertices. This requires vertex removals and restorations (provided
by $\dsRemove{n}{g(K)}$), but also a procedure for detecting
triangles. To support triangle detection we follow the general
approach from Section~\ref{sec:triaQ}, but we have to adapt the
procedures to be able to find triangles in $\noset{T}{F}$, where
$F$ is the set of vertices currently removed. The procedures
presented in Section~\ref{sec:triaQ} need access to the empty set
of the tournament where they look for a triangle, which is now
$\noset{T}{F}$. So before we move on to describing the procedures,
we relate the empty set of $\noset{T}{F}$ with the set
$\reducedEmptydb$ maintained by $\dsRemove{n}{g(K)}$.

\begin{lemma}\label{tokenized_empty_fvst}
    Let $\dsRemove{n}{k}$ be the data structure of Lemma~\ref{DREM_def_lemma} maintaining tournament $T$ and set $F$, where $k \in [n + 1]$.
    Let $\noset{\emptydb}{F}$ be the empty set of $\noset{T}{F}$.
    Then $|\noset{\emptydb}{F}| \leq |\reducedEmptydb| \leq |\noset{\emptydb}{F}| + 4|F|$.
    Moreover, set $\noset{\emptydb}{F}$ can be calculated in time $\mathcal{O}((|\reducedEmptydb| + 1) \cdot (|F| + 1))$.
\end{lemma}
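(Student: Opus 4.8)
The plan is to compare the partition of $V(\noset{T}{F})$ into reduced-degree buckets with its partition into genuine in-degree buckets, using invariant $1$ of Lemma~\ref{DREM_def_lemma}. Define $\phi(i) = i - \ctok{i}$ for $i \in [n]$. Invariant $1$ states exactly that every vertex $v$ satisfies $d_{\noset{T}{F}}(v) = \phi(\reducedDegree{v})$, so each non-empty reduced bucket $\reducedDegreeBucket{i}$ is contained in the genuine bucket $\db{\noset{T}{F}}[\phi(i)]$. By Lemma~\ref{tokenized_ops_fvst} applied to $K = \tok$, the map $\phi$ is non-decreasing and each fibre $\phi^{-1}(d) = \{i : \phi(i) = d\}$ is a non-empty interval contained in $[d, d + |\tok|]$; moreover $|\tok| = |F|$ by invariant $2$. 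Writing $m = n - |F|$, every in-degree of $\noset{T}{F}$ lies in $[m]$ and every reduced degree lies in $[n]$.

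For the two inequalities I count non-empty buckets. Let $N_A$ and $N_B$ be the numbers of non-empty genuine and reduced buckets, so that $|\noset{\emptydb}{F}| = m - N_A$ and $|\reducedEmptydb| = n - N_B$, whence $|\reducedEmptydb| - |\noset{\emptydb}{F}| = |F| - (N_B - N_A)$. The restriction of $\phi$ to the set of used reduced degrees surjects onto the set of used in-degrees, since each used in-degree $d$ is realised by some vertex $v$ whose reduced degree lies in $\phi^{-1}(d)$; hence $N_B \geq N_A$, which gives the upper bound $|\reducedEmptydb| \leq |\noset{\emptydb}{F}| + |F| \leq |\noset{\emptydb}{F}| + 4|F|$. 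For the lower bound I group the used reduced degrees by their $\phi$-image: each used in-degree $d$ accounts for at least one used reduced degree in $\phi^{-1}(d)$, and the surplus $N_B - N_A$ equals the total number of \emph{extra} used reduced degrees sharing a fibre, which is bounded by the total number of flat steps of $\phi$, i.e. by the number of token positions, namely $|F|$. Thus $N_B - N_A \leq |F|$ and $|\noset{\emptydb}{F}| \leq |\reducedEmptydb|$.

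To compute $\noset{\emptydb}{F}$ I use the characterisation $d \in \noset{\emptydb}{F} \iff \phi^{-1}(d) \subseteq \reducedEmptydb$, immediate from the bucket-containment above. For every $i \in \reducedEmptydb$ I evaluate $\phi(i) = i - \ctok{i}$ in time $\mathcal{O}(|F|)$ by counting token positions at most $i$, discarding any value outside $[m]$; this produces all candidate empty in-degrees, and none is missed, because the left endpoint of the (entirely empty) fibre of an empty in-degree lies in $\reducedEmptydb$ and maps to it under $\phi$. After removing duplicates, for each candidate $d$ I list its fibre $\phi^{-1}(d)$ via Lemma~\ref{tokenized_ops_fvst} in $\mathcal{O}(|F|)$ time and confirm $d \in \noset{\emptydb}{F}$ by checking, in $\mathcal{O}(1)$ per element using $\reducedEmptydb$ as an array set, that every element of the fibre is empty. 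Since there are at most $|\reducedEmptydb|$ candidates, each processed in $\mathcal{O}(|F|)$ time, the total is $\mathcal{O}((|\reducedEmptydb| + 1)(|F| + 1))$.

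The \emph{main obstacle} is the lower bound $|\noset{\emptydb}{F}| \leq |\reducedEmptydb|$: the upper bound is a one-line surjectivity observation, whereas the lower bound requires the fibre/token accounting that controls how much finer the reduced-degree partition can be than the genuine one, and this is exactly where the bound $|F|$ on the number of tokens is used. The only remaining points demanding care are the index-range bookkeeping ($[m]$ versus $[n]$) and verifying that no empty in-degree is overlooked when scanning only $\reducedEmptydb$ rather than all used reduced degrees.
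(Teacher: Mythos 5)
Your proof is correct, and for the two inequalities it takes a genuinely different route from the paper. Both arguments rest on the same structural facts (invariant $1$ of Lemma~\ref{DREM_def_lemma} and the fibre description of $i \mapsto i - \ctok{i}$ from Lemma~\ref{tokenized_ops_fvst}), and your computation of $\noset{\emptydb}{F}$ is essentially the paper's algorithm: scan $i \in \reducedEmptydb$, compute the fibre of $\phi(i)$, and verify it is contained in $\reducedEmptydb$ (the paper realizes your ``after removing duplicates'' step by only accepting $i$ when it is the left endpoint of its fibre, a detail you have all the machinery for since you list the fibre anyway). The difference is in how the size bounds are obtained. The paper proves the upper bound by a per-element case analysis of $i \in \reducedEmptydb$ according to $i - \ctok{i}$ (value $-1$; value in $[n-|F|, n-1]$; singleton fibre; non-singleton fibre), yielding the stated $+4|F|$, and dispatches the lower bound with a one-line injection (the fibre of an empty in-degree is entirely empty, so pick a representative). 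You instead use the global counting identity $|\reducedEmptydb| - |\noset{\emptydb}{F}| = |F| - (N_B - N_A)$, where $N_A, N_B$ count non-empty genuine and reduced buckets: surjectivity of $\phi$ on used degrees gives $N_B \geq N_A$ and hence the \emph{sharper} upper bound $|\reducedEmptydb| \leq |\noset{\emptydb}{F}| + |F|$, while the flat-step/token accounting gives $N_B - N_A \leq |F|$ and hence the lower bound. Your version buys a better constant (which would propagate to, e.g., Corollary~\ref{corollary_empty_fvst}); the paper's version buys a trivial lower bound at the cost of a coarser case analysis for the upper one --- amusingly, the two proofs disagree about which direction is the ``easy'' one, and both are right within their own framing.
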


\begin{proof}
    By Lemma~\ref{tokenized_ops_fvst}, for every $d \in \mathbb{Z}$, there is a non-empty interval $[l_d, r_d] \subseteq \mathbb{Z}$, such that $[l_d, r_d] = \{ i \in \mathbb{Z}: i-\ctok{i}=d \}$. By Lemma~\ref{tokenized_ops_fvst}, $[l_d,r_d] \subseteq [d,d+|F|]$.
    By the first invariant, for $d \in [n-|F|]$ it holds that
    $\db{\noset{T}{F}}[d] = \bigcup_{i \in [l_d, r_d]} \reducedDegreeBucket{i}$, i.e., the vertices of in-degree $d$ in $\noset{T}{F}$ are precisely the vertices which reduced degrees range from $l_d$ to $r_d$.

    Consider $i \in \reducedEmptydb \subseteq [n]$. Since $-1 \leq i-\ctok{i} \leq n-1$ for $i \in [n]$, one of the following four cases holds:
    \begin{enumerate}
     \item $i-\ctok{i}=-1$: by Lemma~\ref{tokenized_ops_fvst}
     it holds that $[l_{-1},r_{-1}] \subseteq [-1,-1+|F|]$, so there is at most $|F|$ such values of $i \in [n]$
     \item $i-\ctok{i} \in [n-|F|,n-1]$: since $i - \ctok{i} \leq i$, there is at most $|[n-|F|,n-1]|=|F|$ such values of $i \in [n]$
     \item $i-\ctok{i} \in [n-|F|]$ and $i-\ctok{i} \neq j-\ctok{j}$ for all $j \in [n], j \neq i$: in that case for $d=i-\ctok{i}$ we get the interval $[l_d=i,r_d=i]$ of size one, thus $\db{\noset{T}{F}}[d] = \reducedDegreeBucket{i}$ implying that $d \in \noset{\emptydb}{F}$, and in consequence there is at most $|\noset{\emptydb}{F}|$ such values of $i \in [n]$
     \item $i-\ctok{i} \in [n-|F|]$ and $i-\ctok{i}=(i+1)-\ctok{i+1}$ or $i-\ctok{i}=(i-1)-\ctok{i-1}$: the condition $i-\ctok{i}=(i+1)-\ctok{i+1}$ holds if and only if there is a token at position $i+1$, in the other case the situation is similar. Thus, there is in total at most $2|F|$ such values of $i \in [n]$
    \end{enumerate}

It is easy to see that if $i-\ctok{i}\neq(i+1)-\ctok{i+1}$ and $i-\ctok{i}\neq(i-1)-\ctok{i-1}$ then $i-\ctok{i} \neq j-\ctok{j}$ for all $j \in [n], j \neq i$, because of the second invariant of $\dsRemove{n}{k}$. Thus, we considered all possible cases.

This in total gives the bound of
$|\reducedEmptydb| \leq |\noset{\emptydb}{F}| + 4|F|$.

    The inequality $|\noset{\emptydb}{F}| \leq |\reducedEmptydb|$ is straightforward.
    To calculate $\noset{\emptydb}{F}$, we iterate over $i \in \reducedEmptydb$ and if $i - \ctok{i} \in [n-|F|]$ then we calculate interval $[l_{i-\ctok{i}}, r_{i-\ctok{i}}]$ using Lemma~\ref{tokenized_ops_fvst}.
    Next, if $i = l_{i-\ctok{i}}$ then we check whether $[l_{i-\ctok{i}}, r_{i-\ctok{i}}] \subseteq \reducedEmptydb$.
    In total, the algorithm takes $\mathcal{O}((|\reducedEmptydb| + 1) \cdot (|F| + 1))$.
\end{proof}

Together with Lemma~\ref{small_empty_fvst} of Subsection~\ref{sub:fvst_prelim} (bounding the size of the empty set) the above lemma gives the following corollary.

\begin{corollary}\label{corollary_empty_fvst}
    Let $\dsRemove{n}{k}$ be the data structure of Lemma~\ref{DREM_def_lemma} maintaining tournament $T$ and set $F$, where $k \in [n + 1]$.
    Let $G$ be the $k$-long graph of $T$.
    Then $|\reducedEmptydb| \leq \FVST(\noset{T}{F}) \cdot (2(k + |F|) + |E(\induced{G}{V(\noset{T}{F})})| + 2\FVST(\noset{T}{F}) + 5) + 4|F|$.
\end{corollary}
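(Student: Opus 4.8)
The plan is to combine Lemma~\ref{tokenized_empty_fvst} with Lemma~\ref{small_empty_fvst}, but applying the latter to the subtournament $\noset{T}{F}$ rather than to $T$ itself. First I would invoke Lemma~\ref{tokenized_empty_fvst} to obtain $|\reducedEmptydb| \leq |\noset{\emptydb}{F}| + 4|F|$, where $\noset{\emptydb}{F}$ is the empty set of $\noset{T}{F}$. This already isolates the additive $4|F|$ term appearing in the statement and reduces the task to bounding $|\noset{\emptydb}{F}|$.

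Next I would apply Lemma~\ref{small_empty_fvst} to the tournament $\noset{T}{F}$ with the parameter $k' := k + |F|$. This yields $|\noset{\emptydb}{F}| \leq \FVST(\noset{T}{F}) \cdot (2(k+|F|) + |\lbackdb'| + 2\FVST(\noset{T}{F}) + 5)$, where $\lbackdb'$ denotes the set of all $(k+|F|)$-long back arcs of $\noset{T}{F}$. Substituting this into the previous bound produces an expression of exactly the right shape, so the only remaining work is to establish $|\lbackdb'| \leq |E(\induced{G}{V(\noset{T}{F})})|$, i.e.\ that the $(k+|F|)$-long back arcs of $\noset{T}{F}$ are no more numerous than the edges of the induced $k$-long graph of $T$.

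The heart of the argument, which I expect to be the main (if modest) obstacle, is this last inequality, proved by a degree-shift argument via Observation~\ref{degrees_after_removal}. Fix a $(k+|F|)$-long back arc $uv \in E(\noset{T}{F})$; since it is a back arc, $\dist_{\noset{T}{F}}(u,v) = d_{\noset{T}{F}}(u) - d_{\noset{T}{F}}(v) \geq k + |F|$. By Observation~\ref{degrees_after_removal}, $\dist_T(u,v) \geq \dist_{\noset{T}{F}}(u,v) - |F| \geq k \geq 0$, so $d_T(u) \geq d_T(v)$ and $uv$ is in fact a $k$-long back arc of $T$; here the arc direction carries over because $\noset{T}{F}$ is an induced subtournament of $T$, so $uv \in E(T)$. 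As both endpoints lie in $V(\noset{T}{F})$, the pair $\{u,v\}$ is an edge of $\induced{G}{V(\noset{T}{F})}$. Distinct arcs of a tournament have distinct unordered endpoint pairs, so the map $uv \mapsto \{u,v\}$ is injective, giving $|\lbackdb'| \leq |E(\induced{G}{V(\noset{T}{F})})|$.

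Chaining the three bounds then yields the claimed inequality directly, using only that the right-hand expression is monotone in $|\lbackdb'|$; no further estimation is needed. Two small sanity checks I would keep in mind are the degenerate case $\FVST(\noset{T}{F}) = 0$ (where $\noset{T}{F}$ is acyclic, $|\noset{\emptydb}{F}| = 0$, and the bound degrades gracefully to $|\reducedEmptydb| \leq 4|F|$) and the fact that the parameter substitution $k' = k+|F|$ is exactly what makes the $2(k+|F|)$ term in the corollary match the $2k'$ term coming out of Lemma~\ref{small_empty_fvst}.
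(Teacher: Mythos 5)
Your proposal is correct and follows essentially the same route as the paper's own proof: apply Lemma~\ref{tokenized_empty_fvst} to isolate the $4|F|$ term, apply Lemma~\ref{small_empty_fvst} to $\noset{T}{F}$ with parameter $k+|F|$, and use Observation~\ref{degrees_after_removal} to bound the number of $(k+|F|)$-long back arcs of $\noset{T}{F}$ by $|E(\induced{G}{V(\noset{T}{F})})|$. Your degree-shift argument for that last inequality is just a spelled-out version of the step the paper states tersely, so there is nothing to add.
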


\begin{proof}
    Let $\noset{\emptydb}{F}$ be the empty set of $\noset{T}{F}$.
    Let $\noset{\lbackdb}{F}$ be the set of all $(k + |F|)$-long back arcs in $\noset{T}{F}$.
    By Lemma~\ref{small_empty_fvst}, $$|\noset{\emptydb}{F}| \leq \FVST(\noset{T}{F}) \cdot (2(k + |F|) + |\noset{\lbackdb}{F}| + 2\FVST(\noset{T}{F}) + 5).$$

    By Observation~\ref{degrees_after_removal} it holds that $|\noset{\lbackdb}{F}| \leq |E(\induced{G}{V(\noset{T}{F})})|$.
    Thus, $$|\noset{\emptydb}{F}| \leq \FVST(\noset{T}{F}) \cdot (2(k + |F|) + |E(\induced{G}{V(\noset{T}{F})})| + 2\FVST(\noset{T}{F}) + 5).$$
    By Lemma~\ref{tokenized_empty_fvst}, $|\reducedEmptydb| \leq |\noset{\emptydb}{F}| + 4|F|$.
    Hence, the claim follows.
\end{proof}

In order to implement the branching algorithm
to determine if $\FVST(T) \leq K$, we introduce an analogue theorem to  Theorem~\ref{common_query_triangles} of Section~\ref{r:triangles}.
It shows how to implement method $\FTRIANGLEFVST{}$ which, given $\dsRemove{n}{k}$ maintaining tournament $T$ and set $F \subseteq V(T)$, finds a triangle in tournament $\noset{T}{F}$ or states that there are no triangles.
Similarly as in Section~\ref{r:triangles}, we define two subprocedures:
\begin{enumerate}
    \item $\MINFVST$ -- returns a vertex $v$ of minimum in-degree in $\nopref{(\noset{T}{F})}$, together with $d_{\nopref{(\noset{T}{F})}}(v)$; if $\nopref{(\noset{T}{F})} $ is empty, then it returns $null$,
    in time $t_1(T,F)$
    \item $\INCOMINGFVST{v}{l}$ -- returns $l \in \mathbb{N}$ in-neighbours of $v \in V(\nopref{(\noset{T}{F})})$ in $\nopref{(\noset{T}{F})}$ (or as many as there exist), in time $t_2(l,T,F)$ for a non-decreasing (with respect to $l$) function $t_2$
\end{enumerate}
Tournament $T$ and set $F$ shall be clear from the context.

\begin{theorem}\label{common_query_triangles_fvst}
    Let $\dsRemove{n}{k}$ be the data structure of Lemma~\ref{DREM_def_lemma} maintaining tournament $T$ and set $F$, where $k \in [n + 1]$.
    Let us assume that we are given the following two subprocedures:
    \begin{enumerate}
        \item $\MINFVST$ -- returns a vertex $v$ of minimum in-degree in $\nopref{(\noset{T}{F})}$, together with $d_{\nopref{(\noset{T}{F})}}(v)$; if $\nopref{(\noset{T}{F})}$ is empty, then it returns $null$,
        in time $t_1(T,F)$
        \item $\INCOMINGFVST{v}{l}$ -- returns $l \in \mathbb{N}$ in-neighbours of $v \in V(\nopref{(\noset{T}{F})})$ in $\nopref{(\noset{T}{F})}$ (or as many as there exist), in time $t_2(l,T,F)$ for a non-decreasing (with respect to $l$) function $t_2$
    \end{enumerate}
    Then it is possible to implement method $\FTRIANGLEFVST{}$, that returns a triangle in $\noset{T}{F}$ or states that there are no triangles, in time $\mathcal{O}(t_1(T,F) + t_2(\mindt{\nopref{(\noset{T}{F})}}, T,F) + \mindt{\nopref{(\noset{T}{F})}})$.
\end{theorem}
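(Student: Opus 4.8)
The plan is to mirror the proof of Theorem~\ref{common_query_triangles} essentially verbatim, instantiating every structural fact with the tournament $\noset{T}{F}$ in place of $T$. The procedure $\FTRIANGLEFVST{}$ will run exactly as Algorithm~\ref{fig:common_query_triangles}, except that it calls $\MINFVST$ and $\INCOMINGFVST{\cdot}{\cdot}$ wherever the original used $\MIN$ and $\INCOMING{\cdot}{\cdot}$. This is legitimate because Lemma~\ref{prefix_transitive}, Corollary~\ref{t_prime_no_0}, Fact~\ref{transitive_buckets} and Lemma~\ref{back_edge_triangle} are statements about an arbitrary tournament and hence apply equally to $\noset{T}{F}$: every triangle of $\noset{T}{F}$ lies inside $\nopref{(\noset{T}{F})}$, and $\nopref{(\noset{T}{F})}$ contains a triangle whenever it is non-empty.

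First I would invoke $\MINFVST$, in time $t_1(T,F)$, to obtain a vertex $v$. By Corollary~\ref{t_prime_no_0} and Fact~\ref{transitive_buckets} applied to $\noset{T}{F}$, if $\MINFVST$ returns $null$ then $\nopref{(\noset{T}{F})}$ is empty, so by Lemma~\ref{prefix_transitive} the tournament $\noset{T}{F}$ has no triangle and we correctly report that none exists; otherwise $0 < d_{\nopref{(\noset{T}{F})}}(v) = \mindt{\nopref{(\noset{T}{F})}}$. Then $\INCOMINGFVST{v}{1}$ returns an in-neighbour $u$ of $v$ in $\nopref{(\noset{T}{F})}$ in time $t_2(1,T,F)$, and $uv$ is a back arc of $\nopref{(\noset{T}{F})}$ since $v$ attains the minimum in-degree. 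Next, $\INCOMINGFVST{u}{d_{\nopref{(\noset{T}{F})}}(v)}$ returns a set $N$ of exactly $\mindt{\nopref{(\noset{T}{F})}}$ in-neighbours of $u$ (using $d_{\nopref{(\noset{T}{F})}}(v) \leq d_{\nopref{(\noset{T}{F})}}(u)$), in time $t_2(\mindt{\nopref{(\noset{T}{F})}}, T, F)$. Finally I would scan $N$ for a vertex $w$ with $vw \in E(\noset{T}{F})$: such $w$ must exist by the counting argument of Lemma~\ref{back_edge_triangle} (if all arcs between $N$ and $v$ were directed into $v$, then $d_{\nopref{(\noset{T}{F})}}(v)$ would exceed $\mindt{\nopref{(\noset{T}{F})}}$), and then $wuv$ is a triangle contained in $\nopref{(\noset{T}{F})} \subseteq \noset{T}{F}$.

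For the running time, the three subprocedure calls cost $t_1(T,F) + t_2(1,T,F) + t_2(\mindt{\nopref{(\noset{T}{F})}},T,F)$, and the final scan costs $\mathcal{O}(\mindt{\nopref{(\noset{T}{F})}})$ arc look-ups. Since $\dsRemove{n}{k}$ stores, by point~$1$ of Lemma~\ref{DREM_def_lemma}, the adjacency matrix of $T$ together with $\rmSet = F$ as an array set, each test $vw \in E(\noset{T}{F})$ reduces to a constant-time adjacency-matrix lookup (both endpoints already lie in $V(\noset{T}{F})$, being outputs of $\INCOMINGFVST{\cdot}{\cdot}$). As $t_2$ is non-decreasing in $l$ we have $t_2(1,T,F) \leq t_2(\mindt{\nopref{(\noset{T}{F})}},T,F)$, which yields the claimed bound $\mathcal{O}(t_1(T,F) + t_2(\mindt{\nopref{(\noset{T}{F})}},T,F) + \mindt{\nopref{(\noset{T}{F})}})$.

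The argument is essentially bookkeeping, so I do not expect a genuine obstacle; the only point needing a line of care is confirming that the arc-membership test for $\noset{T}{F}$ is answerable in constant time from the stored adjacency matrix of $T$ and $\rmSet$, and that the vertices returned by $\INCOMINGFVST{\cdot}{\cdot}$ already belong to $V(\noset{T}{F})$, so that no extra filtering against $F$ is required before forming the triangle.
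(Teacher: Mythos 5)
Your proposal is correct and matches the paper's approach exactly: the paper's own proof of Theorem~\ref{common_query_triangles_fvst} simply states that the pseudocode is identical to Algorithm~\ref{fig:common_query_triangles} and that the argument is analogous to the proof of Theorem~\ref{common_query_triangles}, which is precisely the instantiation you carry out with $\noset{T}{F}$ in place of $T$. Your extra observation that the test $vw \in E(\noset{T}{F})$ reduces to a constant-time lookup in the stored adjacency matrix of $T$ (since all candidates already lie in $V(\noset{T}{F})$) is a detail the paper leaves implicit, and it is handled correctly.
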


\begin{proof}
    The pseudocode is identical to the one presented in Algorithm~\ref{fig:common_query_triangles} for $\FTRIANGLE{}$ method.
    The proof is analogous to the proof of Theorem~\ref{common_query_triangles}.
\end{proof}

\begin{lemma}\label{empty_impl_promise_fvst}
    Let $\dsRemove{n}{k}$ be the data structure of Lemma~\ref{DREM_def_lemma} maintaining tournament $T$ and set $F$, where $k \in [n + 1]$.
    Let $\noset{\emptydb}{F}$ be the empty set of $\noset{T}{F}$.
    If $\noset{\emptydb}{F}$ is given as an array set over $[n]$ then subprocedure $\MINFVST$ can be implemented in time $\mathcal{O}(|\noset{\emptydb}{F}| + |F|)$.
\end{lemma}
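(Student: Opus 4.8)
The plan is to mirror the proof of Lemma~\ref{empty_impl_promise}, where $\MIN$ was implemented for the plain tournament, but now to carry out the entire argument on $\noset{T}{F}$ while accessing $\noset{T}{F}$ only through the implicit reduced-degree/token representation stored by $\dsRemove{n}{k}$. As in that lemma, I would first dispose of the two degenerate cases: if $\noset{T}{F}$ is empty (i.e.\ $|F| = n$, checkable in $\mathcal{O}(1)$ via $\rmSet$), return $null$; and if $|\noset{\emptydb}{F}| = 0$, then by Lemma~\ref{prefix_transitive} applied to $\noset{T}{F}$ all its degree buckets have size one, so its prefix is the whole tournament and $\nopref{(\noset{T}{F})}$ is empty, and I again return $null$. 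Both tests cost $\mathcal{O}(1)$ given $\noset{\emptydb}{F}$ as an array set.

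In the main case I would compute the prefix length $p$ of $\noset{T}{F}$ as the minimum element of $\noset{\emptydb}{F}$. This is correct by Lemma~\ref{prefix_transitive}: the first not-size-one bucket of $\noset{T}{F}$ is empty (here $P \neq V$ because $\noset{\emptydb}{F}$ is nonempty), all earlier buckets have size one, hence $p = \min \noset{\emptydb}{F}$; scanning the array-set list for the minimum costs $\mathcal{O}(|\noset{\emptydb}{F}|)$. Then, exactly as before, I scan $d' = p+1, p+2, \ldots$ testing membership in $\noset{\emptydb}{F}$ in $\mathcal{O}(1)$ per step until the first $d' \notin \noset{\emptydb}{F}$ is found; this is the minimum in-degree realised in $\noset{T}{F}$ beyond the prefix, and since every rejected index lies in $\noset{\emptydb}{F}$ the scan performs $\mathcal{O}(|\noset{\emptydb}{F}|)$ steps. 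The value returned alongside the witness is $d_{\nopref{(\noset{T}{F})}}(v) = d' - p$, because by Lemma~\ref{prefix_transitive} all $p$ prefix vertices are in-neighbours of any $v \in V(\nopref{(\noset{T}{F})})$, so removing the prefix lowers the in-degree of $v$ by exactly $p$.

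The step I expect to be the only genuine obstacle, and the sole place where this proof departs from Lemma~\ref{empty_impl_promise}, is actually producing a vertex $v$ with $d_{\noset{T}{F}}(v) = d'$, since $\dsRemove{n}{k}$ does not store the buckets $\db{\noset{T}{F}}[\cdot]$ explicitly but only the reduced-degree buckets $\reducedDegreeBucket{\cdot}$. Here I would invoke Lemma~\ref{tokenized_ops_fvst} with $K = \tok$ to list, in $\mathcal{O}(|\tok|) = \mathcal{O}(|F|)$ time, the interval $[l_{d'}, r_{d'}] \subseteq [d', d' + |F|]$ of all reduced degrees $i$ with $i - \ctok{i} = d'$. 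By the first invariant of $\dsRemove{n}{k}$ this interval is exactly the set of reduced degrees realising in-degree $d'$ in $\noset{T}{F}$, i.e.\ $\db{\noset{T}{F}}[d'] = \bigcup_{i \in [l_{d'}, r_{d'}]} \reducedDegreeBucket{i}$, the identity already established inside the proof of Lemma~\ref{tokenized_empty_fvst}. Since $d' \notin \noset{\emptydb}{F}$ this union is nonempty, so scanning the at most $|F|+1$ indices of the interval finds a nonempty $\reducedDegreeBucket{i}$ from which I extract the witness $v$, all in $\mathcal{O}(|F|)$ time.

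Finally, summing the three contributions, namely $\mathcal{O}(|\noset{\emptydb}{F}|)$ for locating $p$ and for the scan, and $\mathcal{O}(|F|)$ for translating $d'$ into a reduced degree and retrieving $v$, yields the claimed $\mathcal{O}(|\noset{\emptydb}{F}| + |F|)$ running time; correctness follows from Lemma~\ref{prefix_transitive} and the first invariant just as in Lemma~\ref{empty_impl_promise}.
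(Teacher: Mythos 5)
Your proposal is correct and follows essentially the same route as the paper's own proof: the same degenerate-case checks, locating the prefix length as $\min(\noset{\emptydb}{F})$, scanning past it for the first $d' \notin \noset{\emptydb}{F}$, and using Lemma~\ref{tokenized_ops_fvst} with $K=\tok$ to translate $d'$ into an interval of at most $|F|+1$ reduced degrees from which a witness vertex is extracted. The only (inconsequential) deviation is that you return $d'-p$ directly, whereas the paper recomputes $d_{\noset{T}{F}}(v)=\reducedDegree{v}-\ctok{\reducedDegree{v}}$; both are valid within the stated $\mathcal{O}(|\noset{\emptydb}{F}|+|F|)$ bound.
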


\begin{proof}
    If $n-|F| = 0$, we return $null$.
    If $|\noset{\emptydb}{F}| = 0$, we also return $null$, as then $\nopref{(\noset{T}{F})}$ is empty by Lemma~\ref{prefix_transitive}.
    Otherwise, let $d$ be the minimal element in $\noset{\emptydb}{F}$.
    Similarly as in the proof of Lemma~\ref{empty_impl_promise}, we find the smallest
    integer $d' \in [n-|F|]$ larger than $d$, but not present
    in $\noset{\emptydb}{F}$ in time $\mathcal{O}(|\noset{\emptydb}{F}|)$.
    Then, we return an arbitrary vertex $v \in \db{\noset{T}{F}}[d']$ and $d_{\nopref{(\noset{T}{F})}}(v)$.

    In order to extract a single vertex from $\db{\noset{T}{F}}[d']$ we calculate set $I_{d'} = \{i \in \mathbb{Z}$ $:$ $i - \ctok{i} = d'\} \subseteq [n]$, using Lemma~\ref{tokenized_ops_fvst}, and iterate over $i \in I_{d'}$ searching for a non-empty bucket $\reducedDegreeBucket{i}$. This takes $\mathcal{O}(|F|)$ time, since $|I_{d'}| \leq |F| + 1$.

    Let $\noset{P}{F}$ be the prefix of $\noset{T}{F}$.
    Notice that, $d_{\nopref{(\noset{T}{F})}}(v) = d_{\noset{T}{F}}(v) - |\noset{P}{F}|$, by Lemma~\ref{prefix_transitive}.
    Given $\noset{\emptydb}{F}$ as an array set, we can calculate $|\noset{P}{F}|$ in time $\mathcal{O}(|\noset{\emptydb}{F}|)$ by finding a minimum element in $\noset{\emptydb}{F}$, by Lemma~\ref{prefix_transitive}.
    In-degree $d_{\noset{T}{F}}(v)=\reducedDegree{v}-\ctok{\reducedDegree{v}}$ can be calculated in $\mathcal{O}(|F|)$ time using the information maintained by $\dsRemove{n}{k}$.
    Hence, the total running time of the described implementation of $\MINFVST$ is $\mathcal{O}(|\noset{\emptydb}{F}| + |F|)$.
\end{proof}

\begin{algorithm}
    \SetKwInOut{Input}{Algorithm}
	\SetKwInOut{Output}{Output}
    \Input{$\INCOMINGFVST{v}{l}$}
    \Output{A list of $l$ (or as many as there exist) incoming neighbours of $v \in \nopref{(\noset{T}{F})}$}
    \BlankLine

    $l_v \leftarrow \reducedDegree{v} - 2|\rmSet| - k - 1$\;
    $r_v \leftarrow \reducedDegree{v} + 2|\rmSet| + k + 1$\;
    $\noset{\emptydb}{F} \leftarrow$ the empty set of $\noset{T}{F}$ \tcp*[r]{Calculated using Lemma~\ref{tokenized_empty_fvst}}
    $|\noset{P}{F}| \leftarrow$ the length of the prefix of $\noset{T}{F}$\;
    $I \leftarrow \{i \in \mathbb{Z}$ $:$ $i - \ctok{i} = |\noset{P}{F}|\}$ \tcp*[r]{Calculated using Lemma~\ref{tokenized_ops_fvst}}
    $q \leftarrow \min(I)$\;

    \tcp{Break after finding $l$ incoming neighbours}
    Find in-neighbours $u$ of $v$ in $\nopref{(\noset{T}{F})}$, such that $\reducedDegree{u} > r_v$ by iterating over $\longArcs{v}$\;
    Find in-neighbours of $v$ in $\nopref{(\noset{T}{F})}$
    from $\reducedDegreeBucket{[l_v, r_v] \cap [n]}$\;
    Find the remaining in-neighbours of $v$ in $\nopref{(\noset{T}{F})}$ by iterating over
    $\reducedDegreeBucket{[q, \min(l_v, n))}$ \; 

    \Return At most $l$ incoming neighbours of $v$ in $\nopref{\noset{T}{F}}$
    \caption{Pseudocode for $\INCOMING{v}{l}$ subprocedure given a $\dsRemove{n}{k}$ maintaining tournament $T$ and set of removed vertices $F$}
    \label{fig:incoming_impl_fvst}
\end{algorithm}

\begin{lemma}\label{incoming_impl_fvst}
    Using $\dsRemove{n}{k}$ data structure of
    Lemma~\ref{DREM_def_lemma} maintaining tournament $T$ and set
    $F$,
    the method $\INCOMINGFVST{v}{l}$ can be implemented in
    $\mathcal{O}((|\reducedEmptydb| + 1) \cdot (|F| + 1) + |\longArcs{v}| + (k + 1) \cdot (\maxdb{T} + 1) \cdot (|F| + 1) + l)$ time.
\end{lemma}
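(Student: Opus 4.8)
The plan is to adapt the two-phase in-neighbour search of the promise model (Lemma~\ref{incoming_impl_promise} in Section~\ref{r:triangles}) to the tokenized setting, organising everything around the reduced degrees $\reducedDegree{\cdot}$ instead of the true in-degrees, and exploiting that $\reducedDegree{\cdot}$ is an $|F|$-approximation of the in-degree in $\noset{T}{F}$ (invariant $1$ of Lemma~\ref{DREM_def_lemma}). Following Algorithm~\ref{fig:incoming_impl_fvst}, I classify the in-neighbours $u$ of $v$ in $\nopref{(\noset{T}{F})}$ by $\reducedDegree{u}$ into three groups using the thresholds $l_v=\reducedDegree{v}-2|\rmSet|-k-1$ and $r_v=\reducedDegree{v}+2|\rmSet|+k+1$: those with $\reducedDegree{u}>r_v$, found by scanning $\longArcs{v}$; those with $\reducedDegree{u}\in[l_v,r_v]$, found by brute force over the reduced degree buckets in this window; and those with $\reducedDegree{u}<l_v$, found by scanning buckets from the prefix boundary $q=\min\{i:i-\ctok{i}=|\noset{P}{F}|\}$ up to $l_v$. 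Since every vertex of $\nopref{(\noset{T}{F})}$ has $\reducedDegree{u}\ge q$ (monotonicity, Observation~\ref{monotonic_tokenized_degrees}(1)), these three groups exhaust all the in-neighbours we must report, so correctness reduces to the same template as in Section~\ref{r:triangles} and returning the first $l$ found is valid.

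The structural heart of the argument, which justifies the first and third groups, is that a large gap in reduced degree forces a large gap in the true degree of $T$. Whenever $|\reducedDegree{u}-\reducedDegree{v}|\ge 2|F|+k+2$, Observation~\ref{monotonic_tokenized_degrees}(3) applied with $l=k+2$ gives $|d_T(u)-d_T(v)|\ge k$. Above $r_v$: an in-neighbour has $uv\in E(T)$ and $d_T(u)>d_T(v)$ with gap at least $k$, so $uv$ is a $k$-long back arc and $u\in\longArcs{v}$; hence scanning $\longArcs{v}$ (skipping $u\in\rmSet$, testing $uv\in E(T)$ via the adjacency matrix of $\dsBasic{n}$) captures this group entirely. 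Below $l_v$: an \emph{out}-neighbour has $vu\in E(T)$ and $d_T(v)>d_T(u)$ with gap at least $k$, so $vu$ is a $k$-long back arc and again $u\in\longArcs{v}$. Thus, among the vertices sitting in the low buckets $[q,l_v)$, every vertex that is \emph{not} in $\longArcs{v}$ is an in-neighbour, so while scanning these buckets the only examined vertices we do not report are the at most $|\longArcs{v}|$ out-neighbours, and the scan can stop after examining $O(l+|\longArcs{v}|)$ vertices.

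For the running time I would sum the three phases against the claimed bound. The preprocessing that builds $\noset{\emptydb}{F}$, reads off $|\noset{P}{F}|=\min\noset{\emptydb}{F}$, and computes $q$ (via Lemma~\ref{tokenized_ops_fvst}, in $O(|\tok|)=O(|F|)$ time) costs $O((|\reducedEmptydb|+1)(|F|+1))$ by Lemma~\ref{tokenized_empty_fvst}. The first group costs $O(|\longArcs{v}|)$. The second group scans $r_v-l_v+1=O(k+1)$ bucket indices (using $|F|\le k$), each of size at most $\maxdb{T}\cdot(|F|+1)$ by Observation~\ref{tokenized_buckets_size_fvst}, for a total of $O((k+1)(\maxdb{T}+1)(|F|+1))$. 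The third group examines $O(l+|\longArcs{v}|)$ occupied vertices and crosses at most $|\reducedEmptydb|$ empty bucket indices before stopping, giving $O(|\reducedEmptydb|+l+|\longArcs{v}|)$; extracting a single representative from a bucket (needed when only $q$ is known) costs $O(|F|)$ via Lemma~\ref{tokenized_ops_fvst}. These four contributions add up exactly to $\mathcal{O}((|\reducedEmptydb|+1)(|F|+1)+|\longArcs{v}|+(k+1)(\maxdb{T}+1)(|F|+1)+l)$.

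The main obstacle will be the third group, where both correctness and efficiency are delicate. On correctness one must confirm that the low-reduced-degree non-neighbours are \emph{exactly} the $k$-long back-arc partners, so that nothing in $\nopref{(\noset{T}{F})}$ is missed and nothing from the prefix (degree below $q$) leaks in. On efficiency one must argue that, although the window $[q,l_v)$ may span $\Theta(n)$ indices, early termination after $l$ reported neighbours together with the $|\reducedEmptydb|$ bound on empty buckets keeps the scan within budget. The subtle point threading both is the $2|\rmSet|+k+1$ slack built into $l_v$ and $r_v$: it must be wide enough for Observation~\ref{monotonic_tokenized_degrees} to promote a reduced-degree gap into a genuine $k$-long gap \emph{in $T$} (not merely in $\noset{T}{F}$, since $\longArcs{v}$ is defined relative to $T$), while the strictly interior window $[l_v,r_v]$ is precisely the region where this promotion can fail and must therefore be resolved by brute-force bucket scanning.
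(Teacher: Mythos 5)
Your proposal is correct and follows essentially the same route as the paper's proof: the same three-phase search (scanning $\longArcs{v}$ for reduced degrees above $r_v$, brute-forcing the window $[l_v,r_v]$ via Observation~\ref{tokenized_buckets_size_fvst}, and scanning buckets from $q$ up to $l_v$ while charging failures to $k$-long back arcs), with the same use of Observation~\ref{monotonic_tokenized_degrees} to promote reduced-degree gaps into $k$-long gaps in $T$, and the same cost accounting. No gaps.
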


\begin{proof}
    Let $\noset{\emptydb}{F}$ be the empty set of $\noset{T}{F}$ and $\noset{P}{F}$ be the prefix of $\noset{T}{F}$.
    Let us describe how to iterate over all incoming neighbours of $v$ in $\nopref{(\noset{T}{F})}$.
    From the algorithm it will be obvious that we can stop the procedure at any moment after finding $l$ of them.

    Let $l_v = \reducedDegree{v} - 2|\rmSet| - k - 1$ and $r_v = \reducedDegree{v} + 2|\rmSet| + k + 1$.
    First of all, we calculate $\noset{\emptydb}{F}$ and store it as a list using Lemma~\ref{tokenized_empty_fvst} in time $\mathcal{O}((|\reducedEmptydb| + 1) \cdot (|F| + 1))$.
    Secondly, we calculate $|\noset{P}{F}|$ in time $\mathcal{O}(|\noset{\emptydb}{F}|)$ by finding the minimum element in $\noset{\emptydb}{F}$, using Lemma~\ref{prefix_transitive}.
    Since $v \in V(\nopref{(\noset{T}{F})})$, $-1 < |\noset{P}{F}| < |V(\noset{T}{F})| \leq n$.
    Let $I = \{i \in \mathbb{Z}$ $:$ $i - \ctok{i} = |\noset{P}{F}|\}$.
    By Lemma~\ref{tokenized_ops_fvst}, it can be calculated in $\mathcal{O}(|F|)$ time and $\emptyset \neq I \subseteq [n]$.
    We set $q = \min(I)$.
    Notice, that $u \in V(\nopref{(\noset{T}{F})})$ if and only if $u \in V(\noset{T}{F})$ and $\reducedDegree{u} \geq q$, by Lemma~\ref{prefix_transitive}, Observation~\ref{monotonic_tokenized_degrees} and the first invariant of the $\dsRemove{n}{k}$ data structure. Thus, in the search for the in-neighbors of $v$ in  $\nopref{(\noset{T}{F})}$, it suffices to iterate over reduced degree buckets larger or equal to $q$.

    We partition the search for in-neighbours into three phases (described respectively in lines $7, 8, 9$ in Algorithm~\ref{fig:incoming_impl_fvst}).

    In the first phase (line $7$), we  iterate over $u \in \longArcs{v}$. If $u \notin \rmSet$ and $uv \in E(T)$ and $\reducedDegree{u} > r_v$, then we add $u$ to the list of in-neighbours of $v$ in $\nopref{(\noset{T}{F})}$ ($u \in V(\nopref{(\noset{T}{F})})$ by Observation~\ref{monotonic_tokenized_degrees} and the fact that $v \in V(\nopref{(\noset{T}{F})})$ implies $r_v \geq q$). This phase takes $\mathcal{O}(|\longArcs{v}|)$.

    In the second phase (line $8$), we iterate over $i \in [l_v, r_v] \cap [n]$ and for every $u \in \reducedDegreeBucket{i}$, if $uv \in E(\nopref{(\noset{T}{F})})$ then we add $u$ to the list of in-neighbors.
    Checking if $uv \in E(\nopref{(\noset{T}{F})})$ can be done in constant time by checking whether $uv \in E(T)$ and whether $\reducedDegree{u} \geq q$.
    This phase takes $\mathcal{O}((|F| + k + 1) \cdot (\maxdb{T} + 1) \cdot (|F| + 1)) = \mathcal{O}((k + 1) \cdot (\maxdb{T} + 1) \cdot (|F| + 1))$ time, by Observation~\ref{tokenized_buckets_size_fvst}.

    In the third phase (line $9$), we iterate over $i \in [q, \min(l_v, n))$. For every $u \in \reducedDegreeBucket{i}$, if $uv \in E(T)$ then we add $u$ to the list of in-neighbors. Observe that $u \in V(\nopref{(\noset{T}{F})}))$, since $\reducedDegree{u} \geq q$ and $u \in V(\noset{T}{F})$.
    If we fail to add $u$ to the list of in-neighbors, then $vu \in E(T)$, thus $u \in \longArcs{v}$, by Observation~\ref{monotonic_tokenized_degrees}.
    Hence, we fail to add $u$ in phase three only $|\longArcs{v}|$ times. Moreover, we encounter empty bucket $\reducedDegreeBucket{i}$ only $|\reducedEmptydb|$ times.
    To sum up, this phase takes a total of $\mathcal{O}(|\longArcs{v}|+|\reducedEmptydb|+l)$ time.

    Therefore, the time for executing the entire procedure totals to $\mathcal{O}((|\reducedEmptydb| + 1) \cdot (|F|+1) + |\longArcs{v}| + (k + 1) \cdot (\maxdb{T} + 1) \cdot (|F| + 1) + l)$ time.
    Notice, that the resulting list does not contain any duplicates.
    It remains to prove that we take into account all of the in-neighbours of $v$ in $\nopref{(\noset{T}{F})}$ (unless we already found $l$ of them).
    Let $u \in V(\nopref{(\noset{T}{F})})$ be such that $uv \in E(\nopref{(\noset{T}{F})})$. If $\reducedDegree{u} > r_v$ then $u \in \longArcs{v}$, by Observation~\ref{monotonic_tokenized_degrees}.
    Thus, such $u$ is found during the iteration over $\longArcs{v}$.
    On the other hand, we check all possible candidates for $u$ such that $\reducedDegree{u} \leq r_v$.
\end{proof}

The above three lemmas/theorems give the following corollary.

\begin{corollary}\label{common_query_triangles_corollary_fvst}
    Let $\dsRemove{n}{k}$ be the data structure of Lemma~\ref{DREM_def_lemma} maintaining tournament $T$ and set $F$, where $k \in [n + 1]$.
    Let $\noset{\emptydb}{F}$ be the empty set of $\noset{T}{F}$ given as an array set over $[n]$.
    Then it is possible to implement method $\FTRIANGLEFVST{}$ in time $\mathcal{O}((|\reducedEmptydb| + 1) \cdot (|F| + 1) + \max_{v \in V(\noset{T}{F})}(|\longArcs{v}|) + (k + 1) \cdot (\maxdb{T} + 1) \cdot (|F| + 1))$.
\end{corollary}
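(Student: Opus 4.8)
The plan is to derive the corollary by instantiating Theorem~\ref{common_query_triangles_fvst} with the concrete subprocedure implementations supplied by Lemma~\ref{empty_impl_promise_fvst} and Lemma~\ref{incoming_impl_fvst}, and then simplifying the resulting running time. Theorem~\ref{common_query_triangles_fvst} tells us that once we have $\MINFVST$ running in time $t_1(T,F)$ and $\INCOMINGFVST{v}{l}$ running in time $t_2(l,T,F)$ (non-decreasing in $l$), the method $\FTRIANGLEFVST{}$ can be implemented in time $\mathcal{O}(t_1(T,F) + t_2(\mindt{\nopref{(\noset{T}{F})}}, T, F) + \mindt{\nopref{(\noset{T}{F})}})$. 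So it suffices to plug in the two implementations and bound the three summands. Since the corollary assumes $\noset{\emptydb}{F}$ is available as an array set over $[n]$, the hypotheses of both lemmas are met (for $\INCOMINGFVST{v}{l}$, Lemma~\ref{incoming_impl_fvst} recomputes the empty set internally via Lemma~\ref{tokenized_empty_fvst}).

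First I would substitute $t_1(T,F) = \mathcal{O}(|\noset{\emptydb}{F}| + |F|)$ from Lemma~\ref{empty_impl_promise_fvst} and
$$t_2(l, T, F) = \mathcal{O}\bigl((|\reducedEmptydb| + 1)(|F| + 1) + |\longArcs{v}| + (k + 1)(\maxdb{T} + 1)(|F| + 1) + l\bigr)$$
from Lemma~\ref{incoming_impl_fvst}. The vertex $v$ here is the one returned by $\MINFVST$, so its $|\longArcs{v}|$ term is bounded by $\max_{v \in V(\noset{T}{F})}(|\longArcs{v}|)$. It then remains to control the two quantities $|\noset{\emptydb}{F}|$ and $\mindt{\nopref{(\noset{T}{F})}}$ that still appear, replacing them by $|\reducedEmptydb|$.

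The only step that is not pure substitution is bounding $\mindt{\nopref{(\noset{T}{F})}}$. I would handle it exactly as in the proof of Theorem~\ref{thm:DsTrianglePromiseInternal}: applying Fact~\ref{minimal_degree} to the tournament $\nopref{(\noset{T}{F})}$ together with Lemma~\ref{prefix_transitive} shows that $\mindt{\nopref{(\noset{T}{F})}}$ is at most the size of the empty set of $\noset{T}{F}$, i.e.\ $\mindt{\nopref{(\noset{T}{F})}} \leq |\noset{\emptydb}{F}|$. Lemma~\ref{tokenized_empty_fvst} then gives $|\noset{\emptydb}{F}| \leq |\reducedEmptydb|$, so both $\mindt{\nopref{(\noset{T}{F})}}$ and $|\noset{\emptydb}{F}|$ are $\mathcal{O}(|\reducedEmptydb|)$.

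Finally I would substitute $l = \mindt{\nopref{(\noset{T}{F})}} = \mathcal{O}(|\reducedEmptydb|)$ into $t_2$ and add the three summands. All the stray lower-order terms ($|F|$, $|\reducedEmptydb|$, $\mindt{\nopref{(\noset{T}{F})}}$) are absorbed into the product $(|\reducedEmptydb| + 1)(|F| + 1)$, leaving exactly the claimed bound $\mathcal{O}((|\reducedEmptydb| + 1)(|F| + 1) + \max_{v \in V(\noset{T}{F})}(|\longArcs{v}|) + (k + 1)(\maxdb{T} + 1)(|F| + 1))$. The main (and quite minor) obstacle is just the bound on $\mindt{\nopref{(\noset{T}{F})}}$; everything else is mechanical bookkeeping of the $\mathcal{O}$-terms.
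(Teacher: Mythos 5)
Your proposal is correct and matches the paper's own proof: both instantiate Theorem~\ref{common_query_triangles_fvst} with the implementations from Lemma~\ref{empty_impl_promise_fvst} and Lemma~\ref{incoming_impl_fvst}, bound $\mindt{\nopref{(\noset{T}{F})}} \leq |\noset{\emptydb}{F}| \leq |\reducedEmptydb|$ via Fact~\ref{minimal_degree} and Lemma~\ref{tokenized_empty_fvst}, and absorb the lower-order terms into $(|\reducedEmptydb|+1)(|F|+1)$. The only cosmetic difference is that you spell out the substitution and absorption steps that the paper leaves implicit.
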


\begin{proof}
    Subprocedure $\MINFVST$ can be implemented to run in $\mathcal{O}(|\noset{\emptydb}{F}| + |F|)$ time, by Lemma~\ref{empty_impl_promise_fvst} and Lemma~\ref{tokenized_empty_fvst} ($|\noset{\emptydb}{F}| \leq |\reducedEmptydb|$).
    By Lemma~\ref{incoming_impl_fvst}, subprocedure $\INCOMINGFVST{v}{l}$ can be implemented to run in time $\mathcal{O}((|\reducedEmptydb| + 1) \cdot (|F| + 1) + \max_{v \in V(\noset{T}{F})}(|\longArcs{v}|) + (k + 1) \cdot (\maxdb{T} + 1) \cdot (|F| + 1))$, because $\mindt{\nopref{(\noset{T}{F})}} \leq |\noset{\emptydb}{F}|$ by Fact~\ref{minimal_degree}.
\end{proof}

We move on to proving our main result, which is
Theorem~\ref{thm:FvstPromise}.
Algorithm~\ref{fig:branchingFVST}
presented below is an adaptation of the standard static branching
algorithm for
$\FVST$
(Algorithm~\ref{fig:branchingFASToverview})
supported by the data structure $\dsRemove{n}{k}$.
Its correctness follows from Fact~\ref{cycles_are_triangles}.

\begin{algorithm}
    \SetKwInOut{Input}{Algorithm}
	\SetKwInOut{Output}{Output}
    \Input{$\dsRemove{n}{l}.\FINDFVST{k}$}
    \Output{Verify if $\FVST(\noset{T}{F}) \leq k - |\rmSet|$}
    \If{$k - |\rmSet| < 0$}{\Return FALSE \;}
    \If{$|\reducedEmptydb|$ is too big}{\Return FALSE \;}
    $\noset{\emptydb}{F} \gets$ the empty set of $\noset{T}{F}$ \tcp*[r]{Calculated using Lemma~\ref{tokenized_empty_fvst}}

    \If{$\noset{T}{F}$ is acyclic}{\Return TRUE \;}
    \If{$k - |\rmSet| = 0$}{ \Return FALSE \;}

    $uvw \gets \FTRIANGLEFVST{}$ \tcp*[r]{Implementation of Corollary~\ref{common_query_triangles_corollary_fvst}}

    \For{$x \in \{ v, u, w \}$}{
        $\dsRemove{n}{l}.\REMOVEVERTEX{x}$ \;
        $a \gets \dsRemove{n}{l}.\FINDFVST{k}$ \;
        $\dsRemove{n}{l}.\RESTOREVERTEX{x}$ \;

       \If{$a$}{ \Return TRUE \;}
    }
    \Return FALSE \;

    \caption{Pseudocode for $\FINDFVST{k}$ operation of the $\dsRemove{n}{l}$ maintaining tournament $T$ and set $F$, such that $F \supseteq \heavy{T}{l}$. Parameter $k$ must be from $[l + 1]$.}
    \label{fig:branchingFVST}
\end{algorithm}

\paragraph*{Proof of Theorem~\ref{thm:FvstPromise}}

\begin{proof}
    Without loss of generality, we can assume $0 \leq K < g(K) \leq n$, where $n$ is the number of vertices of input tournament $T$.
    We set up the $\dsRemovePromise{n}{g(K)}$ data structure (see Lemma~\ref{DREMP_def_lemma}) maintaining the $g(K)$-long back arcs decomposition of $T$.
    By Lemma~\ref{DREMP_def_lemma} it can be initialized with $T$ and $g(K)$ in $\mathcal{O}(n^2)$ time. In order to perform updates, we invoke $\dsRemovePromise{n}{g(K)}.\REVERSEARCDREMP{v, u}$.
    By Lemma~\ref{DREMP_def_lemma}, updates take $\mathcal{O}(g(K)^5)$ time.

    In order to perform queries, we implement a standard branching algorithm for the static $\FVST$ from Fact~\ref{folklore_branching_fvst}, which is the main part of the query algorithm.
    The branching is implemented using the $\dsRemove{n}{g(K)}$ data structure (see Lemma~\ref{DREM_def_lemma}), which is maintained by the $\dsRemovePromise{n}{g(K)}$ data structure, with the set of removed vertices set to $F=\heavy{T}{g(K)}$. We extend the $\dsRemove{n}{g(K)}$ with the $\FINDFVST{k}$ method (for $k \in [g(K) + 1]$) presented in Algorithm~\ref{fig:branchingFVST}. The method $\FINDFVST{k}$ temporarily expands the set $F$, maintaining the invariant that $\heavy{T}{g(K)} \subseteq F$.
    In each recursive call, the method $\FINDFAST{k}$ determines whether $\FVST(\noset{T}{F}) \leq k-|F|$.
    We now move on to a more detailed description of this method.

    First of all, we verify if $k-|F| \geq 0$, otherwise we
    return FALSE.
    Next we verify if the claim of Corollary~\ref{corollary_empty_fvst} holds. Recall
    that Corollary~\ref{corollary_empty_fvst} states, that
    $|\reducedEmptydb| \leq \FVST(\noset{T}{F}) (2 (k'+|F|)+ |E(G[V(\noset{T}{F})])| + 2 \FVST(\noset{T}{F})+5) + 4|F|$ for the data structure $\dsRemove{n}{k'}$, where $G$ is a $k'$-long graph of $T$. In our case $k'=g(K)$. Moreover, since $F \supseteq \heavy{T}{g(K)}$, Lemma~\ref{vertex_cover_fvst} implies that $|E(G[V(\noset{T}{F})])| \leq g(K) \cdot k$. Thus, if $\FVST(\noset{T}{F}) \leq k-|F|$, the following bound holds:
    $|\reducedEmptydb| \leq (k-|F|) \cdot (2(g(K) + |F|) +  k \cdot g(K) + 2(k-|F|) + 5) + 4|F|$. If this condition does not hold, then by counter position $\FVST(\noset{T}{F}) > k-|F|$ and we correctly return FALSE.
    If, on the other hand, both the above conditions hold then we are able to execute the next steps efficiently.
     We next compute the empty set of $\noset{T}{F}$ that we refer to as $\noset{\emptydb}{F}$. We do it using Lemma~\ref{tokenized_empty_fvst} in $\mathcal{O}((|\reducedEmptydb|+1)(|F|+1))$ time and we store\footnote{We can allocate an empty array set over $[n]$ at the initialization and clear it before and after entering the recursive call. It does not affect correctness nor time complexity.} $\noset{\emptydb}{F}$ as an array set over $[n]$.
Next, we check whether $\noset{T}{F}$ is acyclic by checking if $\noset{\emptydb}{F}$ is empty. If $\noset{\emptydb}{F}$ is empty, then $\noset{T}{F}$ is acyclic by Fact~\ref{transitive_buckets}.
If $\noset{T}{F}$ is acyclic then we return TRUE.
    If $\noset{T}{F}$ contains a cycle and $k-|F| = 0$ then
    we return FALSE.
    This concludes the preliminary part, which takes a total of $\mathcal{O}((|\reducedEmptydb|+1)(|F|+1))=\mathcal{O}((k-|F|) \cdot (2(g(K) + |F|) +  k \cdot g(K) + 2(k-|F|) + 5) + 4|F|) \cdot (|F|+1)$ $=\mathcal{O}((k+1)^3 g(K))$ time.

    In the main part of the method, we invoke procedure $\FTRIANGLEFVST{}$ implemented by Corollary~\ref{common_query_triangles_corollary_fvst}.
    Note that procedure $\FTRIANGLEFVST{}$ always returns a triangle $uvw$, because $\noset{T}{F}$ has at least one cycle.
    We branch on vertices $x \in \{ u, v, w \}$ and try to remove each of them using $\dsRemove{n}{g(K)}.\REMOVEVERTEX{x}$ method. Then we apply the recursion and store the returned value, which is either TRUE or FALSE.
    When returning from the recursive call, we invoke the $\dsRemove{n}{g(K)}.\RESTOREVERTEX{x}$ method.
    It restores the vertex removed before the recursive call, so the data structure $\dsRemove{n}{g(K)}$ maintains the same tournament and set of removed vertices as before the recursive call.
    This is a fairly standard branching algorithm, and the correctness is immediate from Fact~\ref{cycles_are_triangles}.

    We move on to the analysis of the running time of
    $\dsRemove{n}{g(K)}.\FINDFVST{k}$ method. The preliminary part takes $\mathcal{O}(g(K)(k + 1)^3)$ time.
    We can observe that $\max_{v \in V(\noset{T}{F})}(|\longArcs{v}|) \leq g(K)$, because $V(\noset{T}{F}) \subseteq V(T) \setminus \heavy{T}{g(K)}$.
    Thus, procedure $\FTRIANGLEFVST{}$ runs in time $\mathcal{O}(g(K)(k + 1)^3 + g(K)(k + 1)(\maxdb{T} + 1))$, by Corollary~\ref{common_query_triangles_corollary_fvst}.
    Both methods $\dsRemove{n}{g(K)}.\REMOVEVERTEX{x}$ and $\dsRemove{n}{g(K)}.\REMOVEVERTEX{x}$ take $\mathcal{O}(g(K) \cdot (\maxdb{T} + 1) \cdot (|F| + 1)^2 + (|\longArcs{x}| + 1) \cdot (|F| + 1)) = \mathcal{O}(g(K)(k + 1)^2(\maxdb{T} + 1))$ time, by Lemma~\ref{DREM_def_lemma}.
    To sum up, each recursive call costs us $\mathcal{O}(g(K)(k + 1)^3 + g(K)(k + 1)^2(\maxdb{T} + 1))$ time.

    We now describe how to perform queries.
    First, verify if $\maxdb{T} \leq 2k + 1$. If this fails to hold, then by
    Lemma~\ref{small_buckets_fvst} we get that $\FVST(T) > k$, so we correctly return FALSE.
    Next, call $\dsRemove{n}{g(K)}.\FINDFVST{K}$ on the $\dsRemove{n}{g(K)}$, maintaining tournament $T$ and set $F$, maintained by the $\dsRemovePromise{n}{g(K)}$.
    Thanks to the above check, such a call takes $\mathcal{O}(g(K)(K + 1)^3)$ time.
    Note, that after the call, the $\dsRemove{n}{g(K)}$ data structure maintains exactly the same tournament and set of removed vertices.
    The invoked method determines whether $\FVST(\noset{T}{F}) \leq K - |F|$ for  $F = \heavy{T}{g(K)}$.
    However, $\FVST(T) \leq K$ if and only if $\FVST(\noset{T}{F}) \leq K - |F|$, by Lemma~\ref{vertex_cover_fvst}.
    Hence, the query algorithm is correct.
    The original invocation of method $\FINDFVST{K}$ calls itself recursively at most $3^{K}$ times, every time with the same argument.
    Thus, the query takes $\mathcal{O}(3^K g(K)(K + 1)^3)$ time as claimed.
\end{proof}

%
%
%

\end{document}